\def\anonymous{0}
\def\extended{1}

\def\acm{0}
\if\extended 0
\if\anonymous 0
\def\acm{1}
\fi
\fi

\if\acm 0
\documentclass[sigplan,screen]{acmart}
\renewcommand\footnotetextcopyrightpermission[1]{}
\else
\documentclass[sigplan,screen]{acmart}
\fi

\def\includecomments{0}
\def\diffhighlight{0} 
\def\tikzmode{1}

\usepackage{amsthm,amsfonts,stmaryrd,mathtools}
\usepackage{listings}
\usepackage{xspace}
\usepackage[shortlabels]{enumitem}
\usepackage{makecell} 
\usepackage{soul} 

\usepackage[noabbrev]{cleveref}

\makeatletter
\if\tikzmode 1
\def\kc@figpicture{\let\endtikzpicture\relax\Collect@Body\kc@figpdf}
\long\def\kc@figpdf#1{%
  \expandafter\filename@parse\expandafter{\CurrentFile}%
  \includegraphics{figures-pdf/\filename@base}}
\AtBeginDocument{%
  \let\kc@tikzpicture\tikzpicture
  \def\tikzpicture{%
    \ifnum\pdf@strcmp{\CurrentFilePath}{figures}=0
      \expandafter\kc@figpicture
    \else
      \expandafter\kc@tikzpicture
    \fi}}
\else
\usepackage{tikz}
\usetikzlibrary{calc}
\usetikzlibrary{backgrounds}
\usetikzlibrary{arrows.meta}
\usetikzlibrary{decorations.pathreplacing} 
\if\tikzmode 2
\usetikzlibrary{external}
\AddToHook{env/tikzpicture/before}{%
  \ifnum\pdf@strcmp{\CurrentFilePath}{figures}=0
    \expandafter\filename@parse\expandafter{\CurrentFile}%
    \tikzsetnextfilename{\filename@base}%
  \else
    \tikzset{external/export next=false}
  \fi}
\AtBeginDocument{\tikzifexternalizing{\gdef\@abspage@last{1}}{}}
\fi
\fi
\makeatother

\newfloat{problem}{htbp}{lop}
\floatstyle{plaintop}  
\restylefloat{problem} 
\floatname{problem}{Problem}
\crefname{problem}{Problem}{Problems}
\Crefname{problem}{Problem}{Problems}

\makeatletter
\lst@AddToHook{Init}{\def\@currentcounter{lstnumber}}%
\makeatother

\if\includecomments 1
\usepackage{todonotes} 
\setuptodonotes{size=\tiny, fancyline, color=blue!3}
\else
\newcommand{\todo}[1]{}
\fi

\microtypecontext{spacing=nonfrench}

\newtheorem{theorem}{Theorem}[section]
\newtheorem{lemma}[theorem]{Lemma}
\newtheorem{claim}[theorem]{Claim}
\newtheorem{corollary}[theorem]{Corollary}

\newtheorem{definition}{Definition}
\newtheorem{observation}[theorem]{Observation}

\AddToHook{env/lemma/begin}{\crefalias{theorem}{lemma}}
\AddToHook{env/claim/begin}{\crefalias{theorem}{claim}}
\AddToHook{env/corollary/begin}{\crefalias{theorem}{corollary}}
\AddToHook{env/invariant/begin}{\crefalias{theorem}{invariant}}
\AddToHook{env/definition/begin}{\crefalias{theorem}{definition}}
\AddToHook{env/observation/begin}{\crefalias{theorem}{observation}}
\AddToHook{env/assumption/begin}{\crefalias{theorem}{assumption}}
\crefname{theorem}{theorem}{theorems}
\crefname{lemma}{lemma}{lemmas}
\crefname{claim}{claim}{claims}
\crefname{corollary}{corollary}{corollaries}
\crefname{invariant}{invariant}{invariants}
\crefname{definition}{definition}{definitions}
\crefname{observation}{observation}{observations}
\crefname{assumption}{assumption}{assumptions}

\crefname{lstlisting}{algorithm}{algorithms}
\Crefname{lstlisting}{Algorithm}{Algorithms}

\makeatletter
\lst@Key{countblanklines}{true}[t]%
{\lstKV@SetIf{#1}\lst@ifcountblanklines}

\lst@AddToHook{OnEmptyLine}{%
	\lst@ifnumberblanklines\else%
	\lst@ifcountblanklines\else%
	\advance\c@lstnumber-\@ne\relax%
	\fi%
	\fi}
\makeatother

\def\ContinueLineNumber{\lstset{firstnumber=last}}
\def\StartLineAt#1{\lstset{firstnumber=#1}}

\if\includecomments 1
\newcommand{\fix}[1]{\textcolor{blue}{#1}}
\newcommand{\fixg}[1]{\textcolor{red}{#1}}
\newcommand{\fixb}[1]{\textcolor{teal}{#1}}

\definecolor{mygreen}{RGB}{0, 180, 0}

\newcommand{\strike}[1]{\textcolor{blue}{\st{#1}}}

\newcommand{\rg}[1]{\textcolor{brown}{Rachid: #1}}
\newcommand{\antoine}[1]{\textcolor{orange}{Antoine: #1}}
\newcommand{\zyf}[1]{\textcolor{cyan}{Clément: #1}}

\newcommand{\marcos}[1]{\textcolor{orange}{Marcos: #1}}
\newcommand{\galy}[1]{\textcolor{red}{Galy: #1}}
\else
\newcommand{\fix}[1]{#1}
\newcommand{\fixg}[1]{#1}
\newcommand{\fixb}[1]{#1}

\newcommand{\strike}[1]{}

\newcommand{\rg}[1]{}
\newcommand{\antoine}[1]{}
\newcommand{\zyf}[1]{}
\newcommand{\marcos}[1]{}
\newcommand{\galy}[1]{}
\fi

\if\diffhighlight 1
\newcommand{\camera}[1]{\textcolor{blue}{#1}} 
\newcommand{\cameraz}[1]{\textcolor{blue}{#1}} 
\newcommand{\cameraa}[1]{\textcolor{red}{#1}} 
\newcommand{\camerab}[1]{\textcolor{red}{#1}} 
\newcommand{\camerac}[1]{\textcolor{red}{#1}} 
\newcommand{\camerad}[1]{\textcolor{red}{#1}} 
\else
\if\diffhighlight 2
\newcommand{\camera}[1]{\textcolor{purple}{#1}} 
\newcommand{\cameraz}[1]{\textcolor{blue}{#1}} 
\newcommand{\cameraa}[1]{\textcolor{magenta}{#1}} 
\newcommand{\camerab}[1]{\textcolor{teal}{#1}} 
\newcommand{\camerac}[1]{\textcolor{brown}{#1}} 
\newcommand{\camerad}[1]{\textcolor{red}{#1}} 
\else
\newcommand{\camera}[1]{#1}
\newcommand{\cameraz}[1]{#1}
\newcommand{\cameraa}[1]{#1}
\newcommand{\camerab}[1]{#1}
\newcommand{\camerac}[1]{#1}
\newcommand{\camerad}[1]{#1}

\fi
\fi

\newcommand{\sysnameplain}{KCensus}
\newcommand{\sysname}{\sysnameplain\xspace}
\newcommand{\smr}{KSMR\xspace}

\if\extended 1
\newcommand{\slogan}{Synthesizing Latency-Optimal Consensus Fast Paths \newline [Extended Version]}
\else
\newcommand{\slogan}{Synthesizing Latency-Optimal Consensus Fast Paths}
\fi

\newcommand{\us}{µs\xspace}
\newcommand{\rmv}[1]{}
\newcommand{\revisit}[1]{}
\renewcommand{\t}[1]{\texttt{\small #1}}

\newcommand{\rt}{roundtrip\xspace}
\newcommand{\rts}{roundtrips\xspace}

\newenvironment{myitemize}{\begin{list}{\labelitemi}{%
\setlength{\topsep}{0.5pt plus 0pt minus 0pt}%
\setlength{\itemsep}{0pt plus 0pt minus 0pt}%
\setlength{\parsep}{0pt plus 0pt minus 0pt}%
\setlength{\parskip}{0pt plus 0pt minus 0pt}%
\setlength{\leftmargin}{1em} 
}}{\end{list}}

\if\acm 0
\fi

\AtBeginDocument{%
  \providecommand\BibTeX{{%
    \normalfont B\kern-0.5em{\scshape i\kern-0.25em b}\kern-0.8em\TeX}}}

\if\acm 1
\copyrightyear{2027}
\acmYear{2027}
\setcopyright{cc}
\setcctype{by}
\acmConference[EuroSys '27]{22nd European Conference on Computer Systems}{April 19--23, 2027}{Rabat, Morocco}
\acmBooktitle{22nd European Conference on Computer Systems (EuroSys '27), April 19--23, 2027, Rabat, Morocco}
\acmDOI{10.1145/3842654.3848528}
\acmISBN{979-8-4007-2971-3/2027/04}
\else
\acmConference[Preprint]{}{}{}
\fi

\acmBadgeR[https://www.acm.org/publications/policies/artifact-review-and-badging-current]{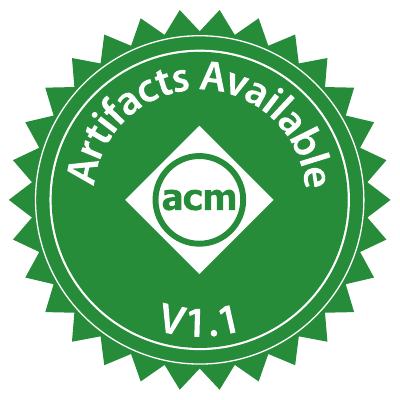}
\acmBadgeR[https://www.acm.org/publications/policies/artifact-review-and-badging-current]{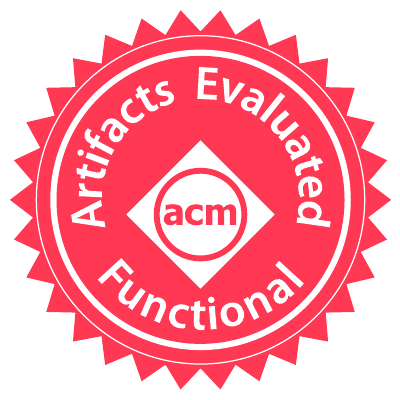}
\acmBadgeR[https://www.acm.org/publications/policies/artifact-review-and-badging-current]{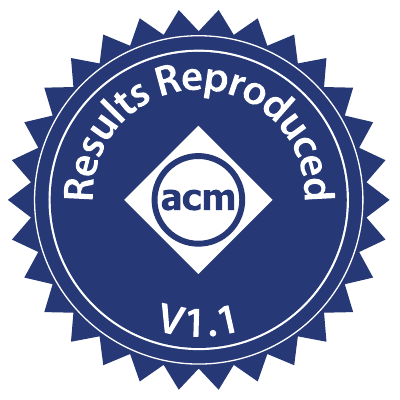}

\begin{document}

\title
[\sysnameplain: \slogan]
{\sysname: \slogan}

\if\extended 1
\if\anonymous 0
\titlenote{\camerad{Extended version of the paper to appear in the 22nd European Conference on Computer Systems (EuroSys~'27). Proceedings version: \url{https://doi.org/10.1145/3842654.3848528}. \copyright~2027 Copyright held by the owner/author(s). Licensed under \href{https://creativecommons.org/licenses/by/4.0/}{CC BY 4.0}.}}
\fi
\fi

\makeatletter
\renewcommand{\@titlefont}{\huge\sffamily\bfseries}
\if\extended 1
\newlength{\extskip}\setlength{\extskip}{16.5pt}
\patchcmd{\@mkauthors@iii}{\unvbox\mktitle@bx\par\medskip}{\unvbox\mktitle@bx\par\medskip\vspace{\extskip}}{}{\errmessage{extskip title patch failed}}
\patchcmd{\@mkauthors@iii}{\par\bigskip}{\par\bigskip\vspace{2\extskip}}{}{\errmessage{extskip author patch failed}}
\fi
\makeatother

\if\anonymous 1
\author{Submission \#29}
\else

\author{Clément Burgelin}
\affiliation[obeypunctuation=true]{%
  \institution{EPFL}, \city{Lausanne}, \country{Switzerland}%
}

\author{Antoine Murat}
\affiliation[obeypunctuation=true]{%
  \institution{EPFL}, \city{Lausanne}, \country{Switzerland}%
}

\author{Gal Sela}
\affiliation[obeypunctuation=true]{%
  \institution{EPFL}, \city{Lausanne}, \country{Switzerland}%
}

\author{Marcos K. Aguilera}
\affiliation[obeypunctuation=true]{%
  \institution{NVIDIA}, \city{Santa Clara},\\ \country{USA}%
}

\author{Rachid Guerraoui}
\affiliation[obeypunctuation=true]{%
  \institution{EPFL}, \city{Lausanne}, \country{Switzerland}%
}

\makeatletter
\renewcommand{\@authorfont}{\normalsize\sffamily}
\renewcommand{\@affiliationfont}{\footnotesize\normalfont}
\makeatother
\settopmatter{authorsperrow=5}

\fi

\begin{abstract}
\cameraz{Strongly consistent geo-replication often relies on fast paths to reduce latency in the common case of no failures or contention.}
Existing fast-path schemes, \cameraz{however}, are 
\camerad{ad hoc}
and restrictive: each corresponds to a point in a broad
design space shaped by network topology, workload, and latency objective, so no single scheme works best across settings. This paper looks at
fast-path schemes from a new perspective, as mechanisms
that spread
\cameraz{knowledge} about \cameraz{proposals}.
With this view, we identify a fundamental condition on the spread
of knowledge for a fast-path scheme to work. We then introduce \sysname, a framework that turns this condition into an optimization problem, synthesizing new fast-path schemes that are optimal for a given setting.
We use \sysname to build a geo-replicated key-value store and evaluate it across AWS regions.
Our system outperforms competing protocols, with up to \cameraz{16\%} lower average latency.
\end{abstract}

\if\acm 1
\begin{CCSXML}
<ccs2012>
   <concept>
       <concept_id>10010520.10010575.10010577</concept_id>
       <concept_desc>Computer systems organization~Reliability</concept_desc>
       <concept_significance>500</concept_significance>
       </concept>
   <concept>
       <concept_id>10010520.10010575.10010578</concept_id>
       <concept_desc>Computer systems organization~Availability</concept_desc>
       <concept_significance>500</concept_significance>
       </concept>
 </ccs2012>
\end{CCSXML}

\ccsdesc[500]{Computer systems organization~Reliability}
\ccsdesc[500]{Computer systems organization~Availability}

\keywords{consensus,
fast path,
geo-replication,
state machine replication,
latency optimization}
\fi

\maketitle

\if\acm 0
\pagestyle{plain}
\fi

\if\extended 1
\vspace*{3\extskip}
\fi

\section{Introduction}
\label{sec:intro}

\cameraz{Modern Internet services rely on geo-replication to remain available despite regional outages. Using consensus for this replication provides the strong consistency many applications require~\cite{linearizability, smr}. Consensus, however, adds latency to operations because processes must communicate to safely agree on their ordering~\cite{dls}. This cost is particularly high in geo-replicated systems, where communication incurs large wide-area delays. At the same time, every millisecond matters for Internet services~\cite{akamai, amazon}. Consequently, much work has focused on reducing consensus latency, from Paxos variants to more recent leaderless and topology-aware protocols~\cite{paxos, fast-paxos, epaxos, wpaxos, mencius, pando, swiftpaxos, wheat, aware, mercury}.}

\begin{figure}
    \centering
    \includegraphics[width=\columnwidth]{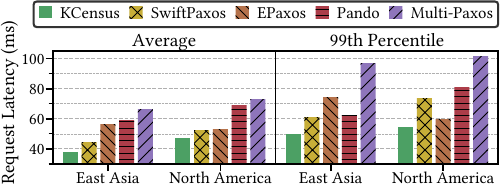}
    \caption{Average and tail latency in two \cameraz{deployments}.
    The best \cameraz{prior} protocol
    changes with the latency metric and topology. \sysname always performs best for the targeted metric.}
    \label{fig:intro}
    \Description{Bar chart of average and 99th-percentile request latency in the East Asia and North America deployments for KCensus, SwiftPaxos, EPaxos, Pando, and Multi-Paxos. The best prior protocol differs between metrics and deployments, while KCensus has the lowest latency in every case.}
\end{figure}

\cameraz{These protocols typically use a \emph{fast path} to reduce latency in common executions without failures or contention~\cite{fast-paxos, epaxos, pando, swiftpaxos}.
Existing fast paths, however, have two drawbacks.
First, they are complex and error-prone, as subtle concurrent scenarios have led to buggy correctness arguments~\cite{epaxoscorrectness,epaxosfixed}.
Second, they are restrictive: each hard-codes a communication pattern, quorum criteria, and recovery mechanism.
Yet no fixed choice is optimal: the best one depends on the network topology, the proposer, and the target latency metric.}

\cameraz{Figure~\ref{fig:intro} illustrates this effect: among existing protocols, the one with the lowest average latency does not always have the lowest tail latency, and the tail-latency winner changes across deployments.
More fundamentally, using one fast-path strategy for all proposers is itself too restrictive.
\cameraz{A proposer's fast-path \emph{strategy} specifies what it
must know about other processes' states before deciding.}
Best performance may require different strategies for different proposers, tailored to their location and proposal rate.
These strategies must nevertheless coexist safely: a decision made with one must preclude a conflicting decision made with another.}

\cameraz{The key challenge is recovery.
If a fast path fails, the fallback must preserve any decision that may already have occurred.
Thus, different possible fast decisions must remain distinguishable after failures.
We formalize this \camerad{as} a general \camerad{\emph{recoverability}} property of consensus: \emph{two reachable system states in which different values were decided must differ in the local states of more than $f$ processes}, where $f$ is the number of tolerated crashes.
Otherwise, after those processes crash, the survivors cannot distinguish the two states.}

\cameraz{To translate this property into fast-path design, we view fast paths as mechanisms that spread evidence about acceptances (\camerad{i.e., votes}): which processes accepted a value and who observed those acceptances.
Specializing the recoverability property yields two fundamental conditions: \emph{every fast-path quorum must contain more than $f$ processes}, and \emph{conflicting quorums must intersect and evidence about the intersection must survive $f$ failures}.
We further prove these conditions necessary and sufficient, thereby tightly characterizing how much work a fast path must do.}

\cameraz{We use these conditions to build}
\emph{\sysname} (\textbf{K}nowledge \textbf{Census}), a framework for synthesizing consensus fast paths.
\cameraz{\sysname encodes each proposer's fast-path strategy as a knowledge requirement, reducing correctness to compatibility checks between them, and selects a compatible set of requirements that minimizes the target latency (e.g., mean or tail) for the network topology and proposer distribution.}

To demonstrate \sysname's practicality, we use it to build a strongly consistent geo-replicated key-value store whose 
fast path is optimized based on network topology, proposer
  distribution, and target latency metric.
In our
\cameraz{contention-free} experiments, our system always achieves the best
targeted metric---cutting average latency by up to \cameraz{16\%}.
This shows that synthesized fast paths are not only cleaner and more flexible than hard-coded ones, but also significantly faster.

In summary, this paper makes the following contributions:
\begin{myitemize}
    \item We \cameraz{formulate a recoverability property of consensus, derive from it} \cameraz{the} fundamental knowledge \cameraz{conditions} that fast paths must satisfy to remain safely recoverable, and prove that \cameraz{they} are both necessary and sufficient.

    \item We present \sysname, a knowledge-based framework for expressing and verifying per-proposer
    fast-path strategies.

    \item We show how to synthesize an optimal fast path from network topology, proposer distribution, and target latency metric, thereby uncovering new fast-path strategies.

    \item We implement \sysname in a geo-replicated key-value store, open-sourced at \href{https://github.com/LPD-EPFL/kcensus}{github.com/LPD-EPFL/kcensus}, and evaluate it \camerad{across geo-distributed AWS deployments}, showing substantial latency improvements over prior protocols.
\end{myitemize}

\section{Background}
\label{sec:background}

We review consensus-based strong geo-replication (\S\ref{sec:bg:repl}),
explain how prior protocols improve latency through fast paths using
adopt-commit (\S\ref{sec:bg:fastpaths}), and state our system model~(\S\ref{sec:bg:model}).

\subsection{Replication and consensus}
\label{sec:bg:repl}

State Machine Replication (SMR)~\cite{smr} is a standard technique to build reliable services with strong consistency, i.e., linearizability~\cite{linearizability}. Replicas start from the same initial state and apply the same stream of deterministic commands, thereby giving clients the illusion of a single fault-tolerant server. 
\cameraz{Replicas agree on this stream by repeatedly running a consensus
protocol, typically once per stream slot~\cite{dls}.}
\cameraz{In each consensus instance, processes propose \emph{values}, no two decide differently, and correct processes eventually decide.}

\subsection{Fast paths and adopt-commit}
\label{sec:bg:fastpaths}
\label{sec:bg:ac}

\cameraz{Consensus protocols reduce latency by optimizing a common-case \emph{fast path}, typically for executions without failures or contention.
Multi-Paxos-style stable-leader protocols keep this path simple: a leader \cameraz{decides values} by contacting a majority~\cite{paxos, multipaxos}.
This is fast for clients near the leader, but its latency depends on leader placement.
Other protocols make different fast-path choices.
Fast Paxos and EPaxos allow multiple proposers to use the fast path concurrently~\cite{fast-paxos, epaxos}, while topology-aware protocols tailor its communication pattern to wide-area deployments~\cite{wpaxos, swiftpaxos, pando}.
For our purposes, these fast paths differ mainly in \emph{which} processes participate and \emph{how} knowledge flows among them, which dictates their performance for a given setting.}

\cameraz{As previously observed~\cite{leaderless-smr, fast-mutex}, a fast path can be modeled as an \emph{adopt-commit} protocol~\cite{DBLP:conf/podc/YangNG98}.
Processes propose values as in consensus, but a proposer returns either \emph{commit}($v$) or \emph{adopt}($v$).
Committing $v$ means that $v$ can be decided immediately: the fast path succeeded, and no different value may be adopted or committed.
Adopting $v$ means that the proposer could not commit on the fast path, but $v$ may have committed elsewhere.
The proposer therefore carries $v$ into a fallback consensus protocol, such as Paxos, ensuring that recovery preserves any fast-path decision.}

\cameraz{Formally, adopt-commit satisfies three properties:
(1) \textbf{Termination}: every correct proposer eventually adopts or commits;
(2) \textbf{Validity}: only proposed values are adopted or committed; and
(3) \textbf{Agreement}: if $v$ commits, then only $v$ may be adopted or committed.
Note that, in our formulation, not all correct processes need to propose.}

\cameraz{Building on this abstraction, \sysname synthesizes optimized adopt-commit protocols with per-proposer strategies.}

\subsection{System model}
\label{sec:bg:model}

\cameraz{We consider a standard point-to-point message-passing system with $n \ge 2f{+}1$ processes, of which up to $f$ may crash.}
\cameraz{Links are FIFO and lossless between correct processes.}

\cameraz{The adopt-commit protocols synthesized by \sysname, and the SMR protocol built atop them, are safe under full asynchrony.
Our adopt-commit protocols need eventual crash suspicion to terminate, while good performance requires no false suspicions during stable periods~\cite{ct96}.
As usual, we rely on partial synchrony for deterministic SMR termination~\cite{dls}.}

\cameraz{For synthesis, we also assume stable periods long enough to estimate inter-process latencies and proposal rates, and install new strategies~(\S\ref{sec:smr}). This assumption is needed only for optimization, not safety.}

\section{Recoverable Fast Paths}
\label{sec:constraints}

\cameraz{This section characterizes what information a fast path must leave behind for safe recovery~(\S\ref{sec:constraints:condition}), and shows how the resulting conditions explain familiar protocols~(\S\ref{sec:constraints:examples}).}

\subsection{\cameraz{Recoverability}}
\label{sec:constraints:condition}

\cameraz{Intuitively, safe recovery requires different possible commits to
remain distinguishable after failures. We formalize this intuition with
a new \emph{recoverability} property of adopt-commit:}
\begin{quote}
\camerac{\emph{Any two \camerad{reachable states} in which different values were
committed must differ in the local states of \camerad{more than $f$} processes.}}
\end{quote}

\cameraz{Suppose otherwise that states $S_1$ and $S_2$\camerad{, each reached by some execution~\cite{flp},} contain commits
of $v_1$ and $v_2{\neq}v_1$, but differ at no more than $f$ processes.
If these processes crash and their pending messages are lost, the
survivors have identical local states in both executions. They therefore
behave identically under the same continuation, yet \camerad{if a survivor proposes,} Termination requires
\camerac{that \camerad{it} eventually adopt or commit,}
while Agreement allows only $v_1$
after $S_1$ and only $v_2$ after $S_2$, a contradiction.}
\cameraa{By the same reasoning, consensus satisfies an analogous
recoverability property, with  \emph{decided} in place of \emph{committed}.}

\paragraph{Knowledge requirements}\cameraz{We specialize this property to \emph{acceptance-based} fast paths,
\camerab{which are algorithms \camerad{in which}}
each process accepts the first value it sees in an adopt-commit instance,
never accepts another, and all tracked knowledge concerns these acceptances.
We say that a process
\emph{witnesses} an acceptance when its local state records it.}

\cameraz{We represent proposer $p$'s strategy by a \emph{knowledge requirement}
$R_p$, 
\camerab{which is a mapping where $R_p.\t{keys}()$ are the
\emph{required witnesses} of $p$, and for each $q\in R_p.\t{keys}()$,
$R_p[q]$ is the set of processes that $q$ must have recorded as acceptors of $p$'s value.
We say that $R_p$ is}
\emph{met} once $p$ knows that each such $q$ holds these records.}

\camerac{For example, if
$R_p=\{p{:}\{p,q,r\},q{:}\{p,q\},r{:}\{p,r\}\}$,
then $p$ may commit its value $v$ only after it knows that
(1) $p$, $q$, and $r$ accepted $v$;
(2) $q$ witnessed the acceptances of $p$ and $q$; and
(3) $r$ witnessed the acceptances of $p$ and $r$.}

\cameraz{\camerad{Because each process first witnesses its own acceptance,}
\cameraa{we assume} \camerab{without loss of generality} \cameraa{the following normal form:}
\camerab{for every $q \in R_p.\t{keys}()$,}
\camerac{$R_p[q]\subseteq R_p.\t{keys}()$ and $q\in R_p[q]$.}
\camerad{Indeed, by making every recorded acceptor a required witness and every
required witness a witness of its own acceptance, any requirement can be put
into this form without increasing fast-path latency.}
\cameraa{By the normal form, the} required witnesses
$R_p.\t{keys}()$ \cameraa{are exactly the} processes whose acceptances must be
recorded\cameraa{; we thus call this set} the \emph{quorum} $Q_p$ of $R_p$.}

\camerac{A proposer may hold several strategies, e.g., one per quorum it may
use~(\S\ref{sec:constraints:examples}), and commit when any is met; the
two conditions below apply to each.}

\paragraph{Validity.}\cameraz{Compare two executions in which the same
proposer $p$ commits different values using $R_p$. Only the members of $Q_p$
\camerab{need to}
have accepted when $p$ commits, so the two executions may differ only at
$Q_p$. Recoverability therefore
yields the \emph{validity} condition: \emph{every quorum must contain more than $f$
processes} (i.e., $|Q_p| > f$). We call
\camerac{a requirement that satisfies this condition} \emph{valid}.}

\paragraph{\cameraa{Compatibility}.}\cameraz{\camerac{Consider requirements $R_p$ and $R_q$ for distinct proposers
$p$ and $q$.
Their quorums must intersect: otherwise, both requirements could be met
concurrently for different values, allowing conflicting commits and
violating Agreement.
Yet intersection alone is insufficient: recoverability requires
\cameraa{\emph{evidence}---information in local states that distinguishes
conflicting commits---}to survive $f$ failures.}}

\cameraz{Figure~\ref{fig:intersection} \camerac{shows why}.
Suppose
\camerab{$n{=}7$},
$f{=}3$, and $Q_p\cap Q_q=\{x\}$. Consider two executions that differ
only at $p$, $q$, and $x$.
\camerab{In both, $p$ proposes $v_p$ and $q$ proposes
$v_q{\not=}v_p$.}
In one execution,
$x$ accepts $v_p$ and $p$ commits;
in the other, $x$ accepts $v_q$ and $q$ commits \camerac{(not shown)}. If $p$, $q$, and $x$
then fail, the survivors have identical local states and cannot
distinguish the two.}

\begin{figure}[t]
\centering
\resizebox{\columnwidth}{!}{%
\begin{tikzpicture}[
  x=1cm,y=1cm,
  kcacc/.style={
    circle, draw=black, fill=white,
    minimum size=5.8mm, inner sep=0pt, line width=.55pt
  },
  kcfail/.style={
    circle, draw=red!80!black, fill=white,
    minimum size=5.8mm, inner sep=0pt, line width=.85pt
  },
  kcqp/.style={
    draw=blue!70!black, fill=blue!22,
    fill opacity=.30, line width=.95pt
  },
  kcqq/.style={
    draw=magenta!75!black, fill=magenta!22,
    fill opacity=.24, line width=.95pt,
    dashed, dash pattern=on 3.2pt off 2.2pt
  },
  kctitle/.style={font=\normalsize\bfseries},
  kcann/.style={font=\small\itshape},
  kcleg/.style={font=\small}
]

\def\dx{0.80}
\def\dy{0.83}
\def\xX{1.60}
\def\qX{3.20}
\def\rightshift{5.35}
\def\titleY{1.62}
\def\sepX{4.28}
\def\sepTop{1.18}
\def\sepBot{-1.18}
\def\annY{-1.48}
\def\legendY{-2.42}

\newcommand{\blueblob}{%
  \path[kcqp,rounded corners=9pt]
    (-0.56,0.00) --
    ( 0.42,1.28) --
    ( 1.18,1.28) --
    ( 2.16,0.00) --
    ( 1.18,-1.28) --
    ( 0.42,-1.28) -- cycle;
}
\newcommand{\pinkblob}{%
  \path[kcqq,rounded corners=9pt]
    ( 1.04,0.00) --
    ( 2.02,1.28) --
    ( 2.78,1.28) --
    ( 3.76,0.00) --
    ( 2.78,-1.28) --
    ( 2.02,-1.28) -- cycle;
}

\begin{scope}
  \node[kctitle] at (1.60,\titleY) {(1) Fast-path execution};

  \blueblob
  \pinkblob

  \node[kcacc,fill=blue!22] (p) at (0,0) {$p$};
  \node[kcacc,fill=blue!22] (pu) at (\dx,\dy) {};
  \node[kcacc,fill=blue!22] (pd) at (\dx,-\dy) {};
  \node[kcacc,fill=blue!22] (x) at (\xX,0) {$x$};

  \node[kcacc,fill=magenta!22] (qu) at (2.40,\dy) {};
  \node[kcacc,fill=magenta!22] (qd) at (2.40,-\dy) {};
  \node[kcacc,fill=magenta!22] (q) at (\qX,0) {$q$};

  \draw[-{Latex[length=2.25mm,width=1.85mm]},line width=.72pt]
    (p) -- (pu);
  \draw[-{Latex[length=2.25mm,width=1.85mm]},line width=.72pt]
    (p) -- (pd);
  \draw[-{Latex[length=2.45mm,width=2.05mm]},line width=.76pt]
    (p) -- (x);

  \draw[-{Latex[length=2.25mm,width=1.85mm]},line width=.72pt]
    (q) -- (qu);
  \draw[-{Latex[length=2.25mm,width=1.85mm]},line width=.72pt]
    (q) -- (qd);
  \draw[-{Latex[length=2.45mm,width=2.05mm]},line width=.76pt]
    (q) -- (x);
\end{scope}

\def\triY{-1.00}
\draw[black!60,line width=.55pt,dashed,dash pattern=on 4pt off 3pt]
  (\sepX,\sepTop) -- (\sepX,\triY+.32);

\draw[black!60,-{Latex[length=2.7mm,width=2.8mm]},line width=.95pt]
  (3.785,\triY) -- (4.785,\triY);

\begin{scope}[shift={(\rightshift,0)}]
  \node[kctitle] at (1.60,\titleY) {(2) Recovery};

  \blueblob
  \pinkblob

  \node[kcacc,fill=blue!22] at (\dx,\dy) {};
  \node[kcacc,fill=blue!22] at (\dx,-\dy) {};
  \node[kcacc,fill=magenta!22] at (2.40,\dy) {};
  \node[kcacc,fill=magenta!22] at (2.40,-\dy) {};

  \node[kcfail] at (0,0) {$p$};
  \node[kcfail] at (\xX,0) {$x$};
  \node[kcfail] at (\qX,0) {$q$};
\end{scope}

\node[kcann] at (4.28,\annY) {$p$, $x$, and $q$ fail};

\begin{scope}[shift={(3.635,\legendY)}]
  \path[kcqp] (-3.25,-.12) rectangle (-2.77,.15);
  \node[kcleg,anchor=base west] at (-2.63,-.09) {$Q_p$};

  \path[kcqq] (-1.83,-.12) rectangle (-1.35,.15);
  \node[kcleg,anchor=base west] at (-1.21,-.09) {$Q_q$};

  \node[kcacc] at (0.25,.01) {};
  \node[kcleg,anchor=base west] at (.55,-.09) {acceptor};

  \node[kcfail] at (2.39,.01) {};
  \node[kcleg,anchor=base west] at (2.69,-.09) {failed process};
\end{scope}

\end{tikzpicture}%
}

\caption{\camerac{Loss of intersection evidence.
(1) $x$ accepts $p$'s value, allowing $p$ to commit (return messages are omitted).
(2) After $p$, $x$, and $q$ fail, no survivor records $x$'s acceptance.}}
\label{fig:intersection}
\Description{Diagram of seven processes where the quorums of proposers p and q intersect only at process x. In the first panel, x accepts p's value and p commits. In the second panel, p, q, and x have failed, and no surviving process records x's acceptance.}
\end{figure}

\cameraz{\camerac{We now ask which processes are guaranteed to distinguish
the two executions.
Every process in $Q_p\cap Q_q$ does so directly, because it accepts a
different value in each execution.
A process outside the intersection is guaranteed to distinguish them only
if $R_p$ or $R_q$ requires it to witness an acceptance in the intersection
before the corresponding commit.}
Recoverability therefore yields the
\emph{\cameraa{compatibility}} condition: \emph{every pair of requirements
$R_p$ and $R_q$ must together require \camerad{more than $f$} processes to witness an
acceptance by some process in $Q_p\cap Q_q$.} The condition is symmetric, but
permits asymmetric evidence: one requirement may contribute most witnesses.
We call two requirements satisfying \camerac{this condition} \emph{compatible}.}

\cameraz{For example, $p$, $q$, and $x$ in Figure~\ref{fig:intersection}
provide three distinguishing witnesses. If $R_p$ also requires another process
\cameraa{$w$ to have recorded $x$'s acceptance (i.e., $x\in R_p[w]$), then
$w$ becomes a fourth witness, making $R_p$ and $R_q$ compatible for $f{=}3$.}
\cameraz{Each witness $w$ that $R_p$ requires costs a causal chain: from $p$ to some $x\in Q_p\cap Q_q$, then to $w$, and back to $p$.}}

\medskip
\cameraz{Together, validity and \cameraa{compatibility} exactly characterize
correct fast paths under our model.
\cameraa{Both conditions are necessary for any correct fast path, even in
protocols not defined in terms of acceptance.}
\camerac{Intuitively, for each proposer we can represent the causal message
chains preceding its fast decision as a knowledge requirement.
Because the protocol must satisfy the recoverability property, these induced
requirements satisfy both conditions.}
\cameraz{Conversely, any collection of requirements satisfying both conditions
admits safe recovery.}
We prove \cameraz{the necessity of both conditions} in
\if\extended 1%
\camerad{Appendix~\ref{sec:fast-path-proofs} (\Cref{theorem:necessity})},
\else%
the \cameraz{extended version~\cite{extended}},
\fi%
and Sections~\ref{sec:template} and~\ref{sec:optimizer} demonstrate
\cameraz{their} sufficiency.}

\subsection{Prior protocols and recoverable evidence}
\label{sec:constraints:examples}

\cameraz{Prior protocols can be understood through the knowledge requirements
induced by their common-case communication.}
\cameraz{Each protocol differs in where \cameraa{the evidence required for compatibility} lies, how it is
spread, and how it is recovered, yielding different latency profiles
across topologies.}
Figure~\ref{fig:recoverable-patterns} shows the main cases discussed below.

\paragraph{Paxos, Multi-Paxos, and Pando.}
\cameraz{These protocols simplify recovery by electing one proposer before
values are accepted, so only one path is active at a time.
Paxos \camerac{and Pando perform this election in their respective first phases~\cite{paxos,pando}, while}
Multi-Paxos amortizes it across values~\cite{multipaxos}.
Once a leader is elected, recovery only needs to preserve its value, so any quorum larger than $f$ suffices.}

\cameraz{\camerad{Their induced requirements nonetheless satisfy both conditions.}
For Paxos and Pando, meeting the induced requirement first entails
having a majority accept the value---any two such majorities intersect---and
then making a majority witness these acceptances, ensuring that evidence
of the intersection survives $f$ crashes.
\camerac{In Multi-Paxos, all quorums intersect at the}
stable leader, whose acceptance is witnessed by a majority.}
\cameraz{\sysname's optimal fast path is therefore no slower than these
protocols' common-case paths.}

\begin{figure}[t]
\centering
\begin{tikzpicture}[
  x=1cm,y=1cm,
  kcacc/.style={
    circle, draw=black, fill=white,
    minimum size=5.8mm, inner sep=0pt, line width=.55pt
  },
  kcqp/.style={
    draw=blue!70!black, fill=blue!22,
    fill opacity=.30, line width=.95pt
  },
  kcqq/.style={
    draw=magenta!75!black, fill=magenta!22,
    fill opacity=.24, line width=.95pt,
    dashed, dash pattern=on 3.2pt off 2.2pt
  },
  kcqqline/.style={
    draw=magenta!75!black,
    line width=.95pt,
    dashed, dash pattern=on 3.2pt off 2.2pt
  },
  kctitle/.style={font=\small\bfseries},
  kcprop/.style={font=\footnotesize\bfseries},
  kcinprop/.style={font=\scriptsize\bfseries,text=white},
  kcleg/.style={font=\footnotesize}
]


\def\kcDX{4.30}

\def\kcTopY{1.39}
\def\kcBottomY{-1.57}

\def\kcTopTitleY{1.39}
\def\kcBottomTitleY{1.24}

\def\kcCrossHalfWidth{3.73}
\def\kcCrossHalfHeight{2.37}

\def\kcLegendY{-3.03}

\def\kcwr{2.05mm}

\newcommand{\kccoords}{%
  \coordinate (L)  at (-1.30,  0.00);
  \coordinate (UL) at (-0.650, 0.546);
  \coordinate (C)  at ( 0.00,  0.00);
  \coordinate (LL) at (-0.650, -0.546);
  \coordinate (UR) at (0.650, 0.546);
  \coordinate (LR) at (0.650, -0.546);
  \coordinate (R)  at ( 1.30,  0.00);
}

\newcommand{\kcdrawacc}{%
  \foreach \n in {L,UL,C,LL,UR,LR,R} {%
    \node[kcacc] at (\n) {};%
  }%
}

\newcommand{\kcwit}[1]{%
  \node[kcacc] at (#1) {};
  \fill[black!62] (#1) circle[radius=\kcwr];
}

\newcommand{\kcwitlabel}[2]{%
  \node[kcacc] at (#1) {};
  \fill[black!62] (#1) circle[radius=\kcwr];
  \node[kcinprop] at (#1) {#2};
}


\newcommand{\kcfastblue}{%
  \path[kcqp]
    ([shift={(-229.97:0.35)}]L) arc[start angle=-229.97, end angle=-130.03, radius=0.35]
    -- ([shift={(-130.03:0.35)}]LL) arc[start angle=-130.03, end angle=-90.00, radius=0.35]
    -- ([shift={(-90.00:0.35)}]LR) arc[start angle=-90.00, end angle=40.03, radius=0.35]
    arc[start angle=-139.97, end angle=-220.03, radius=0.499]
    arc[start angle=-40.03, end angle=90.00, radius=0.35]
    -- ([shift={(90.00:0.35)}]UL) arc[start angle=90.00, end angle=130.03, radius=0.35]
    -- cycle;
}
\newcommand{\kcfastpink}{%
  \path[kcqq]
    ([shift={(-49.97:0.43)}]R) arc[start angle=-49.97, end angle=49.97, radius=0.43]
    -- ([shift={(49.97:0.43)}]UR) arc[start angle=49.97, end angle=90.00, radius=0.43]
    -- ([shift={(90.00:0.43)}]UL) arc[start angle=90.00, end angle=220.03, radius=0.43]
    arc[start angle=40.03, end angle=-40.03, radius=0.419]
    arc[start angle=139.97, end angle=270.00, radius=0.43]
    -- ([shift={(-90.00:0.43)}]LR) arc[start angle=-90.00, end angle=-49.97, radius=0.43]
    -- cycle;
}

\newcommand{\kcswiftfill}{%
  \path[kcqp]
    ([shift={(-229.97:0.35)}]L) arc[start angle=-229.97, end angle=-130.03, radius=0.35]
    -- ([shift={(-130.03:0.35)}]LL) arc[start angle=-130.03, end angle=-49.97, radius=0.35]
    -- ([shift={(-49.97:0.35)}]C) arc[start angle=-49.97, end angle=49.97, radius=0.35]
    -- ([shift={(49.97:0.35)}]UL) arc[start angle=49.97, end angle=130.03, radius=0.35]
    -- cycle;
}
\newcommand{\kcswiftdash}{%
  \path[kcqqline]
    ([shift={(-229.97:0.43)}]L) arc[start angle=-229.97, end angle=-130.03, radius=0.43]
    -- ([shift={(-130.03:0.43)}]LL) arc[start angle=-130.03, end angle=-49.97, radius=0.43]
    -- ([shift={(-49.97:0.43)}]C) arc[start angle=-49.97, end angle=49.97, radius=0.43]
    -- ([shift={(49.97:0.43)}]UL) arc[start angle=49.97, end angle=130.03, radius=0.43]
    -- cycle;
}

\newcommand{\kcepblue}{%
  \path[kcqp]
    ([shift={(-229.97:0.35)}]L) arc[start angle=-229.97, end angle=-130.03, radius=0.35]
    -- ([shift={(-130.03:0.35)}]LL) arc[start angle=-130.03, end angle=-49.97, radius=0.35]
    -- ([shift={(-49.97:0.35)}]UR) arc[start angle=-49.97, end angle=90.00, radius=0.35]
    -- ([shift={(90.00:0.35)}]UL) arc[start angle=90.00, end angle=130.03, radius=0.35]
    -- cycle;
}
\newcommand{\kceppink}{%
  \path[kcqq]
    ([shift={(-49.97:0.43)}]R) arc[start angle=-49.97, end angle=49.97, radius=0.43]
    -- ([shift={(49.97:0.43)}]UR) arc[start angle=49.97, end angle=130.03, radius=0.43]
    -- ([shift={(-229.97:0.43)}]LL) arc[start angle=-229.97, end angle=-90.00, radius=0.43]
    -- ([shift={(-90.00:0.43)}]LR) arc[start angle=-90.00, end angle=-49.97, radius=0.43]
    -- cycle;
}

\begin{scope}[shift={(-\kcDX/2,\kcTopY)}]
  \node[kctitle] at (0,\kcTopTitleY) {(a) Fast Paxos};

  \kccoords
  \kcfastblue
  \kcfastpink

  \kcdrawacc

  \foreach \n in {UL,C,LL,UR,LR} {%
    \kcwit{\n}%
  }

  \node[kcprop, text=blue!70!black]    at (-1.43,0.74) {$p$};
  \node[kcprop, text=magenta!75!black] at (1.49,-0.70) {$q$};
\end{scope}

\begin{scope}[shift={(\kcDX/2,\kcTopY)}]
  \node[kctitle] at (0,\kcTopTitleY) {(b) SwiftPaxos};

  \kccoords
  \kcswiftfill
  \kcswiftdash

  \kcdrawacc

  \foreach \n in {L,UL,C,LL} {%
    \kcwit{\n}%
  }

  \node[kcprop, text=blue!70!black]    at (-1.43,0.74) {$p$};
  \node[kcprop, text=magenta!75!black] at (1.49,-0.70) {$q$};
\end{scope}

\begin{scope}[shift={(-\kcDX/2,\kcBottomY)}]
  \node[kctitle] at (0,\kcBottomTitleY) {(c) EPaxos (1)};

  \kccoords
  \kcepblue
  \kceppink

  \kcdrawacc

  \kcwit{L}
  \kcwit{UL}
  \kcwitlabel{C}{$p$}
  \kcwit{LL}
  \kcwit{UR}
  \kcwitlabel{R}{$q$}
\end{scope}

\begin{scope}[shift={(\kcDX/2,\kcBottomY)}]
  \node[kctitle] at (0,\kcBottomTitleY) {(d) EPaxos (2)};

  \kccoords
  \kcepblue
  \kceppink

  \kcdrawacc

  \kcwitlabel{L}{$p$}
  \kcwit{C}
  \kcwitlabel{R}{$q$}
  \kcwit{UR}
  \kcwit{LL}
\end{scope}

\draw[black!55,line width=.45pt]
  (0,-\kcCrossHalfHeight) -- (0,\kcCrossHalfHeight);
\draw[black!55,line width=.45pt]
  (-\kcCrossHalfWidth,0) -- (\kcCrossHalfWidth,0);

\begin{scope}[shift={(-0.66,\kcLegendY)}]
  \path[kcqp] (-3.27,-.12) rectangle (-2.85,.15);
  \node[kcleg, anchor=base west] at (-2.73,-.085) {$Q_p$};

  \path[kcqq] (-2.05,-.12) rectangle (-1.63,.15);
  \node[kcleg, anchor=base west] at (-1.51,-.085) {$Q_q$};

  \node[kcacc] at (0.07,.01) {};
  \node[kcleg, anchor=base west] at (.37,-.085) {acceptor};

  \fill[black!62] (1.90,.01) circle[radius=\kcwr];
  \node[kcleg, anchor=base west] at (2.16,-.085) {intersection witness};
\end{scope}

\end{tikzpicture}

\caption{\cameraz{Intersection witnesses in prior protocols for $n{=}7$ and $f{=}3$.
  The witnesses of Fast Paxos and SwiftPaxos lie directly in quorum intersections, while EPaxos also uses witnesses outside them \camerad{because each proposer both witnesses its quorum and is witnessed by it}.}}
\label{fig:recoverable-patterns}
\Description{Four diagrams of seven processes showing proposer quorums and the processes that witness an acceptance in the quorum intersection: (a) Fast Paxos with quorums of six processes, (b) SwiftPaxos with a shared quorum of four processes, and (c, d) EPaxos with quorums of five processes, where proposers in their own quorums add witnesses outside the intersection.}
\end{figure}

\paragraph{Fast Paxos.}
\cameraz{Fast Paxos allows concurrent proposers to commit in one round
trip~\cite{fast-paxos}. Because there is no second round to spread
evidence, the processes in the intersection of two quorums must
themselves provide the $f{+}1$ witnesses required for recovery.
Moreover, Fast Paxos does not require proposers to commit to quorums in advance: a proposer
may commit with any sufficiently large set of acceptors.
Thus, every two possible fast quorums of size $q$ must intersect in more
than $f$ processes, requiring
$2q{-}n{>}f$, or $q{>}(n{+}f){/}2$.
For example, with $n{=}7$ and $f{=}3$, a fast quorum needs $6$ processes,
rather than the minimum $f{+}1{=}4$~(\Cref{fig:recoverable-patterns}(a)).}

\paragraph{SwiftPaxos.}
SwiftPaxos also supports multiple concurrent proposers and admits two kinds of fast paths~\cite{swiftpaxos}.
In one mode, it uses large quorums à la Fast Paxos. In another, \camerad{all proposers share a single} \cameraz{optimally small}
\camerad{quorum} of size $f{+}1$~(\Cref{fig:recoverable-patterns}(b)). Either way, intersections themselves are large enough for direct recovery.
\cameraz{Moreover, since
\camerac{all quorums intersect at a designated leader}, these
requirements \camerac{are compatible} with Multi-Paxos requirements that use it.}

\paragraph{EPaxos.}
\cameraz{EPaxos also commits in one round trip, but unlike \cameraa{Fast Paxos} and SwiftPaxos, each
proposer is itself an acceptor and belongs to its own fast quorum~\cite{epaxos}.
This lets the protocol create witnesses outside the quorum intersection.
For example, with $n{=}7$ and $f{=}3$, an EPaxos fast quorum has $5$
processes, so two quorums may intersect in only $3$---fewer than the
$f{+}1{=}4$ witnesses required for recovery.
If the intersection contains proposer $p$, then $p$'s acceptance lies in
the intersection; because $p$ accepts before sending its value, every
process in $Q_p$ witnesses that acceptance
(\Cref{fig:recoverable-patterns}(c)).
Otherwise, the intersection excludes both proposers: its members witness
their own acceptances, while the replies make $p$ and $q$ additional
witnesses, yielding at least $|Q_p\cap Q_q|{+}2{>}f$ witnesses
(\Cref{fig:recoverable-patterns}(d)).
Thus, EPaxos spreads evidence beyond the intersection within the same
round trip, allowing smaller intersections than Fast Paxos.}

\medskip
\cameraz{In practice, SwiftPaxos and EPaxos do not decide one command per
slot, but dependencies for each command.
This distinction is not fundamental.
Their correctness relies on \emph{Visibility} and \emph{Agreement}~\cite{epaxosfixed}:
for any two conflicting commands, at least one must depend on the other,
and all processes must agree on each command's dependencies.
Thus, for conflicting commands $c_p$ and $c_q$, deciding whether $c_q$
belongs to $c_p$'s dependencies is itself consensus and must remain
recoverable.
Their fast paths therefore satisfy the same validity and
\cameraa{compatibility} conditions.}

\medskip

\cameraz{These examples show that familiar protocols satisfy the same
recoverability conditions through different evidence patterns, yielding
different latency profiles across topologies.
\sysname captures these patterns as per-proposer knowledge requirements, which instantiate a generic fast-path template.}

\section{\sysname's Fast-Path Template}
\label{sec:template}
\label{sec:ac}

\cameraz{\sysname turns the recoverability conditions of
\Cref{sec:constraints} into a generic adopt-commit template.}
\camerac{This template is parameterized by a knowledge
requirement per proposer. \camerab{This knowledge requirement is checked
for validity and pairwise \camerad{compatibility}~(\S\ref{sec:constraints:condition})
by Algorithm~\ref{alg:ac:compatible}.}
\Cref{sec:optimizer} explains how to select an optimal combination of requirements for a given deployment.}

\cameraz{The algorithms below describe
\cameraa{a single instance of \camerac{a} one-shot adopt-commit \camerac{protocol}---\camerad{our} SMR uses one
such instance as the fast path for consensus in each log slot}~(\S\ref{sec:smr}).}

\cameraz{After an overview~(\S\ref{sec:ac:overview}), we describe evidence
dissemination~(\S\ref{sec:ac:disseminate}),
\camera{commitment}~(\S\ref{sec:ac:commit}), recovery~(\S\ref{sec:ac:adopt}),
and non-voting proposers~(\S\ref{sec:ac:nonvoting}).}

\subsection{Overview}
\label{sec:ac:overview}

\begin{lstlisting}[caption={Valid and Compatible Requirements},label={alg:ac:compatible},float]
def valid(req): return |req.keys()| > f 

def compatible(req_a, req_b):@\label{alg:comp:def}@
  intersection = req_a.keys() @$\cap$@ req_b.keys()
  // Acceptors of A/B that must witness the inter.
  witnesses_a = {p s.t. req_a[p] @$\cap$@ intersection@\;\;$\neq\varnothing$@}@\label{alg:comp:witnesses_a}@
  witnesses_b = {p s.t. req_b[p] @$\cap$@ intersection@\;\;$\neq\varnothing$@}@\label{alg:comp:witnesses_b}@
  return |witnesses_a @$\cup$@ witnesses_b| > f@\label{alg:comp:bigger than f}@
\end{lstlisting}

\ContinueLineNumber

\begin{lstlisting}[float,caption={\sysname's Adopt-Commit's Propose Method},label={alg:ac:propose}]
def AdoptCommit::Propose(v):
  send Accept&Spread(v,{}) to myself // init spread@\label{alg:ac:propose:disseminate}@
  wait until can_commit(v) or DoAdopt() was received@\label{alg:ac:propose:until}@
  if wait was ended by can_commit(v): @\label{alg:ac:propose:can_commit}@
    trigger AdoptCommit::Commit(v)@\label{alg:ac:propose:commit}@
  else: // DoAdopt() was received, recovery@\label{alg:ac:propose:conflict}@
    send Freeze() to all@\label{alg:ac:propose:freeze}@ // including self
    wait for n-f Frozen(*) replies@\label{alg:ac:propose:wait}@
    v' = potential_commit(replies)@\label{alg:ac:propose:potential}@
    if v' == @$\bot$@: v' = v // any proposed value works@\label{alg:ac:propose:any}@
    trigger AdoptCommit::Adopt(v')@\label{alg:ac:propose:adopt}@
\end{lstlisting}

\cameraz{Algorithm~\ref{alg:ac:propose} gives the top-level
\t{Propose} method.
To propose a value $v$, process $p$ starts disseminating evidence about
$v$'s acceptance by invoking its own \t{Accept\&Spread} handler
(\cref{alg:ac:propose:disseminate}).
It commits once it learns enough evidence to meet \camerac{its requirement} $R_p$
(\crefrange{alg:ac:propose:until}{alg:ac:propose:commit});
a conflict or fault suspicion instead abandons the fast path and
triggers recovery~(\cref{alg:ac:propose:conflict}).}

\cameraz{Recovery begins by freezing $n{-}f$ processes
(\crefrange{alg:ac:propose:freeze}{alg:ac:propose:wait}).
Because every valid requirement has more than $f$ witnesses, the frozen
set contains at least one from each requirement.
Once frozen, \camerad{such} a witness cannot gain evidence, \camerad{so} its \t{Frozen} reply
\camerad{either rules out any commit requiring evidence the witness lacks, or}
records the evidence a commit could rely on.
These replies form the \emph{\textbf{K}nowledge \textbf{Census}} that
gives \sysname its name.
From this census, $p$ determines what value, if any, 
\camerab{can be}
a possible
fast commitment~(\cref{alg:ac:propose:potential}).
If \camerac{this value exists}, $p$ adopts it
(\cref{alg:ac:propose:adopt}); otherwise, any proposed value is safe,
so $p$ adopts its own~(\cref{alg:ac:propose:any}).}

\subsection{Disseminating evidence}
\label{sec:ac:disseminate}

\cameraz{Algorithm~\ref{alg:ac:disseminate} disseminates evidence as
quickly as possible.}
\cameraz{A process accepts the first value it receives and records it in
\t{accepted}~(\cref{alg:ac:ifaccepted}).
It never changes its acceptance.
If it later receives evidence for another value, it detects a conflict
and asks proposers to abandon the fast path
(\crefrange{alg:ac:ifdiffaccepted}{alg:ac:conflict1}).}

\cameraz{The \t{known\_acceptors} array records witnessed evidence.
\cameraa{At each process, when $\t{accepted}=v$,
$\t{known\_acceptors}[q]$ represents its current view of which processes $q$
has witnessed accepting $v$.
Thus, $\t{known\_acceptors}[\t{me}]$ is the process's own first-order
evidence, while other entries are its second-order view of other
processes' evidence.
Whenever this evidence grows, the process broadcasts the updated array}
(\crefrange{alg:ac:grew}{alg:ac:broadcast}).
\cameraa{Every acceptance represented in $\t{known\_acceptors}[q]$ originates
in a report from $q$, even if relayed by other processes; consequently, this
entry attributes to $q$ only acceptances that $q$ has witnessed.}}

\begin{figure}
    \centering
    \Description{Pseudocode of the Accept\&Spread and Freeze message handlers.}
    \input{pseudocode/disseminate}
\end{figure}

\cameraz{Without conflicts, evidence therefore propagates in three
conceptual steps: (1) processes accept the value, (2) processes learn which
acceptances occurred, and (3) the proposer learns what evidence each
witness can report.
\Cref{fig:dissemination} illustrates this propagation in a five-process
ring.}

\cameraz{Upon receiving \t{Freeze}, a process stops
handling further \t{Accept\&Spread} messages \cameraa{for any
value}~(\cref{alg:ac:conflict2}) and replies with
its accepted value and $\t{known\_acceptors}[\t{me}]$
(\cref{alg:ac:frozen}).
\cameraa{Because a process's own entry only grows until it freezes, the entry
in its \t{Frozen} reply is a superset of every earlier version, including any
version on which a proposer may have relied.}}

\begin{figure} 
\begin{minipage}{\columnwidth}
    \newlength{\kccolsix}\newlength{\kccolthree}%
    \setlength{\kccolsix}{\dimexpr(\textwidth-12\tabcolsep-7\arrayrulewidth)/6\relax}%
    \settowidth{\kccolthree}{Excludes P1's}%
    \setlength{\kccolthree}{\dimexpr(\textwidth-\kccolthree-8\tabcolsep-5\arrayrulewidth)/3\relax}%
    \centering
    \resizebox{\columnwidth}{!}{%
\begin{tikzpicture}[
  x=1cm,y=1cm,
  proc/.style={circle,draw=black,fill=white,minimum size=5.0mm,inner sep=0pt,line width=.55pt,font=\large},
  pwhite/.style={proc,fill=white},
  pblue/.style={proc,fill=blue!22},
  ppink/.style={proc,fill=magenta!22},
  tlabel/.style={font=\Large},
  link/.style={draw=black,line width=.58pt},
  barr/.style={draw=blue!55,line width=.90pt,-{Latex[length=2.0mm,width=1.75mm]}},
  marr/.style={draw=magenta!65!black,line width=.90pt,-{Latex[length=2.0mm,width=1.75mm]}},
  trans/.style={draw=black,line width=.78pt,-{Latex[length=2.4mm,width=2.1mm]}}
]

\def\StepX{2.10}

\def\TitleY{1.36}
\def\TopY{0.61}
\def\SideY{0.13}
\def\BottomY{-0.66}
\def\SideX{0.66}
\def\BottomX{0.41}

\def\ClockX{1.05}
\def\ClockY{1.10}
\def\ClockR{0.10}
\def\TransY{0.80}
\def\TransL{0.78}
\def\TransR{1.32}


\def\LongStart{0.10}
\def\LongEnd{0.64}

\def\TopDx{0.055}
\def\TopDy{0.085}
\def\RightDx{0.095}
\def\RightDy{-0.015}
\def\BottomDy{-0.105}

\def\LeftBlueDx{-0.12}
\def\LeftBlueDy{-0.030}
\def\LeftMagDx{0.09}
\def\LeftMagDy{0.04}

\newcommand{\edgearrowxy}[7]{%
  \draw[#1]
    ($ (#2)!#4!(#3) + (#6,#7) $)
    --
    ($ (#2)!#5!(#3) + (#6,#7) $);
}

\newcommand{\ringcoords}{%
  \coordinate (n1) at (0,\TopY);
  \coordinate (n2) at (\SideX,\SideY);
  \coordinate (n3) at (\BottomX,\BottomY);
  \coordinate (n4) at (-\BottomX,\BottomY);
  \coordinate (n5) at (-\SideX,\SideY);
}
\newcommand{\drawring}{\draw[link] (n1)--(n2)--(n3)--(n4)--(n5)--cycle;}
\newcommand{\drawnodes}[5]{%
  \node[#1] at (n1) {1};
  \node[#2] at (n2) {2};
  \node[#3] at (n3) {3};
  \node[#4] at (n4) {4};
  \node[#5] at (n5) {5};
}

\newcommand{\bluefromone}{%
  \edgearrowxy{barr}{n1}{n5}{\LongStart}{\LongEnd}{-\TopDx}{\TopDy}%
  \edgearrowxy{barr}{n1}{n2}{\LongStart}{\LongEnd}{ \TopDx}{\TopDy}%
}

\newcommand{\bluetoone}{%
  \edgearrowxy{barr}{n5}{n1}{\LongStart}{\LongEnd}{-\TopDx}{\TopDy}%
  \edgearrowxy{barr}{n2}{n1}{\LongStart}{\LongEnd}{ \TopDx}{\TopDy}%
}

\newcommand{\bluesides}{%
  \edgearrowxy{barr}{n5}{n4}{\LongStart}{\LongEnd}{\LeftBlueDx}{\LeftBlueDy}%
  \edgearrowxy{barr}{n2}{n3}{\LongStart}{\LongEnd}{\RightDx}{\RightDy}%
}

\newcommand{\magleftup}{%
  \edgearrowxy{marr}{n4}{n5}{\LongStart}{\LongEnd}{\LeftMagDx}{\LeftMagDy}%
}

\newcommand{\magrightup}{%
  \edgearrowxy{marr}{n3}{n2}{\LongStart}{\LongEnd}{\RightDx}{\RightDy}%
}

\newcommand{\magbottomlr}{%
  \edgearrowxy{marr}{n4}{n3}{\LongStart}{\LongEnd}{0}{\BottomDy}%
}
\newcommand{\magbottomrl}{%
  \edgearrowxy{marr}{n3}{n4}{\LongStart}{\LongEnd}{0}{\BottomDy}%
}

\newcommand{\transition}{%
  \draw[line width=.46pt] (\ClockX,\ClockY) circle[radius=\ClockR];
  \draw[line width=.46pt,line cap=round]
    (\ClockX,\ClockY)--(\ClockX,\ClockY+0.065);
  \draw[line width=.46pt,line cap=round]
    (\ClockX,\ClockY)--(\ClockX+0.055,\ClockY+0.035);
  \draw[trans] (\TransL,\TransY)--(\TransR,\TransY);
}

\begin{scope}[shift={(0,0)}]
  \ringcoords
  \drawring
  \bluefromone
  \drawnodes{pblue}{pwhite}{pwhite}{pwhite}{pwhite}
  \node[tlabel] at (0,\TitleY) {$t{=}0$};
  \transition
\end{scope}

\begin{scope}[shift={(\StepX,0)}]
  \ringcoords
  \drawring
  \bluetoone
  \bluesides
  \magleftup
  \magbottomlr
  \drawnodes{pblue}{pblue}{pwhite}{ppink}{pblue}
  \node[tlabel] at (0,\TitleY) {$t{=}1$};
  \transition
\end{scope}

\begin{scope}[shift={(2*\StepX,0)}]
  \ringcoords
  \drawring
  \bluefromone
  \magrightup
  \magbottomrl
  \drawnodes{pblue}{pblue}{ppink}{ppink}{pblue}
  \node[tlabel] at (0,\TitleY) {$t{=}2$};
  \transition
\end{scope}

\begin{scope}[shift={(3*\StepX,0)}]
  \ringcoords
  \drawring
  \bluetoone
  \bluesides
  \magleftup
  \magbottomlr
  \drawnodes{pblue}{pblue}{ppink}{ppink}{pblue}
  \node[tlabel] at (0,\TitleY) {$t{=}3$};
  \transition
\end{scope}

\begin{scope}[shift={(4*\StepX,0)}]
  \ringcoords
  \drawring
  \drawnodes{pblue}{pblue}{ppink}{ppink}{pblue}
  \node[tlabel] at (0,\TitleY) {$t{=}4$};
\end{scope}

\end{tikzpicture}%
}%
    \captionof{figure}{\cameraz{Evidence spread in a five-process ring.
    Different colors represent different accepted values, and arrows
    represent broadcasts.
    At $t{=}0$, P1 starts spreading its value, which reaches P2 and P5
    at $t{=}1$ and is relayed onward.
    At the same time, P4 spreads its own value.
    At $t{=}2$, P3 accepts P4's value, so P3 and P4 ignore P1's.
    Processes broadcast until $t{=}4$.
    \Cref{tab:disseminate:known,tab:disseminate:known2} show the
    resulting evidence.}}
    \label{fig:dissemination}
    \Description{Five snapshots, at times 0 to 4, of five processes arranged in a ring. Process P1 spreads its value to P2 and P5, which relay it, while P4 spreads a different value that P3 accepts. Arrows show broadcasts between neighboring processes.}

    \vspace{2em}
    \captionof{table}{\cameraz{Processes' $1^{\text{st}}$-order evidence:
    \t{known\_acceptors[me]}}}
    \label{tab:disseminate:known}
    \vspace{-0.5em}
    \begin{tabular}{|*{6}{>{\centering\arraybackslash}p{\kccolsix}|}}
    \hline
    Process & t=0           & t=1           & t=2       & t=3       & t=4       \\ \hline
    P1 & \{1\}         & \{1\}         & \{1,2,5\} & \{1,2,5\} & \{1,2,5\} \\ \hline
    P2 & $\varnothing$ & \{1,2\}       & \{1,2\}   & \{1,2,5\} & \{1,2,5\} \\ \hline
    P3 & $\varnothing$ & $\varnothing$ & \{3,4\}   & \{3,4\}   & \{3,4\}   \\ \hline
    P4 & $\varnothing$ & \{4\}         & \{4\}     & \{3,4\}   & \{3,4\}   \\ \hline
    P5 & $\varnothing$ & \{1,5\}       & \{1,5\}   & \{1,2,5\} & \{1,2,5\} \\ \hline
    \end{tabular}

    \vspace{2em}
    \captionof{table}{\cameraz{Proposers' $2^{\text{nd}}$-order evidence: \t{known\_acceptors}}}
    \label{tab:disseminate:known2}
    \vspace{-0.5em}
    \begin{tabular}{|*{6}{>{\centering\arraybackslash}p{\kccolsix}|}}
\hline
   Process & t=0           & t=1     & t=2                                                                         & t=3                                                                         & t=4                                                        \\ \hline
P1 & 1:\{1\}       & 1:\{1\} & \begin{tabular}[c]{@{}c@{}}1:\{1,2,5\}\\ 2:\{1,2\}\\ 5:\{1,5\}\end{tabular} & \begin{tabular}[c]{@{}c@{}}1:\{1,2,5\}\\ 2:\{1,2\}\\ 5:\{1,5\}\end{tabular} & \begin{tabular}[c]{@{}c@{}}1:\{1,2,5\}\\ 2:\{1,2,5\}\\ 5:\{1,2,5\}\end{tabular} \\ \hline
P4 & $\varnothing$ & 4:\{4\} & 4:\{4\}                                                                     & \begin{tabular}[c]{@{}c@{}}3:\{3,4\}\\ 4:\{3,4\}\end{tabular}               & \begin{tabular}[c]{@{}c@{}}3:\{3,4\}\\ 4:\{3,4\}\end{tabular}                   \\ \hline
\end{tabular}

    \vspace{2em}
    \captionof{table}{\cameraz{Potential commits P2 computes at
    $t{\in}[0,4]$ for each set of \camerad{$n-f$} replies, based on
    \Cref{tab:disseminate:known}.}}
    \label{tab:adopt}
    \vspace{-0.5em}
\begin{tabular}{|c|*{3}{>{\centering\arraybackslash}p{\kccolthree}|}}
\hline
Replies             & t=0    & t=1    & $t{\in}[2,4]$\\ \hline
Includes P1's & $\bot$ & $\bot$ & \cameraz{P1's value} \\ \hline
Excludes P1's & $\bot$ & \cameraz{P1's value} & \cameraz{P1's value} \\ \hline
\end{tabular}
    \vspace{0.5em}
\end{minipage}
\end{figure}

\subsection{Committing a \cameraz{value}}
\label{sec:ac:commit}

\cameraz{Algorithm~\ref{alg:ac:commit} checks whether $R_p$ is met: after
verifying that it accepted $v$ itself, proposer $p$ checks for every key $q$
that its current view of $q$'s evidence covers $R_p[q]$
(\crefrange{alg:commit:for}{alg:commit:true}).}

\begin{figure}
    \centering
    \Description{Pseudocode of the can\_commit check and of fast-path abandonment on failure suspicion.}
    \input{pseudocode/commit}
    \input{pseudocode/fd}
\end{figure}

\cameraz{For example, return to the execution in
\Cref{fig:dissemination}.
Suppose \cameraa{P1} uses requirement
\t{\{1:\{1,2,5\},2:\{2\},5:\{5\}\}}, while \cameraa{P4} uses
\camerab{\t{\{4:\{3,4,5\},3:\{3\},5:\{5\}\}}.}
By \Cref{tab:disseminate:known2}, \cameraa{P1} meets its requirement at
$t{=}2$, one round trip after proposing, and may commit.
\cameraa{P4} never meets its requirement.}

\cameraz{Failures can prevent required evidence from arriving.
A proposer therefore abandons the fast path if it suspects one of its
required witnesses~(Algorithm~\ref{alg:ac:fd}).
As discussed in \S\ref{sec:bg:model}, a false suspicion is safe: it merely
triggers recovery earlier.}

\subsection{Adopting a \cameraz{value}}
\label{sec:ac:adopt}

\cameraz{When a proposer cannot commit, it freezes $n{-}f$ processes and
treats their \t{Frozen} replies as a census of the recoverable evidence.
Algorithm~\ref{alg:ac:adopt} tests each value $v$ appearing in this
census against the requirement of $v$'s proposer.
A reply rules out $v$ in either of two ways:
it shows that some required acceptor accepted a different value
(\crefrange{alg:ac:adopt:for3}{alg:ac:adopt:false2}), or
\cameraa{the reply}
comes from a
required witness that accepted $v$ but lacks evidence required by
$R_v$~(\crefrange{alg:ac:adopt:for2}{alg:ac:adopt:false1}).
Any value not ruled out remains a possible fast commitment.}

\begin{figure}
    \centering
    \Description{Pseudocode of the potential\_commit and could\_potentially\_commit functions used for recovery.}
    \input{pseudocode/adopt}
\end{figure}

\cameraz{The key safety invariant is:
\emph{if a value $v$ committed, then every set of $n{-}f$
\t{Frozen} replies preserves $v$ and rules out every conflicting
value.}
Validity ensures that more than $f$ required witnesses accepted $v$,
so every census contains a reply carrying $v$.
Because $R_v$ was actually met, no reply can rule $v$ out.
Now consider a conflicting value $v'$.
Compatibility gives more than $f$ witnesses, required by $v$ or $v'$,
that observed an acceptance in the intersection of their quorums.
Therefore, every census contains at least one such witness.
Its report either reveals that an acceptor required by $v'$ chose $v$,
or shows that a required witness for $v'$ lacked evidence that $R_{v'}$
would have demanded.
Thus, every conflicting $v'$ is ruled out.
If some value committed, it is therefore the only surviving candidate
and must be adopted.
This realizes the sufficiency direction of the conditions in
\S\ref{sec:constraints:condition}, of which a
\cameraz{full proof is given in the}
\if\extended 1%
appendix.%
\else%
\cameraz{extended version~\cite{extended}.}%
\fi%
}

\cameraz{As a concrete example, return to \Cref{fig:dissemination}\cameraa{.
Suppose a conflict or failure suspicion triggers recovery at P2 at some
$t{\in}[0,4]$, and P2 instantaneously receives \t{Frozen} replies from
any $n{-}f{=}3$ processes.}\footnote{\cameraz{After the freeze, this hypothetical
execution diverges from \Cref{fig:dissemination}.}}
\Cref{tab:adopt} lists the resulting potential commit \cameraa{for
every such reply set, grouped by whether it includes P1. At $t{=}1$, P1's reply
rules out its value, leaving no potential commit~($\bot$).
\camerac{This is} because its requirement calls for evidence of P2's and P5's
acceptances, which P1 has not yet witnessed. Without P1's reply, recovery cannot
observe this missing evidence, so P1's value remains possible.}
From $t{=}2$, when \cameraa{P1} can commit, \cameraa{P2} always adopts
\cameraa{P1's} value, regardless of which three processes \cameraa{reply}.}

\subsection{Non-voting processes}
\label{sec:ac:nonvoting}

\cameraz{So far, every proposer is a voting process whose failure counts
toward 
\camerab{the failure budget $f$.}
\camerad{In geo-replicated systems,} clients \camera{benefit from proposing
directly, \cameraa{which saves} a hop to a replica. \camerad{Yet}
\camerab{we should not include clients as regular proposer processes in the protocol,
  because doing so disrupts the resiliency of the system.
Indeed, clients are numerous and flaky---\camerad{more than a majority may crash}---so adding them as regular
  processes could cause executions to exceed the failure budget.}}
\sysname thus \cameraa{lets clients act as} \emph{non-voting processes}
\cameraa{that} \camerac{merely} propose values and disseminate evidence.}

\cameraz{A non-voting process never appears as a required witness or
acceptor in any requirement.
No commit therefore depends on its state, so its failure does not count
toward \camerab{the failure budget.}}

\cameraz{Non-voting processes otherwise run the same protocol, with two
exceptions.
Their evidence is not tracked: they do not add \t{\{me\}} in
\cref{alg:ac:knownacceptorsme1}, and
\cref{alg:ac:q} ranges only over voting processes.
They also ignore \t{Freeze} messages~(\cref{alg:ac:upon2}), since
recovery needs reports only from voting processes.
Accordingly, the $n$ used in
\cref{alg:ac:propose:wait} counts only voting processes.}

\section{Synthesizing Optimal Fast Paths}
\label{sec:optimizer}

The template of \Cref{sec:template} works with any valid and compatible requirement combination.
\cameraz{This section explains how \sysname finds \cameraa{one that minimizes the target latency metric} for a \camerac{given} deployment\cameraa{,
both initially and when conditions change~(\S\ref{sec:smr})}.}
We first turn topology measurements into candidate requirements~(\S\ref{sec:optimizer:budgets}), then choose a compatible combination that optimizes the target metric~(\S\ref{sec:optimizer:solve}), and finally reduce network traffic without affecting latency~(\S\ref{sec:optimizer:network}).\todo{A: missing 5.3, which is weird but fine. C: fine, and would be weird to list in between.}

\subsection{From topology to requirements}
\label{sec:optimizer:budgets}

The first step is to discretize the design space. Although time is continuous, a proposer's \cameraz{requirement} can change only when the proposer learns new evidence. Thus, each evidence arrival defines a relevant \emph{budget}: the earliest time at which the proposer can satisfy \cameraz{a stronger requirement}.

For each proposer $p$, \sysname simulates a single-proposer execution of the evidence-spreading algorithm from \Cref{sec:ac:disseminate}, using measured latencies. The simulation records when $p$ learns each piece of first- or second-order evidence. For every resulting budget $b$, \sysname derives \cameraz{the requirement} $R_{p,b}$, i.e., the pieces of evidence that $p$ can gather by time $b$.
\cameraz{Every candidate has the normal form of
\S\ref{sec:constraints:condition}}\camerad{, which Algorithm~\ref{alg:ac:disseminate} ensures by construction~(\crefrange{alg:ac:knownacceptorsme1}{alg:ac:knownacceptorsq}).}
Lastly, invalid requirements are discarded.

This step replaces an intractable search over arbitrary requirements (up to $2^{n\times{}n}$ combinations) with a search over the budgets at which the proposer learns something new. The derived requirements may include unnecessary evidence: for example, a 10\,ms budget also includes evidence that arrived after 1\,ms, even if compatibility does not need it. This extra evidence does not increase latency and can later be trimmed.

\begin{figure}
    \centering
    \definecolor{kcorange}{RGB}{216,139,57}
\definecolor{kcgreen}{RGB}{79,126,52}
\definecolor{kcyellow}{RGB}{233,194,69}
\resizebox{\columnwidth}{!}{%
\begin{tikzpicture}[
  x=1cm,y=1cm,
  box/.style={draw=black,line width=.65pt},
  grid/.style={draw=black,line width=.55pt},
  axis/.style={
    draw=black,line width=.85pt,
    -{Latex[length=3.3mm,width=3.2mm]}
  },
  plabel/.style={font=\small},
  axislabel/.style={font=\small},
  orange/.style={draw=kcorange,line width=1.28pt},
  green/.style={draw=kcgreen,line width=1.28pt},
  yellow/.style={draw=kcyellow,line width=1.28pt}
]

\def\BoxW{1.16}
\def\BoxH{1.72}
\def\Xone{0.00}
\def\Xtwo{1.60}
\def\Xthree{3.20}
\def\Xfour{4.80}
\def\Xfive{6.40}

\def\LeftAxisX{-0.20}
\def\RightAxisX{7.76}
\def\LabelY{-0.23}
\def\AxisBottom{0.00}
\def\AxisTop{1.90}

\def\CellH{0.2867}

\newcommand{\budgetbox}[2]{%
  \draw[box] (#1,0) rectangle ++(\BoxW,\BoxH);
  \foreach \i in {1,...,5} {
    \draw[grid] (#1,{\i*\CellH}) -- ++(\BoxW,0);
  }
  \node[plabel] at ({#1+\BoxW/2},\LabelY) {Process #2};
}

\newcommand{\plusmark}[2]{%
  \draw[orange,line width=1.6pt]
    (#1-0.075,#2) -- (#1+0.075,#2);
  \draw[orange,line width=1.6pt]
    (#1,#2-0.075) -- (#1,#2+0.075);
}

\newcommand{\circlemark}[2]{%
  \fill[kcgreen] (#1,#2) circle[radius=0.075];
}

\newcommand{\xmark}[2]{%
  \draw[yellow,line width=1.65pt]
    (#1-0.068,#2-0.068) -- (#1+0.068,#2+0.068);
  \draw[yellow,line width=1.65pt]
    (#1-0.068,#2+0.068) -- (#1+0.068,#2-0.068);
}

\budgetbox{\Xone}{1}
\budgetbox{\Xtwo}{2}
\budgetbox{\Xthree}{3}
\budgetbox{\Xfour}{4}
\budgetbox{\Xfive}{5}

\def\COne{0.58}
\def\CTwo{2.18}
\def\CThree{3.78}
\def\CFour{5.38}
\def\CFive{6.98}

\draw[black,line width=.72pt]
  (\LeftAxisX,0) -- (\RightAxisX,0);

\draw[axis]
  (\LeftAxisX,\AxisBottom) -- (\LeftAxisX,\AxisTop);
\node[axislabel,rotate=90]
  at (\LeftAxisX-0.25,0.92) {Time Budget};

\draw[axis]
  (\RightAxisX,\AxisBottom) -- (\RightAxisX,\AxisTop);
\node[axislabel,rotate=90]
  at (\RightAxisX+0.25,0.92) {Requirement};


\draw[orange]
  (\COne,1.285) --
  (\CTwo,1.572) --
  (\CThree,1.000) --
  (\CFour,0.430) --
  (\CFive,0.430);

\plusmark{\COne}{1.285}
\plusmark{\CTwo}{1.572}
\plusmark{\CThree}{1.000}
\plusmark{\CFour}{0.430}
\plusmark{\CFive}{0.430}

\draw[green]
  (\COne,0.715) --
  (\CTwo,0.715) --
  (\CThree,0.430) --
  (\CFour,0.715) --
  (\CFive,0.715);

\circlemark{\COne}{0.715}
\circlemark{\CTwo}{0.715}
\circlemark{\CThree}{0.430}
\circlemark{\CFour}{0.715}
\circlemark{\CFive}{0.715}

\draw[yellow]
  (\COne,0.430) --
  (\CTwo,0.143) --
  (\CThree,1.285) --
  (\CFour,1.000) --
  (\CFive,1.285);

\xmark{\COne}{0.430}
\xmark{\CTwo}{0.143}
\xmark{\CThree}{1.285}
\xmark{\CFour}{1.000}
\xmark{\CFive}{1.285}

\end{tikzpicture}%
}%

    \caption{Requirements optimization. Each column lists one proposer's time budgets; each cell represents a budget and the \cameraz{requirement} the proposer can meet within it. Lines show compatible combinations with one \cameraz{requirement per proposer}. The optimizer picks the best line for the objective.}
    \label{fig:budgets}
    \Description{Grid with one column per process, each listing that process's increasing time budgets and the requirement it can meet within each budget. Lines connect one budget per process to form compatible combinations, and the optimizer picks the best combination.}
\end{figure}

\subsection{Choosing optimal requirement combinations}
\label{sec:optimizer:solve}

Given the candidate budgets, \sysname selects one per proposer. Figure~\ref{fig:budgets} visualizes this search: a choice is feasible only if the selected budgets induce compatible requirements.

\Cref{optiprob} gives the average-latency version of the optimization problem. Here, $B_p$ is the set of budgets available to proposer $p$, $R_{p,b}$ is the requirement $p$ can meet within budget $b$, and $P_p$ is the probability that $p$ proposes. The objective minimizes expected fast-path latency, while the constraints enforce pairwise compatibility\cameraa{~(\S\ref{sec:constraints:condition})}.

\begin{problem}
\caption{Average-latency optimization problem}
\label{optiprob}
\vspace{-9pt}
{\setlength{\fboxrule}{0.5pt}\setlength{\fboxsep}{6pt}%
\noindent\fcolorbox{black}{gray!3}{\begin{minipage}{\dimexpr\columnwidth-2\fboxsep-2\fboxrule\relax}
\begin{itemize}[leftmargin=10pt, topsep=0pt]
    \item $B_p$: time budgets available to process $p$.
    \item $R_{p,b}$: requirement process $p$ can meet with budget $b$.
    \item $P_p$: probability that process $p$ is the proposer.
\end{itemize}
\newcommand{\argmin}{\mathop{\mathrm{arg\,min}}}
\newcommand{\vect}[1]{\mathbf{#1}}
\begin{equation*}
\label{eq:budget-opt}
\begin{aligned}
&\vect b^\star
= \argmin_{(b_1,\dots,b_n)\in B_1\times\cdots\times B_n}
   \; \sum_{p=1}^n P_p\,b_p \\
\text{s.t.}\quad
&compatible\big(R_{p,b_p},\,R_{q,b_q}\big) \quad \forall\, p,q\in\{1,\dotsc,n\}.
\end{aligned}
\end{equation*}
\end{minipage}}}
\end{problem}

By changing only the objective, \sysname can also target median or tail latency while keeping the same compatibility constraints. \cameraa{Because compatibility couples proposers' budgets, different objectives may select different Pareto-optimal solutions; across our evaluated deployments, their number ranges from one to over a hundred.} Whatever the objective, any feasible solution instantiates the same template and is correct by construction.

To make the search fast, \sysname prunes \emph{dominated} budgets.\todo{a: can be trimmed if need be} A budget is dominated if it is no cheaper for the objective and does not enable compatibility with any requirement that a cheaper budget cannot already support. Concretely, for each candidate requirement, \sysname computes the cheapest requirements of other proposers that are compatible with it, and removes candidates that are never minimal for any counterpart. This pruning is effective: with up to 31 AWS replicas, it leaves only up to \cameraa{10} candidate budgets per proposer, which a branch-and-bound solver explores quickly~(\S\ref{sec:eval:strategies}).

\cameraz{The optimizer also accounts for suspected failures.
It excludes candidate requirements that depend on suspected processes.
A suspected proposer may nevertheless still be correct and propose, so
it receives a requirement guaranteed compatible with every requirement
the optimizer may select.}

\paragraph{Optimality}
\cameraz{The optimizer yields latency-optimal fast paths under fixed
and known network latencies, permanent crashes with perfect failure
detection, contention-free executions, and a fixed latency metric.}
A proof appears in the
\if\extended 1%
appendix.
\else%
\cameraz{extended version~\cite{extended}}.
\fi%
\cameraz{Briefly, for each proposer $p$ and budget $b$, the simulation
computes all evidence that $p$ can obtain by time $b$ under the given
topology.
Any deterministic fast path committing by $b$ can therefore rely only
on evidence contained in this snapshot.
If the requirements induced by a vector of budgets are incompatible,
committing within those budgets would violate the recoverability
conditions of \Cref{sec:constraints}, which every consensus protocol
must satisfy.
Conversely, every compatible vector can instantiate the template of
\Cref{sec:template}.
Hence, \cameraz{under these assumptions, no deterministic consensus fast path
can have lower latency according to the target metric}.}

\subsection{\cameraz{Robustness with multiple requirements}}
\label{sec:optimizer:robustness}

\cameraz{For simplicity, the presented \sysname template assigns a single requirement to each proposer, and
the proposer abandons the fast path upon suspecting a required witness
(Algorithm~\ref{alg:ac:fd}) \camerad{so} as not to wait indefinitely.
A proposer can instead maintain several requirements, commit when any is
met, and abandon only when all are ruled out.}
\camerac{We leave optimizing such sets for robustness to future work, but
observe that} \cameraa{Paxos-like requirements with quorums of $n{-}f$
processes}~(\S\ref{sec:constraints:examples}) \cameraa{are compatible with
every valid requirement. Adding one for every such quorum} therefore bounds
conflict-free latency by \cameraa{that of Paxos, without needing a timeout,}
at the cost of extra messages.

\subsection{\cameraz{Reducing network traffic}}
\label{sec:optimizer:network}

\begin{figure}
    \centering
    \resizebox{\columnwidth}{!}{%
\begin{tikzpicture}[
  x=1cm,y=1cm,
  proc/.style={
    circle,draw=black,fill=white,
    minimum size=5.8mm,inner sep=0pt,
    line width=.75pt,font=\Large
  },
  xproc/.style={proc,fill=magenta!18},
  timeline/.style={
    draw=black!15,line width=1.15pt,
    -{Latex[length=3.5mm,width=3.4mm]}
  },
  greenmsg/.style={
    draw=green!55!black,line width=1.25pt,
    -{Latex[length=3.0mm,width=2.65mm]}
  },
  bluemsg/.style={
    draw=blue!55,line width=1.25pt,
    -{Latex[length=3.0mm,width=2.65mm]}
  },
  redmsg/.style={
    draw=red!65!black!75,line width=1.25pt,
    -{Latex[length=3.0mm,width=2.65mm]}
  },
  panelletter/.style={font=\Large}
]

\def\RowX{0.67}
\def\RowY{0.00}
\def\RowZ{-0.67}
\def\TimeStart{0.31}
\def\TimeEnd{3.10}
\def\PanelStep{3.83}

\newcommand{\panelbase}[1]{%
  \begin{scope}[shift={(#1,0)}]
    \draw[timeline] (\TimeStart,\RowX) -- (\TimeEnd,\RowX);
    \draw[timeline] (\TimeStart,\RowY) -- (\TimeEnd,\RowY);
    \draw[timeline] (\TimeStart,\RowZ) -- (\TimeEnd,\RowZ);
    \node[xproc] at (0,\RowX) {X};
    \node[proc]  at (0,\RowY) {Y};
    \node[proc]  at (0,\RowZ) {Z};
  \end{scope}%
}

\newcommand{\msg}[6]{%
  \draw[#2]
    ($ (#1,0) + (#3,#4) $) --
    ($ (#1,0) + (#5,#6) $);
}

\panelbase{0}
\node[panelletter,anchor=east] at (-0.25,-0.75) {a.};

\msg{0}{greenmsg}{0.43}{\RowX}{0.80}{\RowY}
\msg{0}{greenmsg}{0.43}{\RowX}{1.58}{\RowZ}

\msg{0}{bluemsg}{0.80}{\RowY}{1.28}{\RowX}
\msg{0}{bluemsg}{0.80}{\RowY}{1.28}{\RowZ}
\msg{0}{redmsg}{1.28}{\RowZ}{2.12}{\RowX}
\msg{0}{bluemsg}{1.62}{\RowZ}{2.68}{\RowX}

\panelbase{\PanelStep}
\node[panelletter,anchor=east] at (\PanelStep-0.25,-0.75) {b.};

\msg{\PanelStep}{greenmsg}{0.43}{\RowX}{0.73}{\RowY}
\msg{\PanelStep}{greenmsg}{0.83}{\RowY}{1.29}{\RowZ}

\msg{\PanelStep}{bluemsg}{0.73}{\RowY}{1.19}{\RowX}
\msg{\PanelStep}{bluemsg}{0.73}{\RowY}{1.23}{\RowZ}

\msg{\PanelStep}{redmsg}{1.27}{\RowZ}{1.77}{\RowY}
\msg{\PanelStep}{redmsg}{1.77}{\RowY}{2.18}{\RowX}
\msg{\PanelStep}{bluemsg}{1.41}{\RowZ}{1.90}{\RowY}
\msg{\PanelStep}{bluemsg}{1.90}{\RowY}{2.28}{\RowX}

\panelbase{2*\PanelStep}
\node[panelletter,anchor=east] at (2*\PanelStep-0.25,-0.75) {c.};

\msg{2*\PanelStep}{greenmsg}{0.43}{\RowX}{0.78}{\RowY}
\msg{2*\PanelStep}{bluemsg}{0.78}{\RowY}{1.30}{\RowZ}
\msg{2*\PanelStep}{redmsg}{1.30}{\RowZ}{1.75}{\RowY}
\msg{2*\PanelStep}{redmsg}{1.75}{\RowY}{2.11}{\RowX}

\end{tikzpicture}%
}%
    \caption{\camerad{\t{Accept\&Spread} messages satisfying $X$'s requirement
    \t{\{X:\{X,Y,Z\},Y:\{X,Y\},Z:\{X,Y,Z\}\}}:
    (a)~as induced by the requirement, (b)~with latency shortcuts, and
    (c)~without redundant messages.
    Colors mark the newest evidence each message adds: $X$'s acceptance
    (green), $Y$'s or $Z$'s acceptance and record of $X$'s (blue), and
    $Z$'s record of $Y$'s (red).}}
    \label{fig:network}
    \Description{Three diagrams of processes X, Y, and Z exchanging messages to satisfy X's requirement: (a) all messages induced by the requirement, (b) messages rerouted through faster relays, and (c) the reduced set after removing redundant messages.}
\end{figure}

\cameraz{Algorithm~\ref{alg:ac:disseminate} broadcasts evidence whenever
it grows.
This propagates evidence as quickly as possible, but may disseminate
evidence that no selected requirement needs.
Once \sysname has chosen a requirement combination, it replaces these
broadcasts with a dissemination graph that preserves commit latency
while reducing communication.}

\cameraz{Consider the requirement in \Cref{fig:network}.
To satisfy it, $X$ must learn that $X$, $Y$, and $Z$ accepted the value,
that $Y$ witnessed $X$'s and $Y$'s acceptances, and that $Z$ witnessed
all three.
These dependencies directly induce a communication graph sufficient to
satisfy $X$'s requirement, shown in \Cref{fig:network}(a).}

\cameraz{\sysname then optimizes this graph.
First, it uses latency shortcuts:
measured wide-area latencies need not satisfy the triangle inequality,
so relaying evidence from $X$ to $Z$ through $Y$, for example, may be
faster than sending it directly.
\sysname uses such shortcuts when they reduce latency
(\Cref{fig:network}(b)).
If a shortcut relies on a process outside the requirement, the failure
detector also monitors that process so that its failure triggers
fast-path abandonment.}

\cameraz{Second, \sysname removes redundant messages.
For example, if $Z$ reports to $X$ after learning that $Y$ accepted,
an earlier message from $Z$ reporting only its own acceptance may be
unnecessary.
Likewise, if $X$ learns $Y$'s acceptance through $Z$, $Y$ need not also
report it directly to $X$.
\Cref{fig:network}(c) shows the resulting graph.}

\cameraz{All processes deterministically compute the same dissemination
graph from the requirements and latency table.
The graph specifies when and where each piece of evidence is sent and
also enables compact messages: rather than carrying an explicit set of
known acceptors, a message can identify the graph edge it traverses,
from which the receiver infers the corresponding evidence.}

\cameraz{\t{Accept\&Spread} messages can also carry only a short proposer
identifier rather than the full value.
Replicas buffer such evidence until they receive and log the
corresponding value, preventing a commit whose payload is unavailable.}

\cameraz{Overall, requirements determine \emph{what} evidence a commit
needs, while the dissemination graph determines \emph{how} that evidence
is delivered.
\camerac{Because the graph changes only the delivery---not the requirements,
when they can be met, or the commit checks---commit latency and safety are
unchanged.}}

\section{Implementation: \sysname-Based SMR} \label{sec:smr}

We implement \smr, a replication engine that uses \sysname-synthesized fast paths to build SMR logs~(\S\ref{sec:bg:fastpaths}).
Each voting process hosts a service replica \cameraz{and a client; other clients run as non-voting processes~(\S\ref{sec:ac:nonvoting})}. 
Our implementation consists of \cameraz{8,764} lines of Rust code and includes the \smr runtime, a CPU solver that finds optimal requirements~(\S\ref{sec:optimizer:solve}), and a strongly consistent geo-replicated key-value store \camerad{built on \smr}.

\paragraph{In a \camerad{nutshell}}
\cameraz{\smr runs one
consensus instance per log slot. Each instance's fast path is optimized for the current topology, proposer rates, and latency objective~(\S\ref{sec:optimizer}). A process proposes its command to the first slot $s$ it believes is empty.} For slot $s$, \smr instantiates \sysname's adopt-commit template with the current requirements:
a fast commit triggers immediate decision in the slot,
while \cameraz{a conflict or fault triggers adoption of a safe command, which is passed} to fallback consensus~(\S\ref{sec:bg:ac}).
If a process fails to commit in slot $s$, it re-proposes its command in slot $s{+}1$ and repeats until it commits.
Processes announce commits, allowing replicas to execute promptly and proposers to
\cameraz{track} empty slots.

\paragraph{Conflicts}
Upon conflicts, \smr falls back to Paxos~\cite{paxos}.
While Paxos takes two \rts to a majority, the first constitutes a mere leader-election phase~(\S\ref{sec:constraints:examples}):
\cameraz{\smr overlaps this phase with \sysname's adopt-commit using the same messages.
Thus, on a conflict, the highest-id concurrent proposer becomes leader and, after adopting a safe command, only needs to run Paxos's second phase.}
\cameraz{Adoption adds no separate round:
processes freeze as soon as they observe a conflict or finish their part of the dissemination graph~(\S\ref{sec:optimizer:network})\camera{.
They then keep forwarding the messages prescribed by the graph and piggyback their frozen state on the last one.}
The leader can therefore complete adoption by the time a fast commit would complete.
In practice, fast commits take one \rt in most deployments and only slightly more otherwise~(\S\ref{sec:eval:e2e}), so conflicts are about as slow as Paxos.}
Moreover, \cameraz{when} the leader knows \sysname's adopt-commit did not commit (Algorithm~\ref{alg:ac:propose}, line~\ref{alg:ac:propose:any}), \cameraz{it} batches all the commands it knows conflicted \cameraz{into one Paxos proposal}.

\paragraph{Sharding}
\smr shards requests
so that non-interfering commands can be replicated concurrently in independent logs without conflicting.
For instance, if replicating a key-value store, updates that target different keys never conflict.
Additionally, shards use different requirements to account for the distribution of proposers in each of them~(\S\ref{sec:optimizer:solve}).

\paragraph{Latency \camerad{monitoring}}
\smr is bootstrapped with knowledge requirements based on the initial network latency. Then, in the background, \smr processes monitor and share inter-process latencies in the form of a latency table. \cameraz{Upon a crash, broken link, or proposal-rate shift, a process proposes an updated table as a cross-shard command. Consensus installs the same table everywhere; each process then runs the deterministic solver locally to derive the same requirements.}

\paragraph{Delegated \camerad{execution}}
\cameraz{A non-voting client hosts no service replica, so it would still have to contact a replica to execute its command and reply, adding latency.} To avoid this extra step, \smr can delegate commitment to a nearby replica. \cameraz{The client} remains the logical proposer and starts the fast path, but the delegate replica performs the \t{can\_commit} check, commits once the required evidence is available, executes the command, and replies to the client. \cameraz{\smr then uses} requirements synthesized for proposer--delegate pairs.

\paragraph{Reads}
\cameraz{Like} existing SMR systems, \smr does not replicate \cameraz{reads}.
Instead, it \cameraz{executes} them at a single replica after checking
\camerac{that} \cameraz{its state} is up to date~\cite{pqr}.
\camerac{To do so, a read queries \camerac{a quorum that intersects every
write quorum}, thereby discovering any write that may have committed.
The read then waits until the highest such slot has been executed.}
\cameraz{Reads thus conflict with writes, but never delay them.}

\section{Evaluation} \label{sec:evaluation}

We evaluate the fast paths of consensus protocols,
comparing \sysname (using our \smr engine)
against prior baselines.

\paragraph{Baselines.}
As we focus on \cameraz{common-case replication latency}, we compare \sysname against \cameraz{protocols spanning} the main
approaches to reducing SMR latency: a stable-leader protocol, leaderless fast-quorum protocols, and
wide-area latency-optimized \cameraz{ones}. Our baselines are
Multi-Paxos~\cite{multipaxos}, EPaxos~\cite{epaxos},
SwiftPaxos~\cite{swiftpaxos}, and Pando~\cite{pando}.

\cameraz{All baselines are re-implemented in \smr's Rust codebase, with their fast-path latency validated against the specifications, so language and runtime do not confound our resource comparison~(\S\ref{sec:eval:resources}).}
\cameraz{Whenever a protocol} leaves choices such as leader placement or quorum selection, we use the configuration that minimizes average latency in our deployment.
For Pando, this means disabling erasure coding.
\cameraa{\camerac{Moreover,} EPaxos runs in non-thrifty mode.}
    
\paragraph{Testbed and configurations.}

We deploy all systems on AWS using \t{t3.medium} instances \cameraz{(Intel Xeon Platinum 8000-series), except the \camerab{conflict} and load experiments (\S\ref{sec:eval:conflict}--\ref{sec:eval:load}), which use non-burstable \t{c6i.large} instances (Intel Xeon Ice Lake). All run single-threaded}. All deployments have optimal resilience ($2f{+}1$ replicas), \cameraz{with one client per region}.

Our end-to-end experiments use four 7-replica deployments: three continental deployments in North America (NA), Europe (EU), and East Asia (EA), and one
deployment spanning the Northern Hemisphere (NH). Our scalability experiments vary the deployment size from 3 to 31 replicas across AWS regions worldwide.

For our experiments, we configure \sysname to optimize average end-to-end latency, matching the optimization objective used to configure the baselines.

\paragraph{Workloads.}
\cameraz{Unless otherwise specified, we focus on write requests, as all systems rely on the same
well-understood read algorithm~\cite{pqr}. Clients issue writes to
keys, with each key managed by an independent SMR shard. Our default workload is conflict-free:
the key space is partitioned per client, so concurrent client requests do not conflict.
This isolates the \camerab{conflict}-free fast path our evaluation targets.}
\camerab{We later study workloads with conflicts.}

Each write executes in $\approx$1\,\us, and both requests and
replies are 16\,B. All experiments use \cameraz{$100{,}000$} keys and\cameraz{, unless stated otherwise,} an aggregate
client throughput of $1{,}000$\,req/s. Clients generate requests at
random intervals following an exponential distribution, and requests are
distributed uniformly across clients.
Once all processes are ready, each \camerad{experiment} starts with a \cameraz{5\,s} warmup, after which clients measure the latency of each request during a 10\,s interval. Once a client finishes its measurements, it sustains the load on the system until all clients \cameraz{finish their measurements}.

\vspace{1em}

Our evaluation addresses the following questions:

\begin{myitemize}
    \item What is \sysname's end-to-end \camerac{latency} (\S\ref{sec:eval:e2e})?
    \item How do failures impact \sysname's latency (\S\ref{sec:eval:failures})?
    \item How does \sysname's latency scale with replicas (\S\ref{sec:eval:scalability})?
    \item What is \sysname's resource consumption (\S\ref{sec:eval:resources})?
    \item \cameraz{How does contention affect \sysname's latency (\S\ref{sec:eval:conflict})?}
    \item \cameraz{How much load can \sysname sustain (\S\ref{sec:eval:load})?}
    \item How fast are optimal requirements computed (\S\ref{sec:eval:strategies})?
\end{myitemize}

\subsection{End-to-\camerad{end latency}} \label{sec:eval:e2e}

We first measure the end-to-end latency of our \sysname-based key-value store and compare it with the baselines across the four 7-replica deployments.

\begin{figure}
    \centering
    \includegraphics[width=\columnwidth]{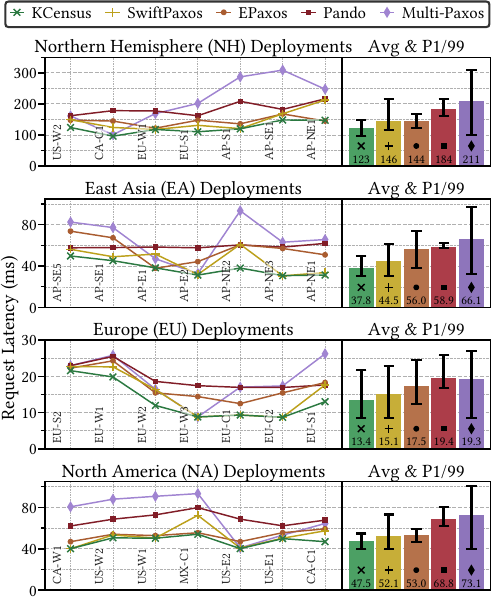}
    \caption{Latency across clients for the NH, EA, EU, and NA deployments. Bars indicate the average while the error bars indicate 1\textsuperscript{st} and 99\textsuperscript{th} \camerad{percentiles}.}
    \label{fig:eval:e2e}
    \Description{Bar chart of per-client and aggregate request latency, with 1st and 99th percentile error bars, for five protocols in the Northern Hemisphere, East Asia, Europe, and North America deployments. KCensus has the lowest average latency in all four deployments.}
\end{figure}

\newcommand{\num}[1]{%
    \fpeval{round(#1)}%
}
\newcommand{\ms}[1]{%
    \num{#1}\,ms%
}
\newcommand{\set}[2]{
    \xdef#1{\fpeval{#2}}%
}
\newcommand{\increase}[2]{%
    \set{\result}{((#2) - (#1))/(#1) * 100}%
    \num{\result}\%%
}
\newcommand{\decrease}[2]{%
    \set{\result}{((#1) - (#2))/(#1) * 100}%
    \num{\result}\%%
}

\newcommand{\ratiox}[2]{%
    \set{\result}{(#1)/(#2)}%
    \num{\result}$\times$%
}

\set{\knh}{123.40517806977687}
\set{\kea}{37.7864971578787}
\set{\keu}{13.355980808245349}
\set{\kna}{47.49875022131313}

\set{\snh}{146.42126781978658}
\set{\sea}{44.534699301153616}
\set{\seu}{15.130983158514491}
\set{\sna}{52.1137408662611}

\set{\enh}{144.39654018344316}
\set{\eea}{56.02687342195963}
\set{\eeu}{17.4741263519015}
\set{\ena}{52.99810890615982}

\set{\pnh}{183.99733549741876}
\set{\pea}{58.852808196498245}
\set{\peu}{19.42398131223755}
\set{\pna}{68.8425298962198}

\set{\mnh}{210.72543692108206}
\set{\mea}{66.07824246487178}
\set{\meu}{19.307854023498695}
\set{\mna}{73.06144780165874}

\set{\bnh}{144.39654018344316}
\set{\bea}{44.534699301153616}
\set{\beu}{15.130983158514491}
\set{\bna}{52.1137408662611}

\set{\keaw}{49.718419906205924}
\set{\beaw}{60.95218274068663}

\set{\keaper}{49.777565}
\set{\beaper}{61.113502}

\Cref{fig:eval:e2e} reports latency per client and aggregated. In all four deployments, \sysname achieves lower average latency than the
baselines.
Compared to the fastest baseline in each deployment, \sysname lowers
average latency by \decrease{\bnh}{\knh} in NH, \decrease{\bea}{\kea} in EA, \decrease{\beu}{\keu} in
EU, and \decrease{\bna}{\kna} in NA. \cameraz{The gain is largest in the NH and EA deployments, where} \cameraz{the \camera{fastest baselines} pay a larger latency penalty for reaching beyond the nearest majority.}

\set{\kbestsinglegain}{38.06773370548862}
\set{\bbestsinglegain}{60.426329602681726}

\sysname also has the best 99\textsuperscript{th} percentile in all deployments. In EA, the worst-located client of the \cameraz{best baseline} has \increase{\keaw}{\beaw} higher latency than \sysname's worst-located client, leading to a \increase{\keaper}{\beaper} higher 99\textsuperscript{th} percentile.
\cameraz{More generally, \sysname matches or beats the best baseline at every client location, reducing latency by up to \decrease{\bbestsinglegain}{\kbestsinglegain}.}

\smallskip

\cameraz{Compatibility often adds no latency beyond reaching a client's
closest majority.
With $f{=}3$, two strategies need four \camerad{intersection witnesses}.
If either proposer belongs to the intersection, its quorum supplies all
four witnesses.
Otherwise, \camerad{if the intersection has size $i$, its replicas together with the two proposers already provide $i{+}2$} witnesses.
Thus, a two-replica intersection suffices, while a single-replica intersection needs only one additional witness.}

\cameraz{This makes the cheapest valid strategy sufficient for every
client in EU and NA: each pays only the latency to its closest majority.
In EU, every pair of closest-majority quorums intersects in at least two
replicas.
\camera{NA proposers use three closest-majority quorums: one shared by the
eastern proposers, one by the western ones, and MX-C1's own. Only the
eastern proposers' and MX-C1's quorums intersect in a single replica,
US-E2.}
While waiting for \camera{their quorum's farthest member,} CA-W1, eastern proposers spread evidence of US-E2's
acceptance across the east, supplying the fourth witness without
extending the fast path.}

\cameraz{EA and NH have more single-replica intersections, so the fourth witness is not always free.
\camera{Still, 5 of 7 clients in each use the cheapest strategy; the others
add less than 0.3\,ms to average latency in EA, and about 2\,ms in NH,
which lacks a hub shared by most closest-majority quorums.}}

\cameraz{Across all four deployments, only one client---AP-NE1
in NH---needs a fast-path quorum of 5 replicas rather than a bare
majority.
Every other client completes before the latency of reaching its fifth
closest replica.}

\smallskip

\cameraz{Overall, \sysname \camerac{significantly} lowers both average and tail latency relative
to the baselines.}

\subsection{Impact of \camerad{failures on latency}} \label{sec:eval:failures}

We next study whether \camerac{the evaluated protocols} preserve low end-to-end latency as replicas fail.
\cameraz{We repeat \Cref{sec:eval:e2e} for every placement of 0--3 replica failures. \camera{Clients co-located with failed replicas keep issuing requests, as non-voting proposers in \sysname and SwiftPaxos, and otherwise through the replica offering them the lowest execution latency.}}
\camerac{We focus on NH, as it is the most representative deployment for global services, and report its latency averaged across clients and fault placements.}

\begin{figure}
    \centering
    \includegraphics[width=\columnwidth]{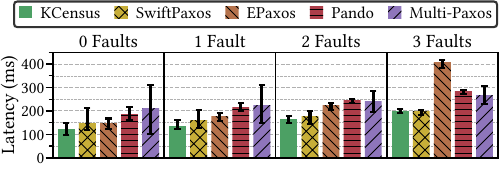}
\caption{\camerac{Average request latency in NH deployments under 0--3 replica failures.
Error bars span the best- and worst-located clients' averages across fault placements.}}
    \label{fig:eval:failures}
    \Description{Bar chart of average request latency in the Northern Hemisphere deployment for five protocols with zero to three failed replicas. KCensus has the lowest latency with up to two failures; with three failures, SwiftPaxos is slightly faster.}
\end{figure}

\set{\knhzero}{123.40517806977687}
\set{\knhone}{136.68000508358384}
\set{\knhtwo}{164.96023976227943}
\set{\knhthree}{201.72918758785903}

\set{\snhzero}{146.42126781978658}
\set{\snhone}{156.99515939884566}
\set{\snhtwo}{174.80448624307695}
\set{\snhthree}{196.40513624800303}

\set{\enhzero}{144.39654018344316}
\set{\enhone}{176.40194788709448}
\set{\enhtwo}{221.04900439537923}
\set{\enhthree}{405.25623172038286}

\set{\pnhzero}{183.99733549741876}
\set{\pnhone}{214.219989647838}
\set{\pnhtwo}{243.20013161126596}
\set{\pnhthree}{282.38708492471653}

\set{\mnhzero}{210.72543692108206}
\set{\mnhone}{223.2379682948885}
\set{\mnhtwo}{238.85929776172253}
\set{\mnhthree}{267.8782818443487}

\set{\bnhzero}{144.39654018344316}
\set{\bnhone}{156.99515939884566}
\set{\bnhtwo}{174.80448624307695}

\set{\bnhthreeNoSwift}{267.8782818443487}

\set{\knhbestthree}{190.92324211106194}
\set{\snhbestthree}{182.74381896897788}
\set{\knhworstthree}{209.02856424394085}
\set{\snhworstthree}{204.64511100858275}

\Cref{fig:eval:failures} shows that \sysname maintains low latency \cameraz{as it re-synthesizes fast paths to adapt to failures.} Compared to the fastest baseline at each fault count,
\sysname lowers average latency by \decrease{\bnhzero}{\knhzero} with no
failures, \decrease{\bnhone}{\knhone} with one failure, and
\decrease{\bnhtwo}{\knhtwo} with two failures.

At three failures, the surviving configurations force all proposers onto the same quorum---the structure used by SwiftPaxos---and SwiftPaxos edges ahead of \sysname by \increase{\snhthree}{\knhthree}.
\cameraz{SwiftPaxos owes this margin to speculative execution, which lets a proposer commit in a single round trip even when it is co-located with a failed replica. Because SwiftPaxos relies on a leader to order commands, it rarely has to roll back a speculative execution, \cameraz{making speculation cheap}. \sysname does not speculate, but comes close via delegated execution~(\S\ref{sec:smr}).}

\subsection{Scalability of \camerad{latency with replicas}}
\label{sec:eval:scalability}

We now study how latency changes as the number of replicas grows.
We re-run the experiment in \Cref{sec:eval:e2e} on deployments ranging
from 3 to 31 servers.
We consider two deployment series. The \emph{Random} series
starts from three random AWS regions and adds two random regions at each
step. The \emph{Parisian} series starts from Paris
and repeatedly adds the two remaining regions with the lowest
latency to Paris.

\begin{figure}
    \centering
    \includegraphics[width=\columnwidth]{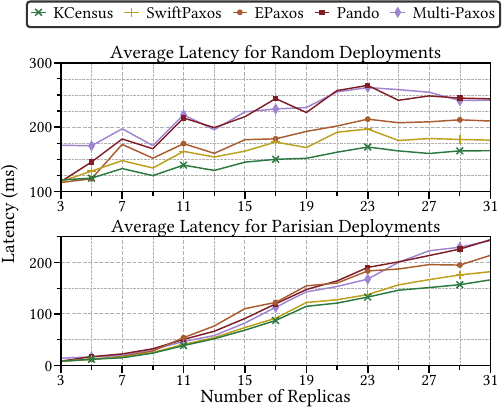}
    \caption{Average request latency for Random and Parisian deployments
    as the number of replicas grows from 3 to 31.}
    \label{fig:eval:scalability}
    \Description{Two line charts of average request latency versus the number of replicas, from 3 to 31, for Random and Parisian deployments. With three replicas the fast-path protocols have similar latency; with more replicas, KCensus has the lowest latency.}
\end{figure}

\set{\krs}{117.93325352489667}
\set{\kre}{163.38571309088135}

\set{\srs}{115.48821751837349}
\set{\sre}{179.87019468030743}

\set{\ers}{114.0920723580716}
\set{\ere}{209.53452527969387}

\set{\prs}{115.24585802609315}
\set{\pre}{243.9664635262159}

\set{\mrs}{171.93606497318848}
\set{\mre}{241.97009917493708}

\newcommand{\rmd}{\camera{21}\xspace}
\set{\krmd}{161.0747981707439}
\set{\brmd}{192.1465463236146}

\set{\krsv}{135.6523165208939}
\set{\brsv}{148.02982733678908}

\set{\brs}{114.0920723580716}
\set{\bre}{179.87019468030743}

\set{\kps}{8.24852924335358}
\set{\kpe}{166.1393274931313}

\set{\sps}{7.951105014115308}
\set{\spe}{181.9647736652724}

\set{\eps}{7.96198807397343}
\set{\epe}{213.67472378318718}

\set{\pps}{8.286347170925723}
\set{\ppe}{243.94267675626656}

\set{\mps}{14.031526771378369}
\set{\mpe}{242.3382292325237}

\set{\bps}{7.951105014115308}
\set{\bpe}{181.9647736652724}

\Cref{fig:eval:scalability} shows that \sysname consistently remains the lowest-latency configuration as the deployment grows. With
three replicas, all \camera{fast-path} protocols have limited placement choices and achieve
similar latency. As more replicas are added, the protocols diverge:
\sysname can exploit the additional placement choices, whereas the
baselines remain constrained by their leader, quorum, or overlap
structure.

In Random deployments, latency is not monotonic. Adding regions can both introduce more distant clients and create new quorum choices, so the
average depends on the resulting topology. 
At 3 replicas, \sysname performs similarly to EPaxos, Pando\camerad{,} and SwiftPaxos as they are all optimal. The same applies to EPaxos at 5 replicas. With more replicas, \sysname dominates\camera{, reducing latency over the
fastest baseline by \decrease{\brsv}{\krsv} (at 7 replicas) to
\decrease{\brmd}{\krmd} (at \rmd)}.

In Parisian deployments, latency grows more smoothly because replicas are
added in \camerad{order of} increasing distance from Paris.

\camera{Overall, latency depends on where replicas are rather than on how
many there are, and \sysname synthesizes its requirements to make the most of
each placement.}

\subsection{Resource \camerad{consumption}} \label{sec:eval:resources}

We now quantify the cost of \sysname using the scalability experiments of \Cref{sec:eval:scalability}.
We measure \cameraz{per-replica network traffic and messages}, \cameraz{CPU use}, and memory consumption.

\paragraph{Bandwidth \camerad{consumption}.}
\sysname's communication pattern is synthesized from its knowledge requirements, so its network cost does not follow a fixed quorum pattern. \Cref{fig:eval:resources:bandwidth} reports \cameraz{average bytes and messages per application request per replica} in Random deployments.

\begin{figure}
    \centering
    \includegraphics[width=\columnwidth]{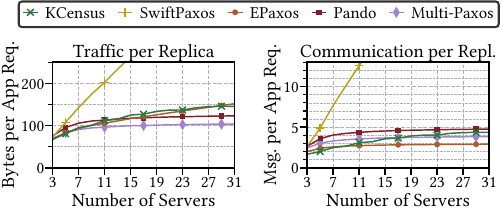}
    \caption{\cameraz{Average bytes (left) and messages (right) per application request per replica} in Random deployments.}
    \label{fig:eval:resources:bandwidth}
    \Description{Two line charts of average bytes and messages per request per replica versus the number of replicas in Random deployments. SwiftPaxos grows fastest in both; KCensus stays close to the other protocols.}
\end{figure}

\newcommand{\mb}[1]{%
    \num{(#1) / 1024 / 1024}\,MiB%
}
\newcommand{\kb}[1]{%
    \num{(#1) / 1024}\,KiB%
}
\newcommand{\bytes}[1]{%
    \num{#1}\,B%
}


\cameraz{\sysname's traffic remains low as one identifier encodes knowledge~(\S\ref{sec:optimizer:network}). At 31 replicas, it sends 146\,B per request per replica, 40\% more than Multi-Paxos but 4\% less than EPaxos. Every SwiftPaxos replica acknowledges every other, making its aggregate traffic quadratic and its per-replica traffic linear; it reaches 491\,B, 3.4$\times$ \sysname's.}

\cameraz{\sysname sends the fewest messages through 9 replicas. As its requirements add witnesses and evidence routes, \sysname reaches 4.4 per request per replica at 31 replicas---50\% more than EPaxos but fewer than Pando. SwiftPaxos sends 34.1 (7.8$\times$ as many), despite grouping acknowledgements.}

\paragraph{Compute \camerad{consumption}.}
The left side of \Cref{fig:eval:resources} reports the \cameraz{average CPU use of a server, where 100\% means one core busy for the whole run}, including user and kernel time. \camera{Message handling dominates, so each system's CPU use tracks its communication cost. \sysname's} \cameraz{is the lowest of all systems at small scales and grows with the amount of evidence disseminated. \sysname uses 9.0\% of a core at 3 replicas and 14.4\% at 31, where it is 19\% above EPaxos and 38\% above Multi-Paxos. SwiftPaxos instead grows linearly with the number of replicas, as each server acknowledges every other, reaching 55.9\% at 31 replicas, 3.9$\times$ \sysname's use.}

\begin{figure}
    \centering
    \includegraphics[width=\columnwidth]{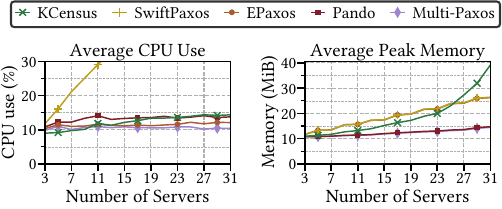}
    \caption{\cameraz{Average CPU use per server} (left) and \cameraz{average peak memory per server} (right) for Random deployments.}
    \label{fig:eval:resources}
    \Description{Two line charts of average CPU use and average peak memory per server versus the number of replicas in Random deployments. SwiftPaxos has the highest CPU use at scale; KCensus has the highest memory use at 31 replicas, below 40 MiB.}
\end{figure}

\paragraph{Memory \camerad{consumption}.}
The right side of \Cref{fig:eval:resources} reports \cameraz{the average peak memory of a server}. \sysname uses \cameraz{memory} to store knowledge and synthesized dissemination metadata\cameraz{, so its footprint grows with the number of replicas: from 10.9\,MiB at 3 replicas to 39.2\,MiB at 31, the largest of all systems at that scale. Multi-Paxos and Pando, which only retain the value a replica accepted, remain below 15\,MiB with 31 replicas, while EPaxos and SwiftPaxos, which track a dependency set per command, sit in between at 26\,MiB.}

\vspace{1em}

\cameraz{Overall, \sysname's resource consumption remains in the range of existing fast-path protocols: at 31 replicas, it sends slightly less data per replica than EPaxos, and it is several times cheaper than SwiftPaxos on networking and CPU. What \sysname trades for lower latency is a modest message and CPU overhead over leader-based protocols, and the largest memory footprint of all systems at 31 replicas, which remains small in absolute terms---under 40\,MiB per server.}

\subsection{\cameraz{Latency under \camerad{contention}}}
\label{sec:eval:conflict}

\cameraz{We measure \camerab{the effect of conflicts} in the NH deployment
\camerab{by having clients access 100,000 shared keys.}
Clients issue 50\% reads and 50\% writes at an
aggregate $1{,}000$\,req/s, drawing keys either uniformly or from a
Zipfian distribution with skew $0.99$, matching YCSB's workload
A~\cite{ycsb}. At this load, uniform access produces negligible
conflicts and serves as a reference; Zipfian access concentrates
concurrent requests on hot keys,
\camerab{creating many conflicts across clients.}}

\begin{figure}
    \centering
    \includegraphics[width=\columnwidth]{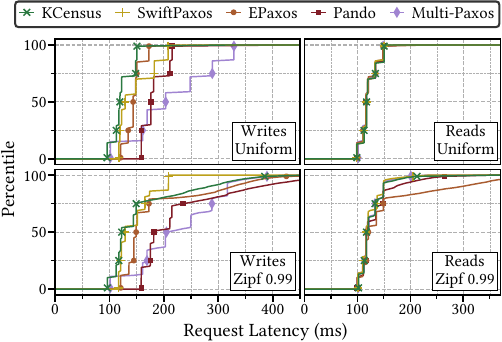}
    \caption{\cameraz{Latency CDFs of writes and reads in NH with a shared
    key space, under uniform and Zipfian access.}}
    \label{fig:eval:conflict}
    \Description{Cumulative distribution plots of write and read latency in the Northern Hemisphere deployment for five protocols, under uniform and Zipfian key access. Under Zipfian access, KCensus writes show a fast-path mode and a slower tail from conflicts.}
\end{figure}

\cameraz{\Cref{fig:eval:conflict} reports
writes and reads separately, as reads are not replicated~(\S\ref{sec:smr}).
Uniform access \camerab{resembles the conflict-free}
result of \S\ref{sec:eval:e2e}: \sysname's median write latency is
\ms{119.24}, and its CDF \camerab{is} concentrated around the fast-path
latency.}

\cameraz{Under Zipfian access, \sysname's median \camerab{is}
\ms{122.55}, within \camerad{\increase{119.24}{122.55} of the
uniform median}, but \camerab{the} 99\textsuperscript{th} percentile rises
to \ms{385.37}. The CDF exposes two regimes: \camerab{conflict-free} writes
retain the fast-path latency, while conflicting writes adopt a safe
command and complete \camerab{using} Paxos's second phase~(\S\ref{sec:smr}).
Averaged over writes, \sysname reaches \ms{158.08}; SwiftPaxos is faster
at \ms{145.55}, with a shorter tail of \ms{208.54}, because its
leader-based fallback runs alongside the fast path and can order
conflicting commands without recovery. Nevertheless,
\sysname commits half of its writes faster than any baseline and remains
ahead of EPaxos~(\ms{182.86}), Multi-Paxos, and Pando
(both above \ms{220}).}

\cameraz{Reads \camerab{resemble} uniform access, since we give
every baseline \smr's read optimization~(\S\ref{sec:smr}).
Under Zipfian access, \sysname averages \ms{126.77}, with a
99\textsuperscript{th} percentile of \ms{212.61}.
EPaxos degrades most, to \ms{156.95} and \ms{432.79}, because its reads
wait for every ongoing command at a majority rather than only the
highest ongoing slot. \cameraz{SwiftPaxos avoids this:
if the leader replies in the read quorum, the reader waits
only for writes the leader accepted.}}

\subsection{\cameraz{Latency under \camerad{load}}}
\label{sec:eval:load}

\cameraz{We measure how latency changes with load and how much
throughput each system sustains before saturation.}
\cameraz{We reuse the workload of \S\ref{sec:eval:conflict},}
\cameraz{increasing the request rate until each system saturates.
All systems batch
\camerab{requests that queue}
up: \sysname, Pando, and Multi-Paxos group a
key's pending commands into its next slot, while EPaxos and SwiftPaxos batch as
their reference implementations do, grouping commands into one instance and
acknowledgements \cameraa{for different keys} into one message, respectively. The one exception is that
EPaxos never groups commands across keys\cameraa{, because doing so} adds conflicts and does not
improve its throughput in our experiments.}

\begin{figure}
    \centering
    \includegraphics[width=\columnwidth]{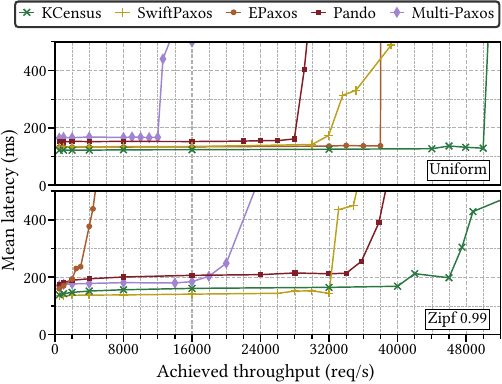}
    \caption{\cameraz{Mean request latency versus achieved throughput in
    NH under uniform and Zipfian access.} \cameraz{We omit runs beyond saturation, where more load reduced throughput.}}
    \label{fig:eval:load}
    \Description{Two line charts of mean request latency versus achieved throughput in the Northern Hemisphere deployment, under uniform and Zipfian access, for five protocols. KCensus sustains the highest throughput in both.}
\end{figure}

\cameraz{\Cref{fig:eval:load} plots mean latency against achieved throughput. Under uniform access, \sysname remains near \ms{124} through
44k\,req/s and below \ms{137} up to 50k, where it saturates, giving it the lowest
latency at every measured load and the highest saturation throughput.}
\cameraz{EPaxos's latency rises sharply beyond 38k\,req/s, SwiftPaxos's beyond
32k, Pando's beyond 28k, and Multi-Paxos's beyond 12k.}
\cameraz{\sysname, EPaxos, and SwiftPaxos saturate when their busiest replica exhausts its core; Multi-Paxos saturates earlier, bottlenecked by its leader.}

\cameraz{Under Zipfian access, \camerab{conflicts} raise \sysname's latency, but it
stays below \ms{170} up to 40k\,req/s and then degrades gradually
\camerab{as we batch more commands on hot keys.}}
\cameraz{EPaxos degrades earliest, exceeding \ms{370} by 4k\,req/s as \camerab{conflicts} grow its dependency
sets.}
\cameraz{SwiftPaxos \camerab{has} lower latency than \sysname, as in \S\ref{sec:eval:conflict}, until it rises sharply beyond 32k\,req/s.}
\cameraz{Multi-Paxos and Pando \camerad{rise sharply beyond 18k and 34k\,req/s}, and remain slower than \sysname throughout.}

\cameraz{Overall, \sysname achieves the highest throughput at comparable
latency under both distributions; \camerab{conflicts} raise its latency but does not cause
a throughput collapse.}

\subsection{Time to \camerad{optimize requirements}} \label{sec:eval:strategies}

Whenever \sysname processes detect a significant change in network topology or latency,
they recompute optimal knowledge requirements.
Due to the combinatorial complexity of the problem, one could fear
that this cost would prevent \sysname from quickly adapting to network fluctuations.
We evaluate this cost by running the same experiments as in \Cref{sec:eval:scalability}, and measuring how long requirements optimization takes.
\Cref{fig:eval:strategies} reports the results.

\begin{figure}
    \centering
    \includegraphics[width=\columnwidth]{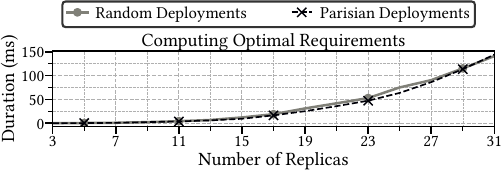}
    \caption{Time to optimize requirements on a single core.}
    \label{fig:eval:strategies}
    \Description{Line chart of the time to optimize requirements on a single core versus the number of replicas, for Random and Parisian deployments, growing from about 0.1 ms at 3 replicas to about 144 ms at 31 replicas.}
\end{figure}

Optimization time
is similar for both kinds of deployments, ranging from \cameraz{0.1\,ms for 3 replicas to 144\,ms} for 31 replicas.
This cost is orders of magnitude lower than the time interval between significant changes in topology, so it is negligible.

\section{Related Work} \label{sec:related}

\sysname relates to work on strongly consistent replication, low-latency consensus, topology-aware communication, and knowledge-based distributed protocols.

\paragraph{Strongly \camerad{consistent wide-area replication}.}
\cameraz{Wide-area systems face a fundamental latency--consistency tension.
Many systems reduce latency by weakening consistency~\cite{bayou, dynamo,
cassandra, pnuts, cops, gentlerain, pileus, gemini, icg, weak-1, weak-2,
weak-3, weak-4, analysis-strong-consistency, existential-consistency,
eiger}.
Strongly consistent services instead tend to rely on consensus to order updates
across regions~\cite{zookeeper, chubby, etcd, spanner, megastore, was,
cassandra-guarantees}.
Rather than relaxing consistency, \sysname reduces the latency of this
consensus layer by synthesizing low-latency fast paths.}

\paragraph{Low-\camerad{latency consensus}.}
A large body of work reduces consensus latency by fixing a particular communication pattern, quorum rule, or coordinator placement~\cite{multipaxos, ring-paxos, mencius, wpaxos, fast-paxos, epaxos, flexible-paxos, janus, atlas, mercury, wheat, aware}. Each of these choices \camerad{is} one point in the fast-path design space: they work well in some cases, but no single prior scheme is
best for
all topologies, workloads, and latency objectives.
\sysname takes a complementary approach: it characterizes which fast paths are recoverable, and synthesizes \camerad{an} optimal ensemble for a given deployment.

Prior work considers fast paths in the context of quorum systems~\cite{refinedquorums}. In contrast, \sysname captures a broader design space than quorum choice alone.
\cameraz{Existing work studies fast-quorum bounds~\cite{lamport-lower-bounds, generic-broadcast, two-step-bounds}; such bounds do not apply to \sysname, where each proposer is assigned a specific strategy.}

Several protocols reduce how often proposals conflict. Some exploit commutativity~\cite{lamport2005generalized, epaxos, gryff, janus}, while others use sequencing or timing assumptions to steer concurrent commands apart~\cite{corfu, epaxos-revisited}. These techniques are orthogonal to \sysname: they reduce the probability of conflicting fast paths, whereas \sysname \camerad{optimizes fast-path latency}.

Other low-latency systems build on assumptions outside \sysname's model. Some target Byzantine settings~\cite{zyzzyva, bitcoin-ng, arbitrum, l2-1, l2-2}, \cameraz{including Heterogeneous Paxos~\cite{hetero-paxos}, which configures whom each learner trusts and how many failures it tolerates}; others exploit datacenter hardware such as RDMA or programmable switches~\cite{dare, aguilera2020microsecond, derecho, ukharon, ubft, no-paxos, pre-p4xos, p4xos}. \sysname targets crash-fault consensus over ordinary \camerad{WANs}.

\paragraph{Topology-\camerad{aware communication}.}
Prior work showed that communication structure matters for consensus performance~\cite{pigpaxos, ring-paxos, multi-ring-paxos, wheat, tree-byzcoin, tree-kauri, tree-motor, tree-omniledger, tree-canopus}. Ring- and tree-based protocols distribute load and improve throughput~\cite{ring-paxos, multi-ring-paxos, lcr, totem, tree-byzcoin, tree-kauri, tree-motor, tree-omniledger, tree-canopus}, while Pig Paxos~\cite{pigpaxos} uses aggregation and piggybacking to reduce bandwidth bottlenecks. \sysname's optimized dissemination graphs can resemble such structures, but are derived from a different objective: they deliver exactly the knowledge required by the synthesized fast paths along low-latency routes.

\paragraph{Knowledge-\camerad{based protocols}.}
Reasoning about knowledge \cite{knowledge, knowledge-and-common-knowledge, kbp} 
is a classical way to characterize the information needed to solve distributed tasks~\cite{communication-efficient-canonical}. This line of work has led to optimal synchronous consensus protocols under crash, Byzantine, and omission failures~\cite{castaneda2022unbeatable, early-decision-predicate, knowledge-and-common-knowledge-byz, common-knowledge-and-consistent-coordination, optimal-eba-of}. \sysname shares similarities, but works in a different model: asynchronous crash-fault fast paths with non-uniform latencies. This vastly changes both the problem and the solution.

\section{Conclusion}
\label{sec:conclusion}

We presented \sysname, a framework \cameraz{that synthesizes consensus fast paths}. By \cameraz{characterizing the evidence needed for safe recovery}, \sysname turns fast-path design into a checkable optimization problem. The resulting \cameraz{fast paths} adapt to network topology, proposer distribution, and latency objective.
In our evaluation, this achieves the lowest latency across deployments, cutting average latency by up to \cameraz{16\%}.

Future work includes mechanically checking \sysname and extending it to support EPaxos-style dependency tracking, Byzantine faults, \cameraz{speculative execution}, and erasure coding.

\section*{Acknowledgments} \label{sec:acknowledgments}
\cameraz{We thank our shepherd and anonymous reviewers for their thoughtful
feedback, and our anonymous artifact evaluators for assessing our prototype.
We are also grateful to Zakaria Choukri for his work on the infrastructure supporting our AWS experiments.}
\cameraa{This work was supported in part by the Swiss National Science Foundation (SNSF) under grant 200021\_215383.}

\newpage

\bibliographystyle{ACM-Reference-Format}
\bibliography{references}

\if\extended 1
 \appendix
 \onecolumn 
 \section{\sysname Correctness} \label{sec:proofs}

Recall the properties of the adopt-commit variant which we consider (presented in \Cref{sec:bg:ac}):
\begin{description}[style=unboxed,leftmargin=0.7cm]
 \item [Termination:] \t{Propose} adopts or commits if called at a correct process.
 \item [Validity:] \t{Propose} adopts or commits only proposed \camera{values}.
 \item [Agreement:] if \t{Propose} commits $\camera{v}$ at some process, then \t{Propose} never adopts or commits \camera{\t{v'}}$\neq$\camera{\t{v}} at other processes.
\end{description}
We show in \Cref{section:adopt-commit correctness} that these properties are satisfied by our adopt-commit protocol presented in \Cref{sec:ac}.

A consensus protocol should satisfy the following properties:
\begin{description}[style=unboxed,leftmargin=0.7cm]
 \item [Termination:] \t{Propose} decides if called at a correct process.
 \item [Validity:] \t{Propose} decides only proposed \camera{values}.
 \item [Agreement:] a single \camera{value} may be decided by \t{Propose} calls.
\end{description}

As mentioned in \Cref{sec:bg:ac}, consensus may be solved using the adopt-commit abstraction in the fast path.
Algorithm~\ref{alg:ac-to-consensus} formalizes the construction sketched in \Cref{sec:bg:ac}, showing how
adopt-commit protocols can be used to implement consensus.
In \Cref{proof:consensus-correctness} we prove that the protocol in Algorithm~\ref{alg:ac-to-consensus} satisfies the three required consensus properties.

\StartLineAt{1}

\begin{lstlisting}[caption={Consensus Using Adopt-Commit},label={alg:ac-to-consensus}]
def Consensus::Propose(@\camera{\texttt{v}}@):
  call AdoptCommit::Propose(@\camera{\texttt{v}}@)@\label{alg:consensus:propose:ac propose}@
upon event AdoptCommit::Commit(@\camera{\texttt{v}}@):
  trigger Consensus::Decide(@\camera{\texttt{v}}@)@\label{alg:consensus:propose:decide upon commit}@
upon event AdoptCommit::Adopt(@\camera{\texttt{v}}@):
  call FallbackConsensus::Propose(@\camera{\texttt{v}}@)@\label{alg:consensus:propose:fallback propose}@
upon event FallbackConsensus::Decide(@\camera{\texttt{v}}@):
  trigger Consensus::Decide(@\camera{\texttt{v}}@)@\label{alg:consensus:propose:decide upon fallback decide}@
\end{lstlisting}

\subsection{\sysname's Adopt-Commit Correctness}\label{section:adopt-commit correctness}
In this section we prove the correctness of our adopt-commit protocol presented in \Cref{sec:ac}, which serves as the core building block of \sysname.
Specifically, we prove that our adopt-commit protocol satisfies the properties presented in the beginning of this section.
We first clarify some assumptions.
We assume a network model with perfect links. This implies that any message sent by a correct process to a correct process is guaranteed to be eventually received at the target.
We further assume that any sending of messages executed throughout the protocol is done in the background and the calling block immediately continues to the next step. 
Additionally, we assume that any block within a method call or a message-handling routine is executed atomically, unless it contains wait steps. That is, such a block is not interrupted by input events (i.e., a call to \t{Propose} or incoming messages).
In particular, this implies that \t{can\_commit}, which contains no wait steps, executes without any incoming messages being processed at the process during its execution. As a result, the value of \t{known\_acceptors} remains unchanged throughout.

We next establish several supporting claims to facilitate the correctness proof of our adopt-commit protocol.
We start with some statements concerning the behavior of the gossip dissemination process.
The subsequent observation ensures that \t{Accept\&Spread} messages disseminate proposed \camera{values}.
\begin{observation}\label{observation:AcceptAndSpread disseminate proposed values}
Only proposed \camera{values} are being sent in \t{Accept\&Spread} messages.
\end{observation}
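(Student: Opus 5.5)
The plan is an induction over the sequence of \t{Accept\&Spread} sends in an execution, ordered by the time each message is sent. Only two kinds of program points send \t{Accept\&Spread} messages. One is the initial self-send in \t{AdoptCommit::Propose} (\cref{alg:ac:propose:disseminate}). The other is the rebroadcast in the \t{Accept\&Spread} handler when evidence grows (\crefrange{alg:ac:grew}{alg:ac:broadcast}). I would first inspect the pseudocode of Algorithm~\ref{alg:ac:disseminate} to confirm that no other line sends such a message. In particular, the \t{Freeze} handler replies with \t{Frozen}, not \t{Accept\&Spread}, and the failure-detector abandonment of Algorithm~\ref{alg:ac:fd} only sends \t{DoAdopt}/\t{Freeze}.

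\textbf{Base case.} The message at \cref{alg:ac:propose:disseminate} carries exactly the argument $v$ of \t{Propose}, so $v$ is proposed by definition.

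\textbf{Inductive step.} A rebroadcast carries the process's \t{accepted} value. I would prove the auxiliary invariant that \t{accepted}, whenever it is set (non-$\bot$), equals a value received in some earlier \t{Accept\&Spread} message. This holds because \t{accepted} is assigned only at \cref{alg:ac:ifaccepted}, from the incoming message's value, and is never changed afterwards. That incoming message was sent earlier, so by the induction hypothesis its value was proposed. With FIFO lossless links and no message creation, every received message was previously sent. Hence the rebroadcast value is proposed.

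The only real obstacle is bookkeeping. I must check that the rebroadcast uses \t{accepted} (or the incoming value, which equals \t{accepted} in the non-conflict branch) and not some other variable. I must also check that the conflict branch (\crefrange{alg:ac:ifdiffaccepted}{alg:ac:conflict1}) sends \t{DoAdopt} rather than an \t{Accept\&Spread} with a new value. Non-voting processes run the same handlers, so the same argument covers them.
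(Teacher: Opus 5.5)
Your proposal is correct and matches the paper's proof. Both argue by induction over \t{Accept\&Spread} sends. The base case is the self-send in \t{Propose}, and the inductive step is the rebroadcast in the handler, whose value comes from an earlier-received (hence earlier-sent) \t{Accept\&Spread} message. Routing that step through the variable \t{accepted} is an unnecessary but harmless detour: in the non-conflict branch the rebroadcast value equals the incoming value, which the paper uses directly.
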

\begin{proof}
\t{Accept\&Spread} messages are sent \camerad{on} two occasions. They are initiated on \cref{alg:ac:propose:disseminate} and propagated
on \cref{alg:ac:broadcast}. We prove the lemma by induction.
For the base case, an \t{Accept\&Spread} message is initiated on \cref{alg:ac:propose:disseminate} with \camera{value} $\camera{v}$ during a \t{Propose($\camera{v}$)} call. Hence, such an \t{Accept\&Spread} message includes a proposed \camera{value}.
Now suppose the lemma holds for all \t{Accept\&Spread} messages sent so far, and consider an \t{Accept\&Spread} message sent on \cref{alg:ac:broadcast}. It is sent with a \camera{value} from a received \t{Accept\&Spread} message, which is a proposed \camera{value} by the inductive hypothesis.
\end{proof}

The following observation and lemma ensure the dissemination of \t{Accept\&Spread} messages when no process is frozen by some adopter.
\begin{observation}\label{observation:bcast after accept}
Upon receiving an \t{Accept\&Spread} message for the first time, a correct process sends an \t{Accept\&Spread} message to all other processes (on \cref{alg:ac:broadcast}) containing the \camera{value} from the received message, unless the process has previously received a \t{Freeze} message.
\end{observation}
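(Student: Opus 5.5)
This observation follows by inspecting the \t{Accept\&Spread} handler of Algorithm~\ref{alg:ac:disseminate}. I would fix a correct process $r$ and consider the first \t{Accept\&Spread} message it handles. I would then check two things. First, that the handler actually runs to its end. Second, that in doing so it reaches the broadcast on \cref{alg:ac:broadcast} carrying the received value.

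The first step handles freezing. By \cref{alg:ac:conflict2}, a process that has already received \t{Freeze} stops handling \t{Accept\&Spread} messages, which is exactly the exception in the statement. So assume $r$ has not received \t{Freeze} when the message arrives. The handler contains no wait steps, so by our atomicity assumption it executes without interruption. In particular, no \t{Freeze} can be processed in the middle of it.

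The second step handles acceptance and growth of evidence. Since this is the first \t{Accept\&Spread} message $r$ handles, $r$ has accepted nothing yet. It therefore sets \t{accepted} to the message's value (\cref{alg:ac:ifaccepted}), so the conflict branch (\crefrange{alg:ac:ifdiffaccepted}{alg:ac:conflict1}) is not taken. Before this message, $\t{known\_acceptors}[\t{me}]$ at $r$ is empty. Handling the message adds \t{me} to it on \cref{alg:ac:knownacceptorsme1}, so the evidence strictly grows. The growth test (\cref{alg:ac:grew}) therefore succeeds, and $r$ broadcasts on \cref{alg:ac:broadcast} an \t{Accept\&Spread} message containing \t{accepted}, which is the received value.

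The main obstacle is being precise about the growth condition. I need to confirm that the pseudocode compares against the array as it was before the update, and that the first acceptance always adds \t{me}, even when the incoming array already lists $r$ or is empty, as at \cref{alg:ac:propose:disseminate}. A non-voting process does not add \t{me}, so for it I would instead argue that the growth check still fires. I would expect this to hold either because the incoming message carries the proposer's own evidence or because first acceptance is itself treated as a change. If neither is true, I would restrict the observation to voting processes.
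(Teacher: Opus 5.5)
Your proposal is correct and follows the paper's own argument. The paper argues that \t{known\_acceptors} is still empty before the first \t{Accept\&Spread} is handled, because that handler is its only writer. The handler then adds the process itself on \cref{alg:ac:knownacceptorsme1}, so the growth test fires and the received value is broadcast on \cref{alg:ac:broadcast}. Your hedge about non-voting processes is not needed, since the appendix proves correctness of the base protocol, in which every process adds itself. You should, however, state the only-writer justification for emptiness explicitly, as the paper does, rather than just asserting it.
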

\begin{proof}
Before receiving the first \t{Accept\&Spread} message, since the handler of such messages is the only place where \t{known\_accep\-tors} is modified, its values are still $\emptyset$. Let $\camera{v}$ be the \camera{value} in the first received \t{Accept\&Spread} message at a correct process $p$ (in case $p$ ever receives such a message). Then upon its receipt, if $p$ has not previously received a \t{Freeze} message, $p$ modifies \t{known\_acceptors[$p$]} on \cref{alg:ac:knownacceptorsme1} from $\emptyset$ to a value that includes $p$. As the value of \t{known\_acceptors[$p$]} grew, $p$ next sends to others an \t{Accept\&Spread} message containing $\camera{v}$ on \cref{alg:ac:broadcast}.
\end{proof}

\begin{lemma}\label{lemma:if correct proposes everyone receives AcceptAndSpread}
If some correct process calls \t{Propose} and does not receive any \t{Freeze} message, then every correct process eventually receives an \t{Accept\&Spread} message.  
\end{lemma}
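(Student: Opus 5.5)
Let $p$ be the correct process that calls \t{Propose}$(\camera{v})$ and never receives a \t{Freeze} message. The plan is to follow $p$'s own \t{Accept\&Spread} message to itself and then apply \Cref{observation:bcast after accept} once, at $p$.

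First, $p$ sends \t{Accept\&Spread}$(\camera{v},\{\})$ to itself on \cref{alg:ac:propose:disseminate}. Sending is done in the background, and links between correct processes are perfect (a self-send is a special case). Hence $p$ eventually receives this message, so $p$ receives at least one \t{Accept\&Spread} message. Consider the first \t{Accept\&Spread} message that $p$ receives; it need not be its own, since another proposer's message may arrive first. By hypothesis $p$ has received no \t{Freeze} message at that moment. So \Cref{observation:bcast after accept} applies: upon this first receipt, $p$ sends an \t{Accept\&Spread} message to all other processes on \cref{alg:ac:broadcast}.

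Next, every other correct process $q$ receives the message $p$ sent to it, because $p$ and $q$ are correct and links are perfect. Together with $p$ itself, this covers every correct process and proves the lemma. Note that no assumption is needed on whether the recipients have received \t{Freeze}. The lemma only asks that they \emph{receive} an \t{Accept\&Spread} message, not that they accept or relay it, so a frozen recipient still counts.

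The main point needing care is that the broadcast in \Cref{observation:bcast after accept} is triggered by the \emph{first} \t{Accept\&Spread} receipt. That receipt might be a message for a different value arriving before $p$'s self-message. This does not matter, because the lemma only needs some \t{Accept\&Spread} message, of any value, to reach everyone. A second concern is whether "to all other processes" in the observation excludes processes that $p$ considers non-voting. I would check against \cref{alg:ac:broadcast} that the broadcast targets all processes, so that every correct process, voting or not, is a recipient.
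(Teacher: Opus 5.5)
Your proposal is correct and matches the paper's proof: $p$ self-sends an \t{Accept\&Spread} message during \t{Propose}, \Cref{observation:bcast after accept} applies to $p$'s first \t{Accept\&Spread} receipt, and perfect links then deliver $p$'s broadcast to every correct process. The paper's phrase ``if it has not done so earlier'' covers the same case you handle explicitly, where another value's message reaches $p$ first.
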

\begin{proof}
Let $p$ be a correct process that calls \t{Propose}.
During its \t{Propose} call, $p$ sends itself an \t{Accept\&Spread} message on \cref{alg:ac:propose:disseminate}. If $p$ did not receive any \t{Freeze} message, then by \Cref{observation:bcast after accept}, $p$ sends to others an \t{Accept\&Spread} message (on \cref{alg:ac:broadcast}) following the receipt of this message if it has not done so earlier.
By our network model, which assumes perfect links, every correct process eventually receives an \t{Accept\&Spread} message sent by $p$.    
\end{proof}

The following observations pertain to accepted \camera{values}. They establish that each process accepts at most one \camera{value}, and that \camerad{every process appearing in the values of the \t{known\_acceptors} array at any process has accepted} the same \camera{value} as that process.

\begin{observation}\label{observation:accepted assinged once}
The value of \t{accepted} is assigned only non-$\bot$ values, and it is modified at most once on each process from the initial $\bot$ value to another value.
\end{observation}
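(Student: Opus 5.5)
The plan is a syntactic inspection of the code, in the style of the proof of \Cref{observation:bcast after accept}: enumerate every line that writes to \t{accepted}, then check what value each write can store and under what guard it runs. The variable is initialized to $\bot$ and is local to each process, so only writes performed in that process's own handlers matter. Since \t{Propose} never touches \t{accepted} directly and only sends an \t{Accept\&Spread} message to itself (\cref{alg:ac:propose:disseminate}), the only assignment site is the \t{Accept\&Spread} handler, at \cref{alg:ac:ifaccepted}. The \t{Freeze} handler only reads \t{accepted} when building its \t{Frozen} reply, and the \t{can\_commit} and recovery functions only read it.

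The proof then has two claims.

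First, \textbf{the assigned value is never $\bot$.} The value written at \cref{alg:ac:ifaccepted} is the value carried by the received \t{Accept\&Spread} message. By \Cref{observation:AcceptAndSpread disseminate proposed values}, every such value is a proposed value, and proposed values are never $\bot$ (the adopt-commit template never proposes $\bot$).

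Second, \textbf{at most one write occurs.} The assignment at \cref{alg:ac:ifaccepted} is guarded by the test that \t{accepted} is still $\bot$; this is the branch taken on the first received value, as opposed to the conflict branch at \cref{alg:ac:ifdiffaccepted}. By the atomicity assumption stated above, the test and the assignment execute within one uninterrupted handler block, so no other message can be processed between them. Hence the first execution of the assignment makes \t{accepted} non-$\bot$, and since nothing ever resets it to $\bot$, the guard fails in every later handler invocation. The assignment therefore runs at most once, and it moves \t{accepted} from $\bot$ to a non-$\bot$ value.

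The only point needing care is confirming from the pseudocode that the assignment really sits inside the \t{accepted}~$=\bot$ branch, and that the post-\t{Freeze} path (\cref{alg:ac:conflict2}) either returns before reaching it or also respects the guard. Either way, the guard alone already gives the at-most-once property.
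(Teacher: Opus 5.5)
Your proof is correct and follows the paper's own argument. The only write to \t{accepted} is in the \t{Accept\&Spread} handler. It stores the message's value, which is a proposed and hence non-$\bot$ value by \Cref{observation:AcceptAndSpread disseminate proposed values}, and once \t{accepted} is non-$\bot$ it never changes again. One small citation nit: the paper locates the assignment itself at \cref{alg:ac:assignaccepted} rather than \cref{alg:ac:ifaccepted}.
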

\begin{proof}
The only assignment to \t{accepted} (other than its initialization) is done on \cref{alg:ac:assignaccepted}. On the first receipt of an \t{Accept\&Spread} message, the \camera{value} in this message, which is a proposed \camera{value} and not $\bot$ by \Cref{observation:AcceptAndSpread disseminate proposed values}, is assigned to \t{accepted}. From this point on, the value of \t{accepted} is not $\bot$ so it remains the same.
\end{proof}

\begin{observation}\label{observation:processes in known_acceptors accepted the same value}
All processes appearing in the values of the \t{known\_acceptors} array at process $p$, accepted the same \camera{value} as $p$ did, i.e., assigned it to their \t{accepted} local variable.
\end{observation}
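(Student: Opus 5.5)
We prove the observation by induction over the sequence of events that modify some \t{known\_acceptors} array at some process. The invariant is slightly stronger than the statement. For every process $p$ and every index $q$, each process $r \in \t{known\_acceptors}[q]$ at $p$ has assigned to its \t{accepted} variable the same value as $p$. Moreover, if this array is non-empty, $p$'s \t{accepted} is non-$\bot$.

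We also strengthen the invariant to cover messages. Every \t{Accept\&Spread}$(v, K)$ message in transit satisfies: every process appearing in any entry of $K$ has accepted $v$. This is needed because arrays are built by merging received messages. The same atomicity assumption that lets \t{can\_commit} see a frozen array makes each handler execution a single induction step.

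\textbf{Base case.} Initially all arrays are $\emptyset$, so nothing needs checking. Messages sent on \cref{alg:ac:propose:disseminate} carry the empty set, so they satisfy the message invariant trivially.

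\textbf{Inductive step for a handler.} Consider a handler at $p$ processing \t{Accept\&Spread}$(v,K)$. We split into cases following the pseudocode.

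\emph{Case 1: $p$ is frozen.} The message is ignored and nothing changes.

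\emph{Case 2: $p$ has already accepted some $v' \neq v$.} The conflict branch only signals \t{DoAdopt} and does not modify \t{known\_acceptors}, so the invariant is preserved.

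\emph{Case 3: $p$ accepts $v$ now, or has already accepted $v$.} In this case \t{accepted}$=v$, by \Cref{observation:accepted assinged once} and \cref{alg:ac:assignaccepted}. The array can be modified in two ways.
\begin{itemize}
\item $p$ adds itself to \t{known\_acceptors[me]} on \cref{alg:ac:knownacceptorsme1}. Then $p$ trivially accepted $v$.
\item $p$ merges entries of $K$ into its own entries. By the message invariant, every process appearing in $K$ accepted $v$.
\end{itemize}
Finally, any broadcast on \cref{alg:ac:broadcast} sends $p$'s current array with $v$. By what was just shown, this message satisfies the message invariant.

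\textbf{Stability over time.} ``Accepted the same value'' is a stable property, because \Cref{observation:accepted assinged once} says \t{accepted} is assigned at most once. Hence facts established earlier never become false.

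\textbf{Main obstacle.} The difficult part is checking the exact structure of the merge lines (\crefrange{alg:ac:knownacceptorsme1}{alg:ac:knownacceptorsq}). We must confirm that every insertion into any entry comes either from $p$ itself, or from a received message whose value equals \t{accepted}. In particular, we must confirm that no merge happens in the conflicting-value branch or before \t{accepted} is set. If the pseudocode merges before checking for a conflict, the argument instead needs the fact that the merge is guarded by \t{accepted}$=v$.
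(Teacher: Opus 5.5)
Your proof is correct and follows the paper's own induction. Entries are added to \t{known\_acceptors} only after the conflict check, so the receiver has \t{accepted}${}=v$, and merged entries come from the sender's array, whose members all accepted the sender's value $v$; your explicit message invariant and non-$\bot$ clause simply make precise what the paper uses implicitly (the inductive hypothesis at the sender's state at send time, and entries being added only after \t{accepted} is set). The pseudocode does check for conflicts before merging, so the hedge in your last paragraph is unnecessary.
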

\begin{proof}
We proceed by induction. The base case vacuously holds since the \t{known\_acceptors} values are initially empty (\cref{alg:ac:knownacceptors}).
Now suppose the observation holds for all \t{known\_acceptors} values at a certain point in the execution, and consider the first following update of \t{known\_acceptors} at some process $p$ on \cref{alg:ac:knownacceptorsme1} or \cref{alg:ac:knownacceptorsq} upon handling an \t{Accept\&Spread}($\camera{v}$, \t{knowledge\_of\_q}) message from some process $q$.
$p$ reaches this line after having assigned a \camera{value} to \t{accepted} on \cref{alg:ac:assignaccepted}.
Since $p$ skipped the \t{if} clause on \crefrange{alg:ac:ifdiffaccepted}{alg:ac:conflict1}, \t{accepted} at $p$ equals $\camera{v}$.
This is the only \camera{value} $p$ accepts by \Cref{observation:accepted assinged once}.
The only processes added to \t{known\_acceptors} values on \cref{alg:ac:knownacceptorsme1,alg:ac:knownacceptorsq} are $p$ itself, and the processes in the values of the \t{knowledge\_of\_q} array from the \t{Accept\&Spread} message received from $q$. \t{knowledge\_of\_q} was the value of $\t{known\_acceptors}$ at $q$ when it sent the \t{Accept\&Spread} message. By the inductive hypothesis, its values include processes that assigned the same \camera{value} as $q$ to their \t{accepted} local variable, which is the \camera{value} $q$ sent in its \t{Accept\&Spread} message---namely, $\camera{v}$.
\end{proof}

Next, we present a couple of insights about knowledge expansion and its consequences.
\begin{observation}[knowledge does not decrease]\label{observation:knowledge only increases}
Processes are never removed from values of the \t{known\_acceptors} array at any process.
\end{observation}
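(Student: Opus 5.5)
This is a syntactic invariant, so I will prove it by inspecting every place in the code where the \t{known\_acceptors} array is written. The first step is to list these writes. There is the initialization on \cref{alg:ac:knownacceptors}, which happens once before any other step and so cannot remove anything from a previously nonempty entry. The only other writes are the updates in the \t{Accept\&Spread} handler, on \cref{alg:ac:knownacceptorsme1} and \cref{alg:ac:knownacceptorsq}; the proof of \Cref{observation:bcast after accept} already notes that this handler is the only place where the array is modified. The \t{Freeze} handler and \t{can\_commit} only read the array: the former sends \t{known\_acceptors[me]} in its \t{Frozen} reply, and the latter checks it against $R_p$. I will check this line by line as well.

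The second step is to show that each update in the \t{Accept\&Spread} handler is a union. On \cref{alg:ac:knownacceptorsme1}, \t{known\_acceptors[me]} is set to its old value united with \t{\{me\}}. On \cref{alg:ac:knownacceptorsq}, for each (voting) process $q$, \t{known\_acceptors[$q$]} is set to its old value united with the corresponding entry of the received \t{knowledge\_of\_q} array. Since $A \subseteq A \cup B$ for any sets, each such assignment leaves every entry a superset of its previous value. The handler's early exits, namely the conflict branch (\crefrange{alg:ac:ifdiffaccepted}{alg:ac:conflict1}) and the ignore-after-freeze branch (\cref{alg:ac:conflict2}), perform no write at all, so they trivially preserve every entry.

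The final step is an induction over the sequence of steps of the execution at a fixed process $p$. The claim is that for every $q$, the value of \t{known\_acceptors[$q$]} after a step contains its value before that step. This holds because each step either does not touch the array or applies one of the union updates above. By transitivity, any value held earlier is contained in any value held later, which is exactly the statement.

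The only real obstacle is confirming from the pseudocode that the updates are written as unions and not as plain overwrites by the received \t{knowledge\_of\_q}. If some line did overwrite, I would instead argue that the assigned value already contains the old one, but I expect the union form to hold directly.
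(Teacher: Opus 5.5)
Your proposal is correct and follows the paper's own argument: the only writes to \t{known\_acceptors} after initialization are the two union updates on \cref{alg:ac:knownacceptorsme1,alg:ac:knownacceptorsq}, so every entry can only stay the same or grow. One small detail: the update on \cref{alg:ac:knownacceptorsme1} also unions in the sender's own entry \t{knowledge\_of\_q[$q$]}, not just \t{\{me\}}, but since it is still a union with the old value, your argument is unaffected.
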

\begin{proof}
The \t{known\_acceptors} values are updated only on \cref{alg:ac:knownacceptorsme1,alg:ac:knownacceptorsq}, where they are modified by taking the union with the current value, thus either remaining the same or gaining additional processes.
\end{proof}

\begin{corollary}\label{corollary: my known_acceptors[p] is subset of this value at p}
For any process $p$, the value of \t{known\_acceptors[$p$]} at some process $r$ at any point in the execution is a subset of the final value of \t{known\_acceptors[$p$]} at $p$.
\end{corollary}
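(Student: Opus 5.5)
The plan is an induction over the execution, i.e., over the sequence of updates to \t{known\_acceptors} entries anywhere in the system. The invariant to maintain is: \emph{for all processes $p$ and $r$, at every point in the execution, the current value of \t{known\_acceptors[$p$]} at $r$ is a subset of the current value of \t{known\_acceptors[$p$]} at $p$.} Once this invariant holds, the corollary follows from \Cref{observation:knowledge only increases}. That observation says the value of \t{known\_acceptors[$p$]} at $p$ only grows over time. So the current value at $p$ is a subset of its final value (or its eventual value, if we take the union over the whole execution). Hence the value held at $r$ at any point is also a subset of that final value.

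\textbf{Base case.} Initially all entries are $\emptyset$ (\cref{alg:ac:knownacceptors}), so the invariant holds trivially.

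\textbf{Inductive step.} Consider a single update on \cref{alg:ac:knownacceptorsme1} or \cref{alg:ac:knownacceptorsq} at some process $r$ that handles an \t{Accept\&Spread}(v, \t{knowledge\_of\_q}) message from $q$.

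\emph{Case \cref{alg:ac:knownacceptorsme1}.} Only \t{known\_acceptors[$r$]} at $r$ itself changes, and it grows. The right-hand side for $p=r$ therefore only grows, and no other entry is affected, so the invariant is preserved.

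\emph{Case \cref{alg:ac:knownacceptorsq}.} For each key $p$, the entry at $r$ becomes its old value united with \t{knowledge\_of\_q[$p$]}. The old value is covered by the inductive hypothesis. \t{knowledge\_of\_q} is a snapshot of $q$'s array at the time $q$ sent the message. So by the inductive hypothesis applied at that earlier time, \t{knowledge\_of\_q[$p$]} is a subset of the value at $p$ at that earlier time. By monotonicity (\Cref{observation:knowledge only increases}), that value is a subset of the current value at $p$.

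The subtle point is when $r=p$. Then the update adds elements to $p$'s own entry, which can only enlarge the right-hand side, so the invariant still holds.

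\textbf{Main obstacle.} Carrying the inductive hypothesis across message transit requires the hypothesis to be stated for all past states, not only the current one. I would therefore phrase the induction as holding at every prefix of the execution, and treat message contents as frozen copies of past states, appealing to perfect links only for the fact that messages carry exactly the sent contents.

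An alternative route avoids message transit entirely: the text already notes that every acceptance in \t{known\_acceptors[$q$]} originates in a report from $q$. Under that route, one shows by induction that any element of an entry $p$ at any process was first inserted into \t{known\_acceptors[$p$]} at $p$ via \cref{alg:ac:knownacceptorsme1} or \cref{alg:ac:knownacceptorsq} there. Monotonicity then concludes the argument.
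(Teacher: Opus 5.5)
Your proof is correct and follows essentially the paper's argument: an induction over updates to \t{known\_acceptors}, whose key step is that \t{knowledge\_of\_q} is a snapshot of $q$'s earlier state, so the inductive hypothesis applies at that earlier point, combined with monotonicity (\Cref{observation:knowledge only increases}). The only difference is cosmetic. You compare against the \emph{current} value at $p$ and apply monotonicity once at the end, which covers uniformly the cases $p=r$ and $p=q$ that the paper treats separately.
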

\begin{proof}
For $p=r$, QED follows from \Cref{observation:knowledge only increases}.
We proceed by induction for the remaining case, showing that for any processes $p$ and $r$ such that $p \neq r$, the value of \t{known\_acceptors[$p$]} at $r$ at any point in the execution is a subset of the final value of \t{known\_acceptors[$p$]} at $p$. The base case vacuously holds since the \t{known\_acceptors} values are initially empty (\cref{alg:ac:knownacceptors}).
Now suppose the observation holds for all \t{known\_acceptors} values at a certain point in the execution, and consider the first following update of \t{known\_acceptors[$p$]} at some process $r$ on \cref{alg:ac:knownacceptorsq} such that $p \neq r$. It is updated to the union of its current value and the value of \t{knowledge\_of\_q[$p$]} (where \t{knowledge\_of\_q} is received in an \t{Accept\&Spread} message from process $q$). By the inductive hypothesis, the value of \t{known\_acceptors[$p$]} before this update is a subset of the final value of \t{known\_acceptors[$p$]} at $p$. So it remains to prove that the value of \t{knowledge\_of\_q[$p$]} is also a subset of the final value of \t{known\_acceptors[$p$]} at $p$. \t{knowledge\_of\_q[$p$]} was the value of \t{known\_acceptors[$p$]} at $q$ when it sent its \t{Accept\&Spread} message to $r$.
If $p=q$, then as mentioned at the beginning of this proof, \t{known\_acceptors[$p$]} at $p$ is a subset of its final value.
Else, by the inductive hypothesis, \t{known\_acceptors[$p$]} at $q$ when $q$ sent its \t{Accept\&Spread} message to $r$ is a subset of the final value of \t{known\_acceptors[$p$]} at $p$. 
\end{proof}

We now consider the local state of processes in case some process commits.

\begin{lemma}\label{lemma: requirements processes in known_acceptors if commit}
\camera{If process $p$ commits, every process
$q\in\t{REQUIREMENTS}[p]\t{.keys()}$ appears in $\t{known\_acceptors}[q]$ at $p$.}
\end{lemma}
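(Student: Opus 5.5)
The plan is to unfold what it means for $p$ to commit and read off the conclusion from the body of \t{can\_commit} together with the normal form of requirements (\S\ref{sec:constraints:condition}). A commit is only triggered on \cref{alg:ac:propose:commit}, and only when the wait on \cref{alg:ac:propose:until} was ended by \t{can\_commit(v)} returning true (\cref{alg:ac:propose:can_commit}). By our atomicity assumption, \t{can\_commit} runs without interleaved message handling, so \t{known\_acceptors} at $p$ is fixed throughout the check. It therefore suffices to show that a successful run of \t{can\_commit} certifies the claim about this fixed snapshot. By \Cref{observation:knowledge only increases}, the claim then also holds at every later point.

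The key step is to inspect the loop in Algorithm~\ref{alg:ac:commit} (\crefrange{alg:commit:for}{alg:commit:true}). For every key $q\in\t{REQUIREMENTS}[p]\t{.keys()}$, the loop checks that $\t{REQUIREMENTS}[p][q]\subseteq\t{known\_acceptors}[q]$ at $p$; otherwise it returns false. Returning true therefore means that every such inclusion held. We then invoke the normal form, which states that $q\in R_p[q]$ for every required witness $q$. This is guaranteed for synthesized candidates, since Algorithm~\ref{alg:ac:disseminate} adds \t{me} to a process's own entry on \cref{alg:ac:knownacceptorsme1}, and we may also assume it without loss of generality. Chaining $q\in R_p[q]\subseteq\t{known\_acceptors}[q]$ gives $q\in\t{known\_acceptors}[q]$ at $p$, as required.

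The main obstacle is matching the loop's exact semantics, since the algorithm's code is not reproduced here. If \t{can\_commit} first checks $\t{accepted}=v$ (``after verifying that it accepted $v$ itself'') and then iterates over the keys, the argument above goes through verbatim. If the check is phrased differently, for instance as an explicit membership test, the conclusion is only more direct. I would also note the boundary case $q=p$: here the claim says $p\in\t{known\_acceptors}[p]$, which follows both from the normal form and from \cref{alg:ac:knownacceptorsme1}.
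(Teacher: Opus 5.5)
Your proposal is correct and follows essentially the paper's argument: a successful \t{can\_commit} certifies $\t{REQUIREMENTS}[p][q]\subseteq\t{known\_acceptors}[q]$ at $p$ for every key $q$, and the normal form gives $q\in\t{REQUIREMENTS}[p][q]$. Your extra remarks on atomicity and on \t{known\_acceptors} only growing are harmless but not needed for the lemma.
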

\begin{proof}
\camera{Since \t{can\_commit} returns \t{True},
\crefrange{alg:commit:for}{alg:commit:false2} ensure that
$\t{REQUIREMENTS}[p][q]\subseteq\t{known\_acceptors}[q]$ at $p$, and the
normal form gives $q\in\t{REQUIREMENTS}[p][q]$.}
\end{proof}

\begin{corollary}[Consequence of \Cref{lemma: requirements processes in known_acceptors if commit} and \Cref{observation:processes in known_acceptors accepted the same value}]\label{corollary: requirements processes have same accepted if commit}
If a process $p$ commits \camera{value} $\camera{v}$, then every process in $\t{REQUIREMENTS}[p]$\t{.keys()} accepted $\camera{v}$, i.e., assigned $\camera{v}$ to its \t{accepted} local variable.
\end{corollary}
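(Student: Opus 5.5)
The plan is to chain \Cref{lemma: requirements processes in known_acceptors if commit} with \Cref{observation:processes in known_acceptors accepted the same value}, evaluated at the moment $p$ commits. Fix a process $p$ that commits value $v$. It triggers \t{Commit}($v$) on \cref{alg:ac:propose:commit} only after \t{can\_commit}($v$) returns \t{True} on \cref{alg:ac:propose:can_commit}. By our atomicity assumption, \t{known\_acceptors} at $p$ does not change during that evaluation, so we reason about this one snapshot.

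The first step is to show that $p$ itself accepted $v$. \t{can\_commit} begins by verifying that $p$'s \t{accepted} equals $v$, and \Cref{observation:accepted assinged once} guarantees this never changes afterwards. So in the snapshot, $p$'s accepted value is $v$.

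The second step fixes an arbitrary $q \in \t{REQUIREMENTS}[p]\t{.keys()}$. By \Cref{lemma: requirements processes in known_acceptors if commit}, $q$ appears in $\t{known\_acceptors}[q]$ at $p$ at the time of the commit. By \Cref{observation:processes in known_acceptors accepted the same value}, every process appearing in any entry of $p$'s \t{known\_acceptors} accepted the same value as $p$. Hence $q$ assigned $v$ to its \t{accepted} variable, which completes the argument.

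The only delicate point is making sure the observation applies with $p$'s accepted value equal to $v$ rather than some other value. This is exactly what the check at the start of \t{can\_commit} provides in the first step. If that check were absent, I would instead argue from the non-empty entry $\t{known\_acceptors}[q]$: updates happen only after passing the conflict test against \t{accepted}, and those updates carry $v$ because \t{can\_commit}($v$) reads \t{known\_acceptors} for $v$.
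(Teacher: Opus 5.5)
Your proof is correct and follows the paper's argument. You combine the lemma placing each key $q$ of $p$'s requirement in $p$'s \t{known\_acceptors[$q$]} with the observation that every process recorded in $p$'s \t{known\_acceptors} accepted the same value as $p$. You also make explicit a step the paper leaves implicit, namely that the \t{accepted} check in \t{can\_commit} fixes $p$'s value to $v$.

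Drop the counterfactual fallback in your last paragraph: it is unnecessary, and it would not work as stated, because \t{known\_acceptors} tracks whatever value $p$ accepted rather than being indexed by $v$.
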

\begin{proof}
\camera{Let $q\in\t{REQUIREMENTS}[p]\t{.keys()}$ for a process $p$ that
commits. By \Cref{lemma: requirements processes in known_acceptors if commit},
$q\in\t{known\_acceptors}[q]$ at $p$. By
\Cref{observation:processes in known_acceptors accepted the same value}, $q$
accepted the same value as $p$.}
\end{proof}

The following observation identifies compatible requirements.

\begin{observation}\label{observation: compatible requirements intersect}
The processes in the keys of two compatible \camera{requirement} dictionaries intersect.
\end{observation}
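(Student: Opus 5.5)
The argument is by contradiction, unfolding the definition of \t{compatible} in Algorithm~\ref{alg:ac:compatible}. Let \t{req\_a} and \t{req\_b} be two compatible requirements, and suppose their keys are disjoint, so that \t{intersection} $=\varnothing$ on the second line of the function.

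With an empty intersection, every test on \cref{alg:comp:witnesses_a,alg:comp:witnesses_b} fails. For any key $p$, we have $\t{req\_a}[p]\cap\varnothing=\varnothing$, so the condition $\t{req\_a}[p]\cap\t{intersection}\neq\varnothing$ is false, and likewise for \t{req\_b}. Consequently both \t{witnesses\_a} and \t{witnesses\_b} are empty, and so is their union.

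\Cref{alg:comp:bigger than f} then requires $|\varnothing|=0>f$. Since $f\ge 0$, this is false, so \t{compatible} returns \t{False}, contradicting the assumption. Hence the key sets of the two requirements must intersect.

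This is pure definition-chasing, so I do not expect a real obstacle. The only point worth stating explicitly is $f\ge 0$, which follows from the system model of \S\ref{sec:bg:model}, where $f$ is the number of tolerated crashes. Note that neither validity nor the normal form is needed for this argument.
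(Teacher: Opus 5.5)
Your proof is correct and essentially the paper's argument: the paper reasons directly that compatibility gives $|\texttt{witnesses\_a} \cup \texttt{witnesses\_b}| > f$, so the two witness sets cannot both be empty, and by their definitions this forces the key intersection to be nonempty. Your version is the contrapositive of this, and your explicit use of $f \ge 0$ states a step the paper leaves implicit.
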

\begin{proof}
Consider two compatible \camera{requirement} dictionaries $req_a$ and $req_b$.
Since they are compatible, the condition $|\t{witnesses\_a} \cup \t{witnesses\_b}| > f$ on \cref{alg:comp:bigger than f} evaluates to \t{True}.
Hence, it is impossible that both \t{witnesses\_a} and \t{witnesses\_b} are empty. By their definitions on \cref{alg:comp:witnesses_a,alg:comp:witnesses_b}, this implies that $req_a$\t{.keys()} $\cap$ $req_b$\t{.keys()} $\neq \emptyset$.
\end{proof}

Next, we identify behaviors related to adopting processes.

\begin{lemma}\label{lemma: adopter receives reply from process required to know about intersection}
For any distinct proposed \camera{values} $\camera{v}$ and $\camera{v'}$, an adopter receives a \t{Frozen} reply from some process required by $\camera{v}$ or $\camera{v'}$ to know about a process in the intersection of the keys in the \camera{requirement} dictionaries of $\camera{v}$ and $\camera{v'}$.
\end{lemma}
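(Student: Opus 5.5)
This is a counting argument: a quorum-intersection (pigeonhole) step over voting processes. Let $R_v$ and $R_{v'}$ be the requirement dictionaries of the proposers of $v$ and $v'$; distinct values come from distinct proposers, since each process proposes once per instance. Let $I = R_v\t{.keys()} \cap R_{v'}\t{.keys()}$, which is nonempty by \Cref{observation: compatible requirements intersect}. Define $W_v = \{p : R_v[p]\cap I \neq \emptyset\}$ and $W_{v'}$ likewise. These are exactly the sets \t{witnesses\_a} and \t{witnesses\_b} computed on \cref{alg:comp:witnesses_a,alg:comp:witnesses_b}. The processes in $W = W_v \cup W_{v'}$ are precisely those "required by $v$ or $v'$ to know about a process in the intersection." The goal is to show that every adopter's set of \t{Frozen} replies meets $W$.

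First, I will invoke the standing assumption that the installed requirements are pairwise compatible, as checked by Algorithm~\ref{alg:ac:compatible}. Then \cref{alg:comp:bigger than f} gives $|W| > f$.

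Second, I will look at the adopter's code. An adopter reaches \cref{alg:ac:propose:adopt} only after waiting on \cref{alg:ac:propose:wait} for $n-f$ \t{Frozen} replies. Each reply comes from a distinct voting process: a process sends one reply per \t{Freeze} it handles, and the adopter counts replies per sender. Non-voting processes ignore \t{Freeze}, so all repliers are voting processes. Let $F$ be this set of repliers, with $|F| \ge n-f$.

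Third, I will note that $W$ also consists of voting processes. Its members are keys of requirement dictionaries, and by \Cref{sec:ac:nonvoting} non-voting processes never appear as required witnesses. Both $W$ and $F$ therefore lie in the $n$ voting processes. Since $|W| + |F| > f + (n-f) = n$, pigeonhole gives $W \cap F \neq \emptyset$. Any process in that intersection is the claimed replier.

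\textbf{Main obstacle.} The argument is bookkeeping rather than depth. The step needing care is ensuring that the $n$ in the wait on \cref{alg:ac:propose:wait} and the universe containing $W$ are the same set of voting processes, and that the replies really come from distinct senders. I would state both facts explicitly from the protocol description before applying pigeonhole.
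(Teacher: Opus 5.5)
Your proof is correct and matches the paper's own argument: compatibility gives more than $f$ processes required to witness an acceptance in the intersection, the adopter holds $n-f$ \t{Frozen} replies, and a counting argument forces the two sets to meet. Your extra bookkeeping (replies come from distinct senders, and both sets range over the same $n$ voting processes) agrees with \S\ref{sec:ac:nonvoting}. It makes explicit what the paper's proof leaves implicit, and the appeal to the nonempty intersection is harmless but not needed.
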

\begin{proof}
Let $\camera{v}$ and $\camera{v'}$ be distinct proposed \camera{values}.
An adopter $A$ waits for \t{Frozen} replies from $n-f$ processes (\cref{alg:ac:propose:wait}). As at most $f$ processes are faulty, the adopter is guaranteed to receive $n-f$ replies. By definition of compatible requirements (\cref{alg:comp:def}), at least $f+1$ processes are required by $\camera{v}$ or $\camera{v'}$ to know about some process in the intersection of the keys in the \camera{requirement} dictionaries of $\camera{v}$ and $\camera{v'}$. These $f+1$ processes intersect with the $n-f$ processes whose replies are received by adopter $A$. QED follows.
\end{proof}

\begin{lemma}\label{lemma: could_potentially_commit returns false if another value commits}
If \camera{value} $\camera{v}$ is committed, then any adopter invoking \t{could\_potentially\_commit} with a proposed \camera{value} $\camera{v'} \neq \camera{v}$ on some iteration of \cref{alg:ac:adopt:couldcommit} will obtain \t{False}.
\end{lemma}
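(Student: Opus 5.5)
Let $p$ be the proposer that commits $v$ and $p'$ the proposer of $v'$, with requirements $R_p=\t{REQUIREMENTS}[p]$ and $R_{p'}$, which are compatible. Let $I=R_p\t{.keys()}\cap R_{p'}\t{.keys()}$, nonempty by \Cref{observation: compatible requirements intersect}. By \Cref{lemma: adopter receives reply from process required to know about intersection}, the adopter's census contains a \t{Frozen} reply from some process $w$ with $R_p[w]\cap I\neq\varnothing$ or $R_{p'}[w]\cap I\neq\varnothing$. I will show that this reply makes \t{could\_potentially\_commit}$(v')$ return \t{False}, using the two rule-out clauses of Algorithm~\ref{alg:ac:adopt}: a required acceptor of $v'$ that accepted a different value, or a required witness of $v'$ that accepted $v'$ but whose frozen $\t{known\_acceptors}[w]$ lacks some element of $R_{p'}[w]$.

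\textbf{Case 1: $w$ is required by $R_p$}, i.e.\ some $x\in R_p[w]\cap I$. Since $p$ committed, \t{can\_commit} gave $x\in R_p[w]\subseteq\t{known\_acceptors}[w]$ at $p$. By \Cref{corollary: my known_acceptors[p] is subset of this value at p}, this is contained in $w$'s final $\t{known\_acceptors}[w]$. The freeze occurs after $p$'s snapshot, because that snapshot came from $w$'s own entry before $w$ froze, and the entry only grows until freezing. So the \t{Frozen} reply's entry contains $x$. By \Cref{corollary: requirements processes have same accepted if commit} and \Cref{observation:processes in known_acceptors accepted the same value}, $w$ and $x$ both accepted $v$. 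Since $x\in I\subseteq R_{p'}\t{.keys()}$, the reply reveals that a required acceptor of $v'$ accepted $v\neq v'$, which triggers the first rule-out clause.

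\textbf{Case 2: $w$ is required only by $R_{p'}$}, i.e.\ some $x\in R_{p'}[w]\cap I$. There are two subcases. If $w$ accepted a value other than $v'$ (or none), the first clause rules $v'$ out, since $w\in R_{p'}\t{.keys()}$ by the normal form. Otherwise $w$ accepted $v'$. But $x\in R_p\t{.keys()}$ accepted $v$ by \Cref{corollary: requirements processes have same accepted if commit}. By \Cref{observation:processes in known_acceptors accepted the same value}, $x$ therefore cannot appear in $w$'s $\t{known\_acceptors}[w]$, so $R_{p'}[w]\not\subseteq$ the frozen entry and the second clause fires.

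\textbf{Main obstacle.} The delicate step is Case 1: justifying that the entry in $w$'s \t{Frozen} reply really contains what $p$ relied on. The issue is that $p$ may have read a relayed copy of $w$'s entry, whereas the reply reflects $w$'s entry at freeze time, not its "final" value. I would handle this with an intermediate claim: any relayed copy of $\t{known\_acceptors}[w]$ is a subset of $w$'s own entry at the moment $w$ sent it. This follows by the same induction as \Cref{corollary: my known_acceptors[p] is subset of this value at p}. After freezing, $w$ handles no further \t{Accept\&Spread} messages, so its final entry equals its frozen entry, and the corollary suffices. I would also need to check that the pseudocode's clauses match this description exactly, in particular that the reply includes \t{accepted} so that the first clause can be applied.
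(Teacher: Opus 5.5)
Your proposal is correct and follows the paper's proof step for step. It uses the same census lemma to obtain a reply from an intersection witness, and the same two cases: a witness required by the committed value, whose frozen entry exposes an acceptor in $v'$'s quorum that accepted $v$; or a witness required by $v'$, which either did not accept $v'$ or cannot have recorded the intersection process's acceptance of $v$. Your ``main obstacle'' is resolved exactly as in the paper: a frozen process handles no further \t{Accept\&Spread} messages, so its frozen entry is its final entry, and the existing corollary on relayed copies of \t{known\_acceptors} suffices without your extra intermediate claim.
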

\begin{proof}
Consider a process $r$ that commits \camera{value} $\camera{v}$ whose associated \camera{requirement is} $req_{\camera{v}}$, a process $r'$ proposing \camera{value} $\camera{v'} \neq \camera{v}$ whose associated \camera{requirement is} $req_{\camera{v'}}$, and an adopter $A$ that invokes \t{could\_potentially\_commit} with $\camera{v'}$ on some iteration of \cref{alg:ac:adopt:couldcommit}.
By \Cref{lemma: adopter receives reply from process required to know about intersection}, there exists process $q \in req_{\camera{v}}.keys() \cap req_{\camera{v'}}.keys()$, such that
$A$ receives a \t{Frozen} reply from some process $p$ satisfying 
(1) $p \in req_{\camera{v}}.keys()$ and $q \in req_{\camera{v}}[p]$, or
(2) $p \in req_{\camera{v'}}.keys()$ and $q \in req_{\camera{v'}}[p]$ (or both (1) and (2)).

First assume (1) holds. In this case, intuitively, since $r$ commits, the \camera{requirement} $req_{\camera{v}}$ \camera{is met}, and in particular $p$ knows about $q$ accepting $\camera{v}$; $A$ learns about it from $p$, and concludes that $q$ did not accept $\camera{v'}$, hence $req_{\camera{v'}}$ \camera{is not met} and thus $\camera{v'}$ may not be committed.

Formally, since $r$ commits, then a call by $r$ to \t{can\_commit($\camera{v}$)} on \cref{alg:ac:propose:until} returns \t{True}. This means that when committing, $req_{\camera{v}}[p] \subseteq \t{known\_acceptors}[p]$ at $r$ (\crefrange{alg:commit:not-in-2}{alg:commit:false2}), which implies $q \in \t{known\_acceptors}[p]$ at $r$.
By \Cref{corollary: my known_acceptors[p] is subset of this value at p}, the value of \t{known\_acceptors[$p$]} at $r$ when it commits is a subset of the final value of \t{known\_acceptors[$p$]} at $p$.
When $p$ handles the \t{Freeze} message from $A$ and replies with a \t{Frozen} message, $p$'s \t{known\_acceptors} has reached its final value, since it is modified only upon \t{Accept\&Spread} messages handled before the \t{Freeze} message is handled.
Consequently, $q \in \t{known\_acceptors}[p]$ at $p$ when it sends the \t{Frozen} reply to $A$. When $A$ executes \t{could\_potentially\_commit} with $\camera{v'}$, if it does not return \t{False} before iterating over $p$'s reply on \crefrange{alg:ac:adopt:for3}{alg:ac:adopt:false2}, it returns \t{False} on that iteration since \camerad{$p$ accepted $\camera{v} \neq \camera{v'}$ (by \Cref{corollary: requirements processes have same accepted if commit}) and} $q \in \t{acceptors} \cap req_{\camera{v'}}.keys()$.

Now assume (2) holds. 
In this case, intuitively, 
if $p$ did not accept $\camera{v'}$, then $A$ learns about it, and concludes that $req_{\camera{v'}}$ \camera{is not met} and thus $\camera{v'}$ may not be committed.
Else ($p$ accepted $\camera{v'}$), since $r$ commits, the \camera{requirement} $req_{\camera{v}}$ \camera{is met}, and in particular $q$ accepts $\camera{v}$, hence $p$ does not possess knowledge that $q$ accepted $\camera{v'}$. $A$ learns about it from $p$, and concludes that $req_{\camera{v'}}$ \camera{is not met} and thus $\camera{v'}$ may not be committed.

Formally,
assume first that $p$ did not accept $\camera{v'}$.
When $A$ executes \t{could\_potentially\_commit} with $\camera{v'}$, if it does not return \t{False} before iterating over $p$'s reply on \crefrange{alg:ac:adopt:for3}{alg:ac:adopt:false2}, it returns \t{False} on that iteration since $p \in req_{\camera{v'}}.keys()$.
Now assume $p$ accepted $\camera{v'}$.
Since $r$ commits $\camera{v}$, then $q$ accepted $\camera{v}$ by \Cref{corollary: requirements processes have same accepted if commit}. By \Cref{observation:accepted assinged once}, $q$ does not accept $\camera{v'}$. By \Cref{observation:processes in known_acceptors accepted the same value}, $p$ does not add $q$ to its $\t{known\_acceptors}[p]$.
Therefore, when $A$ executes \t{could\_potentially\_commit} with $\camera{v'}$, if it does not return \t{False} before iterating over $p$'s reply on \crefrange{alg:ac:adopt:for2}{alg:ac:adopt:false1}, it returns \t{False} on that iteration since $p \in req_{\camera{v'}}.keys()$, and $q \in req_{\camera{v'}}[p]$ but $q \notin \t{acceptors}$.
\end{proof}

\begin{lemma}\label{lemma: adopter receives Frozen with committed value}
If \camera{value} $\camera{v}$ is committed, then any adopter receives a \t{Frozen} reply with \camera{value} $\camera{v}$.
\end{lemma}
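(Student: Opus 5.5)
The plan is a counting argument: the processes whose acceptances $v$'s requirement demands outnumber the processes that can be missing from any census. Let $r$ be the process that commits $v$, and let $req_v$ be the requirement it used. By \Cref{corollary: requirements processes have same accepted if commit}, every process in $req_v.\t{keys()}$ assigned $v$ to its \t{accepted} variable. By \Cref{observation:accepted assinged once}, none of them ever accepts another value. Since $req_v$ is valid, $|req_v.\t{keys()}| > f$.

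Now consider an adopter $A$. It waits for \t{Frozen} replies from $n-f$ processes (\cref{alg:ac:propose:wait}). Since at most $f$ processes are faulty, it does receive them. A set of $n-f$ repliers and a set of more than $f$ processes must intersect within the $n$ voting processes. So $A$ receives a \t{Frozen} reply from some $q \in req_v.\t{keys()}$.

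It remains to check that $q$'s reply actually carries $v$. A \t{Frozen} reply contains the replier's \t{accepted} value at the moment it handles \t{Freeze} (\cref{alg:ac:frozen}). The only delicate point, and the step I expect to need care, is timing: $q$ might handle $A$'s \t{Freeze} before ever accepting $v$. If so, $q$ stops handling \t{Accept\&Spread} messages afterwards (\cref{alg:ac:conflict2}) and could never accept $v$ later. That contradicts the fact that $q$ accepted $v$, since a frozen process never assigns \t{accepted}.

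I would make this precise via the atomicity assumption and the structure of the handlers: assignment to \t{accepted} happens only inside the \t{Accept\&Spread} handler, and only when the process is not frozen. Hence $q$'s acceptance of $v$ precedes its handling of \t{Freeze}. By \Cref{observation:accepted assinged once}, its \t{accepted} value is still $v$ when it replies. So $A$ receives a \t{Frozen} reply with value $v$.

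Two side points should also be handled. First, $q$ is a voting process, so it does not ignore \t{Freeze}; this holds because non-voting processes never appear as keys of any requirement. Second, the count $n$ in \cref{alg:ac:propose:wait} counts only voting processes, which keeps the intersection argument valid.
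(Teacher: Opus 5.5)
Your proposal is correct and follows essentially the same route as the paper's proof. Validity gives more than $f$ required witnesses, so the $n-f$ census contains one of them; that witness accepted $v$ by the corollary on required witnesses, and it must have done so before handling any \texttt{Freeze}, so its reply carries $v$. Your remarks about non-voting processes are a harmless elaboration that the paper leaves implicit.
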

\begin{proof}
Consider an adopter $A$ receiving $n-f$ \t{Frozen} replies on \cref{alg:ac:propose:wait}. As we assume \camera{the requirement of each process is} valid, the \camera{requirement dictionary of $\camera{v}$'s proposer includes} at least $f+1$ processes as keys. Hence, at least one of the \t{Frozen} replies received by adopter $A$ must be from a process $p$ in the keys of \camera{this requirement} dictionary. By \Cref{corollary: requirements processes have same accepted if commit}, $p$ accepted $\camera{v}$. It assigned $\camera{v}$ to \t{accepted} on \cref{alg:ac:assignaccepted} before receiving the \t{Freeze} message, after which $p$ stops handling \t{Accept\&Spread} messages. Hence, upon receiving a \t{Freeze} message from $A$, $p$ replied with a \t{Frozen} message with \camera{value} $\camera{v}$.
\end{proof}

\begin{lemma}\label{lemma: could_potentially_commit returns true if commit}
If \camera{value} $\camera{v}$ is committed, then calling \t{could\_potentially\_commit} with $\camera{v}$ on \cref{alg:ac:adopt:couldcommit} by an adopter returns \t{True}.
\end{lemma}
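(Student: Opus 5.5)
We need to show that no \t{Frozen} reply causes \t{could\_potentially\_commit} to return \t{False} on $v$. Our reading of Algorithm~\ref{alg:ac:adopt}, taken from the prose in \S\ref{sec:ac:adopt}, is that a reply rules out $v$ in exactly two ways. Either (a) the reply comes from some $q\in R_v.\t{keys()}$ (equivalently, some $q$ in the reported acceptors that is required by $R_v$) that accepted a value different from $v$. Or (b) the reply comes from a required witness $p\in R_v.\t{keys()}$ that accepted $v$ but whose reported $\t{known\_acceptors}[p]$ does not contain $R_v[p]$. The plan is to fix the committing process $r$ with requirement $R_v=\t{REQUIREMENTS}[r]$ and an arbitrary reply from a process $p$. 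We then show that neither branch fires, so the loop ends and the function returns \t{True}.

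For branch (a), we use \Cref{corollary: requirements processes have same accepted if commit}: every key of $R_v$ accepted $v$. By \Cref{observation:accepted assinged once}, that acceptance is permanent, so any reply from a key reports $v$ as its accepted value. A reply from a non-key is irrelevant to the first test. For the variant of the test that inspects the reported acceptor set, we argue as follows. A reply carrying some $v''\neq v$ lists only acceptors of $v''$, by \Cref{observation:processes in known_acceptors accepted the same value}. Such a list cannot contain any key of $R_v$, since all of those accepted $v$. Hence the intersection checked in \crefrange{alg:ac:adopt:for3}{alg:ac:adopt:false2} is empty.

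For branch (b), let $p\in R_v.\t{keys()}$ reply with accepted value $v$. Since $r$ committed, \t{can\_commit} held at $r$, so $R_v[p]\subseteq\t{known\_acceptors}[p]$ at $r$ at commit time. By \Cref{corollary: my known_acceptors[p] is subset of this value at p}, this set is contained in the final value of $\t{known\_acceptors}[p]$ at $p$. That final value is exactly what $p$ sends in its \t{Frozen} reply, because after the freeze $p$ handles no further \t{Accept\&Spread} messages, the same argument used in \Cref{lemma: could_potentially_commit returns false if another value commits}. So the reported set covers $R_v[p]$ and branch (b) does not fire.

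The main obstacle is matching this case analysis exactly to the code of Algorithm~\ref{alg:ac:adopt}. The delicate point is that ``final value'' in the corollary really coincides with the frozen snapshot: the reply may be sent before some later update would have occurred. We resolve this by noting that updates to $\t{known\_acceptors}[p]$ at $p$ stop at the freeze. The value $r$ used was obtained through a causal chain from $p$ preceding $r$'s commit. That chain therefore carries only updates $p$ made before it froze, which are already included in the snapshot. If needed, we would strengthen the corollary's induction to ``subset of $p$'s value at the time it sent the relayed report''.
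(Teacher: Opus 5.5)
Your proof is correct and is essentially the paper's argument. The paper splits on whether the replier $p$ is a key of $R_v$ rather than on which check could fire, but it relies on the same three facts: every key accepted $v$; a reply carrying another value lists no key of $R_v$ among its acceptors; and $R_v[p]$ is covered by $p$'s frozen, hence final, \t{known\_acceptors}$[p]$ via the subset-of-final-value corollary.

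One step needs tightening: showing that a key's reply reports $v$. Permanence of acceptance alone does not give this; you also need that $p$ accepted $v$ before replying. That holds because a frozen process never handles \t{Accept\&Spread} again, the same fact you already invoke in branch (b).
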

\begin{proof}
Consider a process $r$ that commits \camera{value} $\camera{v}$ whose associated \camera{requirement is} $req_{\camera{v}}$, and an adopter $A$ that invokes \t{could\_potentially\_commit} with $\camera{v}$ on some iteration of \cref{alg:ac:adopt:couldcommit}.
Further consider a \t{Frozen}($p, \camera{val}, acceptors$) reply from process $p$ received by adopter $A$. We need to show that \t{could\_potentially\_commit} (\cref{alg:ac:adopt:def2}) with $\camera{v}$ does not return \t{False} during the iteration of the for loop in which this reply is processed within \crefrange{alg:ac:adopt:for3}{alg:ac:adopt:false1}.

First assume that $p \in req_{\camera{v}}.keys()$.
By \Cref{corollary: requirements processes have same accepted if commit}, $p$ accepted $\camera{v}$. The acceptance must happen before sending a \t{Frozen} reply to \camerad{$A$} (when handling the incoming \t{Freeze} message), because the value of \t{accepted} may change only before receiving a \t{Freeze} message.
Hence, $\camera{val}$ equals $\camera{v}$. Thus, when executing \t{could\_potentially\_commit} with $\camera{v}$, $A$ iterates over the considered \t{Frozen} reply on \crefrange{alg:ac:adopt:for2}{alg:ac:adopt:false1}.
Since $r$ commits, then a call by $r$ to \t{can\_commit($\camera{v}$)} on \cref{alg:ac:propose:until} returns \t{True}. This means that when committing, $req_{\camera{v}}[p] \subseteq \t{known\_acceptors}[p]$ at $r$ (\camerad{\crefrange{alg:commit:for}{alg:commit:false2}}).
By \Cref{corollary: my known_acceptors[p] is subset of this value at p}, the value of \t{known\_acceptors[$p$]} at $r$ when it commits is a subset of the final value of \t{known\_acceptors[$p$]} at $p$.
When $p$ handles the \t{Freeze} message from $A$ and replies with a \t{Frozen} message, $p$'s \t{known\_acceptors} has reached its final value, since it is modified only upon \t{Accept\&Spread} messages handled before the \t{Freeze} message is handled.
Consequently, $req_{\camera{v}}[p] \subseteq \t{known\_acceptors}[p]$ at $p$ when it sends the \t{Frozen} reply to $A$.
Thus, the condition on \cref{alg:ac:adopt:knowledge} evaluates to \t{False}, and \t{False} is not returned on \cref{alg:ac:adopt:false1} during the processing of $p$'s reply.

Otherwise ($p \notin req_{\camera{v}}.keys()$), if $\camera{val} = \camera{v}$ then the condition on \cref{alg:ac:adopt:knowledge} clearly evaluates to \t{False}.
Else, $\camera{val} \neq \camera{v}$. Then when executing \t{could\_potentially\_commit} with $\camera{v}$, $A$ iterates over this \t{Frozen} reply on \crefrange{alg:ac:adopt:for3}{alg:ac:adopt:false2}.
By \Cref{observation:processes in known_acceptors accepted the same value},
all processes in \camerad{$acceptors$} (which is the value of \t{known\_acceptors[p]} at process $p$ when it sends $A$ the \t{Frozen} reply) accepted the same \camera{value} as $p$ did, namely, $\camera{val}$. Due to \Cref{observation:accepted assinged once}, these processes cannot accept another \camera{value}, specifically not $\camera{v}$. By \Cref{corollary: requirements processes have same accepted if commit}, this implies that none of these processes is in $req_{\camera{v}}.keys()$.
Hence, the condition on \cref{alg:ac:adopt:acceptor} evaluates to \t{False}, and \t{False} is not returned on \cref{alg:ac:adopt:false2} during the processing of $p$'s reply.
\end{proof}

Next, we establish situations in which calling \t{can\_commit} returns \t{True}.

\begin{lemma}\label{lemma:can_commit returns true forever}
After reaching a configuration from which calling \t{can\_commit($\camera{v}$)} would return \t{True} at process $r$, a call to \t{can\_commit($\camera{v}$)} at $r$ from any subsequent configuration will return \t{True} as well.
\end{lemma}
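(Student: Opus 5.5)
The approach is monotonicity: \t{can\_commit($v$)} at $r$ depends only on $r$'s local variables \t{accepted} and \t{known\_acceptors}, together with the fixed requirement $R_r$. I will show that every condition it checks is preserved by every later step of $r$. Since \t{can\_commit} runs atomically, its result is a function of $r$'s local state at the moment of the call. It therefore suffices to show that, once the tested predicate holds at $r$, no subsequent local step of $r$ falsifies it.

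As described in \Cref{sec:ac:commit}, the predicate has two parts. First, $r$ checks that it accepted $v$, i.e., $\t{accepted}=v$ at $r$. Second, for every key $q$ of $R_r$, it checks that $R_r[q]\subseteq\t{known\_acceptors}[q]$ at $r$. I treat the two conjuncts separately.

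\textbf{Acceptance check.} By \Cref{observation:accepted assinged once}, \t{accepted} changes at most once, from $\bot$ to a non-$\bot$ value. Once it equals $v$, it equals $v$ forever.

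\textbf{Knowledge check.} By \Cref{observation:knowledge only increases}, every entry of \t{known\_acceptors} at $r$ only grows under union. Hence each inclusion $R_r[q]\subseteq\t{known\_acceptors}[q]$ persists. The requirement $R_r$ is a fixed parameter of the instance, so it does not change either.

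The main obstacle is confirming that Algorithm~\ref{alg:ac:commit} reads nothing else that could change over time. If, for instance, it consulted a frozen flag or a failure-suspicion state, monotonicity would no longer be immediate. I would check this by inspecting \crefrange{alg:commit:for}{alg:commit:true} line by line. If a freeze flag is present, I would argue that it affects only the handling of incoming messages and not the evaluation of \t{can\_commit}. With that settled, both conjuncts remain true in every subsequent configuration, so the call returns \t{True}.
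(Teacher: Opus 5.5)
Your proof is correct and matches the paper's. Like the paper, you use three facts: \t{accepted} stays $v$ by the assigned-once observation, the requirement is constant, and \t{known\_acceptors} entries only grow. Together these mean neither of the two checks that make \t{can\_commit} return \t{False} can fire in any later configuration. The obstacle you flag does not arise: those two checks (acceptance, and knowledge inclusion for each key) are the only ways \t{can\_commit} returns \t{False}, and the paper's proof relies on exactly this.
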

\begin{proof}
Consider a configuration $C$ from which calling \t{can\_commit($\camera{v}$)} would return \t{True} at process $r$. We will show that on calling \t{can\_commit($\camera{v}$)} from any subsequent configuration, \t{False} may not be returned on \cref{alg:commit:not-accepted,alg:commit:false2}.

First, observe that \t{accepted = \camera{v}} at configuration $C$, due to \cref{alg:commit:not-accepted}. By \Cref{observation:accepted assinged once}, this holds also at any subsequent configuration.
Additionally, considering \camerad{\crefrange{alg:commit:for}{alg:commit:false2}}, for any process $p$ in the constant set \t{REQUIREMENTS[$r$].keys()}, the condition \t{REQUIREMENTS[$r$][$p$] $\subseteq$ known\_acceptors[$p$]} holds at configuration $C$. This holds also at any subsequent configuration, since \t{REQUIREMENTS} is constant, and values of \t{known\_acceptors} may only grow due to \Cref{observation:knowledge only increases}.
Overall, during a call to \t{can\_commit($\camera{v}$)} from any configuration subsequent to $C$, the conditions on \cref{alg:commit:not-accepted,alg:commit:not-in-2} are not satisfied, hence the call returns \t{True}.
\end{proof}

\begin{claim}\label{claim:if correct receive same value can_commit returns true}
If all correct processes receive \t{Accept\&Spread} messages such that they all receive the same \camera{value} $\camera{v}$ in their first received \t{Accept\&Spread} message, and none of them receives any \t{Freeze} message, then a correct process that calls \t{Propose($\camera{v}$)} eventually reaches a configuration from which calling \t{can\_commit($\camera{v}$)} would return \t{True}.
\end{claim}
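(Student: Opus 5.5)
The plan is to show that, under the hypotheses, evidence keeps flowing among correct processes until it reaches a fixpoint, and that at this fixpoint the proposer $r$'s requirement is met. I will argue under the assumption that every process in $\t{REQUIREMENTS}[r]\t{.keys()}$ is correct. This assumption is needed: if a required witness crashes before accepting, no evidence about it can ever arrive, and that case is handled by fast-path abandonment (Algorithm~\ref{alg:ac:fd}), not by this claim. Under the normal form, every process appearing in some $\t{REQUIREMENTS}[r][q]$ is itself a key, so all processes the requirement mentions are correct.

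\textbf{Step 1: acceptance and initial broadcast.} Since every correct process first receives an \t{Accept\&Spread} with value $v$ and never receives \t{Freeze}, \Cref{observation:accepted assinged once} gives $\t{accepted}=v$ at every correct process. It follows that the conflict branch (\crefrange{alg:ac:ifdiffaccepted}{alg:ac:conflict1}) is never taken on a message carrying $v$. By \Cref{observation:bcast after accept}, each correct $p$ then puts $p$ into $\t{known\_acceptors}[p]$ and broadcasts.

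\textbf{Step 2: evidence converges.} I will use two facts about the \t{Accept\&Spread} handler. First, the handler never stops processing messages, because no \t{Freeze} arrives. Second, whenever some entry of $\t{known\_acceptors}$ grows, the process broadcasts the whole array (\crefrange{alg:ac:grew}{alg:ac:broadcast}). The entries are subsets of a finite set and only grow (\Cref{observation:knowledge only increases}), so each correct process broadcasts only finitely often and its array eventually stabilizes. Its last broadcast carries this final array, and by perfect links every correct process receives it. From this I deduce two facts.
\begin{enumerate}
\item For correct $p,q$, the final $\t{known\_acceptors}[q]$ at $q$ contains $p$. The reason is that $q$ merges $p$'s broadcast, which carries $p\in\t{knowledge\_of\_p}[p]$, into its own first-order entry (\cref{alg:ac:knownacceptorsme1}).
\item The final $\t{known\_acceptors}[q]$ at $r$ contains the final $\t{known\_acceptors}[q]$ at $q$, because $r$ merges $q$'s last broadcast via \cref{alg:ac:knownacceptorsq}.
\end{enumerate}

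\textbf{Step 3: the requirement is met.} Combining the two facts, for every $q\in\t{REQUIREMENTS}[r]\t{.keys()}$ we obtain, eventually at $r$,
\[
\t{REQUIREMENTS}[r][q]\subseteq\t{known\_acceptors}[q].
\]
Since also $\t{accepted}=v$ at $r$, \t{can\_commit}$(v)$ returns \t{True}. By \Cref{lemma:can_commit returns true forever}, this remains true from every later configuration.

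The main obstacle is Step~2. It relies on the exact merge rule of \cref{alg:ac:knownacceptorsme1}, namely that a process adds the sender's first-order acceptors to its own entry, which is what makes the entry $\{1,2\}$ appear at P2 in \Cref{tab:disseminate:known}. To handle it cleanly, I will prove by induction on deliveries that ``an entry grows'' always implies ``a broadcast follows''. I will then pair this with the finiteness of the lattice to argue that the last broadcast of each correct process reflects its final state.
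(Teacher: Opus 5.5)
\paragraph{Gap: the faulty-witness case.} Your Steps 1--3 are sound for the case you treat, where every key of $\t{REQUIREMENTS}[r]$ is correct. They differ from the paper only in using a stabilization/last-broadcast argument. The paper instead notes that the handler step at which $q$'s own entry first becomes a superset of the correct set $\mathcal{C}$ triggers a broadcast carrying it, so no finiteness is needed.

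The gap is the case you set aside. You write that a faulty required witness is ``handled by fast-path abandonment, not by this claim,'' and that the correctness assumption is ``needed.'' Neither is right. The claim places no restriction on which processes crash. The Termination theorem invokes it in exactly this generality: no correct process receives \t{Freeze}, and nothing is assumed about the correctness of $r$'s required witnesses. As written, your proof establishes only a weaker statement, and Termination inherits the hole. Moreover, the claim is true without the assumption, because its own hypothesis forbids abandonment.

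\paragraph{The missing argument.} Suppose some $w\in\t{REQUIREMENTS}[r]\t{.keys()}$ is faulty.
\begin{itemize}
\item $r$'s failure detector eventually suspects $w$. Algorithm~\ref{alg:ac:fd} then makes $r$ send \t{DoAdopt()} to itself, so $r$'s wait at \cref{alg:ac:propose:until} is guaranteed to end.
\item If the wait ended because of \t{DoAdopt}, $r$ would execute \cref{alg:ac:propose:freeze} and send \t{Freeze()} to all processes, including itself. Since $r$ is correct, it would receive that \t{Freeze}, contradicting the hypothesis that no correct process receives one.
\item Hence the wait ended in a configuration where \t{can\_commit}$(v)$ returns \t{True}, which is the conclusion of the claim.
\end{itemize}
Add this case before your Steps 1--3, which then handle the remaining case where all keys are correct, and the proof is complete.
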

\begin{proof}
Let the \camera{value} that appears in the first received \t{Accept\&Spread} message of all correct processes be $\camera{v}$. Let $\mathcal{C}$ denote the set of correct processes in the system. Assume none of them receives any \t{Freeze} message, thus they all handle any received \t{Accept\&Spread} message.
Let $r$ be a correct process that calls \t{Propose($\camera{v}$)}.

Assume some process in \t{REQUIREMENTS[$r$].keys()} is faulty. $r$'s failure detector is guaranteed to detect it at some point and send a \t{DoAdopt} message to $r$ on \cref{alg:fd:conflict}. 
Hence, $r$ is guaranteed to break from the wait on \cref{alg:ac:propose:until}.
But $r$ cannot break from the wait due to receiving a \t{DoAdopt} message, since otherwise $r$ would send a \t{Freeze} message to all on \cref{alg:ac:propose:freeze}, contradicting our assumption that no correct process receives a \t{Freeze} message. Therefore, the configuration at which $r$ breaks from the wait is one from which calling \t{can\_commit($\camera{v}$)} returns \t{True}, and we are done.

Now assume all processes in \t{REQUIREMENTS[$r$].keys()} are correct.
For each process $p \in \mathcal{C}$, \Cref{observation:bcast after accept} implies that after $p$ receives an \t{Accept\&Spread} message for the first time, it sends \t{Accept\&Spread}($\camera{v}$, $\mathcal{K}$) to others on \cref{alg:ac:broadcast}, where $\mathcal{K}$ is \camerad{an array with one set of processes per process}.
We have $p \in \mathcal{K}[p]$ because $p$ is explicitly added to \t{known\_acceptors}[$p$] at $p$ on \cref{alg:ac:knownacceptorsme1}. 
As a result, every process $q \in \mathcal{C}$ eventually receives from every process $p \in \mathcal{C}$ an \t{Accept\&Spread}($\camera{v}$, $\mathcal{K}$) message such that $p \in \mathcal{K}[p]$, and adds $p$ to \t{known\_acceptors}[$q$] at $q$ on \cref{alg:ac:knownacceptorsme1}.
Consequently, at one of its executions of \cref{alg:ac:knownacceptorsme1}, $q$ assigns to \t{known\_acceptors}[$q$] at $q$ a superset of $\mathcal{C}$ for the first time.
This implies that for $q=r$, \t{known\_acceptors}[$q$] at $r$ eventually contains a superset of $\mathcal{C}$. 
As for $q \neq r$, $q$ sends to others on \cref{alg:ac:broadcast} an \t{Accept\&Spread} message with a \t{known\_acceptors} value in which \t{known\_acceptors}[$q$] is a superset of $\mathcal{C}$.
Therefore, for each process $q \neq r$ in \t{REQUIREMENTS[$r$].keys()} (which is assumed to be a correct process), $r$ eventually receives from $q$ an \t{Accept\&Spread}($\camera{v}$, $\mathcal{K}_q$) where $\mathcal{C} \subseteq \mathcal{K}_q[q]$, and assigns to \t{known\_acceptors}[$q$] at $r$ a superset of $\mathcal{C}$ on \cref{alg:ac:knownacceptorsq}.
\camera{Because the normal form gives
$\t{REQUIREMENTS}[r][q]\subseteq\t{REQUIREMENTS}[r].\t{keys}()$
and every process in this key set is correct,
$\t{REQUIREMENTS}[r][q]\subseteq\mathcal{C}$. Hence, $r$ eventually reaches a
configuration in which
$\t{REQUIREMENTS}[r][q]\subseteq\t{known\_acceptors}[q]$ for every key $q$.}
As values of \t{known\_acceptors} may only grow due to \Cref{observation:knowledge only increases}, $r$ reaches a configuration from which for every process $q \in$ \t{REQUIREMENTS[$r$].keys()}, $\t{REQUIREMENTS}[r][q] \subseteq \t{known\_acceptors}[q]$ at $r$, hence calling \t{can\_commit($\camera{v}$)} from this configuration would return \t{True}.
\end{proof}

We next address the termination of \t{Propose} calls in certain scenarios.

\begin{observation}\label{observation:adopt returns}
If a correct process reaches the wait on \cref{alg:ac:propose:wait} while executing a \t{Propose} call, then this wait eventually ends.
\end{observation}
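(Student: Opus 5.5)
The plan is to show that at least $n-f$ distinct processes eventually receive the \t{Freeze} message sent by the waiting process $r$, and that each of them replies with a \t{Frozen} message that reaches $r$. First, I note that before reaching \cref{alg:ac:propose:wait}, the correct process $r$ executes \cref{alg:ac:propose:freeze}. This sends \t{Freeze()} to all processes, including itself, and the sends are performed in the background, so $r$ proceeds immediately to the wait. By the model, at most $f$ processes crash, so at least $n-f$ processes are correct; for non-voting proposers, $n$ counts only voting processes, and the \t{Freeze} goes to all of them. By the perfect-links assumption, every correct voting process, $r$ included, eventually receives $r$'s \t{Freeze} message.

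Next, I argue from the \t{Freeze} handler in Algorithm~\ref{alg:ac:disseminate}. Upon receiving \t{Freeze}, a voting process sets its frozen flag and replies unconditionally with \t{Frozen}(me, accepted, known\_acceptors[me]) to the sender, at \cref{alg:ac:frozen}. The key point to check is that this handler contains no wait step and no guard that could suppress the reply. In particular, a process that is already frozen, for instance because of an earlier \t{Freeze} from another adopter, must still answer. I expect this to be the only delicate step, because the handler must be read carefully to confirm that freezing only disables handling of later \t{Accept\&Spread} messages (\cref{alg:ac:conflict2}) and never disables the \t{Freeze} reply itself.

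Since handler blocks without waits execute atomically, and hence to completion at a correct process, each correct voting process sends its \t{Frozen} reply to $r$, and perfect links deliver it. Distinct processes send distinct replies, and each identifies its sender. So $r$ eventually holds at least $n-f$ \t{Frozen} replies, the wait condition becomes true, and the wait ends.
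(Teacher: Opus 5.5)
Your proof is correct and matches the paper's argument: $r$'s \texttt{Freeze} broadcast reaches the at least $n-f$ correct (voting) processes over perfect links, each of them replies with \texttt{Frozen}, and those replies reach $r$, so the wait ends. Your additional checks, that an already-frozen process still replies and that only voting processes count toward $n$ when the proposer is non-voting, are details the paper's short proof takes for granted.
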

\begin{proof}
Let $r$ be a correct process running a call to \t{Propose} and reaching \cref{alg:ac:propose:wait}.
Since $r$ is correct, given our system model, which ensures at most $f$ faulty processes and assumes perfect links, the messages sent by $r$ on \cref{alg:ac:propose:freeze} are guaranteed to eventually reach $n-f$ correct processes. These processes will then reply, and their replies will eventually reach $r$, allowing the wait on \cref{alg:ac:propose:wait} to eventually complete. 
\end{proof}

\begin{observation}\label{claim:can_commit returns true implies propose returns}
Assume there exists a configuration in a given execution from which calling \t{can\_commit($\camera{v}$)} at a correct process $r$ would return \t{True}.
Then if $r$ calls \t{Propose($\camera{v}$)} at any point in the execution, this call is guaranteed to eventually return.
\end{observation}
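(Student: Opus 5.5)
Fix the configuration $C$ at which calling \t{can\_commit($v$)} at $r$ would return \t{True}, and a \t{Propose($v$)} call by $r$. I will follow the control flow of Algorithm~\ref{alg:ac:propose}. After \cref{alg:ac:propose:disseminate}, which only sends a message in the background, $r$ reaches the wait on \cref{alg:ac:propose:until}. Once this wait ends, the remainder of the call is either \cref{alg:ac:propose:commit} (a trigger) or the recovery branch. The recovery branch contains only one further wait (\cref{alg:ac:propose:wait}), and that wait ends by \Cref{observation:adopt returns} since $r$ is correct. The remaining steps (\t{potential\_commit} and the trigger) contain no waits, so they complete atomically. It therefore suffices to show that the wait on \cref{alg:ac:propose:until} eventually ends.

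For that wait, I split on where $C$ lies relative to the call. Since $r$ is correct, it keeps taking steps, so the execution eventually reaches a configuration that is both at or after $C$ and at or after the point where $r$ begins waiting on \cref{alg:ac:propose:until}. By \Cref{lemma:can_commit returns true forever}, \t{can\_commit($v$)} returns \t{True} from every configuration after $C$. Hence, from that point on, the wait condition is permanently satisfied.

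The main obstacle is to argue that a permanently true condition actually releases the wait. I would handle this with the standard semantics of \emph{wait until}: the condition is re-evaluated whenever the local state of $r$ changes, and at the latest at the next step of the correct process $r$. If $C$ precedes the wait, the condition already holds on entry, so the wait ends immediately. Otherwise, $C$ is reached through some event at $r$ (handling an \t{Accept\&Spread} message that grows \t{known\_acceptors}, or the assignment of \t{accepted}). The condition is re-checked after that event, and by \Cref{lemma:can_commit returns true forever} it evaluates to \t{True}. Alternatively, a \t{DoAdopt} may have ended the wait earlier. Either way the wait terminates, and combined with the first paragraph the \t{Propose($v$)} call returns.
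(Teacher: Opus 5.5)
Your proof is correct and follows the paper's argument. You reduce to the two wait steps, discharge the recovery wait with \Cref{observation:adopt returns}, and discharge the wait on \cref{alg:ac:propose:until} with \Cref{lemma:can_commit returns true forever}; your explicit re-evaluation semantics for \emph{wait until} only spells out what the paper takes for granted.

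One small tightening: take $C$ to be the \emph{first} configuration at which the check would succeed, so that ``$C$ is reached through some event at $r$'' holds literally, since the check depends only on $r$'s local state.
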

\begin{proof}
Assume a correct process $r$ calls \t{Propose($\camera{v}$)} at some point.
Recall our assumption that any block within a method call is executed atomically unless it contains wait steps. So it suffices to show that the wait steps performed during a \t{Propose} call (on \cref{alg:ac:propose:until,alg:ac:propose:wait}) eventually return.
If $r$ reaches the wait on \cref{alg:ac:propose:wait}, it is guaranteed to end this wait by \Cref{observation:adopt returns}.
As for \cref{alg:ac:propose:until}, 
if the first configuration from which calling \t{can\_commit($\camera{v}$)} at $r$ would return \t{True} occurs after $r$ passes \cref{alg:ac:propose:until} then we are done. Otherwise, by \Cref{lemma:can_commit returns true forever}, \t{can\_commit($\camera{v}$)} returns \t{True} at some point when $r$ executes \cref{alg:ac:propose:until}, thus it ends the wait.

\end{proof}

\begin{claim}\label{claim:DoAdopt implies propose returns}
Assume a correct process $r$ receives a \t{DoAdopt} message in a certain execution.
Then if $r$ calls \t{Propose} at any point in the execution, this call is guaranteed to eventually return.
\end{claim}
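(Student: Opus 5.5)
Because every block without wait steps executes atomically, it suffices to show that each of the two wait steps of \t{Propose} (\cref{alg:ac:propose:until} and \cref{alg:ac:propose:wait}) eventually ends, exactly as in the proof of \Cref{claim:can_commit returns true implies propose returns}. The wait on \cref{alg:ac:propose:wait}, if it is ever reached, ends by \Cref{observation:adopt returns}. What remains is the wait on \cref{alg:ac:propose:until}, which ends as soon as either \t{can\_commit(v)} holds or a \t{DoAdopt} message has been received.

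I would split the argument on when $r$ receives the \t{DoAdopt} message relative to when it calls \t{Propose}. If the message arrives while $r$ is blocked on \cref{alg:ac:propose:until}, the condition ``\t{DoAdopt()} was received'' becomes true and the wait ends. If the message arrives before $r$ reaches \cref{alg:ac:propose:until}, I would read the wait condition as a predicate on the local history, namely that some \t{DoAdopt} has been received. This predicate is stable once true, and it is already true when $r$ evaluates the condition, so the wait ends immediately. In both cases $r$ either commits or proceeds to recovery, and recovery ends by \Cref{observation:adopt returns}. If the message arrives after $r$ has already passed \cref{alg:ac:propose:until}, there is nothing left to show for that wait.

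The main obstacle is the early-arrival case. The pseudocode says ``wait until \dots\ or \t{DoAdopt()} was received'', and I have to justify that a \t{DoAdopt} received before the call counts. I would argue this from the same modelling convention used in \Cref{lemma:can_commit returns true forever}: conditions in wait steps are evaluated over the process's accumulated local state, so a received \t{DoAdopt} is a persistent local fact, just as \t{known\_acceptors} only grows. If this reading is contested, the fallback is to note that within a single adopt-commit instance \t{DoAdopt} is sent only to proposers, namely by the conflict handler to the value's proposer or by the failure detector of the proposing process. Under that observation it can only be received once \t{Propose} has started, which reduces the claim to the first case.
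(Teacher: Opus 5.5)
Your proposal follows the paper's proof: it reduces the claim to the two wait steps, lets the \t{DoAdopt} receipt end the wait on \cref{alg:ac:propose:until}, and ends the wait on \cref{alg:ac:propose:wait} by \Cref{observation:adopt returns}; the paper does not split on timing and implicitly uses your ``persistent'' reading of the wait condition. Drop the fallback, though: it is false, because \t{DoAdopt} is sent to all processes on \cref{alg:ac:conflict1,alg:ac:conflict2} and can therefore reach $r$ before it proposes. The proof of \Cref{theorem:termination} invokes this claim in exactly that situation, so the persistent reading is required, not just one option.
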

\begin{proof}
Assume a correct process $r$ calls \t{Propose} at some point in a certain execution, and receives a \t{DoAdopt} message at some point in that execution.
Recall our assumption that any block within a method call is executed atomically unless it contains wait steps. So it suffices to show that the wait steps performed during a \t{Propose} call (on \cref{alg:ac:propose:until,alg:ac:propose:wait}) eventually end.
The wait on \cref{alg:ac:propose:until} is guaranteed to end due to the receipt of a \t{DoAdopt} message.
As for \cref{alg:ac:propose:wait}, if $r$ reaches this wait, it is guaranteed to end it by \Cref{observation:adopt returns}.
\end{proof}

We are now ready to prove that our adopt-commit protocol satisfies termination, validity and agreement.

\begin{theorem}[Termination]\label{theorem:termination}
Every correct process that calls \t{Propose} in a certain execution of \sysname (Algorithm \ref{alg:ac:propose}) adopts or commits.
\end{theorem}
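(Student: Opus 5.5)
Fix a correct process $r$ that calls \t{Propose} in some execution. Since every block without wait steps runs atomically, it suffices to show that both waits in \t{Propose} (\cref{alg:ac:propose:until,alg:ac:propose:wait}) eventually end. \Cref{observation:adopt returns} already handles the wait on \cref{alg:ac:propose:wait}. So the question is only whether $r$ leaves the wait on \cref{alg:ac:propose:until}, either through \t{can\_commit} or through a \t{DoAdopt} message.

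\textbf{Easy cases.} If $r$ ever receives a \t{DoAdopt} message in the execution, \Cref{claim:DoAdopt implies propose returns} finishes the proof. If some configuration is reached from which \t{can\_commit}($v$) at $r$ returns \t{True}, then \Cref{claim:can_commit returns true implies propose returns} finishes it, where $v$ is the value $r$ proposed.

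\textbf{Remaining case.} Assume $r$ never receives \t{DoAdopt} and never reaches such a configuration; we derive a contradiction in three steps.
\begin{enumerate}
\item \emph{No process freezes.} A \t{Freeze} is sent only on \cref{alg:ac:propose:freeze}, by a proposer that received \t{DoAdopt}. Other proposers may do this, so I would first argue that any \t{Freeze} arriving anywhere eventually triggers a \t{DoAdopt} to $r$. This should follow from the conflict and freeze handlers of Algorithm~\ref{alg:ac:disseminate}, which send \t{DoAdopt} to proposers. Note that $r$ sends itself \t{Accept\&Spread}, so it is in the protocol; if it is frozen, or later observes a conflicting value, it signals \t{DoAdopt}.
\item \emph{Everyone hears about a value.} With no \t{Freeze} messages, \Cref{lemma:if correct proposes everyone receives AcceptAndSpread} says every correct process receives some \t{Accept\&Spread}.
\item \emph{All accept the same value.} If two correct processes accepted different first values, their rebroadcasts (\Cref{observation:bcast after accept}) would reach each other. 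Some process would then hit the conflict branch (\crefrange{alg:ac:ifdiffaccepted}{alg:ac:conflict1}) and send \t{DoAdopt} to the proposers, including $r$. That contradicts the assumption, so all correct processes accept the same value. Because $r$ accepts its own value first, or detects a conflict, that common value is $r$'s proposal $v$.
\end{enumerate}
Now \Cref{claim:if correct receive same value can_commit returns true} gives a configuration where \t{can\_commit}($v$) is \t{True}, contradicting the case assumption.

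\textbf{Main obstacle.} The delicate part is steps 1 and 3. There I need the exact behaviour of the conflict and \t{Freeze} handlers: that they reliably deliver \t{DoAdopt} to $r$, including when the conflict is detected at a faulty or non-proposer process. If that routing is not guaranteed, I would fall back on the failure detector (Algorithm~\ref{alg:ac:fd}) and on $r$'s own detection of a conflict or freeze to ensure $r$ leaves the wait.
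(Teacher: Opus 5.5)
Your proposal is correct and is essentially the paper's proof recast as a contradiction: the paper splits directly on whether some correct process receives a \t{Freeze} (it then sends \t{DoAdopt} to all on \cref{alg:ac:conflict2}), whether two correct processes receive different first values (the resulting conflict sends \t{DoAdopt} to all on \cref{alg:ac:conflict1}), and whether $r$'s proposal matches the common first value, using the same claims you invoke. The handler behaviour you flag as the main obstacle is exactly what the pseudocode provides, so no failure-detector fallback is needed; just state Step~1 as ``no correct process receives a \t{Freeze}'' rather than ``no process freezes,'' since a faulty recipient may crash before relaying \t{DoAdopt}, and the later lemmas only need the weaker statement.
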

\begin{proof}
We show that every correct process that calls \t{Propose} returns from this call. The theorem then follows, since a \t{Propose} invocation adopts or commits immediately before returning (on either \cref{alg:ac:propose:commit} or \cref{alg:ac:propose:adopt}).

If some correct process receives a \t{Freeze} message, it sends a \t{DoAdopt} message to all on \cref{alg:ac:conflict2}. 
By our network model, which assumes perfect links, all correct processes eventually receive a \t{DoAdopt} message.
By \Cref{claim:DoAdopt implies propose returns}, we are done.

Otherwise, no correct process receives a \t{Freeze} message, thus they all handle any received \t{Accept\&Spread} message.
If no correct process calls \t{Propose}, termination is vacuously satisfied, hence, assume at least one correct process calls \t{Propose}. Then by \Cref{lemma:if correct proposes everyone receives AcceptAndSpread}, every correct process eventually receives an \t{Accept\&Spread} message.
We consider two distinct cases.

In the first case, two correct processes receive different \camera{values} in their first received \t{Accept\&Spread} message. By \Cref{observation:bcast after accept}, each of them sends its received \camera{value} to others via an \t{Accept\&Spread} message (on \cref{alg:ac:broadcast}). By our network model, which assumes perfect links, all correct processes eventually receive these conflicting \t{Accept\&Spread} messages. Each correct process, upon receiving the first \t{Accept\&Spread} message, sets its \t{accepted} to the included \camera{value} on \cref{alg:ac:assignaccepted}. When it receives an \t{Accept\&Spread} message with a different \camera{value}, it sends all---including itself---a \t{DoAdopt} message on \cref{alg:ac:conflict1}. By \Cref{claim:DoAdopt implies propose returns}, its \t{Propose} call returns.

In the complementary case, all correct processes receive the same \camera{value} in their first received \t{Accept\&Spread} message. Let this \camera{value} be $\camera{v}$. Consider some correct process $r$ that calls \t{Propose}. If it proposes a \camera{value} different from $\camera{v}$, then it sends an \t{Accept\&Spread} message with its \camera{value} to itself on \cref{alg:ac:propose:disseminate}, and upon handling it on \cref{alg:ac:upon1}, it sends all---including itself---a \t{DoAdopt} message on \cref{alg:ac:conflict1}. By \Cref{claim:DoAdopt implies propose returns}, its \t{Propose} call returns.
Otherwise ($r$ proposes \camera{value} $\camera{v}$), since by \Cref{claim:if correct receive same value can_commit returns true} it eventually reaches a configuration from which calling \t{can\_commit($\camera{v}$)} would return \t{True}, then we are done by \Cref{claim:can_commit returns true implies propose returns}.
\end{proof}

\begin{theorem}[Validity]
Every \t{Propose} call that returns in a certain execution of \sysname, adopts or commits a \camera{value} that was the input of some \t{Propose} call.
\end{theorem}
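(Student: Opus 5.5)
The plan is a case analysis on how the \t{Propose($v$)} call at process $r$ returns. By Algorithm~\ref{alg:ac:propose}, this happens either through the commit branch (\cref{alg:ac:propose:commit}) or through the adopt branch (\cref{alg:ac:propose:adopt}). Each branch is traced back either to $r$'s own input or to a value carried in a message, and \Cref{observation:AcceptAndSpread disseminate proposed values} closes the argument.

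\textbf{Commit branch.} $r$ triggers \t{Commit($v$)} with its own input $v$, so validity is immediate.

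\textbf{Adopt branch, no potential commit.} $r$ adopts $v'$, which is either $v$ itself (\cref{alg:ac:propose:any}, when \t{potential\_commit} returns $\bot$) or the value returned by \t{potential\_commit(replies)}. In the first sub-case $v'=v$ is $r$'s input, and we are done.

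\textbf{Adopt branch, $v'$ returned by \t{potential\_commit}.} Inspecting Algorithm~\ref{alg:ac:adopt}, \t{potential\_commit} only returns $\bot$ or a value that appears as the \t{accepted} field of some received \t{Frozen} reply. This holds because it iterates over the values in the census and calls \t{could\_potentially\_commit} on each. So $v'$ is the \t{accepted} value of some process $p$ when $p$ handled the \t{Freeze} message.

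\textbf{Tracing the \t{Frozen} value.} By \Cref{observation:accepted assinged once}, $p$'s \t{accepted} is non-$\bot$ only if it was assigned on \cref{alg:ac:assignaccepted}. That assignment takes the value carried by the first \t{Accept\&Spread} message $p$ handled. By \Cref{observation:AcceptAndSpread disseminate proposed values}, every such value is the input of some \t{Propose} call. If a \t{Frozen} reply can carry $\bot$ (because $p$ never accepted), such a reply contributes no candidate value, so the case does not arise.

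\textbf{Main obstacle.} The only real difficulty is reading Algorithm~\ref{alg:ac:adopt} to confirm that the returned value is always drawn from the \t{val} fields of the replies and never fabricated. Two details need checking there. First, $\bot$ entries from non-accepting processes must be skipped. Second, the \t{Frozen} message genuinely carries the current \t{accepted} value, as stated at \cref{alg:ac:frozen}. Once these are confirmed, the proof is a short chain of the cited observations.
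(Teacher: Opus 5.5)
Your proof is correct and takes essentially the same route as the paper's. Both split into three cases: the commit branch (the proposer's own input), the $\bot$ fallback (again the own input), and the \t{potential\_commit} branch. In the last case, both trace the adopted value from a \t{Frozen} reply to the sender's \t{accepted} variable, then to its assignment from an \t{Accept\&Spread} message, and close with \Cref{observation:AcceptAndSpread disseminate proposed values}.

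Your worry about $\bot$ entries is harmless. Even if \t{potential\_commit} returned $\bot$, the code would fall back to adopting the proposer's own input.
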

\begin{proof}
\t{Propose} may adopt or commit at one of two points. 
It can commit its input \camera{value} on \cref{alg:ac:propose:commit}, in which case validity is trivially satisfied.
Alternatively, it can adopt on \cref{alg:ac:propose:adopt} either a \camera{value} returned by the \t{potential\_commit} call on \cref{alg:ac:propose:potential} or some proposed \camera{value} picked on \cref{alg:ac:propose:any}. In the latter case we are done. In the first case, \t{potential\_commit} returns a non-$\bot$ \camera{value} appearing in one of the \t{Frozen} replies received at the process (\crefrange{alg:ac:adopt:for1}{alg:ac:adopt:couldcommit}). A \t{Frozen} message contains the \camera{value} that appears in the \t{accepted} local variable at the sending process (\cref{alg:ac:frozen}). The value of \t{accepted} is set only on \cref{alg:ac:assignaccepted}, to a \camera{value} received in an \t{Accept\&Spread} message.
By \Cref{observation:AcceptAndSpread disseminate proposed values}, such a \camera{value} is a proposed \camera{value}.
\end{proof}

\begin{theorem}[Agreement]\label{theorem:ac:agreement}
If some \t{Propose} call in a certain execution of \sysname commits $\camera{v}$, then all \t{Propose} calls that \camerad{return} adopt or commit \camera{value} $\camera{v}$.
\end{theorem}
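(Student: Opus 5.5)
The plan is a case split on how a returning \t{Propose} call ends: it either commits on \cref{alg:ac:propose:commit} or adopts on \cref{alg:ac:propose:adopt}. Fix a process $r$ that commits $v$, and consider any other call that returns.

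\emph{Commit case.} Suppose a call at process $r'$ commits $v'$. Then \t{can\_commit($v'$)} returned \t{True} at $r'$, and its first check forces \t{accepted}$=v'$ at $r'$. By \Cref{corollary: requirements processes have same accepted if commit}, every key of $r$'s requirement accepted $v$, and every key of $r'$'s requirement accepted $v'$. Because the requirements are compatible, \Cref{observation: compatible requirements intersect} gives a common key. That process accepted both $v$ and $v'$, and by \Cref{observation:accepted assinged once} it accepts only once. Hence $v'=v$.

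\emph{Adopt case.} Suppose adopter $A$ collects $n-f$ \t{Frozen} replies and computes \t{potential\_commit}.
\begin{itemize}
\item By \Cref{lemma: adopter receives Frozen with committed value}, some reply carries $v$. So $v$ appears among the candidate values iterated in \crefrange{alg:ac:adopt:for1}{alg:ac:adopt:couldcommit}.
\item By \Cref{lemma: could_potentially_commit returns true if commit}, \t{could\_potentially\_commit($v$)} returns \t{True}.
\item Every other candidate $v'\neq v$ also appears in some reply, so by \Cref{observation:AcceptAndSpread disseminate proposed values} it is a proposed value. \Cref{lemma: could_potentially_commit returns false if another value commits} then gives \t{False} for $v'$.
\end{itemize}
Therefore \t{potential\_commit} returns exactly $v$, not $\bot$. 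The fallback on \cref{alg:ac:propose:any} is never taken, and $A$ adopts $v$.

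The main obstacle is matching the structure of \t{potential\_commit} precisely. I must confirm that it returns the unique value for which \t{could\_potentially\_commit} holds, whatever the iteration order. I must also check that each $v'$ passed to \t{could\_potentially\_commit} has a well-defined proposer requirement, since the lemma needs the requirement of $v'$'s proposer. If the code returns the first passing value, the argument still works, because $v$ is the only value that passes.

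One more subtlety: the commit case for $r'=r$ itself is trivial. The adopt case also covers a proposer that started the fast path but received \t{DoAdopt}, even if that proposer is $r$'s own proposal's originator. That situation cannot arise at $r$, since $r$ committed.
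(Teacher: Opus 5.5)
Your proposal is correct and follows the paper's proof essentially step for step. It uses the same split into a committing case and an adopting case. For two commits it uses the same argument: the requirement keys of compatible requirements intersect, and a process accepts only once. For adoption it relies on the same three lemmas: some \t{Frozen} reply carries $v$, $v$ passes \t{could\_potentially\_commit}, and every other value fails, so \t{potential\_commit} returns $v$.

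Your explicit appeal to the fact that only proposed values are disseminated is a step the paper leaves implicit. It justifies that every candidate value in the census qualifies for the ``rules out other values'' lemma.
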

\begin{proof}
Suppose a proposer $r$ commits its \camera{value} $\camera{v}$.
Let $r'$ be any other process that returns from a \t{Propose} call, and let $\camera{v'}$ be the \camera{value} it commits or adopts. We need to show that $\camera{v'}=\camera{v}$.
Since $r$ commits $\camera{v}$ (which may occur only on \cref{alg:ac:propose:commit}), it accepted $\camera{v}$ according to \cref{alg:commit:not-accepted}. By \Cref{corollary: requirements processes have same accepted if commit}, all processes in \t{REQUIREMENTS[$r$].keys()} have assigned \camera{value} $\camera{v}$ to their \t{accepted} local variable.

If $r'$ commits, then following the same reasoning as for $r$, $r'$ accepted $\camera{v'}$ and all processes in \t{REQUIREMENTS[$r'$].keys()} have assigned \camera{value} $\camera{v'}$ to their \t{accepted} local variable.
But as we assume a compatible combination of requirements, \t{REQUIREMENTS[$r$].keys()} and \t{REQUIREMENTS[$r'$].keys()} intersect by \Cref{observation: compatible requirements intersect}. Consider a process in the intersection. We showed that it should have assigned both \camera{values} $\camera{v}$ and $\camera{v'}$ to its \t{accepted} local variable. Since each process accepts at most one \camera{value} by \Cref{observation:accepted assinged once}, $\camera{v'}=\camera{v}$ in this case.

Otherwise, $r'$ adopts $\camera{v'}$.
We next show that $r'$ obtains the adopted $\camera{v'}$ on \cref{alg:ac:propose:potential} through a call to \t{potential\_commit} that returns $\camera{v}$, hence $\camera{v'}=\camera{v}$.
First observe that by \Cref{lemma: could_potentially_commit returns false if another value commits}, calling \t{could\_potentially\_commit} on \cref{alg:ac:adopt:couldcommit} with any $\camera{v^*} \neq \camera{v}$ returns \t{False}.
Additionally, $r'$ receives a \t{Frozen} reply with \camera{value} $\camera{v}$ by \Cref{lemma: adopter receives Frozen with committed value}. Consequently, after calls to \t{could\_potentially\_commit} return \t{False} in previous iterations of the for loop on \cref{alg:ac:adopt:for1}, $r'$ reaches an iteration in which it calls \t{could\_potentially\_commit} with $\camera{v}$.
By \Cref{lemma: could_potentially_commit returns true if commit}, this call returns \t{True}. Thus, the \t{potential\_commit} call at $r'$ returns $\camera{v}$.
\end{proof}

\subsection{\sysname's Correctness}\label{proof:consensus-correctness}

\sysname uses the consensus protocol in Algorithm~\ref{alg:ac-to-consensus} with the implementation of adopt-commit appearing in \Cref{sec:ac}. 
We prove that Algorithm~\ref{alg:ac-to-consensus} satisfies the  consensus properties presented in the beginning of this section. 

\begin{theorem}
Algorithm~\ref{alg:ac-to-consensus} satisfies termination, validity and agreement.
\end{theorem}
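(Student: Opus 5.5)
We prove the three consensus properties of Algorithm~\ref{alg:ac-to-consensus} separately. Each follows from the corresponding adopt-commit property (\Cref{theorem:termination}, the Validity theorem, and \Cref{theorem:ac:agreement}) combined with the standard properties of the fallback consensus protocol (e.g., Paxos). The fallback protocol is assumed to satisfy termination (under the partial-synchrony assumption of \S\ref{sec:bg:model}), validity, and agreement with respect to the values proposed to it.

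\emph{Termination.} A correct process calling \t{Consensus::Propose} invokes \t{AdoptCommit::Propose} (\cref{alg:consensus:propose:ac propose}). By \Cref{theorem:termination}, this call eventually commits or adopts. If it commits, the process decides on \cref{alg:consensus:propose:decide upon commit}. If it adopts, it proposes to the fallback consensus (\cref{alg:consensus:propose:fallback propose}), whose termination makes it decide on \cref{alg:consensus:propose:decide upon fallback decide}.

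\emph{Validity.} A value decided on \cref{alg:consensus:propose:decide upon commit} was committed by adopt-commit, so by adopt-commit validity it is an input of some \t{AdoptCommit::Propose} call. Each such input is the input of a \t{Consensus::Propose} call. A value decided on \cref{alg:consensus:propose:decide upon fallback decide} was proposed to the fallback by fallback validity. Every fallback proposal is an adopted value, and by adopt-commit validity every adopted value is again a consensus input.

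\emph{Agreement.} This is the only step needing care, and I split it into two cases.
\begin{itemize}
\item If no process commits in adopt-commit, all decisions come from the fallback, and fallback agreement gives a single decided value.
\item If some process commits $v$, then by \Cref{theorem:ac:agreement} every adopt-commit call that returns commits or adopts $v$. Hence every decision via \cref{alg:consensus:propose:decide upon commit} is $v$. Moreover, every proposal to the fallback is $v$, so by fallback validity every fallback decision is also $v$.
\end{itemize}

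The main subtlety is that fallback validity must be applied to the set of values actually proposed to the fallback, not merely to values proposed somewhere in the system. If the fallback is only run by adopters, this is immediate. If other processes may participate without proposing, I would state explicitly that those participants never inject their own values.
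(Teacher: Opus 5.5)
Your proof is correct and follows the paper's argument essentially step for step. Each of termination, validity and agreement is derived from the matching adopt-commit property together with the fallback's termination, validity and agreement, and agreement uses the same case split (no commit versus some commit of $v$, invoking fallback validity in the second case). Your remark that fallback validity must be taken relative to the values actually proposed to the fallback is fair; the paper leaves it implicit, likewise assuming that only adopters propose to the fallback, each with its adopted value.
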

\begin{proof}
The correctness of the algorithm is derived from the correctness of adopt-commit and the fallback consensus.

For termination, the call to \t{AdoptCommit::Propose} (\cref{alg:consensus:propose:ac propose}) in a correct process must result in a $\t{Commit}$ or $\t{Adopt}$ event, by adopt-commit's termination property. In case of commit, the consensus protocol decides on \cref{alg:consensus:propose:decide upon commit}. In case of adopt, the fallback consensus is invoked (\cref{alg:consensus:propose:fallback propose}). Since the fallback protocol is guaranteed to decide (by its termination property), the main consensus protocol must also decide (\cref{alg:consensus:propose:decide upon fallback decide}).

As for validity, adopt-commit is invoked only with \camera{values} proposed in the consensus protocol (\cref{alg:consensus:propose:ac propose}).
By adopt-commit's validity property, a call to \t{AdoptCommit::Propose} can only result (unless crashed) in adopting or committing a \camera{value} that was previously proposed within adopt-commit, hence proposed by the consensus protocol.
In case of commit, the consensus protocol decides that \camera{value} (\cref{alg:consensus:propose:decide upon commit}), and we are done. In case of adopt, the fallback consensus is invoked with that \camera{value} (\cref{alg:consensus:propose:fallback propose}).
In the broader view, all processes whose adopt-commit calls trigger adoption will proceed (unless crashed) to invoke the fallback consensus with the \camera{value} they adopted, which is a \camera{value} originally proposed in the consensus protocol.
By its validity property, the fallback protocol is guaranteed to decide a \camera{value} proposed for it (which is, as explained, a \camera{value} proposed in the consensus protocol). Thus, the main consensus protocol must also decide a \camera{value} proposed for it (\cref{alg:consensus:propose:decide upon fallback decide}) (unless crashed).

Finally, we prove agreement. 
If the adopt-commit invocation does not result in commit in any process, then no process executes \cref{alg:consensus:propose:decide upon commit}.
All processes that complete the adopt-commit invocation, invoke the fallback consensus protocol (\cref{alg:consensus:propose:fallback propose}), and decide the same \camera{value} for the main consensus protocol (\cref{alg:consensus:propose:decide upon fallback decide}) by \camerad{the fallback} consensus's agreement (unless crashed).
Otherwise, assume the adopt-commit invocation on some process commits some \camera{value} $\camera{v}$.
By adopt-commit's agreement, the adopt-commit invocation on all processes may adopt or commit only $\camera{v}$. Every process may decide on \cref{alg:consensus:propose:decide upon commit} or \cref{alg:consensus:propose:decide upon fallback decide}.
In the first case, it decides the committed value $\camera{v}$. In the second case, since all processes that adopt invoke the fallback protocol with \camera{value} $\camera{v}$, then by \camerad{the fallback} consensus's validity, they decide $\camera{v}$ for the main consensus protocol (\cref{alg:consensus:propose:decide upon fallback decide}).
\end{proof}

 \renewcommand{\emph}[1]{\textit{#1}}

\section{\sysname Optimality} \label{sec:optimality}
We start by defining some terms used in our optimality proof.

\begin{definition}[happened-before \cite{lamport78happenedbefore}]
\emph{Happened-before} denotes the smallest relation on the set of steps of an execution satisfying the following three conditions: 
\begin{enumerate}
    \item If $a$ and $b$ are steps of the same process and $a$ comes before $b$ in the execution, then $a$ \emph{happened-before} $b$.
    \item If $a$ is the sending of a message by one process and $b$ is the receipt of the same message by another process, then $a$ \emph{happened-before} $b$.
    \item If $a$ \emph{happened-before} $b$ and $b$ \emph{happened-before} $c$, then $a$ \emph{happened-before} $c$.    
\end{enumerate}
\end{definition}

\begin{definition}[indistinguishable]
Two executions $e_1,e_2$ are \emph{indistinguishable} to some process $r$ until time $t$ (including), if in both executions, 
\begin{enumerate}
    \item \t{Propose} is invoked at $r$ at the same time $\leq t$ or not invoked at $r$ at a time $\leq t$.
    \item $r$ receives the same messages at the same times until time $t$ (including). 
\end{enumerate}
We denote this relationship as $e_1 \stackrel{r}{\approx}_t e_2$.
We similarly denote executions indistinguishable to some process $r$ until time $t$ \emph{excluding} as $e_1 \stackrel{r}{\approx}_{<t} e_2$.

If the executions are indistinguishable to $r$ at all times, we omit the time indication: $e_1 \stackrel{r}{\approx} e_2$.
\end{definition}

\begin{definition}[known message]
We define a message $m$ in an execution to be \emph{known} by process $r$ at time $t$, if either $m$ is received by $r$ up to time $t$, or $m$'s receipt happened-before some message receipt at $r$ occurring up to time $t$. We say in this case that $r$ \emph{knows} about $m$ at $t$.
\end{definition}

We assume an eventually-stable network. In this section, we prove optimality in this eventual state of stability (defined below), but the proof could be easily extended to earlier long-enough periods of synchrony.

\begin{definition}[stable network]
A network is \emph{stable} if all inter-process latencies are stable and known to processes, all faulty processes are permanently down, and weak failure detectors in the system suspect all faulty processes while they do not suspect any correct process as having crashed.
\end{definition}

\fix{\begin{definition}[weighted objective latency]
We define the \emph{weighted objective latency} of a consensus protocol, in a stable network setting, to be the objective-function time from proposal to decision, where the objective function is taken over the set of executions in which \t{Propose} is invoked on a single correct process, weighted by the probability that each correct process is the proposer. The objective function could be the expectation (yielding the weighted expected latency), or a percentile (yielding weighted tail or median latency).
\end{definition}}

\begin{definition}[optimal combination of requirements of \sysname]
Given an objective latency function, an optimal combination of requirements is any assignment to \t{REQUIREMENTS} of \sysname, which results in the minimal weighted objective latency achievable by \sysname for a given stable network across all combinations of requirements that satisfy \sysname's conditions on requirements. Details on how to obtain it appear in \Cref{sec:optimizer:solve}.
\end{definition}
When using \sysname, assuming a certain objective latency function, \sysname is run with an optimal combination of requirements. In the following proof, we sometimes run the \sysname protocol with other combinations of requirements just for the sake of the proof.

\medskip

\camera{The executions built in the proofs below suppress some messages, e.g., all those a process sends from some point on. A suppressed message is never dropped in violation of the model: it is either sent by a process that crashes later in that execution, or delayed until after every correct process has decided, which unbounded delays allow. Each suppression removes a suffix of the messages a link carries, so in-order delivery is preserved.}

We are ready to present our main optimality result:

\vspace{7pt}
{\setlength{\fboxrule}{0.5pt}\setlength{\fboxsep}{8pt}
\noindent\fcolorbox{black}{gray!13}{\begin{minipage}{\dimexpr\linewidth-2\fboxsep-2\fboxrule\relax}
\begin{theorem}[Optimality]\label{theorem:optimality}
Given a certain objective latency function, \sysname has optimal weighted objective latency in a stable network setting across all deterministic consensus protocols.

\textit{Assumption:} We consider consensus protocols in which some \t{Propose} invocation happened-before any message sending.
\camera{This assumption is without loss of generality for the executions we compare: a message sent with no proposal happening before it does not depend on the system input, so it is sent identically and can be delivered at the same times in all of them, and it carries no information about a proposed value. We rely on this assumption when deriving $AlgReqs$ below, where every message known by a proposer when it decides is read as spreading its value.}
\end{theorem}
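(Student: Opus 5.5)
Fix an arbitrary deterministic consensus protocol $Alg$ and a stable network. For each correct proposer $p$, let $d_p$ be the time $p$ decides in the solo execution $e_p$, where only $p$ invokes \t{Propose} (at time $0$) and all latencies equal the measured ones. I will build a requirement combination $AlgReqs$ from $Alg$ and show three things: \sysname admits it; \sysname with it decides no later than $Alg$ in every solo execution; and the optimal combination is at least as good, which is immediate from its definition. Since the weighted objective is a monotone function of the per-proposer solo latencies, pointwise domination suffices.

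\textbf{Deriving $AlgReqs$.} In $e_p$, look at the set of messages known by $p$ at time $d_p$. By the theorem's assumption, every such message causally follows $p$'s invocation. I read each one as spreading $p$'s value. Define $AlgReqs_p[q]$ as the set of processes $x$ such that there is a causal chain from $p$ to $x$, then from $x$ to $q$, ending in a message receipt at $p$ by time $d_p$. Put this in the normal form of \S\ref{sec:constraints:condition}. Because \sysname's simulation (\S\ref{sec:optimizer:budgets}) floods evidence along every link as fast as possible with the same latencies, the snapshot $R_{p,d_p}$ contains $AlgReqs_p$. So \sysname meets a requirement at least as strong by time $d_p$, and extra evidence never hurts compatibility (the witness sets only grow). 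A monotonicity remark is needed here: if $AlgReqs_p \subseteq R_{p,d_p}$ componentwise, then validity and compatibility are inherited.

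\textbf{Validity and compatibility of $AlgReqs$.} This is the main obstacle: it is the necessity direction (\Cref{theorem:necessity}), specialized to induced requirements.

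\emph{Validity.} Suppose $|Q_p|\le f$. Take $e_p$ and a copy $e_p'$ where $p$ proposes $v'\neq v$. Suppress, as allowed by the preamble, every message not in $p$'s causal past at $d_p$. Then the two executions differ only at processes of $Q_p$. Crash them, and the survivors are in identical states. One extension must decide, yet Agreement forces $v$ in one and $v'$ in the other, a contradiction.

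\emph{Compatibility.} Suppose $R_p,R_q$ provide at most $f$ witnesses of $I=Q_p\cap Q_q$. Build $E_1$, a merge of $p$'s causal past from $e_p$ with $q$'s messages delayed, in which $p$ decides $v_p$. Build symmetrically $E_2$, in which $q$ decides $v_q$. The key step, which I will argue by induction along happened-before, is that any process $w$ not in the witness set has the same local state in both executions restricted to the suppressed message pattern. Such a $w$ received no chain through $I$ that the commits depend on. Messages from $p$ or $q$ themselves not passing through $I$ can be kept identical, or delayed past the decisions, in both executions. If $I=\emptyset$, both decide concurrently, violating Agreement directly. Otherwise, crash the at most $f$ witnesses, which include $I$, $p$ and $q$ when relevant. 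The survivors cannot distinguish $E_1$ from $E_2$, contradicting recoverability as in \S\ref{sec:constraints:condition}.

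The delicate part is the delay bookkeeping. The suppressed messages must be either from processes that later crash or delayed beyond all decisions, while FIFO order is respected. I will handle this by always suppressing link suffixes, as the paragraph before the theorem permits.

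\textbf{Conclusion.} With these in place, $AlgReqs$, padded to the snapshots $R_{p,d_p}$, is a feasible point of \Cref{optiprob}, or of its median/tail variant. Its cost is at most $Alg$'s weighted objective latency. The optimizer returns a combination no worse. By the correctness theorems of the appendix, that combination yields a correct consensus protocol, and its fast commit time bounds its decision time. Hence \sysname is optimal.
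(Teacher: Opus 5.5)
Your architecture is the paper's: read requirements off the causal past of each solo decision, prove validity and compatibility by indistinguishability, show that KCensus's flooding meets each requirement by $d_p$, and conclude from the optimality of the chosen combination. Routing through the snapshots $R_{p,d_p}$ is a harmless variant. Your monotonicity remark is right, since keys, entries, the intersection and both witness sets only grow; just take the last budget $b\le d_p$. However, $R_{p,d_p}\supseteq AlgReqs_p$ needs the paper's argument, not only ``flooding is fastest'':
\begin{itemize}
\item each process gets its first \texttt{Accept\&Spread} within its shortest-path distance from $p$, hence no later than its first message under $Alg$;
\item then an induction over receipt times;
\item and a check that no \texttt{DoAdopt} arises in the solo run.
\end{itemize}
The genuine gaps are in the two impossibility arguments. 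In your validity step, suppose the copy keeps only the pattern of $e_p$'s causal past. Then nobody need decide $v'$ in it, so Agreement does not force $v'$ on the survivors. Suppose instead it keeps its own causal past. Then the differing processes are $Q_p(v)\cup Q_p(v')$, which can exceed $f$ because $Alg$'s pattern may depend on the value. Also, once $Q_p\ni p$ crashes nobody has proposed, so a survivor must propose before Termination applies. Two repairs work. You can invoke Validity instead: a survivor proposes $w\neq v$, and $v$ is never proposed in the copy. Or you can use the paper's splice. Crash $Q_p$ at time $0$ and let some $r'\notin Q_p$ decide $v_{r'}$ solo. Combine this with $p$'s run on $Q_p$, losing cross messages and crashing $Q_p$ right after $d_p$. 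Then $p$ decides $v_p$ and $r'$ decides $v_{r'}$ in one execution.

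For compatibility, the construction as stated fails. Validity gives $Q_q\setminus K\neq\emptyset$, and such a non-witness receives its $e_q$ messages in $E_2$ but nothing in $E_1$ if $q$'s messages are delayed there. Both executions must run both fast paths, with $I$ following $e_p$ in one and $e_q$ in the other, and all cross-side messages lost. This needs two things you do not supply.
\begin{itemize}
\item $p,q\notin I$, since each proposer proposes in both executions and so cannot follow the other's run. This holds because if $p\in I$, then $p\in R_p[x]$ for every key $x$, so $Q_p\subseteq K$ and $|K|>f$.
\item The precise cutoffs your ``delay bookkeeping'' leaves open. Say $I$ follows $e_q$ in $E_2$. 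A non-witness $w\in Q_p\setminus K$ stops at its first receipt of a message outside $p$'s causal past at $d_p$. A witness $x\in Q_p\setminus I$, $p$ included, must stop earlier: at its first receipt of a message that no such $w$ knows before $w$'s own cutoff (the paper's $t_{x,K_r-K}$). This is early enough that everything $x$ has received is reproducible without $I$, yet late enough that $x$ has already sent everything the non-witnesses receive.
\end{itemize}
What makes this work is that the sender of any message received before such a cutoff lies in $R_p[w]$ for some $w\notin K$. Hence that sender is outside $I$, and it sent the message before its own cutoff. Suffix suppression is only the mechanism. Choosing these suffixes, and proving by induction on receipt times that every process outside $K$ has the same prefix in $E_1$ and $E_2$, is the substance of the proof.
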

\end{minipage}}}

\begin{proof}
Let \textcolor{cyan}{Alg} be any deterministic consensus protocol. Assume a given weighted objective latency.
We show that when the network is stable, Alg cannot have better weighted objective latency than \sysname. 

Consider a stable network with up to $f$ faulty processes.
Let \textcolor{cyan}{$\mathcal{F}$} indicate the set of faulty processes (the rest of the processes are \emph{correct}).
Let \textcolor{cyan}{$\mathcal{L}$} indicate inter-process latencies in the stable network for all pairs of correct processes, where
$\mathcal{L}[s,p]$ indicates the inter-process latency from $s$ to $p$.
We assume, as is standard, that message transmission takes a strictly positive amount of time, 
except when a process sends a message to itself, which is assumed to take zero time. 
Namely, $\mathcal{L}[s,p] > 0$ for $s \neq p$ and $\mathcal{L}[p,p] = 0$.

For simplicity, we intuitively assume that independent events cannot happen at a process at the same time. Formally, we assume that a process
$p$ does not send messages right before (at the exact same time) it receives a message from another process, and we assume that $p$ cannot receive more than one message from other processes at the same time. Namely, at any given time, $p$ may receive at most one message from another process, then it can send messages, then possibly receive a message from itself, followed by another round of message sending; the last two steps may repeat multiple times.
This assumption implies that if a process $p$ receives a message from another process at some time $t$, then this receipt is the first message receipt or sending event by $p$ at $t$. This will be important in the definition and usage of $t_{p,r}$ below.

We define a combination of requirements \textcolor{cyan}{$AlgReqs$} by considering, for each correct process $r$, an execution where \t{Propose} is invoked on $r$ only, and using it to define the \camera{requirement} $AlgReqs[r]$. This is done in the following manner.
For each process $r \notin \mathcal{F}$, let \textcolor{cyan}{$e_r$} be a single-proposer execution of Alg in which Alg.\t{Propose(\textcolor{cyan}{$\camera{v_r}$})} is invoked on $r$ at time 0---where $\camera{v_r}$ is a unique \camera{value} associated with each $r$, and the network is stable such that
processes in $\mathcal{F}$ are crashed, 
and message transmission among correct processes is according to the latencies of $\mathcal{L}$. By termination, $r$ must decide. 
Consider all messages known by $r$ when $r$ decides. Process these messages using Algorithm~\ref{alg:optimality:knowledge} in order of their reception time to compute \t{known\_acceptors} for all processes that receive any of these messages.
Set $AlgReqs[r]$ to \t{array\_to\_dict(known\_acceptors[$r$])}, where the value of \t{known\_acceptors[$r$]} is obtained after processing all these messages in Algorithm~\ref{alg:optimality:knowledge}, and \t{array\_to\_dict} (\cref{alg:optimality:array_to_dict}) transforms it into a dictionary (each process $p$ for which \t{known\_acceptors[$r$][$p$]} is not empty becomes a key, with \t{known\_acceptors[$r$][$p$]} as its associated value).

\begin{lstlisting}[caption={Knowledge Tracking: algorithm for execution $e_r$ of proposer $r$},label={alg:optimality:knowledge}]
known_acceptors: array[n][n] of set<pid> = {@$\emptyset$@} // Initialize all array cells to empty sets
known_acceptors[r][r] = {r}@\label{alg:optimality:init r}@
M = all messages @\text{in} $e_r$@ known by r @\text{when}@ r decides@\label{alg:optimality:known messages}@

for each message m from s to t in M, sorted by reception time:@\label{alg:optimality:for loop}@
  Let knowledge_of_s be the value of known_acceptors[s] right @\text{after}@ processing the last message in M received by s prior to sending m (@\text{or}@ its initial value @\text{if}@ no such message exists).
  known_acceptors[t][t] @$\cup$@= {t} @$\cup$@ knowledge_of_s[s]@\label{alg:optimality:knownacceptorsme}@
  for q in pids:@\label{alg:optimality:inner for loop}@
    known_acceptors[t][q] @$\cup$@= knowledge_of_s[q]@\label{alg:optimality:knownacceptorsq}@

def array_to_dict(known_acceptors_of_process):@\label{alg:optimality:array_to_dict}@
  dict = {}
  for p in pids:
    if known_acceptors_of_process[p] @$\neq$ $\emptyset$@:
      dict[p] = known_acceptors_of_process[p]
  return dict
\end{lstlisting}

\camera{We prove in
\Cref{claim:optimality:reflexive graph,claim:optimality:valid,claim:optimality:compatible}
that every derived requirement $AlgReqs[r]$ has the normal form of
\S\ref{sec:constraints:condition}, is valid, and is pairwise compatible with
the other derived requirements. Hence, \sysname could use $AlgReqs$.}
Let \textcolor{cyan}{$\sysname_{AlgReqs}$} be \sysname with the combination of requirements $AlgReqs$.
We show in \Cref{claim:optimality:KCensus_Alg not slower than Alg} below that when the network is stable (with faulty processes $\mathcal{F}$ and inter-process latencies $\mathcal{L}$), Alg does not have better weighted objective latency than $\sysname_{AlgReqs}$. Since $\sysname$ uses an optimal combination of requirements, $\sysname_{AlgReqs}$ cannot have better weighted objective latency than $\sysname$ for the same stable network. Overall, we get that when the network is stable, Alg does not have better weighted objective latency than $\sysname$. This implies that $\sysname$ is a consensus protocol with optimal weighted objective latency.
\end{proof}

We define some notations that will be used in the following proofs, in addition to the notations from the proof of \Cref{theorem:optimality}.
Henceforth, references to a process $r$ for which an execution $e_r$ exists, or for which $AlgReqs[r]$ is defined, implicitly assume that $r$ is correct.
Consider some process $r$. 
Let \textcolor{cyan}{$t_r$} be the time at which $r$ decides in $e_r$. 
Let \textcolor{cyan}{$K_r$} $= AlgReqs[r]$\t{.keys()}.
For each process $p \in K_r$, let \textcolor{cyan}{$t_{p,r}$} be the time in which $p$ receives in $e_r$ its first message from another process that is not known by $r$ at time $t_r$, or $t_r + \epsilon$ if no such message exists, where \textcolor{cyan}{$\epsilon$} is some positive number. 
(The reason we pick the first message \emph{from another process} is that self messages keep executions, which were indistinguishable to a process until receiving them, indistinguishable; so there is no need to discard them to obtain indistinguishable executions. We later define executions in which we discard all messages \emph{from other processes} not known by $r$ at $t_r$. To this end, we discard all the messages to / from process $p$ that $p$ does not receive / send before time $t_{p,r}$. All messages $p$ receives / sends at $t_{p,r}$ or later are indeed not known by $r$ at $t_r$, thanks to the fact that the receipt of the message \emph{from another process} by $p$ at $t_{p,r}$ is the first message receipt or sending event by $p$ at $t_{p,r}$, combined with \Cref{observation:optimality:message received before known received or sent - is known}.)

Next, we bring the main claims used in our optimality proof, followed by lemmas used to prove them.

\begin{claim}\label{claim:optimality:reflexive graph}
\camera{For each process $r$, the dictionary $AlgReqs[r]$ satisfies:
(1) every key appears within its value; (2) all processes in the values appear
as keys; and (3) $r$ is a key and
$AlgReqs[r][r]=AlgReqs[r].\t{keys}()$.}
\end{claim}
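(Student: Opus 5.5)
The plan is to prove a single invariant of Algorithm~\ref{alg:optimality:knowledge} by induction over the processed messages. For any process $t$ and any $q$, let $K^t[q]$ denote \t{known\_acceptors[$t$][$q$]}, and let $D^t=\{q : K^t[q]\neq\emptyset\}$ be its set of nonempty entries. The invariant has three parts, which must hold after initialization and after every iteration of the loop on \cref{alg:optimality:for loop}:
\begin{enumerate}
\item[(a)] every $q\in D^t$ satisfies $q\in K^t[q]$;
\item[(b)] $\bigcup_q K^t[q]\subseteq D^t$;
\item[(c)] $K^t[t]=D^t$ whenever $D^t\neq\emptyset$.
\end{enumerate}
The claim then follows at once: $AlgReqs[r]$ is $K^r$ with empty entries removed, so (a) gives item~(1), (b) gives item~(2), and (c) together with $K^r[r]\ni r$ gives item~(3).

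For the base case, only $K^r[r]=\{r\}$ is nonempty after \cref{alg:optimality:init r}. All three parts hold trivially for $r$ and vacuously for every other process. By \Cref{observation:knowledge only increases}-style monotonicity (the algorithm only takes unions), the invariant for a process $t$ can only be broken at an iteration where $t$ is the recipient. Consider such an iteration, with a message from $s$, and let $\kappa=$\t{knowledge\_of\_s}. By the induction hypothesis, $\kappa$ is a snapshot of $K^s$ that itself satisfies (a)--(c) for $s$. This also covers the initial value, since we should note that $s$ actually sends a message in $M$.

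\emph{Obstacle: the initial-value case.} If $s\neq r$ and $s$ has received no message of $M$, then $\kappa$ is all empty. I would argue that then $s\neq r$ cannot send $m$, because of the theorem's assumption that some \t{Propose} happened-before every send, and in $e_r$ only $r$ proposes. Hence $s$ must have received some message earlier, and by the sorting every such receipt is itself in $M$ (it happened-before a receipt known to $r$). Alternatively, an empty $\kappa$ is simply harmless, as the next paragraph's computation shows. I expect this case, together with the indexing of $\kappa$ at the same timestamp, to be the only delicate point.

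The inductive step is a direct computation. The update is
\[K^t[t]\mathrel{\cup}=\{t\}\cup\kappa[s],\qquad K^t[q]\mathrel{\cup}=\kappa[q]\ \text{for all }q.\]
Since $q=t$ is included in the second update, the new $K^t[t]$ contains $\{t\}\cup\bigcup_q\kappa[q]$, which by (b) for $\kappa$ contains $D^\kappa$. The new $D^t$ is the old $D^t$ together with $\{t\}\cup D^\kappa$. Now check each part:
\begin{enumerate}
\item[(a)] Each $q\in D^\kappa$ has $q\in\kappa[q]\subseteq K^t[q]$, and $t\in K^t[t]$.
\item[(b)] New elements in values come from $\kappa$'s values, which lie in $D^\kappa\subseteq D^t$, or from $t$ itself.
\item[(c)] $K^t[t]$ was equal to the old $D^t$ (or was empty, in which case old $D^t=\emptyset$ by (b) applied to $K^t[t]=\emptyset$ via (a)), and it gains exactly $\{t\}\cup D^\kappa$. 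It cannot gain more, because $\kappa[s]\cup\kappa[t]\subseteq\bigcup_q\kappa[q]\subseteq D^\kappa$.
\end{enumerate}
Hence $K^t[t]=D^t$.
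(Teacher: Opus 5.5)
Your invariant (a)--(c) is the right one, and the structure follows the paper: an induction over the iterations of Algorithm~\ref{alg:optimality:knowledge} that proves (1) and (2) for every process. The paper derives (3) from (2) plus a separate lemma, \Cref{lemma:optimality:known_acceptors[p].keys() subseteq known_acceptors[p][p] after handling received known message}, which says the keys are contained in \t{known\_acceptors[$p$][$p$]}. Your (c) combines these two inclusions into one invariant, which is fine.

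However, the step that carries the content of (c) is justified incorrectly. You write that, since $q=t$ is included in the second update, the new $K^t[t]$ contains $\{t\}\cup\bigcup_q\kappa[q]$. But \cref{alg:optimality:knownacceptorsq} with $q=t$ adds only $\kappa[t]$ to $K^t[t]$. Take the first message from $r$ to some $t\neq r$: then $\kappa[t]=\emptyset$ and $D^\kappa=\{r\}$, and $r$ enters $K^t[t]$ only through the $\kappa[s]$ term of \cref{alg:optimality:knownacceptorsme}. Your inference never uses that term, so it would apply verbatim to a variant of the algorithm without it. In that variant (c) is false: after this first message, $K^t[t]=\{t\}$ while $D^t=\{t,r\}$. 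So the argument as written does not use the feature of the algorithm that makes item~(3) true.

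The repair is already in your induction hypothesis, though you never invoke it here. By (c) for $\kappa$, $\kappa[s]=D^\kappa$ whenever $D^\kappa\neq\emptyset$, so \cref{alg:optimality:knownacceptorsme} adds all of $D^\kappa$ to $K^t[t]$. Together with $\kappa[s]\cup\kappa[t]\subseteq D^\kappa$ from (b), $K^t[t]$ gains exactly $\{t\}\cup D^\kappa$, matching the growth of $D^t$. This is precisely the inductive step of the paper's lemma: the sender's keys lie in \t{knowledge\_of\_s[$s$]} and are therefore added on \cref{alg:optimality:knownacceptorsme}.

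The same paragraph has two smaller mis-citations. First, $D^\kappa\subseteq\bigcup_q\kappa[q]$ follows from (a), not (b). Second, when the old $K^t[t]$ is empty, old $D^t=\emptyset$ follows from (c) itself, not from (a) and (b); those two allow, e.g., $K^t[x]=\{x\}$ with $K^t[t]=\emptyset$. With these corrections your proof is complete.
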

\begin{proof}
Let $r$ be some process.
We prove by induction on messages in $e_r$ known by $r$ at $t_r$, by the order they are processed in Algorithm~\ref{alg:optimality:knowledge}, that for every process $p$, conditions 
(1) and (2) hold for the dictionary \t{array\_to\_dict(known\_acceptors[$p$])} right after the message is processed. As $AlgReqs[r]$ equals the value of \t{array\_to\_dict(known\_acceptors[$r$])} after processing all these messages, QED follows.

For the base case, before any message is handled, all sets in \t{known\_acceptors[$p$]} for any process $p \neq r$ are empty, hence the transformation of \t{known\_acceptors[$p$]} into a dictionary yields an empty dictionary, that trivially satisfies conditions (1) and (2). As for process $r$, the only non-empty set in \t{known\_acceptors[$r$]} at initialization is \t{known\_acceptors[$r$][$r$]} which equals $\{r\}$. Hence, the transformation of \t{known\_acceptors[$r$]} into a dictionary yields a dictionary with $r$ as the only key, having $\{r\}$ as its associated value. This dictionary clearly satisfies conditions (1) and (2).

Suppose the statement holds until right before some process $t$ receives from some process $s$ a message $m$ known by $r$ at $t_r$.
When processing $m$ in Algorithm~\ref{alg:optimality:knowledge}, \t{known\_acceptors[$t$]} is updated.
By the inductive hypothesis, \t{array\_to\_dict(known\_acceptors[$t$])} for the value of \t{known\_acceptors[$t$]} right before processing $m$,
and \t{array\_to\_dict(knowledge\_of\_s)} (where \t{knowledge\_of\_s} is the value of \t{known\_acceptors[$s$]} right after processing the last message, known by $r$ at $t_r$, received by $s$ prior to sending $m$, or the initial value of \t{known\_acceptors[$s$]} if no such message exists),
satisfy conditions (1) and (2).

We first show that condition (1) is satisfied after processing $m$.
For any key that already existed in \t{array\_to\_dict(known\_accep\-tors[$t$])} before processing $m$, condition (1) holds by the inductive hypothesis (as mentioned above).
If $t$ is added to \t{array\_to\_di\-ct(known\_acceptors[$t$])} on \cref{alg:optimality:knownacceptorsme} (which happens if \t{known\_acceptors[$t$][$t$] is empty before processing $m$}), condition (1) holds since $t$ is added to \t{known\_acceptors[$t$][$t$]}. 
For any key $q$ added to \t{array\_to\_dict(known\_acceptors[$t$])} on \cref{alg:optimality:knownacceptorsq} ($q$ is added if \t{known\_acceptors[$t$][$q$]} $= \emptyset$ before processing $m$ while \t{knowledge\_of\_s[$q$]} $\neq$ $\emptyset$), $q \in$ \t{know\-ledge\_of\_s[$q$]} due to the inductive hypothesis (as mentioned above), and is thus added to \t{known\_acceptors[$t$][$q$]}.

We proceed to show that condition (2) is satisfied after processing $m$.
The processes that may be added to sets in \t{known\_acceptors[$t$]} when processing $m$, are $t$ (on \cref{alg:optimality:knownacceptorsme}) and processes appearing in the sets in \t{knowledge\_of\_s} (on \cref{alg:optimality:knownacceptorsme,alg:optimality:knownacceptorsq}). 
$t$ is guaranteed to be a key in \t{array\_to\_dict(known\_acceptors[$t$])} after processing $m$ due to \cref{alg:optimality:knownacceptorsme}.
As for processes appearing in the sets in \t{knowledge\_of\_s}, due to the inductive hypothesis (as mentioned above), each process $p$ appearing in some set in \t{knowledge\_of\_s} is a key in \t{array\_to\_dict(knowledge\_of\_s)}, i.e., \t{knowledge\_of\_s[$p$]} $\neq$ $\emptyset$. Hence, $p$ is guaranteed to be a key in \t{array\_to\_dict(known\_acceptors[$t$])} after \cref{alg:optimality:knownacceptorsq} is executed with $q=p$.

\camera{It remains to prove (3). Process $r$ is a key because
\t{known\_acceptors[$r$][$r$]} is initialized to $\{r\}$.
Condition (2) gives
$AlgReqs[r][r]\subseteq AlgReqs[r].\t{keys}()$.
Since \t{known\_acceptors[$r$]} changes only when $r$ receives a message, the
reverse inclusion follows from
\Cref{lemma:optimality:known_acceptors[p].keys() subseteq known_acceptors[p][p] after handling received known message}
applied to the last processed message received by $r$, or from
initialization if there is none.}
\end{proof}

\begin{claim}\label{claim:optimality:valid}
\camera{The requirement} $AlgReqs[r]$ \camera{of} each process $r$ \camera{is} valid (see definition in Algorithm \ref{alg:ac:compatible}).
\end{claim}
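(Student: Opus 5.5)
We need $|K_r| > f$. The plan is a contradiction argument from indistinguishability: this is the recoverability argument of \S\ref{sec:constraints:condition} made concrete for $e_r$. Suppose $|K_r| \le f$. Pick a correct process $r' \ne r$ with $v_{r'} \neq v_r$; if all $n-f$ correct processes other than those needed are scarce, any process outside $K_r \cup \mathcal{F}$ will do, since $n \ge 2f+1$. We then build two executions that look the same to every process outside $K_r$ but decide different values.

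\textbf{First execution $e'_r$.} Obtain it from $e_r$ as follows.
\begin{itemize}
\item Every process $p \in K_r$ behaves as in $e_r$ until $t_{p,r}$, then crashes.
\item Every message not known by $r$ at $t_r$ is suppressed.
\end{itemize}
By construction, and using \Cref{observation:optimality:message received before known received or sent - is known}, $r$ still receives exactly the same messages up to $t_r$, so $e'_r \stackrel{r}{\approx}_{t_r} e_r$ and $r$ decides $v_r$. The key point is that any process $s \notin K_r$ received no message known by $r$ at $t_r$. Otherwise, by the construction in Algorithm~\ref{alg:optimality:knowledge}, $s$ would have set \t{known\_acceptors[$s$][$s$]} nonempty, and that knowledge would propagate to $r$'s array, making $s$ a key of $AlgReqs[r]$ by \Cref{claim:optimality:reflexive graph}. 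Hence processes outside $K_r$ have received nothing by the time the $K_r$ processes crash. Crashing $|K_r| \le f$ processes (together with $\mathcal{F}$, handled by placing the crashes appropriately, or arguing with the stable-network faulty set kept inside the $f$ budget) is allowed.

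\textbf{Second execution $e''$.} The same processes in $K_r$ crash at time $0$ before doing anything, $r$ among them since $r \in K_r$ by \Cref{claim:optimality:reflexive graph}(3). Process $r'$ proposes $v_{r'}$ at time $0$.

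\textbf{Splicing.} In $e'_r$, also let $r'$ propose $v_{r'}$, but delay all of $r'$'s messages and all messages to processes outside $K_r$ until after $t_r$. The survivors outside $K_r$ then have identical local states in $e'_r$ and $e''$ at the splice point. By termination, $r'$ eventually decides in both executions. In $e''$, validity forces the decision to be $v_{r'}$, since $r$ never proposed. In $e'_r$, agreement forces $v_r$. Indistinguishability gives $r'$ the same decision in both, a contradiction.

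\textbf{Main obstacle.} The delicate step is showing that processes outside $K_r$ truly have states unaffected by $r$'s proposal by the time of the splice. This needs the two facts used above: that only messages known by $r$ are kept, and that any receiver of such a message becomes a key in $AlgReqs[r]$. A second point to manage is the timing of $r'$'s proposal. We must keep the delayed messages consistent with FIFO links and with the paper's convention on suppressed messages, so that the continuation after the splice is identical in both executions.
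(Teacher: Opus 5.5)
This is essentially the paper's proof. The paper's $e_2$ also isolates $K_r$, lets each $p\in K_r$ act as in $e_r$ until $t_{p,r}$ with every message across the cut lost, and compares what the survivors see with the single-proposer run $e_1$ of $r'$ in which $K_r$ crashes at time~$0$. Your variant, which delays the survivors' traffic past a splice point instead of running it concurrently, is only a cosmetic difference.

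The one step you must pin down is the failure budget. The option of keeping $\mathcal{F}$ crashed alongside $K_r$ does not work, because the two sets are disjoint and $|\mathcal{F}|+|K_r|$ can reach $2f$; in that case neither Termination nor Agreement applies to $r'$. The paper's fix, done through the latency table $\mathcal{L}'$, is to make only $K_r$ faulty in both executions. The processes of $\mathcal{F}$ become correct, and their incoming messages are delayed, with arbitrary latencies on their links. This is invisible to $K_r$ for two reasons: $\mathcal{F}\cap K_r=\emptyset$, and a process that neither proposes nor receives anything sends nothing, by the theorem's standing assumption.

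A smaller citation point: the fact that no process outside $K_r$ receives a message known by $r$ at $t_r$ is the paper's lemma that only processes in $K_r$ send or receive such messages. That lemma is proved by the backward induction over \texttt{known\_acceptors}, not by the normal-form claim.
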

\begin{proof}
Assume, for sake of contradiction, that there exists some process $r$ whose associated \camera{requirement}, $AlgReqs[r]$, \camera{is} not valid. This means that $|K_r| \leq f$.

Let \textcolor{cyan}{$\mathcal{L'}$} indicate inter-process latencies for all pairs of processes not in $K_r$ in the following way: $\mathcal{L'}[p,p] = 0$ for any $p \notin K_r$, and for every two processes $s,p \notin K_r$ such that $s \neq p$, $\mathcal{L'}[s,p] = \mathcal{L}[s,p]$ if $s,p \notin \mathcal{F}$, else $\mathcal{L'}[s,p]$ is assigned an arbitrary positive value.
Let $r'$ be some process $\notin K_r$.
    
Consider a single-proposer execution \textcolor{cyan}{$e_1$} of Alg, in which: 
\begin{itemize}
    \item Alg.\t{Propose($\camera{v_{r'}}$)} is invoked on process $r'$ at time 0.
    \item Only the processes in $K_r$ are faulty, and they all crash at time 0. 
    \item Message transmission among processes not in $K_r$ is according to the latencies of $\mathcal{L'}$.   
\end{itemize}

Further consider an execution \textcolor{cyan}{$e_2$} of Alg, in which:
\begin{itemize}
    \item Alg.\t{Propose($\camera{v_r}$)} is invoked on process $r$ at time 0.
    \item Alg.\t{Propose($\camera{v_{r'}}$)} is invoked on process $r'$ at time 0.
    \item Only the processes in $K_r$ are faulty, and they all crash right after time $t_r$.
    \item Message transmission among processes in $K_r$ is according to the latencies of $\mathcal{L}$, 
    except for any messages sent by any $p \in K_r$ at time $t_{p,r}$ or later, which are lost. 
    \item Message transmission among processes not in $K_r$ is according to the latencies of $\mathcal{L'}$.
    \item Messages sent from a process in $K_r$ to a process not in that set, or vice versa, are lost.
        
\end{itemize}

We next show that in $e_2$, $r$ decides $\camera{v_r}$ while $r'$ decides $\camera{v_{r'}}$, contradicting the agreement property of the consensus protocol Alg.
We start with $r$. By termination, $r$ decides (at time $t_r$) in $e_r$. By validity, it decides $\camera{v_r}$. By \Cref{corollary:e_2 indistinguishable from e_r to r}, $e_2 \stackrel{r}{\approx}_{t_r} e_r$, hence $r$ decides $\camera{v_r}$ (at time $t_r$) in $e_2$ as well.
As for $r'$, by termination it decides in $e_1$. By validity, it decides $\camera{v_{r'}}$.
By \Cref{lemma:e_2 indistinguishable from e_1 to r'}, $e_2 \stackrel{r'}{\approx} e_1$, hence $r'$ decides $\camera{v_{r'}}$ in $e_2$ as well.
\end{proof}

\begin{claim}\label{claim:optimality:compatible}
The combination of requirements $AlgReqs$ is compatible (see definition in Algorithm \ref{alg:ac:compatible}).
\end{claim}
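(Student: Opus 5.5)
We argue by contradiction, mirroring the proof of \Cref{claim:optimality:valid}. Suppose two correct processes $r \neq r'$ have incompatible requirements $AlgReqs[r]$ and $AlgReqs[r']$. Let $I = K_r \cap K_{r'}$. Let $W$ be the set of processes $p$ with either $p\in K_r$ and $AlgReqs[r][p]\cap I\neq\emptyset$, or $p\in K_{r'}$ and $AlgReqs[r'][p]\cap I\neq\emptyset$. Incompatibility means $|W|\le f$. By \Cref{claim:optimality:reflexive graph}(1), every $p\in I$ satisfies $p\in AlgReqs[r][p]\cap I$, so $I\subseteq W$.

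We build a single execution $e_2$ in which $r$ proposes $v_r$, $r'$ proposes $v_{r'}$, and exactly the processes of $W$ crash after they are no longer needed. The goal is $e_2 \stackrel{r}{\approx}_{t_r} e_r$ and $e_2 \stackrel{r'}{\approx}_{t_{r'}} e_{r'}$. Then $r$ decides $v_r$ and $r'$ decides $v_{r'}$, contradicting agreement.

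The construction splits processes by role.
\begin{itemize}
\item Processes in $K_r\setminus I$ follow $e_r$ up to their cutoffs $t_{p,r}$.
\item Processes in $K_{r'}\setminus I$ follow $e_{r'}$ up to $t_{p,r'}$.
\item Processes outside $K_r\cup K_{r'}$ receive nothing relevant until both decisions have occurred; the suppressed messages are delayed, as the model permits.
\item Processes in $I$ are the obstacle, since they must look as in $e_r$ to $r$ and as in $e_{r'}$ to $r'$. They lie in $W$ and are allowed to crash, but they cannot behave two ways before crashing.
\end{itemize}

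The approach follows the \t{Freeze} intuition. Intersection processes whose acceptance is witnessed in $r$'s run (all of $I$, by the definition of $W$) cannot serve both runs directly. Instead, we use two auxiliary executions, $e_1$ and $e_1'$: in $e_1$ the set $W$ behaves as in $e_r$ while the rest of $K_{r'}$ behaves as in $e_{r'}$, and in $e_1'$ the roles are swapped. The key step is that any process outside $W$ never knows (in the sense of the known-message definition) any message received by a member of $I$ in the other proposer's run. Otherwise, \Cref{lemma:optimality:known_acceptors[p].keys() subseteq known_acceptors[p][p] after handling received known message} and the transfer rules of Algorithm~\ref{alg:optimality:knowledge} would place some element of $I$ into that process's entry, making the process a member of $W$.

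Hence the processes outside $W$ have identical local states in $e_1$ and $e_1'$ once $W$ crashes. In $e_1$, $r$ decides $v_r$ by indistinguishability from $e_r$, using the analogue of \Cref{corollary:e_2 indistinguishable from e_r to r}. In $e_1'$, $r'$ decides $v_{r'}$. Crash all of $W$ (at most $f$ processes) right after the later of the two decision times, and let a surviving correct process run the same continuation in both executions. By termination it decides; by agreement it must decide $v_r$ in $e_1$ and $v_{r'}$ in $e_1'$. Yet it behaves identically in both, which is the desired contradiction. This is exactly the recoverability argument of \S\ref{sec:constraints:condition}.

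The hardest step is the cut: we must define per-process cutoff times, analogous to $t_{p,r}$, so that suppressed messages form link suffixes, and show that no process outside $W$ receives anything depending on $I$'s behaviour in the opposite run. We would attempt this by induction over messages in reception order, reusing the invariant of Algorithm~\ref{alg:optimality:knowledge} that \t{known\_acceptors} tracks exactly which acceptances causally precede each receipt.
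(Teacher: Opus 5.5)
\textbf{Gap: the role assignment in $e_1$ and $e_1'$.} Your high-level plan is the paper's. You build two executions that differ only at $W$ (the paper's $K$), one indistinguishable to $r$ from $e_r$ up to $t_r$ and the other to $r'$ from $e_{r'}$ up to $t_{r'}$; then $W$ crashes and the survivors contradict agreement. Your key observation is also the one the paper uses: a process $q\notin W$ never knows, before its cutoff, a message sent by a member of $I$, since that sender would end up in $AlgReqs[r][q]$. The problem is that you let all of $W$ follow $e_r$ in $e_1$ (and $e_{r'}$ in $e_1'$). But $W$ contains both proposers. By \Cref{claim:optimality:reflexive graph}(3), $AlgReqs[r'][r']=K_{r'}\supseteq I$, so $r'\in W$ whenever $I\neq\emptyset$, and likewise $r\in W$. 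In $e_1$, $r'$ would therefore behave as in $e_r$, where it proposes nothing. Meanwhile the processes of $K_{r'}\setminus W$ must follow $e_{r'}$; this set is non-empty because $|K_{r'}|>f\ge|W|$, and in $e_{r'}$ every message they receive is causally preceded by $r'$'s proposal. The same obstruction applies to any other $W$-member of $K_{r'}\setminus K_r$ that non-$W$ processes heard from before it learned of $I$. So, as constructed, the processes outside $W$ cannot have identical states in $e_1$ and $e_1'$.

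\textbf{Missing idea: split by quorum, and use refined cutoffs.} Assign processes by quorum rather than by $W$, and give the $W$-members on the far side of $I$ a refined cutoff. In the paper's $f_2$, all of $K_r$, including $I$, follows $e_r$ up to the times $t_{p,r}$. Each $p\in K_{r'}\setminus K_r$, including $r'$, follows $e_{r'}$ only until $t_{p,K_{r'}-K}$ (with $K=W$). This is the first time $p$ receives a message that no $q\in K_{r'}\setminus W$ knows before $t_{q,r'}$. Traffic between the two groups, and everything sent after the cutoffs, is suppressed; $f_1$ is symmetric. Your key observation then places the sender of every message received before these cutoffs in $K_{r'}\setminus K_r$. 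The definition of the cutoffs ensures each such message was sent before its sender's own cutoff. Hence this group replays $e_{r'}$ exactly as far as the non-$W$ processes observe, even though $I$ follows $e_r$. The paper proves the symmetric statement in \Cref{lemma:f_1 indistinguishable from e_r to p until last received - known in e_r or in f_1 if p in K_r}.

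Cutoffs merely ``analogous to $t_{p,r}$'' do not suffice. $t_{p,r}$ is defined by what the proposer knows, and $r'$ knows about $I$, so it cannot follow $e_{r'}$ up to $t_{r'}$ while $I$ follows $e_r$. The refined cutoff is defined instead by what the non-$W$ processes know. This lets $W$-members such as $r'$ keep feeding $K_{r'}\setminus W$ exactly as in $e_{r'}$, and silences them before they could relay anything depending on $I$'s state.
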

\begin{proof}

Assume, for sake of contradiction, that there exist processes $r$, $r'$ such that for at most $f$ processes in $K_r$ or $K_{r'}$ (or both), the corresponding value (namely $AlgReqs[r][p]$ or $AlgReqs[r'][p]$ respectively for process $p$) contains some process in the intersection $K_r \cap K_{r'}$. Denote this set of at most $f$ processes (about which $r$ knows at $t_r$ in $e_r$ that they know about some process in the intersection, or $r'$ knows at $t_{r'}$ in $e_{r'}$ that they know about some process in the intersection) by \textcolor{cyan}{$K$}.
Let \textcolor{cyan}{$\overline{K_{r'}}$} = $K_r - K_{r'}$.
Let \textcolor{cyan}{$t_{p,K_r-K}$} be the time in which process $p \in \overline{K_{r'}}$ receives in $e_r$ its first message from another process that is not known by any process $q \in K_r-K$ before time $t_{q,r}$, or $t_{p,r}$ if no such message is received until $t_{p,r}$.
(Note that for $p \in K_r-K$, $t_{p,K_r-K} = t_{p,r}$.)
Let \textcolor{cyan}{$t_d$} \camerad{$= \max\{t_r, t_{r'}\} + \epsilon$}.
Further let \textcolor{cyan}{$\mathcal{L^*}$} indicate inter-process latencies for all pairs of processes not in $K$ in the following way: $\mathcal{L^*}[p,p] = 0$ for any $p \notin K$, and for every two processes $s,p \notin K$ such that $s \neq p$, $\mathcal{L^*}[s,p] = \mathcal{L}[s,p]$ if $s,p \notin \mathcal{F}$, else $\mathcal{L^*}[s,p]$ is assigned an arbitrary positive value.

We next design $f_1$ (illustrated in \Cref{fig:f1}) and $f_2$, which will satisfy the following:
(1) $f_1$ and $e_{r'}$ are indistinguishable to $r'$ until it decides, so that $r'$ decides $\camera{v_{r'}}$ in $f_1$.
(2) $f_2$ and $e_r$ are indistinguishable to $r$ until it decides, so that $r$ decides $\camera{v_r}$ in $f_2$.
(3) $f_1$ and $f_2$ are indistinguishable to all processes not in $K$, so that they decide (by termination) the same value in both executions. At the same time, by agreement, they should decide $\camera{v_{r'}}$ in $f_1$, but $\camera{v_r}$ in $f_2$, which yields the desired contradiction.

\begin{figure}
  \centering
  \resizebox{0.6\linewidth}{!}{\begin{tikzpicture}[
  x=1.25cm,y=1.25cm,
  font=\Large,
  dot/.style={circle,fill=black,inner sep=0pt,minimum size=2.8mm},
  prop/.style={circle,fill=green!60!black,inner sep=0pt,minimum size=3.5mm},
  brc/.style={decorate,decoration={brace,amplitude=6pt,raise=0pt},line width=0.9pt},
  brcd/.style={decorate,decoration={brace,mirror,amplitude=6pt,raise=0pt},line width=0.9pt},
  cut/.style={draw=orange!85!black,line width=1.2pt},
  set/.style={font=\LARGE},
]
\foreach \i in {0,1,2,3,5,6,7,9,10,11,12} {\node[dot] at (\i,0) {};}
\node[prop] (rp) at (4,0) {};
\node[prop] (r) at (8,0) {};
\node[green!60!black,above=2pt] at (rp.north) {$r'$};
\node[green!60!black,above=2pt] at (r.north) {$r$};

\draw[cut] (0.5,-0.27) -- (0.5,2.86);
\draw[cut] (7.5,-0.27) -- (7.5,2.86);
\node[orange!85!black] at (3.6,2.64) {messages -- before $t_{p,r'}$};
\node[orange!85!black] at (10.1,2.64) {messages -- before $t_{p,K_r-K}$};

\draw[brc,blue!80!black] (0.7,0.55) -- node[above=6pt,set] {$K_{r'}$} (7.3,0.55);
\draw[brc,blue!80!black] (4.7,0.72) -- node[above=6pt,set] {$K_r$} (12.25,0.72);
\draw[brc,magenta!75!black] (7.7,1.47) -- node[above=6pt,set] {$\overline{K_{r'}}$} (12.25,1.47);

\draw[brcd,magenta!65] (2.75,-0.4) -- node[below=6pt,set] {$K$} (8.25,-0.4);
\draw[brcd,magenta!75!black] (8.75,-0.4) -- node[below=6pt,set] {$K_r-K$} (12.25,-0.4);

\foreach \p in {rp,r} {\draw[green!60!black,line width=1.2pt] (\p.south) -- ++(0,-0.72);}
\node[green!60!black,align=center] at (4,-1.55) {$t=0$:\\\texttt{Propose}($v_{r'}$)};
\node[green!60!black,align=center] at (8,-1.55) {$t=0$:\\\texttt{Propose}($v_r$)};
\path (-0.25,0) (12.5,0);
\end{tikzpicture}}
  \caption{Illustration of execution $f_1$ until time $t_d$: Dots represent processes. Proposals are \camerad{shown} in green. The different sets that processes belong to are illustrated in blue or pink. The communication \camerad{between any process $p$ and the others} until time $t_d$ in $f_1$ is illustrated in orange.}
  \Description{\camerad{A line of processes partitioned into the overlapping sets $K_r$, $K_{r'}$, $K$, and their differences. Processes $r$ and $r'$ propose at time zero, and vertical markers show the message cutoffs used to construct execution $f_1$.}}
  \label{fig:f1}
\end{figure}

Intuitively, to achieve (1), we design $f_1$ as follows: we invoke a proposal of $\camera{v_{r'}}$ on $r'$ at time 0, and let any process $p \in K_{r'}$ receive from other processes messages known by $r'$ at $t_{r'}$ in $e_{r'}$ and only them until time $t_{r'}$. Specifically, $p$ only receives messages prior to $t_{p,r'}$ and messages $p$ sends starting at $t_{p,r'}$ are lost; and any messages among processes in $K_{r'}$ and other processes are discarded.
To achieve (2), we design $f_2$ symmetrically for $r$.

To achieve (3) for any process $p \in K_r-K$, since in $f_2$, until time $t_r$, $p$ receives from other processes messages known by $r$ at $t_r$ in $e_r$ and only them, in $f_1$ we also have $p$ receive from other processes messages known by $r$ at $t_r$ in $e_r$ and only them until time $t_r$. As $p \notin K$, $r$ does not have knowledge at $t_r$ in $e_r$ that $p$ knows about any process in the intersection $K_r \cap K_{r'}$.
This means that $p$ should not know in $f_1$ about processes in the intersection. We achieve that intuitively by having $p$, until time $t_r$ in $f_1$, 
receive from processes in $\overline{K_{r'}} \cap K$ only messages sent before they learned about the intersection, and not receive any messages from processes in $K_r \cap K_{r'}$.
More concretely, until time $t_r$ in $f_1$, we let any process $p \in \overline{K_{r'}}$ receive from other processes messages known by $r$ at $t_r$ in $e_r$ and only them, by enforcing the following: every process $p \in K_r-K$ only receives messages prior to $t_{p,r}$ and messages $p$ sends starting at $t_{p,r}$ are lost; every process $p \in \overline{K_{r'}} \cap K$ only receives messages prior to $t_{p,K_r-K}$ and messages $p$ sends starting at $t_{p,K_r-K}$ are lost; and any messages among processes in $\overline{K_{r'}}$ and other processes are discarded.
To achieve (3) for processes in $K_{r'}-K$, we design $f_2$ symmetrically for $r'$.

We now formally define $f_1$ and $f_2$.

Consider an execution \textcolor{cyan}{$f_1$} of Alg, in which:
\begin{itemize}
    \item Alg.\t{Propose($\camera{v_r}$)} is invoked on $r$ at time 0.
    \item Alg.\t{Propose($\camera{v_{r'}}$)} is invoked on $r'$ at time 0.
    \item Only the processes in $K$ are faulty, and they all crash right after time $t_d$.
    \item Message transmission of messages sent until time $t_d$ (including) is according to the latencies of $\mathcal{L}$, except for the following: (Intuitively, the first and third following items ensure that $f_1 \stackrel{p}{\approx}_{<t_{p,r'}} e_{r'}$ for every $p \in K_{r'}$ and, in particular, $f_1 \stackrel{r'}{\approx}_{t_{r'}} e_{r'}$;
    and the second and third items ensure that $f_1 \stackrel{p}{\approx}_{<t_{p,K_r-K}} e_r$ for every $p \in \overline{K_{r'}}$ and, in particular, $f_1 \stackrel{p}{\approx}_{<t_{p,r}} e_r$ for every $p \in K_r - K$.)
    \begin{itemize}
        \item 
        For each process $p \in K_{r'}$, any messages sent to $p$ before $t_d$, which do not reach it before $t_{p,r'}$ according to the latencies of $\mathcal{L}$, are lost. (This implies that no messages are received by $p$ in $f_1$ since time $t_{p,r'}$ until $t_d$ (including).)
        
        In addition, all messages sent by $p$ at time $t_{p,r'}$ or later until $t_d$ (including), are lost. 
        \item For each process $p \in \overline{K_{r'}}$, any messages sent to $p$ before $t_d$, which do not reach it before $t_{p,K_r-K}$ according to the latencies of $\mathcal{L}$, are lost. (This implies that no messages are received by $p$ in $f_1$ since time $t_{p,K_r-K}$ until $t_d$ (including).)
        
        In addition, all messages sent by $p$ at time $t_{p,K_r-K}$ or later until $t_d$ (including), are lost.
        \item Messages between two processes that are not both in $K_{r'}$ or both in $\overline{K_{r'}}$ are lost.
        \item Messages to processes in $K$ that should reach them after $t_d$ according to the latencies of $\mathcal{L}$, do not reach their target because it is down.
    \end{itemize}
    \item Message transmission of messages sent after time $t_d$ is according to the latencies of $\mathcal{L^*}$, except for messages to processes in $K$ which do not reach their target because it is down.
\end{itemize}

Let \textcolor{cyan}{$\overline{K_r}$} = $K_{r'} - K_r$.
Let \textcolor{cyan}{$t_{p,K_{r'}-K}$} be the time in which process $p \in \overline{K_r}$ receives in $e_{r'}$ its first message from another process that is not known by any process $q \in K_{r'}-K$ before time $t_{q,r'}$, or $t_{p,r'}$ if no such message is received until $t_{p,r'}$.
(Note that for $p \in K_{r'}-K$, $t_{p,K_{r'}-K} = t_{p,r'}$.)
Consider an execution \textcolor{cyan}{$f_2$} of Alg, which is symmetrical to $f_1$, obtained by exchanging the roles of $r$ and $r'$. In detail, in $f_2$:
\begin{itemize}
    \item Alg.\t{Propose($\camera{v_r}$)} is invoked on $r$ at time 0.
    \item Alg.\t{Propose($\camera{v_{r'}}$)} is invoked on $r'$ at time 0.
    \item Only the processes in $K$ are faulty, and they all crash right after time $t_d$.
    \item Message transmission of messages sent  until time $t_d$ (including) is according to the latencies of $\mathcal{L}$, except for the following: (Intuitively, the first and third following items ensure that $f_2 \stackrel{p}{\approx}_{<t_{p,r}} e_r$ for every $p \in K_r$ and, in particular, $f_2 \stackrel{r}{\approx}_{t_r} e_r$; 
    and the second and third items ensure that $f_2 \stackrel{p}{\approx}_{<t_{p,K_{r'}-K}} e_{r'}$ for every $p \in K_{r'} - K$.)
    \begin{itemize}
        \item For each process $p \in K_r$, any messages sent to $p$ before $t_d$, which do not reach it before $t_{p,r}$ according to the latencies of $\mathcal{L}$, are lost. (This implies that no messages are received by $p$ in $f_2$ at time $t_{p,r}$ or later until $t_d$ (including).)
        
        In addition, all messages sent by $p$ at time $t_{p,r}$ or later until $t_d$ (including), are lost.
        \item For each process $p \in \overline{K_r}$, any messages sent to $p$ before $t_d$, which do not reach it before $t_{p,K_{r'}-K}$ according to the latencies of $\mathcal{L}$, are lost. (This implies that no messages are received by $p$ in $f_2$ at time $t_{p,K_{r'}-K}$ or after until $t_d$ (including).)
        
        In addition, all messages sent by $p$ at time $t_{p,K_{r'}-K}$ or after until $t_d$ (including), are lost.
        \item Messages sent until time $t_d$ (including), 
        between two processes that are not both in $K_r$ or both in $\overline{K_r}$, are lost.
        \item Messages to processes in $K$ that should reach them after $t_d$ according to the latencies of $\mathcal{L}$, do not reach their target because it is down.
    \end{itemize}
    \item Message transmission of messages sent after time $t_d$ is according to the latencies of $\mathcal{L^*}$, except for messages to processes in $K$ which do not reach their target because it is down.
\end{itemize}

By \Cref{lemma:optimality:proposer not in other proposer's keys}, $r \notin K_{r'}$ and $r' \notin K_r$.
Additionally, by \Cref{lemma:optimality:proposer required to know about intersection}, $r,r' \in K$.
Consequently, $r \in K - K_{r'}$ and $r' \in K - K_r$.

We next show that in $f_2$, $r$ decides $\camera{v_r}$ while processes not in $K$ decide $\camera{v_{r'}}$, contradicting the agreement property of the consensus protocol Alg.
We start with $r$. By termination, $r$ decides (at time $t_r$) in $e_r$. By validity, it decides $\camera{v_r}$. It could be proven, just like proven in \Cref{corollary:e_2 indistinguishable from e_r to r} for $e_2$, that $f_2 \stackrel{r}{\approx}_{t_r} e_r$, hence $r$ decides $\camera{v_r}$ (at time $t_r$) in $f_2$ as well.
As for $r'$, by termination, $r'$ decides (at time $t_{r'}$) in $e_{r'}$. By validity, it decides $\camera{v_{r'}}$. It could be proven, just like proven in \Cref{corollary:e_2 indistinguishable from e_r to r} for $e_2$, that $f_1 \stackrel{r'}{\approx}_{t_{r'}} e_{r'}$, hence $r'$ decides $\camera{v_{r'}}$ (at time $t_{r'}$) in $f_1$ as well. 
By agreement and termination, all processes 
not in $K$
decide $\camera{v_{r'}}$ in $f_1$.
By \Cref{lemma:f_2 indistinguishable from f_1 to processes not in K}, $f_2 \stackrel{p}{\approx} f_1$ for any process $p \notin K$, hence these processes decide $\camera{v_{r'}}$ in $f_2$ as well.
\end{proof}

\begin{claim}\label{claim:optimality:KCensus_Alg not slower than Alg}
The weighted objective latency of Alg is not smaller than that of $\sysname_{AlgReqs}$.
\end{claim}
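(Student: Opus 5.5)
We would prove the claim proposer by proposer: for every correct process $r$, in the stable network with faulty set $\mathcal{F}$ and latencies $\mathcal{L}$, $\sysname_{AlgReqs}$ run with $r$ as the single proposer decides no later than $t_r$, the decision time of Alg in $e_r$. The weighted objective latency is an objective function (expectation or percentile) applied to the single-proposer decision times, weighted by the proposer probabilities. Such a function is monotone in those times, so the per-proposer bound suffices. Since the network is stable and $AlgReqs[r]$ is built only from correct processes, $\sysname_{AlgReqs}$ has no suspicion and no conflict in such an execution. Its \t{Propose} therefore ends by a commit as soon as \t{can\_commit} holds, by \Cref{claim:if correct receive same value can_commit returns true} together with \Cref{lemma:can_commit returns true forever}. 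It then remains to bound the time at which $AlgReqs[r]$ is met.

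The key step is a simulation argument comparing Algorithm~\ref{alg:optimality:knowledge} on $e_r$ with the real dissemination of Algorithm~\ref{alg:ac:disseminate} in the single-proposer execution $g_r$ of $\sysname_{AlgReqs}$ with the same latencies. We would prove by induction on the messages of $M$ (all messages known by $r$ at $t_r$ in $e_r$), taken in order of reception time, the following invariant. For every message $m$ from $s$ to $t$ received at time $\tau$ in $e_r$, after processing $m$ the knowledge-tracking array satisfies $\t{known\_acceptors}[t][q]\subseteq$ the value of $\t{known\_acceptors}[q]$ held by $t$ in $g_r$ at time $\tau$, for every $q$.

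The induction proceeds as follows. By the assumption in \Cref{theorem:optimality}, every message of $M$ has the \t{Propose} at $r$ in its causal past. Hence its sender $s$ has, in $g_r$, already accepted $v_r$ no later than in $e_r$, by induction along the happened-before chain; the base case is $r$ itself at time $0$. The knowledge of $s$ in $g_r$ at the sending time grows, so Algorithm~\ref{alg:ac:disseminate} broadcasts it whenever it grows. The latest such broadcast by $s$, sent no later than the time $s$ sent $m$, reaches $t$ by time $\tau$ because the link latency $\mathcal{L}[s,t]$ is the same in both executions. The union updates on \crefrange{alg:ac:knownacceptorsme1}{alg:ac:knownacceptorsq} mirror \crefrange{alg:optimality:knownacceptorsme}{alg:optimality:knownacceptorsq}, so the invariant is preserved. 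Monotonicity (\Cref{observation:knowledge only increases}) handles the fact that real knowledge may have arrived earlier. FIFO delivery and the fact that nobody freezes in $g_r$ ensure that all these broadcasts are actually handled.

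Applying the invariant at $r$ at time $t_r$ gives $AlgReqs[r][q]\subseteq\t{known\_acceptors}[q]$ at $r$ for every key $q$, and $r$ has accepted $v_r$. So \t{can\_commit}$(v_r)$ returns \t{True} by time $t_r$, and $r$ commits and decides by then. The main obstacle is the timing subtlety in the induction. Knowledge that $s$ held when sending $m$ in $e_r$ must have been broadcast by $s$ in $g_r$ at or before that moment, even though Alg may have sent $m$ in response to a self-message or without any growth of evidence. We would handle this by showing that the knowledge of $s$ in $g_r$ dominates \t{knowledge\_of\_s} from the moment $s$ received the last message of $M$ before sending $m$, and that Algorithm~\ref{alg:ac:disseminate} broadcasts immediately upon that growth. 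This uses the paper's convention that self-messages take zero time and that events at a process are ordered as stipulated in the proof of \Cref{theorem:optimality}.
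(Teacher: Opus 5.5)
Your proposal is correct and follows essentially the same route as the paper. You reduce to a per-proposer bound, rule out \t{DoAdopt}/\t{Freeze} in the single-proposer run of $\sysname_{AlgReqs}$, and prove by induction on reception times that the knowledge computed by Algorithm~\ref{alg:optimality:knowledge} is contained in the actual \t{known\_acceptors} of $\sysname_{AlgReqs}$, using the last broadcast $s$ made at or before sending $m$ (the paper's $t_2$ argument in \Cref{lemma:optimality:known_acceptors of Alg are subset of KCensus}); the only cosmetic difference is that you bound first-acceptance times along the happened-before chain, whereas the paper isolates this as a shortest-path lemma (\Cref{lemma:optimality:first message is received in KCensus at the same time as Alg or before}).
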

\begin{proof}
For every process $r \notin \mathcal{F}$, let \textcolor{cyan}{$e_r^*$} be a single-proposer execution of $\sysname_{AlgReqs}$ in which $\sysname_{AlgReqs}$.\t{Propose($\camera{v_r}$)} is invoked on $r$ at time 0, and the network is stable such that
processes in $\mathcal{F}$ are crashed, 
and message transmission among correct processes is according to the latencies of $\mathcal{L}$.
We next show that $r$ decides at time $\leq t_r$ in $e_r^*$. As $r$ decides at time $t_r$ in $e_r$, which is the equivalent execution of Alg, then the latency of Alg is not smaller than that of $\sysname_{AlgReqs}$ for $r$ as proposer. Taking the weighted objective latency over executions for each correct process as the proposer, we get QED.

By termination, $r$ decides in $e_r^*$.
Hence, the \t{Propose($\camera{v_r}$)} call invoked on $r$ must break from the wait on \cref{alg:ac:propose:until}.
Since no \t{DoAdopt} messages are received in $e_r^*$ by \Cref{no DoAdopt in e_r^*}, the \t{Propose($\camera{v_r}$)} call breaks from the wait on \cref{alg:ac:propose:until} only when \t{can\_commit($\camera{v_r}$)} returns \t{True}.
If \t{can\_commit($\camera{v_r}$)} by $r$ does not return \t{True} in $e_r^*$ before time $t_r$, then at $t_r$ it does by \Cref{lemma:optimality:can_commit returns true at t_r}.
This triggers a commit event in the adopt-commit protocol (\cref{alg:ac:propose:commit}), which in turn triggers a decision event in the consensus protocol on $r$ (\cref{alg:consensus:propose:decide upon commit}).
\end{proof}

Before presenting the main lemmas used in the proofs of the above claims, we bring several supporting observations and lemmas that will facilitate the proof of these main lemmas. These statements characterize the processes in $K_r$ as well as values in \t{known\_acceptors}.

\begin{observation}\label{observation:optimality:r in K_r}
For any process $r$, $r \in K_r$.  
\end{observation}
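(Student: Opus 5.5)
The statement follows directly from how $K_r$ is defined, so I plan to unfold the definitions instead of relying on the heavier claims.

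Recall that $K_r = AlgReqs[r]\t{.keys()}$. Also, $AlgReqs[r]$ is \t{array\_to\_dict(known\_acceptors[$r$])}, where \t{known\_acceptors[$r$]} is the value obtained after Algorithm~\ref{alg:optimality:knowledge} has processed every message in $e_r$ known by $r$ at $t_r$. By the definition of \t{array\_to\_dict}, a process $p$ is a key exactly when \t{known\_acceptors[$r$][$p$]} is nonempty. It therefore suffices to show that \t{known\_acceptors[$r$][$r$]} is nonempty at the end of the processing.

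\begin{enumerate}
\item \cref{alg:optimality:init r} initializes \t{known\_acceptors[$r$][$r$]} to $\{r\}$.
\item The loop only ever modifies entries through the union assignments on \cref{alg:optimality:knownacceptorsme,alg:optimality:knownacceptorsq}. A union can only add elements, so once $r$ is in \t{known\_acceptors[$r$][$r$]} it stays there.
\item Hence, after all messages are processed, $r \in \t{known\_acceptors}[r][r]$. This set is nonempty, so $r$ is a key of $AlgReqs[r]$, i.e., $r\in K_r$.
\end{enumerate}

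Alternatively, the observation is exactly the first part of item (3) of \Cref{claim:optimality:reflexive graph}, which states that $r$ is a key of $AlgReqs[r]$. Either route works. I would use the direct monotonicity argument above so that the observation does not depend on the claim's inductive proof, since that proof itself refers to later lemmas.

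I expect no real obstacle. The only point to check is that no step ever reassigns or removes elements from \t{known\_acceptors}, and the pseudocode confirms that every update is a $\cup=$ assignment.
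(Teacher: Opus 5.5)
Your proof is correct and matches the paper's argument. The paper also observes that \t{known\_acceptors[$r$][$r$]} is initialized to $\{r\}$ on \cref{alg:optimality:init r}, that \t{array\_to\_dict(known\_acceptors[$r$])} is only ever extended (every update is a union), and hence that $r$ is still a key of $AlgReqs[r]$ after Algorithm~\ref{alg:optimality:knowledge} finishes.
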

\begin{proof}
\t{known\_acceptors[$r$][$r$]} is initialized to a non-empty set on \cref{alg:optimality:init r}. This adds the key $r$ to \t{array\_to\_dict(known\_acceptors[$r$])}.
The dictionary \t{array\_to\_dict(known\_acceptors[$r$])} is only ever extended.   
As $AlgReqs[r]$ is assigned the value of \t{array\_to\_dict(k\-nown\_acceptors[$r$])} after completing the run of Algorithm~\ref{alg:optimality:knowledge}, $r \in AlgReqs[r]$\t{.keys()} $= K_r$.
\end{proof}

\begin{observation}\label{observation:optimality:message received before known received or sent - is known}
Consider a process $r$. A message $m$, received by some process $p$ in $e_r$ before $p$ receives or sends some message $m'$---which is known by $r$ at $t_r$, is also known by $r$ at $t_r$.
\end{observation}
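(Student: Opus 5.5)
This follows directly from the definition of \emph{known message} together with the transitivity of happened-before. Since $m'$ is known by $r$ at $t_r$, either $m'$ is received by $r$ at or before $t_r$, or the receipt of $m'$ happened-before some message receipt $\rho$ at $r$ occurring at or before $t_r$. The plan is to show that the receipt of $m$ by $p$ happened-before some receipt at $r$ no later than $t_r$, or coincides with one.

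First, we locate the event of $m'$ that lies at $p$. If $p$ receives $m'$, let $a$ denote that receipt. Then the receipt of $m$ precedes $a$ at the same process $p$, so by clause (1) of happened-before it happened-before $a$. If instead $p$ sends $m'$, let $a$ denote that send. Again the receipt of $m$ happened-before $a$ by clause (1). By clause (2), $a$ happened-before the receipt of $m'$ by its target.

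Second, we chain to $r$ by cases on why $m'$ is known.

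\emph{Case (i): $m'$ is received by $r$ by $t_r$.} If $m'$ was sent by $p$, the receipt of $m$ happened-before the receipt of $m'$ at $r$. Hence $m$ is known. If $p$ is the receiver of $m'$, then $p=r$, so $m$ is received by $r$ before that receipt, hence also by $t_r$, and $m$ is known directly.

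\emph{Case (ii): the receipt of $m'$ happened-before a receipt $\rho$ at $r$ by $t_r$.} In both subcases, the receipt of $m$ happened-before the receipt of $m'$, either directly or via $a$. Transitivity (clause (3)) then gives that the receipt of $m$ happened-before $\rho$.

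The only delicate point is the degenerate situations where $p=r$ or where $m'$ is a self-message. For a self-message, sending and receiving are both events at $p$. The chain still goes through, because clause (1) handles same-process ordering, and a receipt by $r$ at a time $\le t_r$ already suffices. No earlier result from the paper is needed beyond the definitions.
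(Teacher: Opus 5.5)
Your proof is correct and follows essentially the same route as the paper. Both show that the receipt of $m$ happened-before the receipt of $m'$ (through the event of $m'$ at $p$), then split on the two clauses of the definition of a known message and conclude by transitivity; your separate treatment of $p=r$ and of self-messages is just slightly more explicit than the paper's one-line argument.
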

\begin{proof}
Message $m'$ must be received by its target, as it is known by $r$.
Clearly, $m$'s receipt happened-before $m'$'s receipt. 
Additionally, as $m'$ is known by $r$ at $t_r$, $m'$ is received by $r$ up to time $t_r$, or the receipt of $m'$ happened-before the receipt of some message $m^*$ at $r$ occurring up to time $t_r$. 
In the first case, $m$'s receipt happened-before $m'$'s receipt by $r$ occurring at time $\leq t_r$, and in the latter case---$m$'s receipt happened-before $m^*$'s receipt by $r$ occurring at time $\leq t_r$. Thus, $m$ is known by $r$ at time $t_r$.
\end{proof}

Thanks to our assumptions that at a certain time a process $p$ may receive a single message from another process (possibly followed by messages from $p$ to itself); and that $p$ does not send messages right before receiving a message from another process at the exact same time---we may deduce the following:

\begin{corollary}\label{corollary:optimality:messages sent since t_pr are not known}
Consider a process $r$. Messages sent in $e_r$ by any process $p \in K_r$ at time $t_{p,r}$ or later are not known by $r$ at $t_r$.
\end{corollary}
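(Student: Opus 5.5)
The plan is a contradiction argument. Suppose some message $m$ is sent in $e_r$ by $p \in K_r$ at a time $t \ge t_{p,r}$ and is known by $r$ at $t_r$. Since $m$ is known, it is received by its target and that receipt happened-before (or is) some receipt at $r$ at time $\le t_r$. The goal is to show that the message $m_0$ defining $t_{p,r}$, namely the first message from another process received by $p$ that is not known by $r$ at $t_r$, would then itself be known. That contradicts its definition.

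The argument splits on how $t_{p,r}$ is defined. If no such $m_0$ exists, then $t_{p,r} = t_r + \epsilon$. In that case $m$ is sent after $t_r$, so $m$ cannot be received before $t_r$: transmission to another process takes positive time, and a self-message sent after $t_r$ is also received after $t_r$. Its receipt therefore cannot happened-before any receipt at $r$ at time $\le t_r$, so $m$ is not known.

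Otherwise $m_0$ is received by $p$ at time $t_{p,r}$, and we need to show that $m_0$'s receipt precedes the sending of $m$ in $p$'s local order. If $t > t_{p,r}$, this is immediate. If $t = t_{p,r}$, we use the standing simultaneity assumption: a process does not send right before receiving a message from another process at the same instant, and receives at most one message from another process per instant. Hence the receipt of $m_0$ is the first receipt or sending event by $p$ at $t_{p,r}$, so it precedes the sending of $m$. In both subcases $m_0$ is received by $p$ before $p$ sends $m$, and $m$ is known by $r$ at $t_r$. Applying \Cref{observation:optimality:message received before known received or sent - is known} with $m'=m$ then gives that $m_0$ is known by $r$ at $t_r$, which contradicts the choice of $m_0$.

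The main obstacle is the equal-time case $t = t_{p,r}$, which is why the corollary is stated as following from the simultaneity assumptions. I would state explicitly that steps of $p$ at a single instant are totally ordered, with the external receipt first, and that happened-before follows this order. The observation only needs $m_0$'s receipt to come before the sending of $m$ in $p$'s local order, so once that ordering is fixed, the remaining steps are routine.
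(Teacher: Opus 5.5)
Your proof is correct and follows the paper's own justification. The paper derives this corollary from two ingredients: the simultaneity assumptions, which make the external receipt at $t_{p,r}$ the first event of $p$ at that instant, and the observation that a message received before $p$ receives or sends a known message is itself known. That is exactly the contradiction in your main case; your explicit treatment of the default case $t_{p,r}=t_r+\epsilon$, using that happened-before never goes backward in time, fills in a detail the paper leaves implicit.
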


\begin{lemma}\label{lemma:optimality:backward induction proved facts on known_acceptors and AlgReqs}
Consider some process $r$.
For any message known by $r$ at $t_r$, received by some process $p$ in $e_r$,
the following holds after processing the message in Algorithm~\ref{alg:optimality:knowledge}:

(1) $p \in$ \t{array\_to\_dict(known\_acceptors[$p$]).keys()}, 

(2) all of the keys in \t{array\_to\_dict(known\_acceptors[$p$])} are in $K_r$, and

(3) \t{known\_acceptors[$p$][$q$]} $\subseteq$ $AlgReqs[r][q]$ for any process $q$.
\end{lemma}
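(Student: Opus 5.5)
The plan is to prove (1) directly and to reduce (2) and (3) to one monotonicity statement. That statement says the knowledge recorded at $p$ right after processing a known message is carried, pointwise, into $r$'s final array along a happened-before chain.

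For (1), when a message $m$ received by $p$ is processed in Algorithm~\ref{alg:optimality:knowledge}, \cref{alg:optimality:knownacceptorsme} adds $p$ to \t{known\_acceptors[$p$][$p$]}. That set is therefore non-empty afterwards, so $p$ is a key of \t{array\_to\_dict(known\_acceptors[$p$])}. The same argument covers the degenerate case $p=r$ at initialization.

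The core claim is this. Let $A_p$ be the value of \t{known\_acceptors[$p$]} right after processing $m$, and let $A_r$ be the final value of \t{known\_acceptors[$r$]}. I want to show $A_p[q]\subseteq A_r[q]$ for every process $q$. Given the claim, (3) is immediate, since $AlgReqs[r][q]$ is exactly $A_r[q]$ when this set is non-empty, and if $A_r[q]$ is empty then so is $A_p[q]$. Claim (2) also follows: any key $q$ of \t{array\_to\_dict}$(A_p)$ has $A_p[q]\neq\emptyset$, so $A_r[q]\neq\emptyset$, so $q\in K_r$.

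I prove the claim along a happened-before chain. If $p=r$, it follows from monotonicity, since sets in Algorithm~\ref{alg:optimality:knowledge} only grow under $\cup=$. Otherwise, because $m$ is known by $r$ at $t_r$, its receipt happened-before the receipt of some $m^*$ at $r$ at a time $\le t_r$. Unfolding the happened-before relation gives a sequence of messages $m=m_0,m_1,\dots,m_k=m^*$. Each $m_{i+1}$ is sent by the receiver of $m_i$ after it received $m_i$. Each $m_i$ is in $M$, because its receipt happened-before a receipt at $r$ by $t_r$ (as in \Cref{observation:optimality:message received before known received or sent - is known}). I then induct on $i$, with the invariant that the receiver's array right after processing $m_i$ dominates $A_p$ pointwise.

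For the inductive step, let $s$ be the receiver of $m_i$. The value \t{knowledge\_of\_s} used for $m_{i+1}$ is $s$'s array after the last $M$-message that $s$ received before sending $m_{i+1}$. That message is $m_i$ or a later one, so by monotonicity this value dominates $s$'s array after $m_i$. \Cref{alg:optimality:knownacceptorsq} then unions \t{knowledge\_of\_s[$q$]} into the receiver's entry for every $q$, which preserves the invariant. At the end, $r$'s array after $m^*$ dominates $A_p$, and monotonicity up to the final processed message gives domination by $A_r$.

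The main obstacle is the ordering bookkeeping. I must check that processing $M$ by reception time is consistent with the chain: $m_{i+1}$ must be processed after $m_i$, and "the last message received by $s$ prior to sending $m_{i+1}$" must really include $m_i$. Ties at equal times are the delicate case, with zero-latency self-messages and sends and receipts at the same instant. I would handle them with the stated assumptions that a process receives at most one message from another process per instant, and never sends immediately before such a receipt at the same time. With these, the receipt of $m_i$ strictly precedes the send of $m_{i+1}$ in $s$'s local order. Processing in that order, breaking ties by local step order, then respects the chain.
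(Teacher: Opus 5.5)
Your proposal is correct and is essentially the paper's argument. The paper applies the same propagation step: if $p \neq r$, then after receiving the known message $m$, $p$ sends some known $m^*$, and the union in the inner loop carries $p$'s entries into the receiver's array. It organizes this as a backward induction over the processing order, with base case the last known message, which must be received by $r$. You instead unfold the happened-before chain explicitly and induct forward along it, and you derive (2) from your domination claim rather than proving (2) and (3) in parallel, which is a harmless simplification.
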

\begin{proof}
We prove by induction on messages in $e_r$ known by $r$ at $t_r$, ordered by the reverse of their processing order in Algorithm~\ref{alg:optimality:knowledge}.
For the base case, consider the last message $m$, known by $r$ at $t_r$, received in $e_r$ up to time $t_r$.
$m$ must be received by $r$, because if it is received by another process, its receipt must be followed by sending another message that is known by $r$ at $t_r$, which is received up to time $t_r$---contradicting the receipt of $m$ being last.
Condition (1) holds due to \cref{alg:optimality:init r}.
Conditions (2) and (3) hold because $AlgReqs[r]$ is set to the value of \t{array\_to\_dict(known\_acceptors[$r$])} after processing $m$.

Now suppose the statement holds for messages, known by $r$ at $t_r$, received after some message $m$, known by $r$ at $t_r$, is received by process $p$.
As $m \in M$ for $M$ defined on \cref{alg:optimality:known messages}, $m$ is processed in the for loop on \cref{alg:optimality:for loop}.
On \cref{alg:optimality:knownacceptorsme}, the key $p$ is added to the \t{array\_to\_dict(known\_acceptors[$p$])} dictionary (if it is not yet there), hence condition (1) holds.

As for conditions (2) and (3): if $p=r$, then since all processes in \t{array\_to\_dict(known\_acceptors[$r$]).keys()} after processing $m$ remain in it until $t_r$, they are in $K_r$ and condition (2) holds. Similarly, for any process $q$, all processes in \t{known\_acceptors[$r$][$q$]} after processing $m$ remain in it until $t_r$, hence, they are in $AlgReqs[r][q]$ and condition (3) holds.
Else ($p \neq r$),
as $m$ is a message known by $r$ at $t_r$, then after receiving $m$, $p$ must send to some process $p^*$ some message $m^*$, known by $r$ at $t_r$. 
We start with condition (2):
All processes that were in \t{array\_to\_dict(known\_acceptors[$p$]).keys()} after $m$ is processed in Algorithm~\ref{alg:optimality:knowledge}, remain there until right after processing the last message in $M$ received by $p$ prior to sending $m^*$.
After processing $m^*$, all processes that were in \t{array\_to\_dict(known\_acceptors[$p$]).keys()} after processing $m$, are in \t{array\_to\_dict(known\_acceptors[$p^*$]).keys()}, due to \crefrange{alg:optimality:inner for loop}{alg:optimality:knownacceptorsq}.
By the inductive hypothesis, all processes that were in \t{array\_to\_dict(known\_acceptors[$p^*$]).keys()} after processing $m^*$ are in $K_r$.
Similarly, for condition (3):
All processes that were in \t{known\_acceptors[$p$][$q$]} for any process $q$ after processing $m$ remain there until right after processing the last message in $M$ received by $p$ prior to sending $m^*$. 
After processing $m^*$, all processes that were in \t{known\_acceptors[$p$][$q$]} for any process $q$ after processing $m$, are in \t{known\_acceptors[$p^*$][$q$]}, due to \crefrange{alg:optimality:inner for loop}{alg:optimality:knownacceptorsq}.
By the inductive hypothesis, all processes that were in \t{known\_acceptors[$p^*$][$q$]} after processing $m^*$ are in $AlgReqs[r][q]$.
\end{proof}

\begin{lemma}\label{lemma:optimality:only processes in K_r send and receive known messages in e_r}
Consider some process $r$. Only processes in $K_r$ receive and send in $e_r$ messages known by $r$ at $t_r$.
\end{lemma}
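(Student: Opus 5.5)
The plan is to reduce the lemma to \Cref{lemma:optimality:backward induction proved facts on known_acceptors and AlgReqs}. Fix $r$ and a message $m$ known by $r$ at $t_r$ in $e_r$, sent by $s$ to $p$. There are two things to show: the receiver $p$ is in $K_r$, and the sender $s$ is in $K_r$.

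\textbf{Receiver.} Since $m$ is known by $r$, it is received, so $m\in M$ and Algorithm~\ref{alg:optimality:knowledge} processes it. Applying conditions (1) and (2) of \Cref{lemma:optimality:backward induction proved facts on known_acceptors and AlgReqs} gives $p\in$ \t{array\_to\_dict(known\_acceptors[$p$]).keys()} $\subseteq K_r$.

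\textbf{Sender.} There are two cases.
\begin{itemize}
\item If $s=r$, then $s\in K_r$ by \Cref{observation:optimality:r in K_r}.
\item Otherwise, I would use the standing assumption of \Cref{theorem:optimality}: some \t{Propose} invocation happened-before any message sending, and in $e_r$ only $r$ proposes. It then remains to show that \t{known\_acceptors[$s$]} is nonempty at $s$'s own entry, i.e.\ \t{knowledge\_of\_s[$s$]} $\neq\emptyset$. Given that, \cref{alg:optimality:knownacceptorsq} places $s$ into \t{known\_acceptors[$p$][$s$]}. Condition (3) of the lemma then gives $s\in AlgReqs[r][s]$, and since $AlgReqs[r][s]\neq\emptyset$, we get $s\in K_r$.
\end{itemize}

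\textbf{Main obstacle: $s\neq r$ has nonempty self-entry.} Because $s\neq r$ does not propose, the chain witnessing that \t{Propose} at $r$ happened-before the sending of $m$ must include a message received by $s$ before it sends $m$. I would argue that this received message is itself known by $r$ at $t_r$. Its receipt happened-before the sending of $m$, hence before the receipt of $m$, which is known; \Cref{observation:optimality:message received before known received or sent - is known} then applies, with the sending of $m$ as the event. So that message lies in $M$, and when processed it sets \t{known\_acceptors[$s$][$s$]} $\ni s$ via \cref{alg:optimality:knownacceptorsme}. Since it precedes the sending of $m$, it is included in \t{knowledge\_of\_s}.

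\textbf{Fallback.} If the happened-before step is delicate, I would instead argue directly by well-founded induction along reception times. For the earliest-received known message, its sender must be $r$, because no one else can send before receiving anything, by the assumption. Each later sender $s\neq r$ has received an earlier known message, so it already has a nonempty self-entry, and the same argument as above then places it in $K_r$.
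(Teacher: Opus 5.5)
Your proposal is correct and essentially matches the paper's proof. The receiver case uses conditions (1) and (2) of the backward-induction lemma; the sender case is $s=r$, or for $s\neq r$ the Propose-happened-before assumption plus the observation that a message $m^*$ received by $s$ before sending $m$ is itself known by $r$ at $t_r$. The only difference is your last step: the paper simply reapplies the receiver case to $m^*$ to get $s\in K_r$. Your detour through \t{knowledge\_of\_s[$s$]}, \t{known\_acceptors[$p$][$s$]}, and condition (3) is valid but unnecessary, and so is the fallback induction.
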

\begin{proof}
We start with the case of a process $p$ that receives a message known by $r$ at $t_r$.
From conditions (1) and (2) in \Cref{lemma:optimality:backward induction proved facts on known_acceptors and AlgReqs},
it specifically follows that $p \in K_r$.

We proceed to the second case, of a process $s$ sending in $e_r$ a message $m$ known by $r$ at $t_r$.
If $s=r$, then $s \in K_r$ by \Cref{observation:optimality:r in K_r}.
Else, due to the assumption that a \t{Propose} invocation happened-before any message sending in Alg, and the fact that $r \neq s$ is the only proposer in $e_r$, $s$ receives some message $m^*$ before sending $m$.
By \Cref{observation:optimality:message received before known received or sent - is known}, $m^*$ is known by $r$ at $t_r$. Then by the first case above ($p=s$ receives a message $m^*$ known by $r$ at $t_r$), $s \in K_r$.
\end{proof}

The following lemmas are used in the proof of \Cref{claim:optimality:valid}.

\begin{lemma}\label{lemma:e_2 indistinguishable from e_r to p until last received - known in e_r or in e_2 if p in K_r}
Consider $r$ chosen in the proof of \Cref{claim:optimality:valid} and $e_2$ defined in that proof.
$e_2 \stackrel{p}{\approx}_{< t_{p,r}} e_r$ for any process $p \in K_r$.
\end{lemma}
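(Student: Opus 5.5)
We argue by strong induction on time. The hypothesis is that, for every $p\in K_r$, $p$ receives exactly the same messages from other processes at the same times in $e_2$ and in $e_r$ strictly before $t_{p,r}$. The proposal invocations agree trivially. Only $r$ calls \t{Propose} before any time of interest in either execution. In $e_2$, $r'$ also proposes, but $r'\notin K_r$, and every message between $K_r$ and its complement is lost in $e_2$, so this invocation is invisible to $K_r$. Self-messages are fine: they are determined by the process's own prior history, so they agree once incoming messages agree. Because Alg is deterministic, identical histories up to a point produce identical sends up to that point.

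The induction step fixes a message $m$ received by $p\in K_r$ at some time $t<t_{p,r}$ in $e_r$, sent by $s\neq p$. By the definition of $t_{p,r}$, $m$ is known by $r$ at $t_r$ in $e_r$. By \Cref{lemma:optimality:only processes in K_r send and receive known messages in e_r}, $s\in K_r$. We must show $s$ sent $m$ before $t_{s,r}$. Suppose instead $s$ sent $m$ at or after $t_{s,r}$. If $t_{s,r}$ is an actual receipt time, \Cref{corollary:optimality:messages sent since t_pr are not known} says $m$ would not be known, a contradiction. If $t_{s,r}=t_r+\epsilon$, then $m$ was sent after $t_r$, and a message sent after $t_r$ cannot be known at $t_r$, again a contradiction. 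So $m$ was sent at some time $t'<t_{s,r}$, and $t'<t$ because latencies between distinct processes are positive. By the induction hypothesis applied to $s$ at times $<t'$, together with determinism, $s$ sends the same $m$ at time $t'$ in $e_2$. The rule that messages sent by $s$ at or after $t_{s,r}$ are lost does not apply, and the link stays inside $K_r$ with latencies $\mathcal{L}$, so $m$ arrives at $p$ at time $t$ in $e_2$. The process $s$ is alive, since crashes in $e_2$ happen only after $t_r$; if $t\ge t_r$, then $t_{p,r}=t_r+\epsilon$ and the message is still delivered before the crash instant.

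The converse direction shows that $p$ receives nothing extra in $e_2$ before $t_{p,r}$. Any message received by $p$ in $e_2$ comes from some $s\in K_r$, because cross-set messages are lost. It was sent before $t_{s,r}$, because later messages from $s$ are lost. By the induction hypothesis that message is also sent in $e_r$ at the same time and with the same latency, so it arrives in $e_r$ at the same time.

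The main obstacle is making the induction well-founded while keeping the two directions simultaneous. I would induct on the time $t$, with the statement at stage $t$ being ``for all $p\in K_r$, histories agree on $[0,\min(t,t_{p,r}))$''. Strictly positive latencies ensure every relevant send time is strictly smaller. A second subtlety is the tie convention at $t_{p,r}$ itself. The paper's assumption that an external receipt is the first event at its timestamp guarantees that messages $p$ sends at exactly $t_{p,r}$ already fall on the ``lost'' side. This matches what \Cref{corollary:optimality:messages sent since t_pr are not known} uses.
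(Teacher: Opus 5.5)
Your proposal is correct and follows the paper's proof essentially step for step. The paper also inducts over receipt times. A message received in $e_r$ before $t_{p,r}$ is known by $r$ at $t_r$, so it was sent from $K_r$ before $t_{s,r}$, by the same lemma and corollary you use. A message received in $e_2$ is handled by the loss rules. In both directions, determinism plus identical latencies carry the message over to the other execution.

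There are two harmless slips. First, the hypothesis for $s$ must cover receipts at time $t'$ itself, not only at times $<t'$, because a receipt at $t'$ may precede the send of $m$; your stage-$t$ formulation with $t'<t$ does cover this. Second, the aside that $t\ge t_r$ forces $t_{p,r}=t_r+\epsilon$ is unjustified, since $t_{p,r}$ may be the receipt time of an unknown message arriving after $t_r$. It is also unnecessary: any message known by $r$ at $t_r$ is received by time $t_r$, and hence before the crash in $e_2$.
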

\begin{proof}
We prove that $e_2 \stackrel{p}{\approx}_{t} e_r$ for any time $t$ such that a message is received by some process $p \in K_r$ at time $t < t_{p,r}$ in $e_r$ or $e_2$. This is enough, as $e_2$ and $e_r$ remain indistinguishable to $p$ from the last time $p$ receives such a message up to time $t_{p,r}$ (since during this time frame, there are no additional message receipts by $p$ in $e_2$ or $e_r$, nor any different proposals at $p$).
We prove by induction on message receipt times.
As \t{Propose} invocations on processes in $K_r$ are the same in $e_2$ and $e_r$, it remains to show that each process $p \in K_r$ receives the same messages at the same times in both executions until any time $t < t_{p,r}$ (including $t$) in which it receives a message in one of the executions.

For the base case, we consider time 0.
The base case trivially holds for all processes in $K_r$ except for $r$: 
Due to the assumption that a \t{Propose} invocation happened-before any message sending in Alg, and the fact that proposals are invoked only at time 0 in $e_2$ and $e_r$, and only on $r$ and $r' \notin K_r$,
no messages are received by processes $\neq r$ in $K_r$ up to time 0. Thus, $e_2$ and $e_r$ are indistinguishable to these processes at time 0. As for $r$, it might receive messages from itself at time 0. In that case, $e_2$ and $e_r$ are indistinguishable to $r$ at time 0 before it receives the first message from itself. Hence, $r$ sends the same message to itself in both executions, then receives it in both executions, and they remain indistinguishable to $r$ after receiving the message. The same argument applies to any subsequent messages $r$ sends itself at time 0. Therefore, $e_2$ and $e_r$ are still indistinguishable to $r$ at time 0 after receiving all of these self messages.

Suppose the statement holds prior to time $0 < t < t_{p,r}$ for some process $p \in K_r$, and $p$ receives only messages from itself at time $t$ in $e_r$ and $e_2$.
Then by the same arguments as in the base case together with the inductive hypothesis, these messages are received in both executions and leave the executions indistinguishable to $p$. 
So suppose the statement holds prior to time $0 < t < t_{p,r}$ for some process $p \in K_r$, and $p$ receives a message $m$ at time $t$ in $e_r$ or $e_2$ from another process.
Note that in case $m$ is received in $e_r$, then
since $p$ receives its first message from another process that is not known by $r$ at $t_r$ (if any) at time $t_{p,r}$, $m$ is known by $r$ at $t_r$.
For messages received by $p$ in $e_2$ or $e_r$ before time $t$, we may apply the inductive hypothesis, which yields that $e_2$ is indistinguishable to $p$ from $e_r$ until the receipt time of any such message.
Hence, $e_2$ and $e_r$ are indistinguishable to $p$ 
prior to $t$.
To show that $e_2$ and $e_r$ are indistinguishable to $p$ until $t$ (including), we will show that $m$ is received in both executions at time $t$. This is enough as by assumption, no message other than $m$ is received by $p$ at $t$ from other processes;  
and messages $p$ possibly receives from itself after $m$ at time $t$ 
are received in both executions and leave the executions indistinguishable to $p$, by the same arguments as in the base case. 

Let $s$ be the sender of $m$.
To apply the inductive hypothesis to messages received by $s$ in either execution until the time in which $s$ sends $m$, we need to show they satisfy the conditions of the induction's statement: $s \in K_r$, and these messages are received before time $t_{s,r}$. For the latter, we show the stronger statement that $s$ sends $m$ before time $t_{s,r}$. We start with the case in which $m$ is received in $e_r$. 
In this case, as $m$ is known by $r$ at $t_r$ (as noted above), then $s \in K_r$ by \Cref{lemma:optimality:only processes in K_r send and receive known messages in e_r}.
Additionally, 
as $m$ is known by $r$ at $t_r$, then
by \Cref{corollary:optimality:messages sent since t_pr are not known}, $s$ sends $m$ before time $t_{s,r}$.
As for the case that $m$ is received in $e_2$, $p \in K_r$ receives only messages from processes in $K_r$ in $e_2$, hence $s \in K_r$. In addition, $m$ is sent before $t_{s,r}$, as any messages $s$ sends in $e_2$ at $t_{s,r}$ or later are lost.

So by the inductive hypothesis, $e_2$ and $e_r$ are indistinguishable to $s$ until receiving the last message (if any) in either execution up to the time in which it sends $m$. They hence remain indistinguishable to $s$ from that moment (or from time 0 in case no message receipts by $s$ precede the sending of $m$) until it sends $m$ (in $e_r$ or $e_2$). Therefore, $m$ is sent in the other execution as well at the same time. 
Given that $m$ originally appears in $e_2$, it is not lost and reaches its target in $e_r$ as well, since no messages to correct processes are lost in $e_r$.
Given that $m$ originally appears in $e_r$, it is not lost and reaches its target in $e_2$ as well, since $m$ is a message from a process in $K_r$ to a process in $K_r$, sent by $s$ before $t_{s,r}$.
Furthermore, $m$ reaches its target at the same time in both executions---according to the corresponding latency in $\mathcal{L}$.
That is, $m$ is received in both executions at time $t$ and we are done.
\end{proof}

\begin{corollary}\label{corollary:e_2 indistinguishable from e_r to r}
$e_2 \stackrel{r}{\approx}_{t_r} e_r$,
for $r$ chosen in the proof of \Cref{claim:optimality:valid} and $e_2$ defined in that proof.
\end{corollary}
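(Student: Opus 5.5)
The corollary should follow directly from \Cref{lemma:e_2 indistinguishable from e_r to p until last received - known in e_r or in e_2 if p in K_r} applied to $p=r$. I will first check that the lemma applies: $r\in K_r$ by \Cref{observation:optimality:r in K_r}. The lemma then gives $e_2 \stackrel{r}{\approx}_{<t_{r,r}} e_r$. What remains is to extend this strict-inequality indistinguishability up to and including $t_r$.

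By definition, $t_{r,r}$ is the time at which $r$ receives, in $e_r$, its first message from another process that is not known by $r$ at $t_r$, or $t_r+\epsilon$ if there is none. Any message $r$ receives at a time $\le t_r$ is known by $r$ at $t_r$, by the very definition of ``known'' (it is received by $r$ up to $t_r$). So the first unknown message from another process, if it exists, is received strictly after $t_r$. Either way, $t_{r,r} > t_r$. Hence $e_2 \stackrel{r}{\approx}_{<t_{r,r}} e_r$ implies that $r$ receives the same messages at the same times in both executions up to and including $t_r$.

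I also need to check the proposal condition of the indistinguishability definition. In both $e_r$ and $e_2$, \t{Propose($v_r$)} is invoked on $r$ at time $0\le t_r$, so this condition holds trivially. Together these give $e_2 \stackrel{r}{\approx}_{t_r} e_r$.

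The only subtle point is the boundary: ensuring $t_{r,r}$ is strictly larger than $t_r$ rather than equal to it. The argument above handles this by observing that messages received at exactly $t_r$ are known by $r$ at $t_r$, and that the default value $t_r+\epsilon$ uses $\epsilon>0$. I expect no other obstacle, since all the inductive work is already contained in the lemma.
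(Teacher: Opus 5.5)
Your proof is correct and follows the paper's: you apply the lemma with $p=r$ using $r\in K_r$, and observe that $t_{r,r}$ lies beyond $t_r$. The paper simply asserts $t_{r,r}=t_r+\epsilon$ by definition. Your argument that $t_{r,r}>t_r$ in either case, because every message $r$ receives by $t_r$ is known by $r$ at $t_r$, is a slightly more careful version of that step, since $r$ may receive unknown messages from other processes shortly after deciding.
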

\begin{proof}
$r \in K_r$ by \Cref{observation:optimality:r in K_r}.
Additionally, $t_{r,r} = t_r + \epsilon$ by definition of $t_{r,r}$.
Thus, applying \Cref{lemma:e_2 indistinguishable from e_r to p until last received - known in e_r or in e_2 if p in K_r} to $r$ yields QED.
\end{proof}

\begin{lemma}\label{lemma:e_2 indistinguishable from e_1 to r'}
$e_2 \stackrel{r'}{\approx} e_1$,
for $r'$ chosen in the proof of \Cref{claim:optimality:valid} and $e_1$ and $e_2$ defined in that proof.
\end{lemma}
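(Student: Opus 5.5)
We need to show that $r'$ cannot tell $e_1$ and $e_2$ apart. The plan is to prove the stronger statement that $e_2 \stackrel{p}{\approx} e_1$ for \emph{every} process $p \notin K_r$, by induction on message receipt times; the lemma then follows because $r' \notin K_r$. This closely mirrors the proof of \Cref{lemma:e_2 indistinguishable from e_r to p until last received - known in e_r or in e_2 if p in K_r}.

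First I check proposal invocations. In both executions the only proposal at a process outside $K_r$ is \t{Propose($v_{r'}$)} at $r'$ at time $0$. The other proposal, at $r$, happens only in $e_2$, and $r \in K_r$ by \Cref{observation:optimality:r in K_r}. So condition (1) of indistinguishability holds for every $p \notin K_r$.

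The base case is time $0$. By the assumption that a \t{Propose} invocation happened-before any message sending, no process other than $r'$ receives anything at time $0$. Messages $r'$ sends to itself are generated identically in both executions, so they keep the executions indistinguishable to $r'$.

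For the inductive step, suppose $p \notin K_r$ receives a message $m$ from another process $s$ at time $t$ in either execution, and that the executions are indistinguishable to every process outside $K_r$ at all receipts before $t$. The argument has three parts.

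\begin{enumerate}
\item \textbf{The sender is outside $K_r$.} In $e_1$ every process in $K_r$ crashes at time $0$, so it sends nothing. In $e_2$ every message between $K_r$ and its complement is lost. Hence in either execution $s \notin K_r$.
\item \textbf{The message is sent in both executions.} By the inductive hypothesis applied to $s$, its local history up to sending $m$ is the same in both executions. Since Alg is deterministic, $s$ sends $m$ at the same time in both.
\item \textbf{The message arrives at the same time in both.} In both executions, messages among processes outside $K_r$ follow the same latencies $\mathcal{L'}$, and none of them is lost. So $m$ reaches $p$ at time $t$ in both executions.
\end{enumerate}

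Any self-messages $p$ then receives at time $t$ are handled exactly as in the base case. Together with the single-receipt-per-time assumption, this extends indistinguishability to $p$ through time $t$.

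The step I expect to be the main obstacle is justifying these executions against the model, in particular the \emph{lost} messages in $e_2$. A message from $K_r$ to the outside is not really dropped: its sender crashes right after $t_r$, so it may be treated as sent by a process that later crashes. A message from the outside into $K_r$ is harmless because its target is faulty. I would also need to rule out a circular dependency between the two executions. For processes outside $K_r$, the latencies $\mathcal{L'}$ are the same in both executions and are fixed independently of everything inside $K_r$. The induction therefore only ever ranges over processes outside $K_r$, with no reference to $e_r$. This separation is what makes the argument go through.
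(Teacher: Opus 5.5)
Your proof is correct and follows essentially the same route as the paper. The paper also proves the stronger claim $e_2 \stackrel{p}{\approx} e_1$ for every $p \notin K_r$: proposals outside $K_r$ coincide, no process outside $K_r$ ever receives from $K_r$ (such messages are lost in $e_2$ up to $t_r$, and $K_r$ is down afterwards and from time $0$ in $e_1$), and messages among the remaining processes follow $\mathcal{L'}$ in both executions, so determinism concludes. Your explicit induction on receipt times is a more detailed rendering of the paper's one-paragraph determinism argument, and your remarks on suppressed messages match the general justification the paper gives before \Cref{theorem:optimality}.
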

\begin{proof}
We prove that $e_2 \stackrel{p}{\approx} e_1$ for each $p \notin K_r$.
We start with \t{Propose} invocations. These are the same in $e_2$ and $e_1$ on all processes except for $r$, but $r \in K_r$ by \Cref{observation:optimality:r in K_r}.

We proceed to message receipts.
In both executions---processes not in $K_r$ do not receive messages from processes in $K_r$ (up to time $t_r$ they are lost in $e_2$ if sent, in other times the processes in $K_r$ are down in both executions), and all messages among processes not in $K_r$ are transmitted according to the latencies of $\mathcal{L'}$.
Therefore, all processes not in $K_r$ cannot distinguish between $e_2$ and $e_1$, and communicate only among themselves (namely, with processes that do not distinguish as well). As Alg is deterministic, the two executions are indistinguishable to processes not in $K_r$ at all times.
\end{proof}

The following lemmas are used in the proof of \Cref{claim:optimality:compatible}.

\begin{lemma}\label{lemma:optimality:source in known_acceptors[p][p] after handling received known message}
Consider some processes $r$ and $x$. Assume process $p$ receives in $e_r$ a message $m$, known by $r$ at $t_r$. Further assume either $m$ is sent by $x$ or a sending of some message by $x$ happened-before the sending of $m$.
Then $x \in$ \t{known\_acceptors[$p$][$p$]} right after handling $m$ in Algorithm~\ref{alg:optimality:knowledge}.
\end{lemma}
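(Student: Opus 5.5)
We prove the statement by strong induction on the processing order of messages in $M$ in Algorithm~\ref{alg:optimality:knowledge}, i.e., by reception time in $e_r$. The claim for a message $m$ received by $p$ from sender $s$ depends only on earlier-processed messages received by $s$. This is because \t{knowledge\_of\_s} is taken right after the last message in $M$ received by $s$ before sending $m$.

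Fix $m$, sent by $s$ to $p$, with $m$ known by $r$ at $t_r$. We split on how $x$ is related to $m$.

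\textbf{Case 1: $m$ is sent by $x$, i.e., $s=x$.} When $m$ is processed, line~\ref{alg:optimality:knownacceptorsme} adds \t{knowledge\_of\_s[$s$]} into \t{known\_acceptors[$p$][$p$]}. It then suffices to show $s \in$ \t{knowledge\_of\_s[$s$]}, and we split further.
\begin{itemize}
\item If $s=r$, this holds already from the initialization on line~\ref{alg:optimality:init r}.
\item If $s=p$ (a self message), line~\ref{alg:optimality:knownacceptorsme} adds $\{t\}=\{p\}$ directly.
\item Otherwise, $s \neq r$ and $r$ is the only proposer in $e_r$. By the theorem's assumption that a \t{Propose} invocation happened-before every message sending, $s$ received some message $m^*$ before sending $m$. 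By \Cref{observation:optimality:message received before known received or sent - is known}, $m^*$ is known by $r$ at $t_r$, so $m^* \in M$ and was processed earlier. At that processing step, line~\ref{alg:optimality:knownacceptorsme} put $s$ into \t{known\_acceptors[$s$][$s$]}. By monotonicity (all updates are unions), $s$ is still there when \t{knowledge\_of\_s} is taken.
\end{itemize}

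\textbf{Case 2: some sending by $x$ happened-before the sending of $m$, with $s \neq x$.} The chain from $x$'s sending to $s$'s sending of $m$ must enter $s$ through a message receipt. So there is a message $m'$ received by $s$ before $s$ sends $m$, such that $m'$ is sent by $x$ or a sending by $x$ happened-before the sending of $m'$. This requires unpacking the happened-before definition: take the last inter-process edge into $s$ on the causal chain before the send of $m$. By \Cref{observation:optimality:message received before known received or sent - is known}, $m'$ is known by $r$ at $t_r$, and it is processed before $m$. By the induction hypothesis applied to $m'$ (with receiver $s$), $x \in$ \t{known\_acceptors[$s$][$s$]} right after handling $m'$. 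Monotonicity keeps $x$ in that set until the last processed message received by $s$ before sending $m$, hence $x \in$ \t{knowledge\_of\_s[$s$]}. Line~\ref{alg:optimality:knownacceptorsme} then puts $x$ into \t{known\_acceptors[$p$][$p$]}.

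The main obstacle is making the causal-chain decomposition in Case~2 rigorous. Concretely, we must show that a happened-before path from a send by $x$ to the send of $m$ by $s \neq x$ passes through a receipt at $s$ of a message that itself satisfies the lemma's hypothesis. We handle this by induction on the length of the happened-before derivation. The subtle points are self messages and steps at equal times. For those we rely on the time-ordering conventions stated in the proof of \Cref{theorem:optimality}. We also use the fact that $M$ is processed in reception order, so that $m'$ is processed before $m$ and is reflected in \t{knowledge\_of\_s}.
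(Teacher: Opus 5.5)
Your proof is correct and follows essentially the same route as the paper. Both argue by induction in the processing order of Algorithm~\ref{alg:optimality:knowledge}. The base case is $m$ sent by $x$, split on $x=r$ (initialization) versus $x\neq r$ (then $x$ received an earlier message that is known by $r$). The step handles $s\neq x$ by finding a message $m'$ received by $s$ that satisfies the hypothesis and applying the induction hypothesis to it. Your extra subcase $s=p$ is harmless, and taking the last inter-process edge into $s$ justifies the causal-chain step, which the paper simply asserts.
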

\begin{proof}
Consider some processes $r$ and $x$.
We prove by induction on the following messages, by the order they are processed in Algorithm~\ref{alg:optimality:knowledge}: any message $m$ in $e_r$, known by $r$ at $t_r$, received by some process $p$, such that either $m$ is sent by $x$ or a sending of some message by $x$ happened-before the sending of $m$.

For the base case, we consider any message $m^*$, known by $r$ at $t_r$, sent by $x$ to some process $p$.
If $x \neq r$, then before sending $m^*$, $x$ must have received a message, due to the assumption that a \t{Propose} invocation happened-before any message sending in Alg. This message is known by $r$ at $t_r$ by \Cref{observation:optimality:message received before known received or sent - is known}. Thus, it is processed in Algorithm~\ref{alg:optimality:knowledge}. When it is processed, $x$ is added to \t{known\_acceptors[$x$][$x$]} (if not already there) on \cref{alg:optimality:knownacceptorsme}.
If $x = r$, \t{known\_acceptors[$r$][$r$]} is initialized to $\{r\}$ on \cref{alg:optimality:init r}. 
Whether $x=r$ or not, when $m^*$ is processed in Algorithm~\ref{alg:optimality:knowledge}, $x \in$ \t{knowledge\_of\_s[$x$]}, and is thus added to \t{known\_acceptors[$p$][$p$]} (if not already there) on \cref{alg:optimality:knownacceptorsme}.

Suppose the statement holds until some process $p$ receives from process $s$ a message $m$ known by $r$ at $t_r$, such that $s \neq x$ and a sending of some message by $x$ happened-before the sending of $m$.
Then before sending $m$, $s$ must have received some message $m^*$, such that either $m^*$ is sent by $x$ or a sending of some message by $x$ happened-before the sending of $m^*$.
As $m$ is known by $r$ at $t_r$, then by \Cref{observation:optimality:message received before known received or sent - is known}, $m^*$ is a message known by $r$ at $t_r$. Consequently, we may apply the inductive hypothesis to $m^*$ and conclude that \t{known\_acceptors[$s$][$s$]} contains $x$ after handling $m^*$.
Hence, when processing $m$ in Algorithm~\ref{alg:optimality:knowledge}, \t{knowledge\_of\_s[$s$]} contains $x$. Thus, $x$ is added to \t{known\_acceptors[$p$][$p$]} (if it was not already there) on \cref{alg:optimality:knownacceptorsme}.
\end{proof}

\begin{lemma}\label{lemma:optimality:r in known_acceptors[p][p] after handling received known message}
For any process $r$ and any process $p$ that receives in $e_r$ a message $m$ known by $r$ at $t_r$, 
$r \in$ \t{known\_acceptors[$p$][$p$]} right after handling $m$ in Algorithm~\ref{alg:optimality:knowledge}.
\end{lemma}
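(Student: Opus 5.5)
The plan is to derive this as a special case of \Cref{lemma:optimality:source in known_acceptors[p][p] after handling received known message} with $x=r$. That lemma gives $x\in$ \t{known\_acceptors[$p$][$p$]} after handling $m$ whenever $m$ is sent by $x$, or some sending by $x$ happened-before the sending of $m$. It therefore suffices to show that for every message $m$ in $e_r$, one of these two conditions holds with $x=r$.

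Let $s$ be the sender of $m$. If $s=r$, the first alternative holds directly. Otherwise, we use the theorem's standing assumption that some \t{Propose} invocation happened-before any message sending in Alg. In $e_r$ the only \t{Propose} invocation is at $r$ at time $0$. Hence the sending of $m$ is preceded in happened-before by this invocation.

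The remaining step, and the only slightly delicate one, is to turn ``the invocation at $r$ happened-before the sending of $m$'' into ``a \emph{sending} by $r$ happened-before the sending of $m$.'' Consider a happened-before chain from the invocation at $r$ to the send event at $s\neq r$. By the definition of happened-before, any chain leaving process $r$ must cross a message edge. Its first such edge is a sending event at $r$, and that event happened-before the sending of $m$.

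With either alternative established for $x=r$, \Cref{lemma:optimality:source in known_acceptors[p][p] after handling received known message} applies and gives $r\in$ \t{known\_acceptors[$p$][$p$]} right after $m$ is handled. No separate treatment is needed for $p=r$ or for self-messages: in those cases $s=r$ and the first alternative already covers them.
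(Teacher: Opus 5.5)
Your proposal is correct and follows the paper's proof: both apply \Cref{lemma:optimality:source in known_acceptors[p][p] after handling received known message} with $x=r$, using the standing assumption that a \t{Propose} invocation happened-before every send. Your extraction of an actual send event at $r$ from the happened-before chain makes explicit a step the paper leaves implicit. One small correction: your closing remark is inaccurate, since when $p=r$ receives from another process, or for self-messages at $p\neq r$, the sender is not $r$. This costs nothing, because your $s\neq r$ argument already covers those cases.
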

\begin{proof}
Due to the assumption that a \t{Propose} invocation happened-before any message sending in Alg, either $m$ is sent by $r$ or a message sending by $r$ happened-before the sending of $m$. Hence, we may apply \Cref{lemma:optimality:source in known_acceptors[p][p] after handling received known message} with $x=r$ to $m$. This yields QED.

\end{proof}

\begin{lemma}\label{lemma:optimality:known_acceptors[p].keys() subseteq known_acceptors[p][p] after handling received known message}
For any process $r$ and any process $p$ that receives in $e_r$ a message $m$ known by $r$ at $t_r$, 
\t{array\_to\_dict(known\_accep\-tors[$p$]).keys()} $\subseteq$ \t{known\_acceptors[$p$][$p$]} right after handling $m$ in Algorithm~\ref{alg:optimality:knowledge}.
\end{lemma}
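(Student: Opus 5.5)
We need to prove: after handling m received by p in Algorithm knowledge tracking, every key q of array\_to\_dict(known\_acceptors[p]) (i.e. known\_acceptors[p][q] nonempty) lies in known\_acceptors[p][p].

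The plan is an induction on the messages of $M$ in the order Algorithm~\ref{alg:optimality:knowledge} processes them, proving the invariant: for every process $p$, right after handling any message $m\in M$ received by $p$, every key of \t{array\_to\_dict(known\_acceptors[$p$])} belongs to \t{known\_acceptors[$p$][$p$]}. Two facts from earlier are the tools. First, \Cref{claim:optimality:reflexive graph}'s condition (1) (proved independently by its own induction) tells us that any key $q$ of a dictionary built from \t{knowledge\_of\_s} satisfies $q\in$ \t{knowledge\_of\_s[$q$]}. Second, \Cref{lemma:optimality:source in known_acceptors[p][p] after handling received known message} says that whenever a sending by $x$ happened-before (or is) the sending of $m$, $x$ lands in \t{known\_acceptors[$p$][$p$]} after handling $m$.

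The key step is to identify exactly which keys can appear in \t{known\_acceptors[$p$]} after handling $m$ from $s$. Keys present before handling $m$ are covered by the inductive hypothesis applied to the last earlier message received by $p$ (sets only grow, so membership in \t{known\_acceptors[$p$][$p$]} persists); if there is no such earlier message, the only possible prior key is $p$ itself when $p=r$, from initialization, and $p\in$ \t{known\_acceptors[$p$][$p$]} in any case by \cref{alg:optimality:knownacceptorsme}. Key $p$ is added to its own entry by \cref{alg:optimality:knownacceptorsme}. Any other new key $q$ comes from \cref{alg:optimality:knownacceptorsq}, i.e., \t{knowledge\_of\_s[$q$]} $\neq\emptyset$. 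For such $q$, I will show that a sending by $q$ happened-before or is the sending of $m$: either $q=s$ (then $m$ is sent by $q$), or $q$ appears in \t{known\_acceptors[$s$]} after $s$ processed an earlier known message $m'$; by a small auxiliary induction (keys of \t{known\_acceptors[$s$]} other than $s$ are always inherited from senders along a causal chain, and $s$ itself becomes a key only once $s$ receives a message, before sending), $q$ sent a message causally preceding $m'$, hence preceding $m$. Then \Cref{lemma:optimality:source in known_acceptors[p][p] after handling received known message} with $x=q$ gives $q\in$ \t{known\_acceptors[$p$][$p$]}.

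The main obstacle is this causal-chain auxiliary fact; I would fold it into the same induction by strengthening the invariant to: every key $q$ of \t{array\_to\_dict(known\_acceptors[$p$])} either equals $p$ or sent a message that happened-before the receipt being processed. This strengthened form propagates directly through \cref{alg:optimality:knownacceptorsq}, since the keys of \t{knowledge\_of\_s} are $s$ or senders preceding a message $s$ received before sending $m$. The edge case $q=r=p$ is trivial. Combining the strengthened invariant with \Cref{lemma:optimality:source in known_acceptors[p][p] after handling received known message} then closes the argument.
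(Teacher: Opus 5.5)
Your proof is correct, but it handles the inductive step by a longer route than the paper.

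\textbf{The paper's step.} The paper applies the inductive hypothesis at the sender, to the snapshot \texttt{knowledge\_of\_s}. This snapshot is exactly the state of $s$ right after it handled its last message before sending $m$, or its initial state, where the claim is immediate. The hypothesis gives that every key of \texttt{array\_to\_dict(knowledge\_of\_s)} lies in \texttt{knowledge\_of\_s[$s$]}. Since \cref{alg:optimality:knownacceptorsme} unions \texttt{knowledge\_of\_s[$s$]} (and $t$ itself) into the receiver's own entry, every key that \cref{alg:optimality:knownacceptorsq} can introduce is already covered.

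\textbf{Your step.} You set up the invariant needed for that argument, but you do not apply it at $s$. Instead you show semantically that each new key sent a message that happened-before the sending of $m$. You then use \Cref{lemma:optimality:source in known_acceptors[p][p] after handling received known message} to place each such sender into \texttt{known\_acceptors[$p$][$p$]}. This chain is sound and not circular. Your strengthened invariant is essentially the inner induction the paper carries out separately in \Cref{lemma:optimality:processes in K_r receive a known message}. The source lemma does not rely on the present statement.

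\textbf{Trade-off.} Your route gives a more informative picture: keys are causal predecessors, and causal predecessors land in the receiver's own entry. It costs an auxiliary induction, though. Through the source lemma it also depends on the standing assumption that a \texttt{Propose} invocation happened-before every send. The paper's argument is a purely syntactic property of Algorithm~\ref{alg:optimality:knowledge} and needs nothing about Alg.

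Finally, condition~(1) of \Cref{claim:optimality:reflexive graph}, which you list as a tool, plays no role in your argument.
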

\begin{proof}
Let $r$ be some process.
We prove by induction on messages in $e_r$ known by $r$ at $t_r$, by the order they are processed in Algorithm~\ref{alg:optimality:knowledge}.

For the base case, before any message is handled, all sets in \t{known\_acceptors} except for \t{known\_acceptors[$r$][$r$]} are empty, hence the transformation of \t{known\_acceptors[$p$]} for each process $p \neq r$ into a dictionary yields an empty dictionary. As for $r$, the only key in \t{array\_to\_dict(known\_acceptors[$r$])} after initialization is $r$, and \t{known\_acceptors[$r$][$r$]} = $\{r\}$.

Suppose the statement holds until some process $t$ receives from process $s$ a message $m$ known by $r$ at $t_r$.
Let \t{knowledge\_of\_s} be the value of \t{known\_acceptors[$s$]} right after processing the last message, known by $r$ at $t_r$, received by $s$ prior to sending $m$ (or its initial value if no such message exists).
By the inductive hypothesis, right before processing $m$ in Algorithm~\ref{alg:optimality:knowledge}, \t{array\_to\_dict(known\_acceptors[$t$]).keys()} $\subseteq$ \t{known\_acceptors[$t$][$t$]}. 
The keys that are added to \t{array\_to\_dict(known\_accep\-tors[$t$])} when processing $m$ (if they are not already there) are $t$ on \cref{alg:optimality:knownacceptorsme}, and the keys in \t{array\_to\_dict(knowledge\_of\_s)} on \cref{alg:optimality:knownacceptorsq}.
$t$ is added to \t{known\_acceptors[$t$][$t$]} (if not already there) on \cref{alg:optimality:knownacceptorsme}. \t{array\_to\_dict(knowledge\_of\_s).keys()} $\subseteq$ \t{knowledge\_of\_s[$s$]} by the inductive hypothesis. Hence, the keys in \t{array\_to\_dict(knowledge\_of\_s)} are also added to \t{known\_accep\-tors[$t$][$t$]} (if not already there) on \cref{alg:optimality:knownacceptorsme}. 
\end{proof}

\begin{lemma}\label{lemma:optimality:processes in K_r receive a known message}
For any process $r$, every process in $K_r$ receives in $e_r$ some message which is known by $r$ at $t_r$.
\end{lemma}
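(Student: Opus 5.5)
We need to show: for any $r$, every $p\in K_r$ receives in $e_r$ some message known by $r$ at $t_r$. The plan is to treat $p=r$ and $p\neq r$ separately.

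\emph{The case $p\neq r$.} Since $p\in K_r$, the set \t{known\_acceptors[$r$][$p$]} is nonempty after Algorithm~\ref{alg:optimality:knowledge} finishes, so some processing step made an entry indexed by $p$ nonempty. We trace where that nonemptiness originated. Initialization (\cref{alg:optimality:init r}) only touches \t{known\_acceptors[$r$][$r$]}. \Cref{alg:optimality:knownacceptorsme} only inserts into \t{known\_acceptors[$t$][$t$]}, where $t$ is the receiver of the processed message. \Cref{alg:optimality:knownacceptorsq} only copies entry $q$ from \t{knowledge\_of\_s[$q$]} into \t{known\_acceptors[$t$][$q$]}. Hence, by induction over the processing order, we can prove the following invariant.

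\emph{Invariant:} if \t{known\_acceptors[$x$][$q$]} is nonempty for some $x$ and some $q\neq r$, then $q$ has received a message of $M$ whose processing came no later. The inductive step has two cases. A first-time insertion via \cref{alg:optimality:knownacceptorsme} happens at $t=q$, which is itself receiving a message of $M$. A copy via \cref{alg:optimality:knownacceptorsq} inherits the property from \t{knowledge\_of\_s[$q$]}, which was already nonempty at an earlier point of the processing.

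Applying the invariant with $x=r$ and $q=p$ yields a message of $M$ received by $p$. Every message in $M$ is, by definition (\cref{alg:optimality:known messages}), known by $r$ at $t_r$.

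\emph{The case $p=r$.} Here the entry \t{known\_acceptors[$r$][$r$]} may be nonempty purely from initialization, so the invariant gives nothing and the case needs separate care. I expect this to be the main obstacle. The plan is to exploit the fact that $r$ decides at $t_r$, together with the convention that self-messages count as receipts.

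First, observe that a message $r$ sends to itself is received at the sending time, so it is known at $t_r$ if sent by then. If $r$ sends any message before deciding, I would argue that a self-send or the first reply arriving suffices. Otherwise $r$ decided with no receipt at all. In that situation, I would show $K_r=\{r\}$ and derive a contradiction with \Cref{claim:optimality:valid}'s argument: the executions in which $r$ decides alone are indistinguishable from ones where other processes decide differently, so $|K_r|\le f$ would be violated.

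That circularity is awkward, since the present lemma is presumably used inside the proof of \Cref{claim:optimality:compatible}. So I would try first to avoid it as follows. If $K_r=\{r\}$, the case $p=r$ is only needed when $K_r\setminus\{r\}$ is nonempty. In that case some $q\neq r$ in $K_r$ received a message of $M$ by the invariant, and since a \t{Propose} invocation happened-before any message sending, that message traces back to a sending by $r$. Knowledge of it at $r$ then requires a chain of messages ending with a receipt at $r$ at or before $t_r$. That final receipt is a message received by $r$ and known at $t_r$, which settles the case.
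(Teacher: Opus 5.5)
Your case $p\neq r$ is correct, and it takes a different, leaner route than the paper. The paper proves by induction that for every key $p\neq t$ of \t{array\_to\_dict(known\_acceptors[$t$])}, some send by $p$ happened-before the receipt of the message just processed at $t$. It applies this at the last message $r$ receives. It then needs the assumption that a \t{Propose} invocation happened-before every send, plus the fact that messages received before a known send are known, to turn ``$p$ sent a known message'' into ``$p$ received one''. Your invariant reads the receipt directly off the bookkeeping of Algorithm~\ref{alg:optimality:knowledge}. Initialization touches only column $r$, and \cref{alg:optimality:knownacceptorsq} only copies an earlier snapshot. So a column $q\neq r$ can first become nonempty only at \cref{alg:optimality:knownacceptorsme}, with $q$ itself the receiver of a message of $M$. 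This uses neither happened-before reasoning nor the \t{Propose} assumption. The paper's route yields extra causal-chain information in the style of the neighbouring lemmas, but this lemma does not need it.

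The case $p=r$ has a gap as you organize it. Splitting on whether $r$ \emph{sends} before deciding is the wrong dichotomy: a send produces no receipt, a reply need not arrive by $t_r$, and $r$ need not message itself. Split instead on whether $r$ \emph{receives} anything by $t_r$. If it does, that message is known at $t_r$ by definition. If it does not, nothing is known, so $M=\emptyset$ and $K_r=\{r\}$, and this situation must be excluded.

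Your ``avoidance'' route does not exclude it. It correctly handles $K_r\neq\{r\}$: any message of $M$ received by some $q\neq r$ is known only via a receipt at $r$ by $t_r$. But it then sets $K_r=\{r\}$ aside as not needed, although the statement covers it, since $r\in K_r$ always (\Cref{observation:optimality:r in K_r}).

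The argument you abandoned is the right one, and it is not circular. \Cref{claim:optimality:valid} is proved through the $e_1$/$e_2$ indistinguishability lemmas. Those rest on \Cref{lemma:optimality:only processes in K_r send and receive known messages in e_r} and elementary observations, never on the present lemma. The present lemma is used only afterwards, via \Cref{lemma:optimality:r in AlgReqs[r][p] for p in K_r}, on the way to \Cref{claim:optimality:compatible}. Hence $|K_r|>f\ge 1$ excludes $K_r=\{r\}$, and together with your fallback the proof closes.

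This matches the paper's own treatment of $r$, an indistinguishability argument that $r$ cannot decide without a receipt, made precise by citing \Cref{claim:optimality:valid}. Like the paper's argument, it needs $f\ge 1$.
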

\begin{proof}
Let $r$ be some process.
We prove by induction on messages in $e_r$ known by $r$ at $t_r$, by the order they are processed in Algorithm~\ref{alg:optimality:knowledge}, that for every process $p$ that appears in \t{array\_to\_dict(known\_acceptors[$t$]).keys()} after handling in Algorithm~\ref{alg:optimality:knowledge} a message $m$, known by $r$ at $t_r$, received by some process $t$, the following holds: if $p \neq t$, then \camerad{a message-sending event by $p$} happened-before $m$'s receipt in $e_r$ (namely, intuitively, there is a message chain from $p$ to $t$ ending with $m$).
For any process $p \neq r$ in $K_r$, applying the induction's statement to $m$---the last message received by $t=r$ up to time $t_r$---yields QED: 
By the induction's statement, a sending of some message $m'$ by $p$ happened-before $m$'s receipt by $r$ in $e_r$, thus, $m'$ is known by $r$ at $t_r$.
Due to the assumption that a \t{Propose} invocation happened-before any message sending in Alg, and since a \t{Propose} is not invoked on $p \neq r$ in $e_r$, $p$ received a message prior to sending $m'$. This received message is known by $r$ at $t_r$ by \Cref{observation:optimality:message received before known received or sent - is known}, and we are done. 
As for process $r$, it may \camerad{decide} in $e_r$ only after receiving some message (otherwise the execution is indistinguishable to $r$ from one in which another process communicates with all others before $r$ does and \camerad{decides} another \camera{value}).

We proceed with the proof by induction.
For the base case, consider a message $m$ from process $s$, which is known by $r$ at $t_r$, received by process $t$.
Let $p \neq t$ be a process that appears in \t{array\_to\_dict(known\_acceptors[$t$]).keys()} after processing $m$ in Algorithm~\ref{alg:optimality:knowledge}, such that $p=s$ or $p=r$. We need to show that \camerad{a message-sending event by $p$} happened-before $m$'s receipt in $e_r$.
If $p=s$ then we are trivially done. 
If $p=r$ then we are also done by assumption that a \t{Propose} invocation happened-before any message sending in Alg (including the sending of $m$). 

Suppose the statement holds for all messages received before a message $m$ from process $s$, which is known by $r$ at $t_r$, is received by process $t$.
Let $p \neq t,s,r$ be a process that appears in \t{array\_to\_dict(known\_acceptors[$t$]).keys()} after processing $m$ in Algorithm~\ref{alg:optimality:knowledge}. We need to show that \camerad{a message-sending event by $p$} happened-before $m$'s receipt in $e_r$.
If $p$ was in \t{array\_to\_dict(known\_acceptors[$t$]).keys()} before processing $m$, then we are done by the inductive hypothesis.
Otherwise, $p$ was added to \t{array\_to\_dict(known\_acceptors[$t$]).keys()} when processing $m$ on \cref{alg:optimality:knownacceptorsq} in Algorithm~\ref{alg:optimality:knowledge}.
This means that $p$ was in \t{array\_to\_dict(known\_acceptors[$s$]).keys()} right after processing the last message, known by $r$ at $t_r$, received by $s$ prior to sending $m$; or $p$ is in the initial value of \t{array\_to\_dict(known\_acceptors[$s$]).keys()} if no such message exists. The latter is impossible:
the initial value is an empty dictionary if $s \neq r$, else a dictionary that contains $r$ as its only key---and $p \neq r$ cannot be a key in any of these dictionaries.
Namely, before sending $m$, $s$ must have received some message.
Let $m^*$ be the last message that $s$ received before sending $m$.
By \Cref{observation:optimality:message received before known received or sent - is known}, $m^*$ is a message known by $r$ at $t_r$.
$p$, which was added to \t{array\_to\_dict(known\_acceptors[$t$]).keys()} on \cref{alg:optimality:knownacceptorsq} when processing $m$, was in \t{array\_to\_dict(known\_acceptors[$s$]).keys()} right after processing $m^*$.
Applying the inductive hypothesis to $m^*$ yields that since $p \neq s$, then \camerad{a message-sending event by $p$} happened-before $m^*$'s receipt by $s$. Consequently, \camerad{a message-sending event by $p$} happened-before $m$'s receipt by $t$.
\end{proof}

\begin{lemma}\label{lemma:optimality:r in AlgReqs[r][p] for p in K_r}
For any process $r$ and process $p \in K_r$,
$r \in AlgReqs[r][p]$.
\end{lemma}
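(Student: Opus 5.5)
To show $r \in AlgReqs[r][p]$ for $p \in K_r$, I would reduce the claim to a statement about \t{known\_acceptors[$r$][$p$]} at the end of Algorithm~\ref{alg:optimality:knowledge}, since $AlgReqs[r][p]$ is exactly that set. The case $p=r$ is immediate: \t{known\_acceptors[$r$][$r$]} is initialized to $\{r\}$ on \cref{alg:optimality:init r}, and by \Cref{observation:knowledge only increases}'s analogue for Algorithm~\ref{alg:optimality:knowledge} the sets only grow through unions. So assume $p \neq r$.

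For $p \neq r$, I would invoke \Cref{lemma:optimality:processes in K_r receive a known message} to obtain a message $m$ received by $p$ in $e_r$ that is known by $r$ at $t_r$. By \Cref{lemma:optimality:r in known_acceptors[p][p] after handling received known message}, $r \in$ \t{known\_acceptors[$p$][$p$]} right after $m$ is processed. The remaining task is to propagate this entry to \t{known\_acceptors[$r$][$p$]}. Here I would reuse the backward induction of \Cref{lemma:optimality:backward induction proved facts on known_acceptors and AlgReqs}, specifically condition (3). Applied to the message $m$ received by $p$, it states that \t{known\_acceptors[$p$][$q$]} $\subseteq AlgReqs[r][q]$ for every $q$ right after processing $m$. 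Taking $q=p$ gives $r \in$ \t{known\_acceptors[$p$][$p$]} $\subseteq AlgReqs[r][p]$, which is the claim.

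The only delicate point is checking that condition (3) of \Cref{lemma:optimality:backward induction proved facts on known_acceptors and AlgReqs} is stated for exactly the state "right after processing the message", which is also the state in which \Cref{lemma:optimality:r in known_acceptors[p][p] after handling received known message} places $r$. Both lemmas refer to the same moment, so the two facts combine directly. I expect no further obstacle. The substantive work, namely the existence of a known message reaching $p$ and the forward and backward propagation of knowledge along happened-before chains, is already carried by those earlier lemmas.
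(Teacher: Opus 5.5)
Your proof is correct and is essentially the paper's own argument. The paper also takes a known message received by $p$, uses the lemma placing $r$ in \t{known\_acceptors[$p$][$p$]} after that message, and concludes with condition (3) of the backward-induction lemma at $q=p$. The one difference is that you handle $p=r$ directly from the initialization on \cref{alg:optimality:init r}. This is harmless, and it avoids relying on the informal case $p=r$ in the lemma asserting that every process in $K_r$ receives a known message.
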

\begin{proof}
Consider some process $r$.
Let $p$ be a process in $K_r$. By \Cref{lemma:optimality:processes in K_r receive a known message}, $p$ receives in $e_r$ a message $m$ known by $r$ at $t_r$.
Then by \Cref{lemma:optimality:r in known_acceptors[p][p] after handling received known message},
$r \in$ \t{known\_acceptors[$p$][$p$]} right after handling $m$ in Algorithm~\ref{alg:optimality:knowledge}.
By condition (3) in \Cref{lemma:optimality:backward induction proved facts on known_acceptors and AlgReqs}, \t{known\_acceptors[$p$][$p$]} $\subseteq$ $AlgReqs[r][p]$ after handling $m$. 
Therefore, $r \in AlgReqs[r][p]$.
\end{proof}

\begin{lemma}\label{lemma:optimality:proposer not in other proposer's keys}
$r \notin K_{r'}$ and $r' \notin K_r$, 
for $r$ and $r'$ chosen in the proof of \Cref{claim:optimality:compatible}.
\end{lemma}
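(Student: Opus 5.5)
The plan is a short counting argument built on \Cref{lemma:optimality:r in AlgReqs[r][p] for p in K_r} and \Cref{claim:optimality:valid}, carried out inside the contradiction hypothesis of \Cref{claim:optimality:compatible}. That hypothesis says the set $K$ has at most $f$ elements, where $K$ collects the processes $p \in K_r$ with $AlgReqs[r][p] \cap (K_r \cap K_{r'}) \neq \emptyset$ together with the processes $p \in K_{r'}$ with $AlgReqs[r'][p] \cap (K_r \cap K_{r'}) \neq \emptyset$. The two claims $r \notin K_{r'}$ and $r' \notin K_r$ are symmetric under swapping $r$ and $r'$, so it suffices to prove $r \notin K_{r'}$.

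Suppose, for contradiction, that $r \in K_{r'}$. By \Cref{observation:optimality:r in K_r} we also have $r \in K_r$, so $r$ lies in the intersection $K_r \cap K_{r'}$. Now take any $p \in K_r$. By \Cref{lemma:optimality:r in AlgReqs[r][p] for p in K_r}, $r \in AlgReqs[r][p]$. Hence $AlgReqs[r][p]$ contains the intersection process $r$, which means $p$ is required by $r$'s requirement to witness an acceptance in $K_r \cap K_{r'}$. By the definition of $K$, this gives $p \in K$. Since $p$ was arbitrary, $K_r \subseteq K$.

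By \Cref{claim:optimality:valid}, $AlgReqs[r]$ is valid, so $|K_r| > f$. Combined with $K_r \subseteq K$, this yields $|K| > f$, contradicting the assumption $|K| \le f$. Therefore $r \notin K_{r'}$. Exchanging the roles of $r$ and $r'$, and using \Cref{lemma:optimality:r in AlgReqs[r][p] for p in K_r} for $r'$ together with validity of $AlgReqs[r']$, gives $r' \notin K_r$.

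The one point that needs care is that $K$ is defined with respect to the actual derived requirements $AlgReqs[r]$, and not via execution-level knowledge. The lemma $r \in AlgReqs[r][p]$ is stated exactly at that level, so no further indistinguishability argument should be needed. This is the step I expect to be the only possible obstacle, and it appears to go through directly.
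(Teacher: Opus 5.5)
Your proposal is correct and follows the paper's proof essentially step for step. Both assume $r \in K_{r'}$, place $r$ in $K_r \cap K_{r'}$, and use $r \in AlgReqs[r][p]$ for every $p \in K_r$ to get $K_r \subseteq K$. Validity then gives $|K| > f$, a contradiction, and the second claim follows by symmetry.
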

\begin{proof}
Assume, for sake of contradiction, that $r \in K_{r'}$. 
As $r \in K_r$ by \Cref{observation:optimality:r in K_r}, we get $r \in K_r \cap K_{r'}$. For every process $p \in K_r$, $r \in AlgReqs[r][p]$ by \Cref{lemma:optimality:r in AlgReqs[r][p] for p in K_r}. Namely, $AlgReqs[r][p]$ contains $r$---a process in the intersection $K_r \cap K_{r'}$.
This implies that $K_r \subseteq K$ (for $K$ defined in the proof of \Cref{claim:optimality:compatible}).
$|K_r| > f$ by validity (\Cref{claim:optimality:valid}), so $|K| > f$---contradiction.

The proof for $r' \notin K_r$ is symmetrical.
\end{proof}

\begin{lemma}\label{lemma:optimality:proposer required to know about intersection}
$r \in K$,
for $r$ chosen in the proof of \Cref{claim:optimality:compatible} and $K$ defined in that proof.
\end{lemma}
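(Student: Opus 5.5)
The plan is to reduce $r\in K$ to the single fact that $K_r\cap K_{r'}\neq\emptyset$, and to prove that fact by an indistinguishability argument modeled on the proof of \Cref{claim:optimality:valid}.

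\textbf{Reduction.} By \Cref{observation:optimality:r in K_r}, $r\in K_r$. By \Cref{claim:optimality:reflexive graph}(3), $AlgReqs[r][r]=K_r$. So if some $x\in K_r\cap K_{r'}$ exists, then $x\in AlgReqs[r][r]$. This says $r$ is a key of $AlgReqs[r]$ whose value contains a process in the intersection, which is exactly the defining condition for membership in $K$. Hence $r\in K$. Note that nonemptiness is genuinely needed: if the intersection were empty, $K$ would be empty and the standing hypothesis $|K|\le f$ would hold vacuously.

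\textbf{Nonemptiness.} Suppose for contradiction that $K_r\cap K_{r'}=\emptyset$. Build an execution $g$ of Alg as follows.
\begin{itemize}
\item Both $r$ and $r'$ invoke \t{Propose} at time $0$, with $v_r$ and $v_{r'}$ respectively; processes in $\mathcal{F}$ are crashed.
\item Inside $K_r$, messages follow $\mathcal{L}$, except that messages sent by $p\in K_r$ at time $t_{p,r}$ or later are suppressed up to time $t_d=\max\{t_r,t_{r'}\}+\epsilon$. Inside $K_{r'}$, the same rule applies with $t_{p,r'}$.
\item Every message between $K_r$ and its complement, or between $K_{r'}$ and its complement, is delayed until after $t_d$.
\end{itemize}
Because $K_r$ and $K_{r'}$ are disjoint, these rules never conflict. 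As the paper's remark on suppressed messages explains, such delays respect the model: the network is asynchronous and FIFO order is preserved, since each suppression removes a suffix of a link's messages. No crashes beyond $\mathcal{F}$ are needed.

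Repeating the induction of \Cref{lemma:e_2 indistinguishable from e_r to p until last received - known in e_r or in e_2 if p in K_r} gives $g\stackrel{p}{\approx}_{<t_{p,r}}e_r$ for every $p\in K_r$. That induction uses only three facts: known messages are sent and received inside $K_r$ (\Cref{lemma:optimality:only processes in K_r send and receive known messages in e_r}); they are sent before the cutoffs (\Cref{corollary:optimality:messages sent since t_pr are not known}); and no process in $K_r$ other than $r$ proposes in either execution, which holds because $r'\notin K_r$ under disjointness. As in \Cref{corollary:e_2 indistinguishable from e_r to r}, this yields $g\stackrel{r}{\approx}_{t_r}e_r$, so $r$ decides $v_r$ in $g$. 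Symmetrically, $g\stackrel{r'}{\approx}_{t_{r'}}e_{r'}$, so $r'$ decides $v_{r'}\neq v_r$ in $g$. This contradicts Agreement.

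\textbf{Main obstacle.} The delicate step is confirming that the symmetric inductions can coexist in one execution $g$. The risk is a process receiving messages from both proposers' chains. Disjointness of $K_r$ and $K_{r'}$ is exactly what removes this risk, so I would spell it out when adapting the lemma. Everything else is a direct reuse of earlier results.
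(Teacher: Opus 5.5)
Your reduction is exactly the paper's proof, which picks some $p\in K_r\cap K_{r'}$ and uses condition~(3) of \Cref{claim:optimality:reflexive graph} to get $p\in AlgReqs[r][r]$, hence $r\in K$. The paper never justifies that this intersection is nonempty (counting with \Cref{claim:optimality:valid} gives it only when $n=2f+1$), so your argument fills a step the paper leaves implicit. That argument is a two-island version of the $e_2$ construction of \Cref{claim:optimality:valid}, using delays instead of crashes, and it is sound provided each link releases its delayed messages after its on-time ones so that FIFO order holds.
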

\begin{proof}
Let $p$ be a process in $K_r \cap K_{r'}$ for $r$ and $r'$ chosen in the proof of \Cref{claim:optimality:compatible}. 
\camera{By condition (3) of \Cref{claim:optimality:reflexive graph},
$K_r=AlgReqs[r][r]$.}
We have $r \in K_r$ by \Cref{observation:optimality:r in K_r}, and $p \in K_r \cap K_{r'}$ satisfies $p \in AlgReqs[r][r]$. 
Hence, by definition of $K$, $r \in K$.
\end{proof}

\begin{lemma}\label{lemma:f_1 indistinguishable from e_r to p until last received - known in e_r or in f_1 if p in K_r}
Consider $r$ chosen in the proof of \Cref{claim:optimality:compatible}, and $f_1$, $K$, $\overline{K_{r'}}$ and $t_{p,K_r-K}$ defined in that proof.
$f_1 \stackrel{p}{\approx}_{< t_{p,K_r-K}} e_r$ for any process $p \in \overline{K_{r'}}$.
\end{lemma}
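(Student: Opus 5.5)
We would mirror the proof of \Cref{lemma:e_2 indistinguishable from e_r to p until last received - known in e_r or in e_2 if p in K_r}, replacing $K_r$ by $\overline{K_{r'}}$ and $t_{p,r}$ by $t_{p,K_r-K}$. Concretely, we would show by induction on message receipt times that, for every $p\in\overline{K_{r'}}$ and every time $t<t_{p,K_r-K}$ at which $p$ receives a message in $e_r$ or in $f_1$, we have $f_1 \stackrel{p}{\approx}_{t} e_r$. Between the last such receipt and $t_{p,K_r-K}$ no further receipts occur, so this yields $f_1 \stackrel{p}{\approx}_{<t_{p,K_r-K}} e_r$.

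\textbf{Base case.} Proposals agree on $\overline{K_{r'}}$: $r$ proposes $v_r$ at time $0$ in both executions, and $r'\notin \overline{K_{r'}}$ by \Cref{lemma:optimality:proposer not in other proposer's keys}. Since some \t{Propose} invocation happened-before every message sending, only $r$ can act at time $0$, and its self-messages coincide, as in the earlier lemma. Self-messages at later times are handled the same way.

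\textbf{Inductive step.} Consider a message $m$ from $s\neq p$ received by $p$ at time $t<t_{p,K_r-K}$ in one of the executions. We must show $s\in\overline{K_{r'}}$ and that $s$ sends $m$ before $t_{s,K_r-K}$. Then the inductive hypothesis applies to $s$ up to the sending, determinism gives the same $m$ sent at the same time, and latencies in $\mathcal{L}$ give the same arrival.

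If $m$ is received in $f_1$, both facts are immediate from the construction. Messages crossing the boundary of $\overline{K_{r'}}$ are lost, and messages sent by $s$ at or after $t_{s,K_r-K}$ are lost.

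If $m$ is received in $e_r$, the definition of $t_{p,K_r-K}$ (first message not known by any $q\in K_r-K$ before $t_{q,r}$, capped by $t_{p,r}$) implies that $m$ is known by some $q\in K_r-K$ before $t_{q,r}$. Since $t_{q,r}\le$ the cutoff defining ``known by $r$ at $t_r$'', $m$ is known by $r$ at $t_r$. Then:
\begin{itemize}
\item $s\in K_r$ by \Cref{lemma:optimality:only processes in K_r send and receive known messages in e_r}.
\item Any message received in $e_r$ before $p$'s receipt of $m$ that was sent by $s$ is also known by that $q$. This follows from a variant of \Cref{observation:optimality:message received before known received or sent - is known} for ``known by $q$ before $t_{q,r}$''. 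Hence $s$ sends $m$ before $t_{s,K_r-K}$; when $s\in K_r-K$ this is just $t_{s,r}$, via \Cref{corollary:optimality:messages sent since t_pr are not known}.
\item It remains to exclude $s\in K_r\cap K_{r'}$. Suppose otherwise. Then $s$'s sending happened-before $q$'s receipt, so \Cref{lemma:optimality:source in known_acceptors[p][p] after handling received known message} puts $s$ into \t{known\_acceptors[$q$][$q$]}. By condition (3) of \Cref{lemma:optimality:backward induction proved facts on known_acceptors and AlgReqs}, $s\in AlgReqs[r][q]$, i.e., $q$ is required to witness an intersection process, so $q\in K$. This contradicts $q\in K_r-K$.
\end{itemize}
(If $p$ itself is the knowing process $q$, the same argument applies with $q=p$.) Finally, $m$ is not lost in $f_1$ because it travels within $\overline{K_{r'}}$ and is sent before the sender's cutoff, so it arrives at the same time $t$.

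The main obstacle is this last point: formalizing ``known by some $q\in K_r-K$ before $t_{q,r}$'' so that it is closed under happened-before predecessors. One must also check that the knowledge-tracking lemmas, stated for ``known by $r$ at $t_r$'', transfer to knowledge by $q$ before $t_{q,r}$. I would first prove this transitivity as a small auxiliary observation, mirroring \Cref{observation:optimality:message received before known received or sent - is known}. I would also recheck the corner case $t_{p,K_r-K}=t_{p,r}$, where no unknown message arrives.
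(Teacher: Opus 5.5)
Your proposal is correct and follows essentially the same route as the paper: induction on receipt times, with a split by whether $m$ is received in $f_1$ (immediate from the construction) or in $e_r$. In the $e_r$ case, $s\in K_r$; the intersection is excluded via \Cref{lemma:optimality:source in known_acceptors[p][p] after handling received known message}, condition (3) of \Cref{lemma:optimality:backward induction proved facts on known_acceptors and AlgReqs}, and $q\notin K$; and the bound on the sending time comes from \Cref{corollary:optimality:messages sent since t_pr are not known} or from the first unknown receipt at $s$. The obstacle you flag is milder than you expect: by the definition of $t_{q,r}$, every message $q$ receives from another process before $t_{q,r}$ is known by $r$ at $t_r$, so knowledge by $q$ before $t_{q,r}$ already implies knowledge by $r$ at $t_r$ and the existing lemmas apply unchanged, which is exactly how the paper proceeds.
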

\begin{proof}
We prove that $f_1 \stackrel{p}{\approx}_{t} e_r$ for any time $t$ such that a message is received in $e_r$ or $f_1$ by some process $p \in \overline{K_{r'}}$ at time $t < t_{p,K_r-K}$. This is enough, as $f_1$ and $e_r$ remain indistinguishable to $p$ from the last time $p$ receives such a message up to time $t_{p,K_r-K}$ (since during this time frame, there are no additional message receipts by $p$ in $f_1$ or $e_r$, nor any different proposals at $p$).
We prove by induction on message receipt times.
As \t{Propose} invocations on processes in $\overline{K_{r'}}$ are the same in $f_1$ and $e_r$, it remains to show that each $p$ receives the same messages at the same times in both executions until time $t$ (including).

For the base case, we consider time 0.
The base case trivially holds for all processes in $\overline{K_{r'}}$ except for $r$: 
Due to the assumption that a \t{Propose} invocation happened-before any message sending in Alg, and the fact that proposals are invoked only at time 0 in $f_1$ and $e_r$, and only on $r$ and $r' \notin \overline{K_{r'}}$,
no messages are received by processes $\neq r$ in $\overline{K_{r'}}$ up to time 0. Thus, $f_1$ and $e_r$ are indistinguishable to these processes at time 0. As for $r$, it might receive messages from itself at time 0. In that case, $f_1$ and $e_r$ are indistinguishable to $r$ at time 0 before it receives the first message from itself. Hence, $r$ sends the same message to itself in both executions, then receives it in both executions, and they remain indistinguishable to $r$ after receiving the message. The same argument applies to any subsequent messages $r$ sends itself at time 0. Therefore, $f_1$ and $e_r$ are still indistinguishable to $r$ at time 0 after receiving all of these self messages.

Suppose the statement holds prior to time $0 < t < t_{p,K_r-K}$ where $p$ is a process in $\overline{K_{r'}}$, and $p$ receives only messages from itself at time $t$ in $e_r$ and $f_1$.
Then by the same arguments as in the base case together with the inductive hypothesis, these messages are received in both executions and leave the executions indistinguishable to $p$. 
So suppose the statement holds prior to time $0 < t < t_{p,K_r-K}$ for some process $p \in \overline{K_{r'}}$, and $p$ receives a message $m$ at time $t$ in $e_r$ or $f_1$ from another process.

Note that in case $m$ is received in $e_r$, it is known by $r$ at $t_r$: 
If $p \in K_r-K$, then $p$ receives $m$ in $e_r$ before $t_{p,K_r-K}=t_{p,r}$, hence, by definition of $t_{p,r}$, $m$ is known by $r$ at $t_r$.
Else, there exists a process $q \in K_r-K$ that knows about $m$ in $e_r$ before time $t_{q,r}$. This means that 
$m$'s receipt happened-before $q$'s receipt of some message $m_q$ from another process, which occurred before time $t_{q,r}$. 
By definition of $t_{q,r}$, $m_q$ is known by $r$ at $t_r$, which means that $m_q$'s receipt happened-before some message receipt at $r$ occurring up to time $t_r$.
Overall, $m$'s receipt happened-before some message receipt at $r$ occurring up to time $t_r$. This means that $m$ is known by $r$ at $t_r$.

For messages received by $p$ in $f_1$ or $e_r$ before time $t$, we may apply the inductive hypothesis, which yields that $f_1$ is indistinguishable to $p$ from $e_r$ until the receipt time of any such message.
Hence, $f_1$ and $e_r$ are indistinguishable to $p$ 
prior to $t$.
To show that $f_1$ and $e_r$ are indistinguishable to $p$ until $t$ (including), we will show that $m$ is received in both executions at time $t$. This is enough as by assumption, no message other than $m$ is received by $p$ at $t$ from other processes;  
and messages $p$ possibly receives from itself after $m$ at time $t$ 
are received in both executions and leave the executions indistinguishable to $p$, by the same arguments as in the base case. 

Let $s$ be the sender of $m$.
To apply the inductive hypothesis to messages received by $s$ in either execution until the time in which $s$ sends $m$, we need to show they satisfy the conditions of the induction's statement: $s \in \overline{K_{r'}}$, and these messages are received before time $t_{s,K_r-K}$.
For the latter, we show the stronger statement that $s$ sends $m$ before time $t_{s,K_r-K}$.

We start with the case in which $m$ is received in $e_r$. 
As $m$ is known by $r$ at $t_r$ (as shown above), $s \in K_r$ by \Cref{lemma:optimality:only processes in K_r send and receive known messages in e_r}. We show that $s \notin K_r \cap K_{r'}$, and then $s \in K_r - (K_r \cap K_{r'}) = \overline{K_{r'}}$ will follow.
As $m$ is received by $p \neq s$ before $t_{p,K_r-K}$, there exists a process $q \in K_r-K$ that knows in $e_r$ about $m$ before time $t_{q,r}$. 
Intuitively, $s \notin K_r \cap K_{r'}$ is true because $q \notin K$, so when $r$ decides in $e_r$ it does not know that $q$ knows about any message sent by a process in the intersection.
Formally, as $q$ knows about $m$ before $t_{q,r}$, there exists a message $m^*$, received by $q$ before time $t_{q,r}$, such that either $q=p$ and $m^*=m$, or $q \neq p$ and $m$'s receipt (and hence sending) happened-before the sending of $m^*$. In both cases, $m^*$ is known by $r$ at $t_r$ as it is received by $q$ before time $t_{q,r}$ from another process.
Then by \Cref{lemma:optimality:source in known_acceptors[p][p] after handling received known message}, $s \in \t{known\_acceptors}[q][q]$ right after handling $m^*$ in Algorithm~\ref{alg:optimality:knowledge}.
Additionally, by condition (3) in \Cref{lemma:optimality:backward induction proved facts on known_acceptors and AlgReqs}, after handling $m^*$, \t{known\_acceptors[$q$][$q$]} $\subseteq$ $AlgReqs[r][q]$. Therefore, $s \in AlgReqs[r][q]$.
But since $q \notin K$, and $q \in K_r$, then by definition of $K$:
$AlgReqs[r][q] \cap (K_r \cap K_{r'}) = \emptyset$. 
Therefore, $s \notin K_r \cap K_{r'}$. So indeed $s \in \overline{K_{r'}}$.
We still need to show that $s$ sends $m$ before time $t_{s,K_r-K}$. If $t_{s,K_r-K} = t_{s,r}$, 
then as $m$ is known by $r$ at $t_r$ (as shown above), $s$ sends $m$ before time $t_{s,K_r-K}$ by \Cref{corollary:optimality:messages sent since t_pr are not known}.
Else, by definition of $t_{s,K_r-K}$, $s$ receives from another process a message $m_s$ at time $t_{s,K_r-K}$ that is not known by any process $q' \in K_r-K$ before time $t_{q',r}$. Assume for sake of contradiction that $s$ sends $m$ at time $\geq t_{s,K_r-K}$. By our assumption, $s$ does not send messages right before receiving a message from another process at the exact same time. Namely, $s$ sends $m$ after receiving $m_s$. As $m$ is known by $q$ before time $t_{q,r}$, then $m_s$ is as well, which is a contradiction. So in any case, $s$ sends $m$ before time $t_{s,K_r-K}$.

We proceed to the case that $m$ is received in $f_1$. $m$ is received before time $t_{p,K_r-K} \leq t_{p,r} \leq t_r + \epsilon \leq t_d$ in $f_1$.
Then $m$ is sent until time $t_d$.
Messages sent until $t_d$ to process $p \in \overline{K_{r'}}$ in $f_1$ are received by it only if they are from processes in $\overline{K_{r'}}$.
Therefore, $s \in \overline{K_{r'}}$.
In addition, $m$ is sent before $t_{s,K_r-K}$, as any messages $s \in \overline{K_{r'}}$ sends in $f_1$ until time $t_d$ at $t_{s,K_r-K}$ or later are lost.

We conclude that we may apply the inductive hypothesis to messages received by $s$ until the time in which it sends $m$.
By the inductive hypothesis, $f_1$ and $e_r$ are indistinguishable to $s$ until receiving the last message (if any) in either execution up to the time in which it sends $m$. They hence remain indistinguishable to $s$ from that moment (or from time 0 in case no message receipts by $s$ precede the sending of $m$) until it sends $m$ (in $e_r$ or $f_1$). Therefore, $m$ is sent in the other execution as well at the same time.
Given that $m$ originally appears in $f_1$, it is not lost and reaches its target in $e_r$ as well, since no messages to correct processes are lost in $e_r$.
Given that $m$ originally appears in $e_r$, it is not lost and reaches its target in $f_1$ as well, since $m$ is a message from a process in $\overline{K_{r'}}$ to a process in $\overline{K_{r'}}$, reaching its target $p$ at $t < t_{p,K_r-K}$.
Furthermore, $m$ reaches its target at the same time in both executions---according to the corresponding latency in $\mathcal{L}$.
That is, $m$ is received in both executions at time $t$ and we are done.
\end{proof}

\begin{lemma}\label{lemma:f_1 indistinguishable from f_2 to processes in K_r-K}
For every $p \in K_r-K$: $f_1 \stackrel{p}{\approx}_{t_d} f_2$,
for $r$ chosen in the proof of \Cref{claim:optimality:compatible}, and $t_d$, $K$, $f_1$ and $f_2$ defined in that proof.
\end{lemma}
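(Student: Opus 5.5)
Since $f_1$ and $f_2$ are built to agree with $e_r$ on the relevant processes, I plan to use $e_r$ as a common reference. I would show that both executions are indistinguishable to $p$ from $e_r$ strictly before $t_{p,r}$. I would then show that between $t_{p,r}$ and $t_d$, $p$ receives no message from another process in either execution.

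\emph{Step 1: placing $p$.} Fix $p \in K_r-K$. I first check that $p \in \overline{K_{r'}}$. Suppose instead $p \in K_r \cap K_{r'}$. By condition (1) of \Cref{claim:optimality:reflexive graph}, $p \in AlgReqs[r][p]$. So $AlgReqs[r][p]$ contains a process of the intersection, and the definition of $K$ gives $p \in K$, a contradiction. Hence $p \in \overline{K_{r'}}$. As noted where it is defined, $t_{p,K_r-K}=t_{p,r}$ for such $p$. Also $p \neq r$, because $r \in K$ by \Cref{lemma:optimality:proposer required to know about intersection}. Likewise $p \neq r'$, because $r' \notin K_r$ by \Cref{lemma:optimality:proposer not in other proposer's keys}. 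So \t{Propose} is invoked at $p$ in neither execution.

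\emph{Step 2: agreement before $t_{p,r}$.} Applying \Cref{lemma:f_1 indistinguishable from e_r to p until last received - known in e_r or in f_1 if p in K_r} gives $f_1 \stackrel{p}{\approx}_{<t_{p,r}} e_r$. For $f_2$, I would prove $f_2 \stackrel{p}{\approx}_{<t_{p,r}} e_r$ for every $p \in K_r$. The proof would repeat that of \Cref{lemma:e_2 indistinguishable from e_r to p until last received - known in e_r or in e_2 if p in K_r} essentially verbatim: induct on receipt times, treat self-messages as in the base case, and use \Cref{lemma:optimality:only processes in K_r send and receive known messages in e_r} and \Cref{corollary:optimality:messages sent since t_pr are not known} to identify the sender. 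The only change is to use the first and third bullets of $f_2$'s definition, which keep exactly the messages among $K_r$ received before the respective cutoffs, in place of $e_2$'s loss rules. All such messages are sent before $t_d$, so the $\mathcal{L}$ latencies apply. Chaining the two statements through $e_r$ gives $f_1 \stackrel{p}{\approx}_{<t_{p,r}} f_2$.

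\emph{Step 3: the window $[t_{p,r},t_d]$.} In $f_1$, the second bullet with cutoff $t_{p,K_r-K}=t_{p,r}$ implies that $p$ receives no message in $[t_{p,r},t_d]$ that was sent before $t_d$. In $f_2$, the first bullet gives the same for $p \in K_r$. A message sent after $t_d$ cannot arrive by $t_d$, since latencies between distinct processes are positive. Messages $p$ sends to itself in this window arise deterministically from identical local states, so they are received identically in both executions. Because no proposals occur at $p$, the executions stay indistinguishable to $p$ through $t_d$ inclusive.

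I expect the main obstacle to be Step 2 for $f_2$. There I must verify carefully that every message $p$ receives in $e_r$ before $t_{p,r}$ survives in $f_2$, and that nothing extra arrives. I must also handle the boundary instant $t_{p,r}$, where only self-messages can occur. For this I would rely on the paper's convention that a receipt from another process is the first event at that instant.
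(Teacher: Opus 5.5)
Your proposal is correct and follows the paper's proof. It chains $f_1 \stackrel{p}{\approx}_{<t_{p,r}} e_r$, obtained from the $f_1$ lemma after your explicit check that $K_r-K\subseteq\overline{K_{r'}}$ (which the paper leaves implicit), with $f_2 \stackrel{p}{\approx}_{<t_{p,r}} e_r$, obtained by redoing the $e_2$ argument, and then observes that neither execution delivers anything to $p$ in $[t_{p,r},t_d]$. One small correction: by the loss rules of both executions, self-messages sent by $p$ in that window are dropped in both rather than received identically, which does not affect the conclusion.
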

\begin{proof}
Let $p$ be a process in $K_r-K$.
By \Cref{lemma:f_1 indistinguishable from e_r to p until last received - known in e_r or in f_1 if p in K_r}, $f_1 \stackrel{p}{\approx}_{<t_{p,r}} e_r$.
In addition, it could be proven for $f_2$, similarly to the proof of \Cref{lemma:e_2 indistinguishable from e_r to p until last received - known in e_r or in e_2 if p in K_r} for $e_2$, that $f_2 \stackrel{p}{\approx}_{< t_{p,r}} e_r$.

Overall, we get $f_1 \stackrel{p}{\approx}_{<t_{p,r}} f_2$.
As $p$ receives no messages at $t_{p,K_r-K}=t_{p,r}$ or later until $t_d$ (including) \camerad{in either $f_1$ or $f_2$}, $f_1 \stackrel{p}{\approx}_{t_d} f_2$.
\end{proof}

\begin{lemma}\label{lemma:f_2 indistinguishable from f_1 to processes not in K}
$f_2 \stackrel{p}{\approx} f_1$, 
for $f_1$ and $f_2$ defined in the proof of \Cref{claim:optimality:compatible}, and any process $p \notin K$, where $K$ is also defined in that proof.
\end{lemma}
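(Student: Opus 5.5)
The plan is to split the processes outside $K$ into three disjoint groups, prove indistinguishability for each group up to and including $t_d$, and then extend it to all later times by induction on message receipt times after $t_d$. The groups are $K_r-K$, $K_{r'}-K$, and $O=\Pi\setminus(K_r\cup K_{r'})$.

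To see that these groups cover all $p\notin K$ and are disjoint, I first show $K_r\cap K_{r'}\subseteq K$. Take $p\in K_r\cap K_{r'}$. By condition (1) of \Cref{claim:optimality:reflexive graph}, $p\in AlgReqs[r][p]$. Since $p$ itself lies in the intersection, the definition of $K$ puts $p$ in $K$. Hence $K_r-K$ and $K_{r'}-K$ are disjoint. I also note that $r,r'\in K$ (\Cref{lemma:optimality:proposer required to know about intersection} and its symmetric version). Therefore every process outside $K$ has identical (empty) \t{Propose} invocations in $f_1$ and $f_2$.

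\emph{Up to $t_d$.} I handle the three groups in turn.
\begin{itemize}
\item For $p\in K_r-K$, \Cref{lemma:f_1 indistinguishable from f_2 to processes in K_r-K} gives $f_1\stackrel{p}{\approx}_{t_d}f_2$ directly.
\item For $p\in K_{r'}-K$, I apply the symmetric statement. Since $f_2$ is defined from $f_1$ by swapping $r$ and $r'$, the same proof goes through verbatim.
\item For $p\in O$, I use the rule that messages sent until $t_d$ between two processes not both in $K_{r'}$ and not both in $\overline{K_{r'}}$ are lost. A process in $O$ belongs to neither set, so in $f_1$ it receives no message from another process up to $t_d$; the symmetric rule gives the same in $f_2$. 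Self-messages are handled as in the base cases of the earlier lemmas: with identical histories, the same self-messages are sent and received. So $p$ cannot distinguish $f_1$ from $f_2$ up to $t_d$.
\end{itemize}

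\emph{After $t_d$.} I proceed by induction on receipt times $t>t_d$ at processes outside $K$, with the hypothesis that every $p\notin K$ has seen the same receipts in both executions strictly before $t$. A message received by some $p\notin K$ after $t_d$ must have been sent after $t_d$. Indeed, for every group, a message sent to $p$ before $t_d$ that does not arrive before the relevant cutoff ($t_{p,r'}$, $t_{p,K_r-K}$, or their symmetric counterparts, all $\le t_d$) is lost, and messages to $O$ sent until $t_d$ are lost outright. Its sender $s$ is alive after $t_d$, so $s\notin K$. By the inductive hypothesis and determinism of Alg, $s$ sends the same message at the same time in both executions. Since latencies after $t_d$ follow $\mathcal{L^*}$ in both, the message arrives at the same time $t$ in both. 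Receipts at exactly $t_d$ and self-messages are covered by the first phase and the base-case argument respectively.

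The step I expect to be the main obstacle is the boundary at $t_d$. I must check carefully that no pre-$t_d$ message leaks across to a non-$K$ process in only one of the executions. This requires verifying, for each of the four cutoff clauses in the definitions of $f_1$ and $f_2$, that all cutoffs are at most $t_d = \max\{t_r,t_{r'}\}+\epsilon$. That holds because $t_{p,K_r-K}\le t_{p,r}\le t_r+\epsilon$, and symmetrically for $r'$.
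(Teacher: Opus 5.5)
Your proof is correct and follows essentially the same route as the paper. You split the processes outside $K$ into $K_r-K$, $K_{r'}-K$, and those outside $K_r\cup K_{r'}$. You get indistinguishability up to $t_d$ from the preceding lemma for $K_r-K$, its symmetric version, and the fact that outsiders receive nothing. You then extend it past $t_d$ because only non-$K$ processes survive and both executions use $\mathcal{L^*}$. Your additions are the check $K_r\cap K_{r'}\subseteq K$, the explicit post-$t_d$ induction, and the boundary argument that no message sent by $t_d$ is delivered after $t_d$. These only spell out what the paper leaves implicit, and they rest on the same inequality $t_{p,K_r-K}\le t_{p,r}\le t_r+\epsilon$ that the paper itself uses.
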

\begin{proof}
For every $p \in K_r-K$: $f_1 \stackrel{p}{\approx}_{t_d} f_2$, by \Cref{lemma:f_1 indistinguishable from f_2 to processes in K_r-K}.
A similar proof would show that this holds also for every $p \in K_{r'}-K$.
Any process $p \notin K_r \cup K_{r'}$ does not receive any messages in $f_1$ and $f_2$ until $t_d$, so $f_1 \stackrel{p}{\approx}_{t_d} f_2$ for it too.
Overall, we have $f_1 \stackrel{p}{\approx}_{t_d} f_2$ for any $p \notin K$. The processes not in $K$ are the ones which remain up after $t_d$, and message transmission after $t_d$ is according to the latencies of $\mathcal{L^*}$ in both $f_1$ and $f_2$, so the executions remain indistinguishable to these processes at all times.
\end{proof}

The following lemmas are used in the proof of \Cref{claim:optimality:KCensus_Alg not slower than Alg}.

\begin{lemma}\label{lemma:optimality:no faulty in reqs keys}
No process in $\mathcal{F}$ appears in $AlgReqs[r]$\t{.keys()} for any process $r$.
\end{lemma}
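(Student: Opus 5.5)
The plan is to show that every key of $AlgReqs[r]$ receives, in $e_r$, a message that $r$ knows at time $t_r$. A faulty process cannot do this.

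First I would take an arbitrary $p \in K_r$ and apply \Cref{lemma:optimality:processes in K_r receive a known message}. It says that $p$ receives in $e_r$ some message $m$ known by $r$ at $t_r$. For $p=r$ this is the clause of that lemma which handles the proposer itself: $r$ can decide only after receiving a message. In any case $r \notin \mathcal{F}$, because executions $e_r$ are defined only for correct $r$ (as stipulated after the proof of \Cref{theorem:optimality}). So the only real work is for $p \neq r$.

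Second, I would appeal to the definition of $e_r$. In $e_r$ every process of $\mathcal{F}$ is crashed from the start, since the network is stable and faulty processes are permanently down. A process crashed at time $0$ takes no steps, so it receives no messages. Alternatively, by \Cref{lemma:optimality:only processes in K_r send and receive known messages in e_r} and the assumption that a \t{Propose} invocation happened-before any sending, any process in $K_r$ other than $r$ must have received a known message before sending one. Either way, $p$ receives a message in $e_r$, which is impossible for $p \in \mathcal{F}$. Hence $K_r \cap \mathcal{F} = \emptyset$.

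The main obstacle is definitional. \Cref{lemma:optimality:processes in K_r receive a known message} argues by message chains through \t{known\_acceptors} keys, and I must make sure that the keys of $AlgReqs[r]$ are exactly the processes reached by those chains. I also need "faulty" to mean crashed from time $0$ in $e_r$, rather than crashing later. If the latter reading were possible, I would instead redefine $e_r$ so that $\mathcal{F}$ is crashed at time $0$, which is consistent with the stable-network description. A faulty process could then neither send nor receive, and no chain known by $r$ could pass through it.
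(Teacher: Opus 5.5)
Your proof is correct and is essentially the paper's: the paper uses a short induction to show that every key of a derived dictionary receives a message known by $r$, which is exactly what \Cref{lemma:optimality:processes in K_r receive a known message} gives you directly, and then concludes because processes in $\mathcal{F}$ are down throughout $e_r$ and receive nothing. Your ``alternative'' through \Cref{lemma:optimality:only processes in K_r send and receive known messages in e_r} runs in the wrong direction: it places senders and receivers of known messages inside $K_r$, not $K_r$ inside the receivers, so it could not stand on its own, but your main route does not need it.
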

\begin{proof}
$AlgReqs[r]$ \camerad{is} computed by running Algorithm~\ref{alg:optimality:knowledge} over messages in $e_r$, computing \t{known\_acceptors} and then setting $AlgReqs[r]$ to the value of \t{array\_to\_dict(known\_acceptors[$r$])} after processing all messages known by $r$ when $r$ decides.
\camerad{A simple induction} on the messages known by $r$ when $r$ decides \camerad{shows} that the keys in any \t{array\_to\_dict(known\_accep\-tors[$p$])} for any process $p$ are processes receiving these messages. As no process in $\mathcal{F}$ receives any message in $e_r$, we are done.
\end{proof}

\begin{lemma}\label{no DoAdopt in e_r^*}
Consider a process $r \notin \mathcal{F}$.
No \t{DoAdopt} message is sent in $e_r^*$,
for $e_r^*$ defined in the proof of \Cref{claim:optimality:KCensus_Alg not slower than Alg}.
\end{lemma}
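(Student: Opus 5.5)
The plan is to argue by contradiction on the \emph{first} \t{DoAdopt} message sent in $e_r^*$, and to rule out each code location that can send one. In \sysname's adopt-commit there are exactly three such places. The first is the conflict branch of the \t{Accept\&Spread} handler (\cref{alg:ac:conflict1}), taken when a process that already accepted some value receives evidence for a different one. The second is the \t{Freeze} handler (\cref{alg:ac:conflict2}). The third is the failure-detector rule of Algorithm~\ref{alg:ac:fd}, taken when a proposer suspects one of its required witnesses. Let $m$ be the earliest \t{DoAdopt} sent in $e_r^*$; I will show that none of the three locations can produce it.

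\emph{Conflict branch.} In $e_r^*$, \t{Propose} is invoked only at $r$, with value $v_r$. By \Cref{observation:AcceptAndSpread disseminate proposed values}, every \t{Accept\&Spread} message carries a proposed value, hence carries $v_r$. Any process that reaches the conflict test has therefore set \t{accepted} to $v_r$ (\Cref{observation:accepted assinged once}). Every incoming message also carries $v_r$, so the test comparing the received value with \t{accepted} never fires.

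\emph{Freeze handler.} A \t{Freeze} message is sent only on \cref{alg:ac:propose:freeze}. That line is reached only after a \t{Propose} call leaves the wait on \cref{alg:ac:propose:until} because it received a \t{DoAdopt}. So any \t{Freeze} in $e_r^*$ is causally preceded by the receipt, and hence the sending, of an earlier \t{DoAdopt}. This contradicts the choice of $m$ as the first one. Thus no \t{Freeze} is sent before $m$, and $m$ cannot come from the \t{Freeze} handler.

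\emph{Failure detector.} Since the network is stable, the failure detector suspects exactly the processes in $\mathcal{F}$ and never suspects a correct process. By \Cref{lemma:optimality:no faulty in reqs keys}, no process of $\mathcal{F}$ is a key of $AlgReqs[r]$, so $r$ never suspects one of its required witnesses. Moreover, $r$ is the only process running \t{Propose}, so no other process executes the failure-detector rule for a requirement of its own.

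I expect this last case to be the main point needing care. Two things must be checked. First, the rule in Algorithm~\ref{alg:ac:fd} must be active only at a process with an ongoing \t{Propose} call, and only for that process's own requirement; I would confirm this against the pseudocode. Second, if dissemination-graph shortcuts through processes outside the requirement are in use (\S\ref{sec:optimizer:network}), those relays must also lie outside $\mathcal{F}$. I would handle this by noting that the template analysed here is the plain broadcast version of Algorithm~\ref{alg:ac:disseminate}, which uses no relays. With all three sources excluded, no first \t{DoAdopt} exists, so none is sent in $e_r^*$.
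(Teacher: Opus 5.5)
Your proof is the paper's argument: the same three sources of \t{DoAdopt}, the same use of \Cref{observation:AcceptAndSpread disseminate proposed values} for the conflict branch, and the same first-event argument to break the \t{Freeze}/\t{DoAdopt} cycle. The paper takes the first send of either a \t{DoAdopt} or a \t{Freeze}, which is equivalent to your choice of the first \t{DoAdopt}.

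One step should be tightened: the failure-detector case. The lemma forbids \t{DoAdopt} sends anywhere in $e_r^*$, so you should not rest on the unchecked claim that the rule of Algorithm~\ref{alg:ac:fd} fires only at a process with an ongoing \t{Propose}. The paper avoids this dependency by applying \Cref{lemma:optimality:no faulty in reqs keys} to every correct process $p$, not just $r$. Each such $p$'s own requirement $AlgReqs[p]$ then has only correct keys, which a stable failure detector never suspects.

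Your handling of relays matches the paper, which analyses only the plain broadcast template.
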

\begin{proof}
\t{DoAdopt} messages are sent in \sysname on \cref{alg:ac:conflict1,alg:ac:conflict2,alg:fd:conflict}.

First, we show that no \t{DoAdopt} message is sent in $e_r^*$ on \cref{alg:ac:conflict1}.
By \Cref{observation:AcceptAndSpread disseminate proposed values}, only \camera{value} $\camera{v_r}$ is sent in \t{Accept\&Spread} messages. Therefore, \t{accepted} on any process may only be set to $\camera{v_r}$ on \cref{alg:ac:assignaccepted}.
Hence, when reaching \cref{alg:ac:ifdiffaccepted}, \t{accepted} $=\camera{v_r}$.
Consequently, the condition on \cref{alg:ac:ifdiffaccepted} is never met, and \cref{alg:ac:conflict1} is never reached in $e_r^*$.

Second, we show that no \t{DoAdopt} message is sent in $e_r^*$ on \cref{alg:fd:conflict}:
By \Cref{lemma:optimality:no faulty in reqs keys}, for any process $p$, all processes in $AlgReqs[p]$\t{.keys()} are correct. 
As the network is stable in $e_r^*$, failure detectors do not suspect any correct process as having crashed. Therefore, the failure detector on any process $p$ may invoke \t{PotentialFailure} only with a process not in $AlgReqs[p]$\t{.keys()} as an argument, and then the condition on \cref{alg:fd:if} will not be met. Hence, \cref{alg:fd:conflict} is never reached in $e_r^*$.

Lastly, we show that no \t{DoAdopt} message is sent in $e_r^*$, including on \cref{alg:ac:conflict2}.
Consider the first sending step of a \t{DoAdopt} or \t{Freeze} message in $e_r^*$.
It cannot be a sending of a \t{DoAdopt} message, as we showed that no \t{DoAdopt} message is sent on \cref{alg:ac:conflict1,alg:fd:conflict}, and a \t{DoAdopt} message is sent on \cref{alg:ac:conflict2} only after a \t{Freeze} message is sent and received.
\camerad{It also cannot be} a sending of a \t{Freeze} message, as a \t{Freeze} message is sent only on \cref{alg:ac:propose:freeze} and only after a \t{DoAdopt} message is sent and received.
Therefore, there is no sending step of a \t{DoAdopt} (or \t{Freeze}) message in $e_r^*$.
\end{proof}

\begin{lemma}\label{lemma:optimality:first message is received in KCensus at the same time as Alg or before}
Consider a process $r \notin \mathcal{F}$. 
Any process $p \notin \mathcal{F}$ receives its first \t{Accept\&Spread} message in $e_r^*$ before or at the same time it receives its first message
in $e_r$ if any,
for $e_r^*$ defined in the proof of \Cref{claim:optimality:KCensus_Alg not slower than Alg}.
\end{lemma}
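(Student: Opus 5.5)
Let $r\notin\mathcal{F}$ and $p\notin\mathcal{F}$, and suppose $p$ receives some message in $e_r$ (otherwise there is nothing to prove). I would argue by induction on the time at which $p$ first receives a message in $e_r$, over all correct processes. The aim is to show that the first reception time in $e_r^*$ is at most the first reception time in $e_r$.

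The base case is $p=r$. In $e_r^*$, $r$ sends itself an \t{Accept\&Spread} message at time $0$ (\cref{alg:ac:propose:disseminate}), and self-messages take zero time. So $r$ receives its first \t{Accept\&Spread} message at time $0$, and no reception in $e_r$ can be earlier.

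For $p\neq r$, let $m$ be the first message $p$ receives in $e_r$, say at time $t$, with sender $s$ and send time $t-\mathcal{L}[s,p]$.
\begin{itemize}
\item We first show that $s\neq p$. By the standing assumption, a \t{Propose} invocation happened-before every message sending, and $r$ is the only proposer in $e_r$. So $p$ sends nothing before it has received a message, and a self-message cannot be first.
\item Next we show that $s$ was already triggered when it sent $m$. If $s=r$, then $s$ has invoked \t{Propose} by time $0$, before sending $m$. If $s\neq r$, the same assumption gives that $s$ received some message before sending $m$; that reception happened at a time strictly less than $t$, because $\mathcal{L}[s,p]>0$.
\item We then apply the induction hypothesis to $s$ (or the base case when $s=r$). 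It says that $s$ receives its first \t{Accept\&Spread} message in $e_r^*$ no later than time $t-\mathcal{L}[s,p]$. The well-foundedness here comes from the strictly positive latency.
\item Finally, we propagate this to $p$. \Cref{no DoAdopt in e_r^*} shows that no \t{DoAdopt} or \t{Freeze} message is ever sent in $e_r^*$. So \Cref{observation:bcast after accept} applies to $s$: upon its first \t{Accept\&Spread}, $s$ broadcasts an \t{Accept\&Spread} message to all processes immediately. That message reaches $p$ by time $t-\mathcal{L}[s,p]+\mathcal{L}[s,p]=t$, since both $s$ and $p$ are correct and latencies in $e_r^*$ follow $\mathcal{L}$.
\end{itemize}

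The main obstacle I expect is making the induction precise. The induction should be on the discrete set of first-reception times in $e_r$, with the hypothesis applied to strictly earlier times. I also need to justify that \t{Propose} at $r$ behaves as a reception at time $0$ for the base case. Finally, the broadcast in \Cref{observation:bcast after accept} must occur at the same instant as the reception, which follows from the paper's atomicity assumption on handlers.
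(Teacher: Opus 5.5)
Your proof is correct, but it is organized differently from the paper's. The paper never looks at the sender of $p$'s first message in $e_r$. It introduces the shortest weighted distance $dist_\mathcal{L}(r,p)$ as an intermediate quantity and inducts on it. It first shows that flooding in $e_r^*$ delivers $p$'s first \t{Accept\&Spread} by time $dist_\mathcal{L}(r,p)$, using the predecessor $s$ of $p$ on a shortest path, where $dist_\mathcal{L}(r,p)=dist_\mathcal{L}(r,s)+\mathcal{L}[s,p]$. It then asserts in one line that, because the proposal at $r$ happened-before every send, no message can reach $p$ in $e_r$ before $dist_\mathcal{L}(r,p)$.

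You instead couple the two executions directly. You follow the actual causal predecessor of $p$'s first reception in $e_r$ and induct on first-reception times. Each route buys something different:
\begin{itemize}
\item The paper's route gives a stronger bound that does not depend on Alg: the $e_r^*$ arrival time is at most the shortest-path distance. It also splits the argument cleanly into an upper bound for $e_r^*$ and a lower bound for $e_r$.
\item Your route needs no shortest-path substructure argument. It also makes explicit the causal-chain reasoning that the paper's one-line lower bound leaves implicit.
\end{itemize}
You are also slightly more careful on one point. You cite \Cref{no DoAdopt in e_r^*} to rule out \t{Freeze} messages before applying \Cref{observation:bcast after accept}. The paper's inductive step relies on that precondition without stating it.
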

\begin{proof}
Denote the shortest weighted distance in the graph $\mathcal{L}$ from $r$ to a correct process $p$ by $dist_\mathcal{L}(r,p)$.
We prove by induction on $dist_\mathcal{L}(r,p)$
that every process $p \notin \mathcal{F}$ receives its first \t{Accept\&Spread} message in $e_r^*$ at time $\leq dist_\mathcal{L}(r,p)$. Due to the assumption that a \t{Propose} invocation happened-before any message sending in Alg, $p$ cannot receive a message in $e_r$ at a time earlier than $dist_\mathcal{L}(r,p)$.

For the base case, process $r$ has $dist_\mathcal{L}(r,r) = \mathcal{L}[r,r] = 0$.
Indeed, $r$ receives an \t{Accept\&Spread} message from itself at time 0 in $e_r^*$. 

Consider a correct process $p \neq r$.
Suppose the statement holds for any correct process $q$ such that $dist_\mathcal{L}(r,q) < dist_\mathcal{L}(r,p)$.
Let $P$ be a shortest path from $r$ to $p$ in $\mathcal{L}$, and let $s$ be the process preceding $p$ in this path. By the inductive hypothesis, $s$ receives its first \t{Accept\&Spread} message in $e_r^*$ at time \camerad{$\leq$} $dist_\mathcal{L}(r,s)$.
When handling its first \t{Accept\&Spread} message, process $s$ adds its id to its local \t{known\_acceptors$[s]$} value on \cref{alg:ac:knownacceptorsme1}, which triggers a sending of an \t{Accept\&Spread} message $m$ to $p$ on \cref{alg:ac:broadcast}.
This message reaches $p$ at time \camerad{$\leq$} $dist_\mathcal{L}(r,s) + \mathcal{L}[s,p]$.
As $P$ is a shortest path in $\mathcal{L}$ from $r$ to $p$, the weight of $P$ is $dist_\mathcal{L}(r,p)$. $P$'s sub-path from $r$ to $s$ is a shortest path from $r$ to $s$, hence its weight is $dist_\mathcal{L}(r,s)$. Overall, $dist_\mathcal{L}(r,p) = dist_\mathcal{L}(r,s) + \mathcal{L}[s,p]$. Therefore, message $m$ reaches $p$ at time \camerad{$\leq$} $dist_\mathcal{L}(r,p)$.
\end{proof}

\begin{lemma}\label{lemma:optimality:known_acceptors of Alg are subset of KCensus}
Consider a process $\camerad{r} \notin \mathcal{F}$. Let \t{K$_{Alg}^t$} be the value of \t{known\_acceptors} in Algorithm~\ref{alg:optimality:knowledge}
after processing all messages, known by $r$ at $t_r$, received in $e_r$ up to some time $t$, 
and let \t{K$_{\sysname_{AlgReqs}}^t[p]$} be the value of \t{known\_acceptors} at some process $p$ in Algorithm~\ref{alg:ac:disseminate}
right after time $t$
during $e_r^*$,
for $e_r^*$ defined in the proof of \Cref{claim:optimality:KCensus_Alg not slower than Alg}.

Then for any time $t$, \t{K$_{Alg}^t[p][q]$} $\subseteq$ \t{K$_{\sysname_{AlgReqs}}^t[p][q]$} \camerad{for all processes $p$ and $q$}.
\end{lemma}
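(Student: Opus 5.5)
We argue by induction on the processing order of Algorithm~\ref{alg:optimality:knowledge}, which is the order of reception times in $e_r$. The invariant is that for every time $t$ and all $p,q$, $\t{K}_{Alg}^t[p][q]\subseteq\t{K}_{\sysname_{AlgReqs}}^t[p][q]$.

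The base case is $t$ before any message is processed. The only nonempty entry of $\t{K}_{Alg}$ is then $\t{known\_acceptors}[r][r]=\{r\}$. In $e_r^*$, process $r$ sends itself \t{Accept\&Spread} at time $0$ with zero latency. It handles the message at time $0$, and \cref{alg:ac:knownacceptorsme1} puts $r$ into its own entry. So the inclusion holds right after time $0$.

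By \Cref{observation:knowledge only increases}, the right-hand side never shrinks. It therefore suffices to show that each processing step of Algorithm~\ref{alg:optimality:knowledge} keeps the inclusion. Consider a message $m$ from $s$ to $p$, received at time $t$ in $e_r$ and known by $r$ at $t_r$. Let $t_s$ be the time $s$ sends $m$, so $t=t_s+\mathcal{L}[s,p]$.

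\begin{itemize}
\item By the induction hypothesis applied at $t_s$, \t{knowledge\_of\_s} (as defined in the algorithm) is entrywise contained in $s$'s \t{known\_acceptors} in $e_r^*$ right after $t_s$. Here the last message $s$ received before sending $m$ arrived at time at most $t_s$.
\item We need a message in $e_r^*$ by which $s$ conveys this state to $p$ no later than $t$. In Algorithm~\ref{alg:ac:disseminate}, $s$ broadcasts its whole \t{known\_acceptors} whenever it grows. Hence $s$'s latest broadcast up to time $t_s$ carries an array containing its state at $t_s$.
\item That broadcast, sent at some time $\le t_s$, reaches $p$ by time $\le t_s+\mathcal{L}[s,p]=t$, using FIFO links and stable latencies.
\item By \Cref{no DoAdopt in e_r^*}, no \t{Freeze} is ever sent in $e_r^*$, so every process handles every \t{Accept\&Spread}.
\item By \Cref{observation:AcceptAndSpread disseminate proposed values}, all messages carry $v_r$, so no conflict branch is taken.
\item On receipt, $p$ executes \cref{alg:ac:knownacceptorsme1,alg:ac:knownacceptorsq}. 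These add $\{p\}\cup\t{knowledge\_of\_s}[s]$ to $p$'s own entry and merge every other entry. This exactly mirrors \cref{alg:optimality:knownacceptorsme,alg:optimality:knownacceptorsq}.
\end{itemize}

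One subtlety is the case where $s$ had received no known message before sending $m$, which for $s\neq r$ is excluded by the standing assumption. For $s=r$, the initial value is $\{r\}$ in $r$'s own entry, which $r$ already holds in $e_r^*$ after time $0$ from its self-message.

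Another subtlety is whether $p$'s receipt in $e_r^*$ could happen before $p$ has accepted. It cannot, because the first \t{Accept\&Spread} from $s$ is itself a broadcast that arrives. Process $p$ then accepts $v_r$ on first receipt and adds $p$ to its own entry, which matches the `$\cup\{t\}$' in \cref{alg:optimality:knownacceptorsme}. \Cref{lemma:optimality:first message is received in KCensus at the same time as Alg or before} gives the timing of first acceptance.

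The main obstacle is the claim that $s$ in $e_r^*$ has actually broadcast a state at least as large as its state at $t_s$, at some time $\le t_s$. The only way this could fail is if the inclusion at $s$ became true exactly at $t_s$ through a receipt at that same instant. It does not fail, because a broadcast is triggered immediately on every growth, within the same atomic handler.

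A second concern is that the induction must be carried over time rather than over messages. The induction hypothesis for $s$ at $t_s<t$ must be available, and it is, because $\mathcal{L}[s,p]>0$ for $s\neq p$. Self-messages at the same instant are handled by the same zero-latency argument as in the base case. With these points settled, the inclusion propagates to all $p$ and $q$ at every $t$.
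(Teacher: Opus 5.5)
Your proof is correct and is essentially the paper's argument. Like the paper, you induct over reception times in $e_r$ and apply the induction hypothesis at the send time $t_s<t$, which is available because $\mathcal{L}[s,p]>0$; $s$'s last growth-triggered broadcast in $e_r^*$, sent at some time $\le t_s$, then carries a superset of \t{knowledge\_of\_s} and reaches $p$ by $t$, where the handler's merges mirror \cref{alg:optimality:knownacceptorsme,alg:optimality:knownacceptorsq}. The minor differences are that you get the existence of that broadcast from the induction hypothesis ($s$'s own entry is nonempty) rather than from the lemma bounding when processes first receive \t{Accept\&Spread} in $e_r^*$, and that your handling of same-instant self-messages is terse, though they only re-add $p$ to its own entry, where it already is.
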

\begin{proof}
We prove by induction on message receipt times in $e_r$ of messages, which are known by $r$ at $t_r$, that 
for any such message receipt time $t$,
\t{K$_{Alg}^t[p][q]$} $\subseteq$ \t{K$_{\sysname_{AlgReqs}}^t[p][q]$} \camerad{for all processes $p,q$}.
It is enough to consider times in which messages known by $r$ at $t_r$ are received in $e_r$, because these are the only times for which sets in \t{K$_{Alg}^t$} might be expanded, and sets in \t{K$_{\sysname_{AlgReqs}}^t$} may never shrink.

For the base case, consider time 0. 
Before any message is handled, all sets in \t{known\_acceptors} in $e_r$ are empty except for \t{known\_acceptors$[r][r]$} which equals $\{r\}$. The only messages that may be received at time 0 in $e_r$ are messages sent by $r$ to itself, due to the assumption that a \t{Propose} invocation happened-before any message sending in Alg together with the fact that message transmissions between distinct processes take positive time. Processing these messages from $r$ to itself does not change any value in \t{known\_acceptors}.
As for $e_r^*$, before any message is handled, all sets in \t{known\_acceptors} are empty. 
$r$ sends itself a message at time 0 in $e_r^*$ upon proposing.
Processing it adds $r$ to \t{known\_acceptors$[r][r]$} on \cref{alg:ac:knownacceptorsme1}.

Suppose the statement holds until time $t^p>0$ in which some process $p$ receives from process $s$ a message known by $r$ at $t_r$ in $e_r$. 
We need to show that \t{K$_{Alg}^{t^p}[u][v]$} $\subseteq$ \t{K$_{\sysname_{AlgReqs}}^{t^p}[u][v]$} \camerad{for all processes $u$ and $v$}.
By the inductive hypothesis, 
\t{K$_{Alg}^t[u][v]$} $\subseteq$ \t{K$_{\sysname_{AlgReqs}}^t[u][v]$} for any time $t < t^p$ \camerad{and all processes $u$ and $v$}.
The values of \t{K$_{Alg}^t[u]$} for any process $u \neq p$ do not change at time $t=t^p$ (assuming no two processes receive messages at the same time, otherwise we could apply the following proof to each of them).
So it remains to show that the containment still holds after values are possibly added to cells in \t{K$_{Alg}^t[p]$} upon processing messages received by $p$ at time $t^p$.

If $p$ receives any messages from itself at time $t^p$ in $e_r$, then 
the only value possibly added to cells in \t{K$_{Alg}^t[p]$} upon processing these messages is $p$ added to \t{K$_{Alg}^t[p][p]$} on \cref{alg:optimality:knownacceptorsme}. We next show that $p$ is already in \t{K$_{Alg}^t[p][p]$} before processing these messages, hence no values are added to \t{K$_{Alg}^t[p][p]$} upon processing these messages.
If $p=r$, then $r \in$ \t{K$_{Alg}^t[r][r]$} since initialization.
Else ($p \neq r$), $p$ must receive a message from another process prior to sending itself a message and receiving it, due to the assumption that a \t{Propose} invocation happened-before any message sending in Alg. When processing the message from the other process, $p$ adds its id to \t{K$_{Alg}^t[p][p]$} on \cref{alg:optimality:knownacceptorsme}. 

So it remains to consider a message $m$, known by $r$ at $t_r$ in $e_r$, that $p$ receives from another process $s$ at time $t^p$.
Let $t^s$ be the time $s$ sends $m$. Then $t^s = t^p - \mathcal{L}[s,p]$.
The values possibly added to cells in \t{K$_{Alg}^t[p]$} upon processing $m$ are: 
(a) $p$ to \t{K$_{Alg}^{t^p}[p][p]$} on \cref{alg:optimality:knownacceptorsme}, 
(b) \t{K$_{Alg}^{t^s}[s][s]$} to \t{K$_{Alg}^{t^p}[p][p]$} on \cref{alg:optimality:knownacceptorsme}, and 
(c)
\t{K$_{Alg}^{t^s}[s][q]$} to \t{K$_{Alg}^{t^p}[p][q]$} for any process $q$ on \cref{alg:optimality:knownacceptorsq}.
We next show that
(1) $p$ $\in$ \t{K$_{\sysname_{AlgReqs}}^{t^p}[p][p]$},
(2) \t{K$_{Alg}^{t^s}[s][s]$} $\subseteq$ \t{K$_{\sysname_{AlgReqs}}^{t^p}[p][p]$}, and
(3) \t{K$_{Alg}^{t^s}[s][q]$} $\subseteq$ \t{K$_{\sysname_{AlgReqs}}^{t^p}[p][q]$} for any process $q$.

Let $t_2$ be the first time at which \t{K$_{\sysname_{AlgReqs}}^t[s]$} gained the set of values it had right after time $t=t^s$.
So $t_2$ could either be at initialization before receiving any \t{Accept\&Spread} message, or when $s$ receives some \t{Accept\&Spread} message in $e_r^*$, as \t{K$_{\sysname_{AlgReqs}}^t[s]$} is expanded only during handling \t{Accept\&Spread} messages on \crefrange{alg:ac:upon1}{alg:ac:broadcast}.
We show below that the first alternative is impossible, as 
$s$ receives its first \t{Accept\&Spread} message at some time $t_1 \leq t^s$ and expands \t{K$_{\sysname_{AlgReqs}}^t[s]$} upon handling this message. 
Therefore, the second alternative is the correct one. Namely, $s$ receives an \t{Accept\&Spread} message at time $t_2$ in $e_r^*$. Upon processing it, $s$ sends to $p$ an \t{Accept\&Spread($\camera{v_r}$, \t{K$_{\sysname_{AlgReqs}}^{t^s}[s]$})} message on \cref{alg:ac:broadcast}.
$p$ receives this message at time $t_2 + \mathcal{L}[s,p] \leq t^s + \mathcal{L}[s,p] = t^p$. When handling it, on \cref{alg:ac:knownacceptorsme1} $p$ extends 
\t{K$_{\sysname_{AlgReqs}}^t[p][p]$} to contain the processes in $\{p\}$ $\cup$ \t{K$_{\sysname_{AlgReqs}}^{t^s}[s][s]$} at time $t=t^p$. Hence, condition (1) above is satisfied, and condition (2) is too: \t{K$_{\sysname_{AlgReqs}}^{t^s}[s][s]$} is a superset of  \t{K$_{Alg}^{t^s}[s][s]$} by the inductive hypothesis, which is applicable to $t^s$ as $t^s < t^p$ (since $\mathcal{L}[s,p] > 0$).
Additionally, on \cref{alg:ac:knownacceptorsq} $p$ extends 
\t{K$_{\sysname_{AlgReqs}}^t[p][q]$} to contain \t{K$_{\sysname_{AlgReqs}}^{t^s}[s][q]$} at time $t=t^p$ for each process $q$. Hence, condition (3) above is satisfied: \t{K$_{\sysname_{AlgReqs}}^{t^s}[s][q]$} is a superset of \t{K$_{Alg}^{t^s}[s][q]$} by the inductive hypothesis.

We show the missing part: that $s$ receives its first \t{Accept\&Spread} message in $e_r^*$ at some time $t_1 \leq t^s$, and expands \t{K$_{\sysname_{AlgReqs}}^t[s]$} upon handling this message. 
If $s \neq r$, then $s$ received a message in $e_r$ at time $\leq t^s$, due to the assumption that a \t{Propose} invocation happened-before any message sending in Alg. Thus, by \Cref{lemma:optimality:first message is received in KCensus at the same time as Alg or before}, 
$s$ receives its first \t{Accept\&Spread} message in $e_r^*$ at some time $t_1 \leq t^s$.
The latter is correct also if $s=r$, as $r$ receives its first \t{Accept\&Spread} message in $e_r^*$ at time 0, upon sending such a message to itself on \cref{alg:ac:propose:disseminate}.
The value of \t{K$_{\sysname_{AlgReqs}}^t[s][s]$} is $\emptyset$ before $s$ handles any \t{Accept\&Spread} message in $e_r^*$.
When $s$ receives its first \t{Accept\&Spread} message in $e_r^*$ at time $t_1$, it adds $s$ to \t{K$_{\sysname_{AlgReqs}}^{t_1}[s][s]$} on \cref{alg:ac:knownacceptorsme1}.

\end{proof}

\begin{lemma}\label{lemma:optimality:can_commit returns true at t_r}
Consider a process $r \notin \mathcal{F}$.
Calling \t{can\_commit($\camera{v_r}$)} in $e_r^*$ by $r$ at time $t_r$ would return \t{True},
for $e_r^*$ defined in the proof of \Cref{claim:optimality:KCensus_Alg not slower than Alg}.
\end{lemma}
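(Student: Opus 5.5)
The plan is to unfold \t{can\_commit($v_r$)} (Algorithm~\ref{alg:ac:commit}) and check its two conditions at $r$ in $e_r^*$ at time $t_r$. First, $r$ must have \t{accepted} $=v_r$. Second, for every key $q\in K_r=AlgReqs[r]\t{.keys()}$, we need $AlgReqs[r][q]\subseteq\t{known\_acceptors}[q]$ at $r$.

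\textbf{Acceptance.} The first condition is immediate. At time $0$, $r$ sends itself an \t{Accept\&Spread}($v_r$,\{\}) on \cref{alg:ac:propose:disseminate}. By \Cref{observation:AcceptAndSpread disseminate proposed values} and the single-proposer nature of $e_r^*$, only $v_r$ circulates. By \Cref{no DoAdopt in e_r^*}, no \t{Freeze} is ever sent, so $r$ handles this message. It therefore sets \t{accepted} $=v_r$ at time $0\le t_r$. \Cref{observation:accepted assinged once} keeps this value fixed.

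\textbf{Knowledge.} For the second condition, apply \Cref{lemma:optimality:known_acceptors of Alg are subset of KCensus} with $t=t_r$ and $p=r$. This gives, for every $q$,
\[
\t{K}_{Alg}^{t_r}[r][q]\subseteq\t{K}_{\sysname_{AlgReqs}}^{t_r}[r][q].
\]
The left-hand side is the value of \t{known\_acceptors[$r$][$q$]} in Algorithm~\ref{alg:optimality:knowledge} after processing all messages known by $r$ at $t_r$ and received by time $t_r$.

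The one point needing care is identifying this set with $AlgReqs[r][q]$, i.e., showing the set of messages processed up to $t_r$ is exactly $M$. By definition, $M$ consists of the messages known by $r$ when it decides at $t_r$. A message is known by $r$ at $t_r$ only if its receipt happened-before (or is) a receipt at $r$ occurring no later than $t_r$. Hence every message in $M$ is received by time $t_r$, and the processed set is all of $M$. Since $AlgReqs[r]=\t{array\_to\_dict(known\_acceptors[$r$])}$ after processing $M$, we get $AlgReqs[r][q]=\t{K}_{Alg}^{t_r}[r][q]$ for every key $q$.

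This also accounts for the entries dropped by \t{array\_to\_dict}, since those are empty and need no containment. So $AlgReqs[r][q]\subseteq\t{known\_acceptors}[q]$ holds at $r$ right after $t_r$ for every $q\in K_r$.

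\textbf{Timing.} The main obstacle is the subtlety of "right after time $t$" versus "at time $t_r$" when \t{can\_commit} is invoked. I would handle it with the atomicity assumption: \t{can\_commit} is evaluated after all message handling that $r$ performs at $t_r$, so the state it reads matches the one in \Cref{lemma:optimality:known_acceptors of Alg are subset of KCensus}. With both conditions verified, the loop in Algorithm~\ref{alg:ac:commit} never returns \t{False}, and the call returns \t{True}.
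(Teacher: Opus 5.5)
Your proposal is correct and follows the paper's proof. You get acceptance of $v_r$ from $r$'s own \t{Accept\&Spread} at time $0$, then apply \Cref{lemma:optimality:known_acceptors of Alg are subset of KCensus} at $t=t_r$ with $p=r$ to obtain $AlgReqs[r][q]\subseteq\t{known\_acceptors}[q]$ for every key $q$. Your additional points, that every message in $M$ is received by $t_r$ and the ``right after $t_r$'' timing subtlety, spell out steps the paper leaves implicit.
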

\begin{proof}
\t{accepted} is set to $\camera{v_r}$ at $r$, as upon proposing, $r$ handles an \t{Accept\&Spread($\camera{v_r}$, $\emptyset$)} message from itself.
In addition, $AlgReqs[r] =$ \t{array\_to\_dict(known\_acceptors$[r]$)} for the value of \t{known\_acceptors} in Algorithm~\ref{alg:optimality:knowledge} after processing all messages received in $e_r$ which are known by $r$ at $t_r$.
Applying \Cref{lemma:optimality:known_acceptors of Alg are subset of KCensus} to time $t_r$ and process $r$ and every process $q \in$ $AlgReqs[r]$\t{.keys()} yields $AlgReqs[r][q]$ $\subseteq$ \t{K$_{\sysname_{AlgReqs}}^{t_r}[r][q]$}, for \t{K$_{\sysname_{AlgReqs}}^{t_r}[r]$} defined in the proof of \Cref{lemma:optimality:known_acceptors of Alg are subset of KCensus}.
QED follows.
\end{proof}

 \section{Fast-Path Correctness} \label{sec:fast-path-proofs}

In this section we prove that the validity and \camera{\cameraa{compatibility}} conditions, appearing in \Cref{sec:constraints:condition}, are an exact characterization of correct fast paths.

\camera{We call \camerad{the \emph{fast path} of a protocol Alg the set of its executions} in a stable network without contention, and we say that a recovery rule \emph{preserves fast decisions} if, in every execution in which some process decides a value $v$ on the fast path, the rule adopts $v$. Alg's fast path is \emph{correct} if such a rule exists.}

\camera{Both proofs reuse the construction from the proof of
\Cref{theorem:optimality}: for each correct process $r$, it takes the
single-proposer execution $e_r$ in a stable network, collects the messages $r$
knows when it decides, and processes them with
\Cref{alg:optimality:knowledge} to obtain $AlgReqs[r]$, whose keys are the
processes whose acceptance $r$ knows of when it decides.
\Cref{claim:optimality:reflexive graph} shows that the resulting
combination has the required normal form, whatever Alg does; when Alg is a
correct consensus protocol, \Cref{claim:optimality:valid} shows that each
requirement is valid and \Cref{claim:optimality:compatible} that every pair is
compatible.}

\begin{theorem}[Sufficiency]\label{theorem:sufficiency}
Any \fixb{fast path} satisfying \fixg{the validity and \camera{\cameraa{compatibility}} conditions could leverage the knowledge derivable from its communication to admit} a recovery rule that preserves fast decisions in every execution.
\end{theorem}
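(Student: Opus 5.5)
The plan is to reduce the statement to the correctness of \sysname's adopt-commit template (\Cref{section:adopt-commit correctness}). Let Alg be a protocol whose fast path satisfies the validity and compatibility conditions. For each correct proposer $r$, I would take the single-proposer stable execution $e_r$ of Alg. I would collect the messages $r$ knows when it decides, and run \Cref{alg:optimality:knowledge} on them to obtain $AlgReqs[r]$. By \Cref{claim:optimality:reflexive graph}, this requirement is in normal form regardless of Alg. The hypothesis that Alg's fast path satisfies the two conditions then means precisely that $AlgReqs$ is a valid and pairwise-compatible combination. This is the sense of ``leverage the knowledge derivable from its communication'' in the statement.

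Next, I would make the knowledge Alg spreads available to a recovery procedure. Each process piggybacks on every Alg message its \t{accepted} value and its \t{known\_acceptors} array, updated exactly as in Algorithm~\ref{alg:ac:disseminate}, i.e., the same update rule as \Cref{alg:optimality:knowledge}. \Cref{lemma:optimality:known_acceptors of Alg are subset of KCensus}, specialized to the message pattern of Alg itself, then gives the following. Whenever $r$ decides on the fast path at $t_r$, its local \t{known\_acceptors} covers $AlgReqs[r]$, so \t{can\_commit} under $AlgReqs$ holds at $r$. For the recovery rule, I would take \sysname's: freeze $n-f$ processes, collect \t{Frozen} replies carrying \t{accepted} and \t{known\_acceptors[me]}, and adopt \t{potential\_commit(replies)} (or any proposed value if it returns $\bot$). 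A frozen process stops incorporating new evidence, as in the handlers of Algorithm~\ref{alg:ac:disseminate}.

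Preservation of fast decisions then follows the agreement argument for adopt-commit. Suppose $v$ is decided on the fast path by $r$. By \Cref{corollary: requirements processes have same accepted if commit}, every key of $AlgReqs[r]$ accepted $v$. \Cref{lemma: adopter receives Frozen with committed value} ensures the census contains $v$, and \Cref{lemma: could_potentially_commit returns true if commit} ensures $v$ is not ruled out. Since those lemmas rely only on validity, on the monotonicity of \t{known\_acceptors} (\Cref{corollary: my known_acceptors[p] is subset of this value at p}), and on \t{can\_commit} having held at $r$, they transfer verbatim. For a competing value $v'$ with requirement $AlgReqs[r']$, \Cref{lemma: could_potentially_commit returns false if another value commits} uses compatibility to exhibit a frozen witness that rules $v'$ out. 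Hence \t{potential\_commit} returns exactly $v$.

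The main obstacle is the first transfer step. In Alg, ``accepting'' is not an explicit event, and a process may receive messages from several proposers in contended executions. I must ensure two things. First, each process's recorded acceptance is of a single value, which I would define as the value of the first proposal-derived message it receives. Second, the evidence attributed in real executions is at least that computed in $e_r$ whenever $r$ fast-decides. I would try to argue that a fast decision occurs in an execution indistinguishable to $r$ up to $t_r$ from $e_r$, by the definition of fast path. The messages known to $r$ then coincide with those in $e_r$, and \Cref{observation:processes in known_acceptors accepted the same value} guarantees that the piggybacked evidence is sound.
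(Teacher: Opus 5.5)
Your skeleton is the paper's. You derive $AlgReqs$ from the single-proposer executions $e_r$. You take the normal form from \Cref{claim:optimality:reflexive graph}, and validity and compatibility from the hypothesis rather than from \Cref{claim:optimality:valid,claim:optimality:compatible}, since Alg is not assumed correct. You then run \sysname's census over Alg's communication so that the adopt-commit agreement argument applies. Piggybacking \sysname's state on Alg's messages is a concrete reading of the paper's ``\sysname with Alg's single-proposer communication pattern.''

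The one place you diverge is that you keep Alg's own decision rule. This obliges you to show that every fast decision of Alg at $r$ occurs where \texttt{can\_commit} under $AlgReqs[r]$ holds. The paper never takes on this obligation. In its synthesized protocol, fast decisions are by definition the points where \texttt{can\_commit} holds, which on Alg's communication in $e_r$ happens by $t_r$. So \Cref{theorem:ac:agreement} applies verbatim.

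Your sketch of that bridge does not work. Indistinguishability to $r$ from $e_r$ with respect to Alg's messages does not imply that the messages known to $r$ coincide with those in $e_r$. Nor does it imply that the processes in $r$'s causal past first received $r$'s proposal.

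Here is a counterexample. Let $n=3$ and $f=1$, with processes $r$, $r'$, $p$. A proposer sends its value to $p$, $p$ replies with a content-free ack to every proposal it receives, and a proposer decides upon that ack. Then $AlgReqs[r]=\{r{:}\{r,p\},\,p{:}\{r,p\}\}$, and symmetrically for $r'$. Both requirements are valid and compatible: the intersection is $\{p\}$ and the witnesses are $\{r,r',p\}$. Now suppose $p$ receives $v'$ before $v$. Then $r$'s view of Alg equals its view in $e_r$, so Alg decides $v$ at $r$. Yet under your first-message acceptance rule, $p$'s piggyback reports $v'$, so \texttt{can\_commit} fails at $r$. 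Since $r'$ also decides $v'$, no rule can preserve both decisions. Your goal of covering contended executions is therefore unattainable, not merely unproven.

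There are two fixes. First, you can restrict fast decisions to those whose causal past at $r$ up to $t_r$ is exactly that of $e_r$, for example a stable, contention-free prefix. There the piggybacked state on the same messages contains the output of Algorithm~\ref{alg:optimality:knowledge}, and the bridge is immediate without any indistinguishability argument. Second, you can follow the paper and replace Alg's decision rule by \texttt{can\_commit} under $AlgReqs$.
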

\begin{proof}
Let Alg be any deterministic consensus protocol with a fast path which is taken in executions with a stable network and no contention, such that this fast path satisfies the validity and \camera{\cameraa{compatibility}} conditions.
Derive $AlgReqs$ for Alg \camera{as above; being derived in a stable network without contention, it reflects the requirements of Alg's fast path}.
\camera{Here Alg is not assumed correct, so validity and compatibility come from the conditions rather than from \Cref{claim:optimality:valid,claim:optimality:compatible}:} \camera{the requirement} $AlgReqs[r]$ \camera{of} each process $r$ \camera{is} valid because Alg's fast-path quorums satisfy the validity condition; and the combination of requirements $AlgReqs$ is compatible since Alg's fast path satisfies the \camera{\cameraa{compatibility}} condition.
In addition, by \Cref{claim:optimality:reflexive graph}, $AlgReqs$ \camera{has the normal form required by \sysname}.
Hence, \sysname could use $AlgReqs$. \fixg{For each single-proposer execution, we alter \sysname to disseminate messages using the same communication pattern as Alg's fast path in its single-proposer execution with the same proposer.} The resulting consensus protocol obtains the knowledge derivable from the communication patterns of Alg, and its adoption rule preserves fast decisions in every execution \camera{by \sysname's agreement~(\Cref{theorem:ac:agreement})}.
\end{proof}

\begin{theorem}[Necessity]\label{theorem:necessity}
\camera{The requirements derived from the communication of any correct fast path satisfy the validity and \cameraa{compatibility} conditions, even if this fast path is not defined in terms of acceptance.}
\end{theorem}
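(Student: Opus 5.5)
The plan is to reduce the statement to the claims already established in the optimality proof, applied to the fast path of an arbitrary correct protocol Alg.

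\textbf{Setup.} Let Alg be a deterministic consensus protocol whose fast path is correct, i.e., it admits a recovery rule preserving fast decisions. Then Alg, completed by that recovery rule, satisfies consensus Termination, Validity and Agreement in every execution.
\begin{itemize}
\item For each correct $r$, take the single-proposer, stable, contention-free execution $e_r$. This execution lies in the fast path.
\item Collect the messages $r$ knows when it decides.
\item Run Algorithm~\ref{alg:optimality:knowledge} on them to obtain $AlgReqs[r]$.
\end{itemize}
This derivation uses only the communication pattern of Alg, namely happened-before chains. It never uses any notion of acceptance, which addresses the clause ``even if this fast path is not defined in terms of acceptance.'' A key $q$ of $AlgReqs[r]$ is simply a process from which a causal chain reaches $r$ before $r$ decides. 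By \Cref{claim:optimality:reflexive graph}, the derived requirements are in normal form whatever Alg does, so the validity and compatibility checks of Algorithm~\ref{alg:ac:compatible} are meaningful for them.

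\textbf{Validity.} Validity follows from \Cref{claim:optimality:valid}. Its proof is an indistinguishability argument that uses only Termination, Validity and Agreement of Alg, together with the assumption that some proposal happened-before every send.
\begin{itemize}
\item Suppose $|K_r|\le f$.
\item Build $e_2$, in which $K_r$ behaves as in $e_r$ up to $t_r$ and then crashes.
\item In $e_2$, the complement of $K_r$ behaves as in the solo execution $e_1$ of some $r'\notin K_r$.
\item Then $r$ decides $v_r$ and $r'$ decides $v_{r'}$, which contradicts Agreement.
\end{itemize}
None of these steps depends on Alg being acceptance-based.

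\textbf{Compatibility.} Compatibility follows from \Cref{claim:optimality:compatible}.
\begin{itemize}
\item Suppose at most $f$ processes are recorded, in $AlgReqs[r]$ or $AlgReqs[r']$, as witnessing some member of $K_r\cap K_{r'}$. Call this set $K$.
\item Construct the executions $f_1$ and $f_2$.
\item In each, $K$ crashes after $t_d$, and every process outside $K$ cannot tell $f_1$ from $f_2$, by \Cref{lemma:f_2 indistinguishable from f_1 to processes not in K}.
\item Meanwhile $r'$ decides $v_{r'}$ in $f_1$ and $r$ decides $v_r$ in $f_2$.
\item Agreement plus Termination then forces the survivors to decide two different values in indistinguishable executions, a contradiction.
\end{itemize}

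\textbf{Main obstacle.} The main obstacle is justifying that these claims only need correctness of the fast path, not of the whole protocol as stated in \Cref{theorem:optimality}. The executions $e_1$, $e_2$, $f_1$ and $f_2$ are not themselves contention-free fast-path executions: they involve two proposers and crashes. They still include a fast decision, because $r$ (respectively $r'$) is indistinguishable from its solo run $e_r$ up to $t_r$, and so decides on the fast path.

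I would argue as follows. A correct fast path with a recovery rule that preserves fast decisions yields a protocol in which every execution extending such a fast decision decides that value. It also guarantees Termination for survivors.
\begin{itemize}
\item In $e_2$ and $f_2$, $r$'s fast decision of $v_r$ forces every decision to be $v_r$.
\item In $e_1$ and $f_1$, the survivors decide $v_{r'}$: either by Validity in the solo run, or via the fast decision of $r'$.
\end{itemize}
This recreates the same contradictions. I would therefore restate the uses of ``Agreement'' and ``Termination'' in those proofs as uses of this preservation property plus survivor termination, and check that the message-suppression justification preceding \Cref{theorem:optimality} still holds.
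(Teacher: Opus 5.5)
Your proposal is correct and matches the paper's proof: derive $AlgReqs$ from the single-proposer executions $e_r$ via Algorithm~\ref{alg:optimality:knowledge}, then invoke \Cref{claim:optimality:valid} and \Cref{claim:optimality:compatible}, with \Cref{claim:optimality:reflexive graph} for the normal form. Your ``main obstacle'' paragraph is a harmless but unnecessary extra: the paper takes Alg to be a correct deterministic consensus protocol, as your own setup already does, so the two claims apply verbatim and you do not need to recast Agreement and Termination as fast-decision preservation.
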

\begin{proof}
Let Alg be any deterministic consensus protocol with a correct fast path which is taken in executions with a stable network and no contention.
As its fast path is correct, Alg has a safe recovery rule. We next show that \camera{the requirements derived from Alg's fast-path communication satisfy} the validity and \camera{\cameraa{compatibility}} conditions.

Derive $AlgReqs$ for Alg \camera{as above}.
By \Cref{claim:optimality:valid}, \camera{the requirement} $AlgReqs[r]$ \camera{of} each process $r$ \camera{is} valid. This means that Alg's fast-path quorum contains more than $f$ processes. Namely, the validity condition is satisfied.

By \Cref{claim:optimality:compatible}, the combination of requirements $AlgReqs$ is compatible. This means that
\camera{for every pair of proposers $p$ and $q$ with quorums $Q_p=AlgReqs[p].\t{keys}()$ and $Q_q=AlgReqs[q].\t{keys}()$, at least $f+1$ processes received a chain of messages from some process in $Q_p\cap Q_q$, sent after it learned of the proposal, as known by $p$ or by $q$ when deciding in its single-proposer execution.}
Namely, \camera{the requirements derived from Alg's fast-path communication satisfy} the \camera{\cameraa{compatibility}} condition.
\end{proof}

\fi

\end{document}